\theoremstyle{plain}
\newtheorem{lemma}{Lemma} 
\newtheorem{proposition}{Proposition}
\newtheorem{corollary}{Corollary} 
\newtheorem{definition}{Definition}
\def\Tr{\operatorname{Tr}}
\def\dim{\operatorname{dim}}
\def\>{\rangle}
\def\<{\langle}
\def\T+{\mathsf{T}_+}
\def\Evd#1{\set{T}_1(#1)}
\def\Eva#1{\set{T}(#1)}
\newcommand\stackrqarrow[2]{%
    \mathrel{\stackunder[2pt]{\stackon[0.3pt]{$\rightsquigarrow$}{$\scriptscriptstyle#1$}}{%
            $\scriptscriptstyle#2$}}}
\newcommand\Item[1][]{%
  \ifx\relax#1\relax  \item \else \item[#1] \fi
  \abovedisplayskip=0pt\abovedisplayshortskip=0pt~\vspace*{-\baselineskip}}
\newcommand{\bra}[1]{\langle#1|}
\newcommand{\ket}[1]{|#1\rangle}
\newcommand{\braket}[2]{\langle#1|#2\rangle}
\newcommand{\ketbra}[2]{{\ket{#1}\bra{#2}}}
\newcommand{\hilb}[1]{\mathcal{#1}}
\newcommand{\set}[1]{{\sf #1}}
\newcommand{\Lin}[1]{\mathcal{L}(#1)}
\begin{document}
\title{No-signalling constrains quantum
  computation with indefinite causal structure}

\author{Luca Apadula}
\email[]{Luca.Apadula@oeaw.ac.at} 
\affiliation{University of Vienna, Boltzmanngasse 5, 1090 Vienna, Austria }\affiliation{Institute for Quantum Optics and Quantum Information (IQOQI), Austrian Academy of Sciences, Boltzmanngasse 3, 1090 Vienna, Austria}

\author{Alessandro Bisio}
\email[]{alessandro.bisio@unipv.it} \affiliation{Dipartimento di
  Fisica, Universit\`a di Pavia, via Bassi 6, 27100 Pavia, Italy}
\affiliation{Istituto Nazionale di Fisica Nucleare, Sezione di Pavia,
  Italy}

\author{Paolo Perinotti}\email[]{paolo.perinotti@unipv.it}
\affiliation{Dipartimento di Fisica, Universit\`a di Pavia, via Bassi 6, 27100 Pavia, Italy}
\affiliation{Istituto Nazionale
  di Fisica Nucleare, Sezione di Pavia, Italy}

\begin{abstract}
  Quantum processes with indefinite causal
  structure emerge when we wonder which are the
  most general evolutions, allowed by quantum
  theory, of a set of local systems which are not
  assumed to be in any particular causal order.
  These processes can be described within the
  framework of \emph{higher-order} quantum theory
  which, starting from considering maps from
  quantum transformations to quantum
  transformations, recursively constructs a
  hierarchy of quantum maps of increasingly higher
  order.  In this work, we develop a formalism for
  quantum computation with indefinite causal
  structures; namely we characterize the
  computational structure of higher order quantum
  maps.  Taking an axiomatic approach, the rules
  of this computation are identified as the most
  general compositions of higher order maps which
  are compatible with the mathematical structure
  of quantum theory.  We provide a mathematical
  characterization of the admissible composition
  for arbitrary higher order quantum maps.  We
  prove that these rules, which have a
  computational and information-theoretic nature,
  are determined by the more physical notion of
  the signalling relations between the quantum
  systems of the higher order quantum maps.
\end{abstract}

\maketitle

\section{Introduction.} 
One of the main reasons to use
the notions of channel, positive operator valued
measures (POVM) and quantum instrument is that they
provide a concise description of physical devices
avoiding a detailed account of their
implementation in terms of unitary interactions
and von Neumann measurements.
% The most general (deterministic) transformation 
% of the type $A \to B$, i.e. that maps states of
% system $A$ to states of system $B$, is a quantum channel 
This is very
useful when dealing with optimization problems,
like e.g. state estimation
\cite{helstrom1976quantum}, where one can look
for the optimal measurement among all those which are
allowed by quantum mechanics.

However, if we consider transformations, rather than states,
as information carriers, these notions exhibit the same limitations.
For example, in quantum channel
discrimination \cite{kitaev1997quantum,childs2000quantum,acin2001statistical,d2001using,duan2007entanglement,sacchi2005optimal,chiribella2008memory,pirandola2019fundamental}
we need to optimize both the (possibly entangled) input
state and the mesurement.
It is then convenient
to turn the task into an optimization 
of a single  object that could
describe the joint action of the 
input state and of  the final POVM avoiding any
redundancy of the description.
This 
object (called tester or process POVM in the
literature \cite{PhysRevA.77.062112,PhysRevA.80.022339} )
would describe the most general map, allowed by
quantum theory,  that takes a channel as an input and
outputs a probability.

Channel discrimination is just an example of a more
general pattern, since any optimization of an information processing task
can be phrased as follows: look for the optimal
process among all the ones that $i)$ accept  as input
an object with the given structure (or \emph{type})
$x$ and that $ii)$ 
output an object with the target type $y$.
This is the intuitive definition of a
\emph{higher order map} of type $x \to y$.
Since any map
can be considered as the input (or output)
of another map, we can recursively construct  
a full hierarchy of maps of
increasingly higher order.
For example, a channel
from system $A$ to $B$
is a map of type $A \to B$ and a channel
discrimination strategy is a map of type
$(A
\to B) \to I$, i.e. a map which transforms a channel into a
probability
($I$ denotes the trivial
one-dimensional system).

The theory of higher order maps (or higher order
quantum theory)
\cite{PhysRevA.80.022339,doi:10.1098/rspa.2018.0706}
is the appropriate framework for the optimization of
information processing scenarios in which
inputs and outpus can be more general objects than
quantum states
\cite{chiribella2008optimal,PhysRevLett.102.010404,bisio2010optimal,gutoski2012measure,chiribella2016optimal,jenvcova2016conditions,sedlak2019optimal,mo2019quantum,dong2021success,soeda2021optimal}.

\begin{figure}[t]
    \includegraphics
   [width=0.47\textwidth]{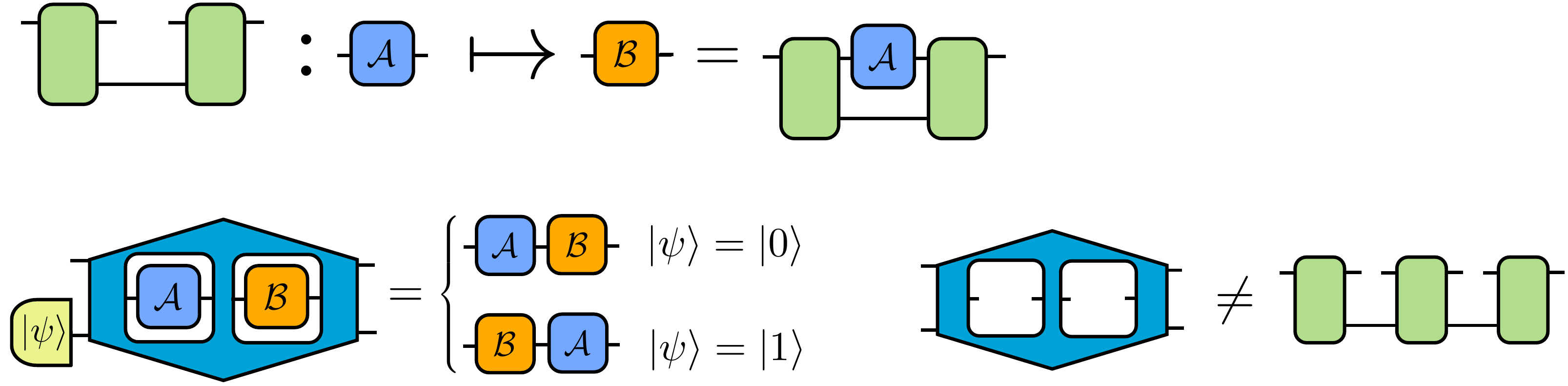}
   \caption{\emph{(Top)} A quantum circuit with an
     open slot is a higher order map that transforms
     channels into channels.
\emph{(Bottom)}     The quantum switch map cannot
be realized as a circuit with open slots.}
   \label{circuital}
 \end{figure}

However, this is only the first half of the story.
The second half concerns the notion of causal
order in quantum theory.  Within the hierarchy of
higher order maps, we have a subset of maps (known
as \emph{combs} \cite{PhysRevA.80.022339,bisioacta} ) that can
be realized as a quantum circuit with open slots:
the action of the map is given by ``filling'' the
empty slots with another map which has a
compatible structure (see Fig.~\ref{circuital}).
However, there also exist maps for which such a
realization is not possible
\cite{Oreshkov:2012aa}. In this case, one say
that the open slots which should accommodate the
input of the maps exhibit an
\emph{indefinite causal order}.
Typically, this happens whenever we consider a map
whose input is an object representing many parties
whose local operations are not assumed to be in any
particular causal order.
The paradigmatic example is the quantum switch map
\cite{PhysRevA.88.022318,PhysRevLett.121.090503}  (see Fig.~\ref{circuital})
which takes as an input two quantum channels, say
$\mathcal{A}$ and $\mathcal{B}$ and outputs the
coherent superposition of the sequential
applications of the two channels in two different
order, i.e.  $\mathcal{A}\circ\mathcal{B}$ and
$\mathcal{B}\circ\mathcal{A}$.
We see that, in order for this map to be well
defined,
the crucial assumption is that we cannot say whether the
transformation $\mathcal{A}$ occurs before or
after the transformation $\mathcal{B}$.
Recent results
have shown that higher order maps with indefinite
causal order can outperform circuital ones in a
variety of tasks
\cite{Oreshkov:2012aa, COLNAGHI20122940,PhysRevLett.113.250402,PhysRevLett.120.120502,zhao2020quantum,bavaresco2021strict,PhysRevResearch.3.043012}.
Motivated by the theoretical advances,
pioneering experiments have been proposed
\cite{Procopio:2015ab,PhysRevA.93.052321,guo2020experimental,PRXQuantum.2.010320}.

What is the relation between the functional
description of a map, i.e. its type, and its causal
structure?
This question links a fundamental physical property
of a process (its causal structure)  
with a computational feature (its
type). The study of higher order maps is an
outstanding chance to understand how causal
structures are influenced by the quantum nature of
the physical systems.
In this work, we answer this question proving how
the signalling properties of a higher order map are
constrained
by its type.

This result is based on the characterization of
the \emph{compositional structure} of higher order
maps, i.e. the ways in which we can connect the
quantum systems of different higher order maps
\footnote{It is worth noticing that our analysis
  could be reminiscent of the one of
  Ref. \cite{lorenz2021causal} where the causal
  structure of unitary operators is
  investigated. However, the notion of
  compositional structure has a completely
  different meaning. The authors of Ref.
  \cite{lorenz2021causal} refer to the
  decomposition of a unitary operator into smaller
  gates. In our work, we refer at the
  compositional structure of the hierarchy of
  higher order maps as the set of rules that
  characterize how the maps in the hierarchy can
  be connected to each other.  The categorical
  framework of
  Ref. \cite{8005095} is closer
  to the one presented in this contribution. 
  It considers composition rules, even though
  more restricted with respect to the ones that we
  study in the present paper,and it proves that higher order
  quantum processes form a star-autonomous
  category.}.  Let's explain
what we mean by that.  The type of a map describes
its functioning (much like an owner's manual)
but more structure lurks beneath this surface.
Besides being used to map states into states,
quantum channels can be connected together through
some of their inputs/outpus in order to form a
network of channels as in Fig.~\ref{circuital}.
Similarly, it should be possible to compose higher
order maps and form new ones.

Consistency with the
probabilistic structure of quantum theory implies
that only some of the possible compositions are
admissible, to wit, they generate another well
defined higher order map. In this paper, we
provide a formal definition for this
intuition and we prove a characterization theorem
for the compositional structure of higher order
maps.

This result is not only the cornerstone to prove
the relation between signalling structure and
the type of a map but it is a necessary step
to upgrade the theory of higher order maps to a
fully fledged
computational model which extends the circuital
one by comprising indefinite causal structures.

\section{From quantum operation to higher order
  maps.}  Within Kraus' axiomatic approach
\cite{Kraus}, a quantum operation is defined as
the most general map $\mathcal{M}$ from the set
$\Eva A := \{ \rho \in \Lin{\hilb{H}_A}, \rho \geq
0 , \Tr \rho \leq 1 \} $ of (sub-normalized)
quantum states of system $A$ to the set $\Eva B$
of quantum states of system $B$ \footnote{ We consider finite dimensional quantum
  systems},
that satisfies some physically motivated
admissibility conditions. These assumptions
guarantee that the probabilistic structure of
quantum theory is preserved and they are: $i)$
convex-linearity
$\mathcal{M}(p \, \rho + (1-p) \sigma) =
p\mathcal{M}( \rho )+ (1-p) \mathcal{M}(\sigma) $
for any $ \rho , \sigma \in \Eva A$ and
$ 0 \leq p \leq 1 $, $ii)$ complete positivity
$(\mathcal{M} \otimes \mathcal{I}_A) \psi \in
\Eva{BC}$ for any $ \psi \in \Eva{AC} $
\footnote{We denote as $AC$ the tensor product
  system $\hilb{H}_A \otimes \hilb{H}_B$ while
  $\mathcal{I}_A$ denotes the identity map on
  $\Lin{\hilb{H}_A}$} and $iii$) a
sub-normalization condition
$\Tr [\mathcal{M}(\rho)] \leq 1$
$\forall \rho \in \Eva A$ that prevents the
occurence of probabilities greater than one. In
particular, if a quantum operation $\mathcal{M}$
satisfies the identity
$\Tr [\mathcal{M}(\rho)] = 1$
$\forall \rho \in \Evd{A}$, where $\Evd{A}$ is the
set of normalized (or deterministic) states
$\Evd{A} := \{ \rho \in \Eva{A}, \Tr \rho = 1 \}$,
we say that $\mathcal{M}$ is a
\emph{deterministic} quantum operation (or a
quantum channel).

For our purposes, it is convenient to use the
Choi-Jamiolkovski isomorphism
\cite{CHOI1975285,JAMIOLKOWSKI1972275} in order to
 represent linear maps between operator spaces.
A map
$\mathcal{M} : \Lin{\hilb{H}_A} \to
\Lin{\hilb{H}_A} $ is represented by its
\emph{Choi operator}
$M \in \Lin{\hilb{H}_A\otimes \hilb{H}_B}$ defined
as $M:= ( \mathcal{I} \otimes \mathcal{M}) \Phi $,
where
$\Phi := \sum_{i,j} \ket {ii} \bra{jj} \in
\Lin{\hilb{H}_A\otimes \hilb{H}_A} $ and we have
that
$\mathcal{M}(\rho) = \Tr_A[(\rho^{\theta} \otimes
I_B) M]$ ($\Tr_A$ denotes the partial trace on
$\hilb{H}_A$, $\rho^{\theta}$ is the transpose of
$\rho$ and $I_B$ the identity operator on
$\hilb{H}_B$).  In terms of the Choi operator, a
linear $\mathcal{M}$ satisfies the admissibility
conditions (i.e. it is a valid quantum operation)
if and only if $0 \leq M \leq D$ where $D$ is the
Choi operator of a deterministic quantum operation
which satisfies $\Tr_B [D] = I_A$.  In what
follows, we will often implicitly assume the
Choi-Jamiolkovski isomorphism when considering
maps between operator spaces: the sentence ``the
map $M\in \Lin{\hilb{H}_A \otimes \hilb{H}_B}$''
should be understood as ``the map
$\mathcal{M}: \Lin{\hilb{H}_A} \to
\Lin{\hilb{H}_B}$ whose Choi operator is
$M\in \Lin{\hilb{H}_A \otimes \hilb{H}_B}$''
(which systems are the inputs and which ones are
the outputs, will be clear from the context).

The idea that leads to the notion of a higher order
map is that 
quantum operations themselves can be
 inputs of a ``second order'' map that
 transforms quantum operations into quantum
 operations.  We then boost this idea to its full generality:
 every kind of map can be
 considered as input or output of some higher
 order map. In this way, we recursively generate a
 hierarchy of maps of increasingly higher order.
 
 The most distinctive piece of information that
 goes with a higher order map is what is its input
 and what is its output. This information
 identifies the place of a map within the
 hierarchy and it is provided by the notion of
 \emph{type}, which is defined as follows. We
 start by asserting that a quantum state
 $\rho\in \Eva{A} $ of system $A$ has
 \emph{elementary type} $A$.  The general case is
 then defined recursively: a map that transforms
 maps of type $x$ into maps of type $y$ has type
 $(x \to y)$. % For example, we say that a quantum
 %operation from system $A$ to system $B$ has type
 %$A \to B$.
  From this definition, it follows that
 a type $x$ is a string like
 $x=(((A_1\to A_2)\to (A_3\to A_4 ))\to A_5 )$
 where $A_i$ are elementary types (i.e. quantum
 systems).  We use the special label $I$ to denote
 the \emph{trivial type} of the system with
 dimension $1$: therefore, measurements on system
 $A$ have the type $A \to I$.

 The hierarchical structure of the set of types
 is expressed by the partial ordering $\prec$
 which is defined as follows.
First, we say that $x \prec_p y$ if there exists $z$ such that 
 either $y = x \to z$ or $y = z \to x$. Then, the
 relation $x \prec y$ is defined as the transitive closure of
 the binary relation $\prec_p$.
 For example we have
 $A \to B \prec_q (A\to B) \to I \prec_q (C\to D) \to ((A\to
 B) \to I) $ and therefore $A \to B \prec (C\to D) \to ((A\to
 B) \to I) $

 %Many results about
 %the structure of higher order maps are proved by
 %induction on the well founded relation
 %$\prec$.

\begin{figure}[t]
    \includegraphics
   [width=0.35\textwidth]{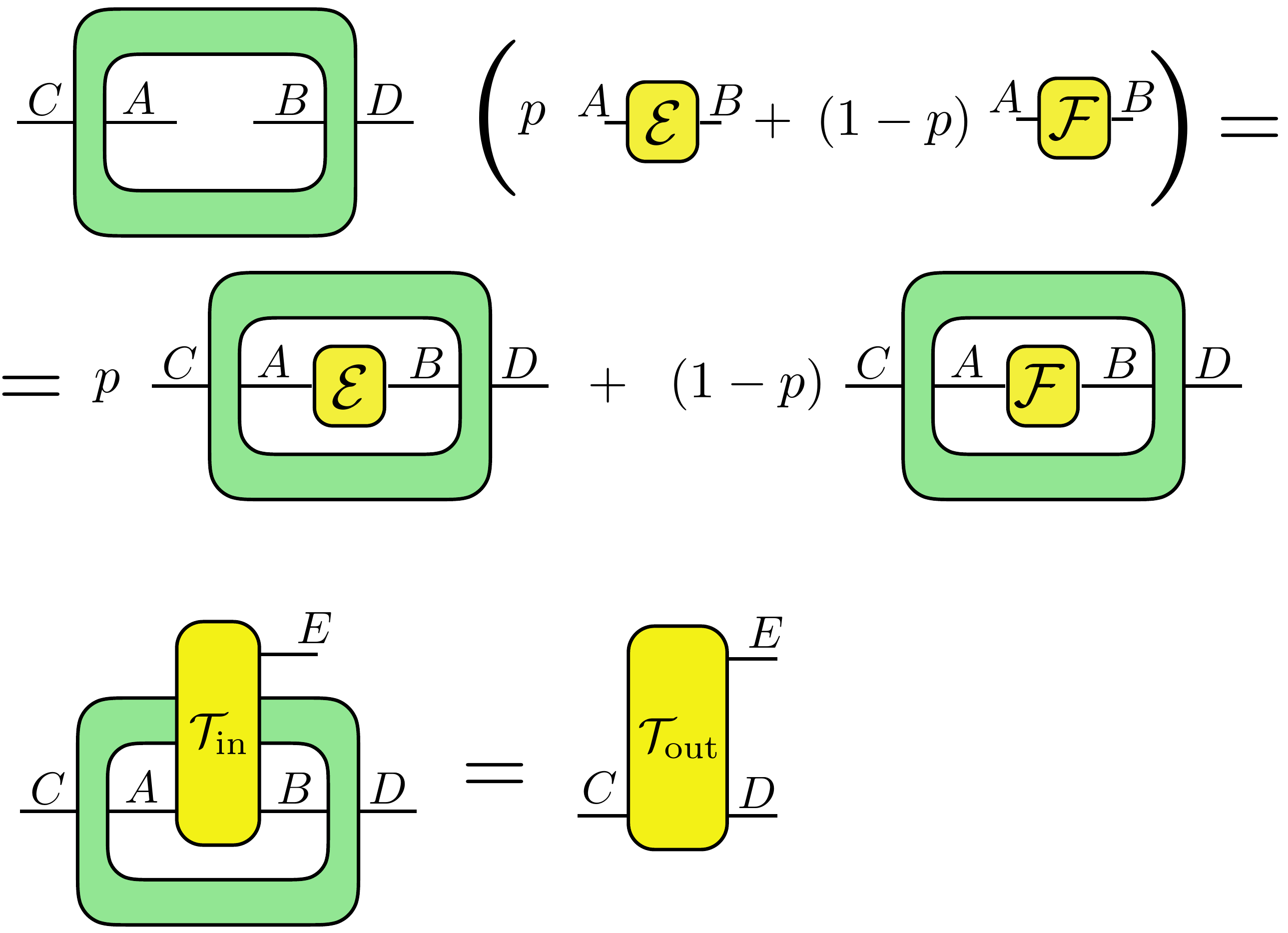}
   \caption{\emph{(Top)} The linearity condition for a higher
     order map of type $(A\to B) \to (C \to D)$.
\emph{(Bottom)}     The completely admissible-preserving
     condition for the same map.}
   \label{completepositivity}
 \end{figure}

 We now need to characterize those maps which are
 \emph{physical} or \emph{admissible}, i.e. they
 are compatible with the probabilistic structure
 of quantum theory. This is achieved by a recursive
 generalization of the admissibility conditions
 for quantum operations that we cited at the
 beginning of this section \footnote{We will only
   give an overview of the axiomatic
   construction. For the full treatment see the
   supplemental material  \cite{supplement} and
   Ref. \cite{doi:10.1098/rspa.2018.0706}}.
 The linearity
 assumption is straightforwardly extended: a map
 of type $x$ is an operator in
 $\mathcal{L}(\hilb{H}_x)$ where 
 $\hilb{H}_x := \bigotimes_{i\in x}\hilb{H}_i$
and the index $i$ runs over all the (non-trivial)
elementary types occurring in the expression of
$x$ (e.g. if $M$ is a map of type $((A\to B) \to I) \to C$
then $M \in \mathcal{L}(\hilb{H}_x)$
and $\hilb{H}_x = \hilb{H}_A \otimes
\hilb{H}_B \otimes
\hilb{H}_C)$)\footnote{Throughout this paper we
  will also make implicit use of the isomorphism
  $H\otimes K\equiv K\otimes H$ and we will
  identify these two Hilbert spaces, i.e.
  $H\otimes K = K\otimes H$. }.

On the other hand, the generalization of complete
 positivity is more subtle.  In broad
 terms, we require that an admissible
 map $M$ of type $x \to y$ should transform
 admissible maps of type $x$ into
 admissible maps of type $y$ even in the presence
 of correlations with another elementary system
 $E$, where this extension is understood  by
 considering the map
 $\mathcal{M} \otimes \mathcal{I}_E$.
   If  $M$ satisfies this requirement we say
 that it is \emph{completely
   admissible-preserving}.
   In order to show how this works,
 let us consider a map
$M $ of type
 $(A \to B ) \to (C \to D) $
 (see Fig. \ref{completepositivity}).  In order to be admissible,
$\mathcal{M}$ (remember that $M$ denotes the Choi
operator of $\mathcal{M}$) should transform
 admissible maps of type $A \to B$ to admissible
maps of type $C \to D$ , i.e. quantum operations
to quantum operations.  The condition of being
completely admissible-preserving requires that
$\mathcal{M} \otimes \mathcal{I}_E$ must transform
admissible maps of type $A \to BE$ to admissible
 maps of type $C \to DE$ for any elementary
system $E$, i.e $\mathcal{M} \otimes
\mathcal{I}_E$ must transform quantum operations
with bipartite output to quantum operations
with bipartite output.

 The third admissibility condition is a
 generalization of the subnormalization constraint
 for quantum operations.  First, we say that a
 completely admissible-preserving map $M$ of type
 $x \to y$ is \emph{deterministic} if it
 transforms deterministic maps of type $x$ into
 deterministic maps of type $y$ even in the
 presence of correlations with another elementary
 system $E$. Then, we say that a completely
 admissible-preserving map $M$ of type $x \to y$
 is \emph{admissible} if there exists a set of maps
 $N_i$ of type $x \to y$ such that $i)$ the $N_i$
 are completely admissible preserving and $ii)$
 the map $M + \sum_iN_i $ is deterministic.
 Intuitively, this condition requires that an
 admissible higher order map should arise in
 a higher order instrument, i.e. a collection of
 higher order maps that sum to a deterministic one.
 Clearly, any deterministic map is admissible.  We
 notice that the subnormalization condition for
 quantum operations can be rephrased in a similar
 way by requiring that there exists a collection of
 completely positive maps that sum to a quantum
 channel.

One can check that this 
recursive construction is well defined: once the admissibility is
given for elementary types (and we know that an admissible map of
elementary type is just a quantum state), the
admissibility for arbitrary types follows.
Furthermore,  this 
axiomatic approach never explicitly 
refers to the mathematical properties of the
maps in the hierarchy. The mathematical structure
of quantum theory only enters at the ground level
of the hierarchy, i.e. for elementary types, and
propagates inductively to the whole hierarchy.
The mathematical characterization of
admissible higher order maps is given in the
following proposition,
whose proof can be found in Ref.
\cite{doi:10.1098/rspa.2018.0706},
which provides an explicit constructive formula 
\begin{proposition}
  \label{thm:charthm}
  Let $x$ be a type and $M \in
  \mathcal{L}(\hilb{H}_x)$ a map of type $x$. Let
$\set{Hrm}(\hilb{H})$ and  $\set{Trl}(\hilb{H})$ denotes  the subspace of
  hermitian operator and traceless hermitian
  operators, respectively.
  Then $M$ is \emph{admissible} if and only if $M \geq 0$
    and $M \leq D$ for a deterministic map $D$.
A map $D$ of type $x$ is \emph{deterministic} if and
  only if    $D \geq 0$ and
  \begin{align}
    \label{eq:2}
D = \lambda_x I_x + X_x, \quad \lambda_x \geq 0,
    \quad X_x \in \set{\Delta}_x \subseteq \set{Trl}(\hilb{H}_x)
  \end{align}
 where 
  $\lambda_x $ and
  $\set{\Delta}_x$ are
  defined recursively as
  \begin{align}
    \begin{split}
     & \lambda_E = \frac{1}{d_E}, \;
      \lambda_{x\to y} = \frac{\lambda_y}{d_x
      \lambda_x}, \qquad \set{\Delta}_E =
    \set{Trl}(\hilb{H}_E), \\
       & 
       \set{\Delta}_{x \to y} =
[\set{Hrm}(\hilb{H}_x) \otimes {\set{\Delta}_{y}}]
    \oplus
[\overline{\set{\Delta}}_{x}   \otimes \set{\Delta}^\perp_{y} ],  
    \end{split}
  \end{align}
  and where $\set{\Delta}^\perp$
  denotes the orthogonal
   complement (with respect to the Hilbert-Schmidt
   inner product) of $\Delta$ in
   $\set{Hrm}(\hilb{H})$ while
   $\overline{\set{\Delta}}$ is
    the orthogonal
   complement in
   $\set{Trl}(\hilb{H})$.
  We denote with $\Eva{x}$ the set of admissible
maps of type $x$ and with $\Evd{x}$ the set of
deterministic maps of type $x$.
\end{proposition}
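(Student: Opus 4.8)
The plan is to prove both equivalences simultaneously by structural induction on the type $x$, using at type $x\to y$ the full statement (admissibility and determinism) for the strictly smaller types $x$ and $y$. In the base case $x=E$, an admissible map is by hypothesis a sub-normalised state, i.e.\ $0\le M$ with $\Tr M\le1$, which is exactly $0\le M\le D$ for a normalised state $D$; writing $D=\tfrac{1}{d_E}I_E+X_E$ with $X_E\in\set{Trl}(\hilb{H}_E)=\set{\Delta}_E$ is, by a one-line trace computation, equivalent to $D\ge0$ and $\Tr D=1$, so both claims hold at the ground level. The workhorse throughout is the Choi composition: for $D$ of type $x\to y$ and $A$ of type $x$ the output map of type $y$ has Choi operator $D\star A:=\Tr_x[(A^\theta\otimes I_y)D]$, which is bilinear, sends pairs of positive operators to a positive operator, and is therefore monotone in each argument.

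Assuming the proposition for $x$ and $y$, I would first establish the admissibility equivalence in the form ``$M$ is completely admissible-preserving iff $0\le M\le D$ for some deterministic $D$''. For the ``if'' direction, take $A\in\Eva{x}$ with $0\le A\le A_{\det}$ (inductive hypothesis); monotonicity of $\star$ gives $0\le M\star A\le D\star A_{\det}$, and since $D$ and $A_{\det}$ are deterministic the upper bound $D\star A_{\det}$ is a deterministic map of type $y$ by the very definition of a deterministic $D$, whence $M\star A\in\Eva{y}$; the ancilla $E$ is absorbed by applying the same argument to $M\otimes\mathcal{I}_E$. For the ``only if'' direction, positivity $M\ge0$ follows from a Choi-type argument (feeding admissible inputs with an ancilla whose Choi operators span $\Lin{\hilb{H}_x}$, the positivity of every admissible output forces $M\ge0$), while the deterministic upper bound is obtained by an explicit completion generalising $D=M+(I_A-\Tr_B M)\otimes\tau$ of the first-order case. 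Given this, admissibility as defined (a completely admissible-preserving $M$ sitting inside a completely admissible-preserving instrument summing to a deterministic map) collapses to the same set: if $0\le M\le D$ then $N:=D-M$ satisfies $0\le N\le D$, is itself completely admissible-preserving, and $M+N=D$ is deterministic.

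The core is the deterministic characterisation. A deterministic $D$ is completely admissible-preserving, hence $D\ge0$; for the affine part, note that $A\mapsto D\star A$ is affine and that the deterministic inputs affinely span $\lambda_x I_x+\set{\Delta}_x$ (because the interior point $\lambda_x I_x$ is positive definite). Decomposing a generic deterministic $A=\lambda_x I_x+X_x$ therefore splits the requirement $D\star A\in\lambda_y I_y+\set{\Delta}_y$ into (a) $\lambda_x\Tr_x[D]\in\lambda_y I_y+\set{\Delta}_y$, read off at $X_x=0$ using $D\star I_x=\Tr_x[D]$, and (b) $D\star X_x\in\set{\Delta}_y$ for all $X_x\in\set{\Delta}_x$. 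Writing $D=\lambda_{x\to y}I_{x\to y}+Z$ and expanding in the orthogonal refinement $\set{Hrm}(\hilb{H}_x)=\spn\{I_x\}\oplus\set{\Delta}_x\oplus\overline{\set{\Delta}}_x$ and $\set{Hrm}(\hilb{H}_y)=\set{\Delta}_y\oplus\set{\Delta}^\perp_y$ (with $I_y\in\set{\Delta}^\perp_y$), condition (a) only probes the $\spn\{I_x\}$-part of $D$ --- the partial trace annihilates the traceless $x$-factors --- and simultaneously fixes $\lambda_{x\to y}=\lambda_y/(d_x\lambda_x)$ and kills the $\spn\{I_x\}\otimes\set{\Delta}^\perp_y$ block of $Z$, while condition (b), using that the pairing $X_x\mapsto\Tr[X_x^\theta\,\cdot\,]$ vanishes on $\spn\{I_x\}\oplus\overline{\set{\Delta}}_x$ and is nondegenerate on $\set{\Delta}_x$, kills precisely the $\set{\Delta}_x\otimes\set{\Delta}^\perp_y$ block. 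The surviving blocks are exactly $\set{Hrm}(\hilb{H}_x)\otimes\set{\Delta}_y$ together with $\overline{\set{\Delta}}_x\otimes\set{\Delta}^\perp_y$, i.e.\ $Z\in\set{\Delta}_{x\to y}$; the converse direction reverses these steps, using positivity of $\star$ and the reconstructed (a)--(b) to certify $D\star A\in\Evd{y}$.

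The step I expect to be most delicate is the ``only if'' half of the admissibility equivalence --- specifically, producing a deterministic $D\ge M$ (the higher-order completion), which has no one-line analogue beyond first order and must be built from the operator inequalities encoded in complete admissibility-preservation. Two further points require care: the ancilla $E$ must be shown to add no affine constraints beyond (a)--(b), which amounts to checking that $\lambda$ and $\set{\Delta}$ are compatible with tensoring by the deterministic identity on $E$; and the block-vanishing arguments rely on $\set{\Delta}_x$ being transpose-invariant and on $\overline{\set{\Delta}}_x$ being its Hilbert--Schmidt complement in $\set{Trl}(\hilb{H}_x)$, so that the $\Tr[X_x^\theta\,\cdot\,]$ pairing coincides with the orthogonal decomposition. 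Both properties propagate through the recursion, since the identity and traceless subspaces are transpose-invariant and the transpose preserves the Hilbert--Schmidt inner product.
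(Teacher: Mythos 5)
A preliminary remark: the paper never proves this proposition at all --- both the main text and the supplement defer to Ref.~\cite{doi:10.1098/rspa.2018.0706} --- so your proposal can only be assessed on its own merits, not against an argument of the paper's. On those merits, the genuine gap is in your treatment of admissibility. You reduce the proposition to the claim that ``$M$ is completely admissible-preserving iff $0\le M\le D$ for some deterministic $D$'', and the forward implication --- your ``higher-order completion'' producing a deterministic $D\ge M$ from condition (i) alone --- is exactly the step you concede you cannot carry out, and for which you offer no argument. The first-order formula $D=M+(I_A-\Tr_B M)\otimes\tau$ exploits the fact that first-order normalization is a partial-trace identity, whereas at higher orders the normalization is the recursive affine constraint $\lambda_x I_x+\set{\Delta}_x$; already for testers ($x=\overline{A\to B}$) producing $\rho_A\otimes I_B\ge M$ requires an SDP-duality argument, and there is no evident inductive generalization. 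Worse, your claim is strictly stronger than the proposition: it asserts that the instrument condition, item (ii) of Definition~\ref{Sadmap}, is redundant given item (i) --- something you do not prove and the statement does not require. The intended (and easy) route uses item (ii) directly: an admissible $M$ comes, by definition, with completely admissible-preserving maps $N_i$ such that $D:=M+\sum_i N_i$ maps deterministic inputs to deterministic outputs; each $N_i$ has positive Choi operator (by the same spanning argument you invoke for $M$, which works because $\lambda_{x||E}I$ is deterministic, so $\Eva{x||E}$ contains small multiples of every rank-one projector), whence $M\le D$ immediately, and one then checks that $D$ is deterministic. Combined with your (correct) monotonicity argument for the converse --- $0\le M\le D$ gives (i), and $N:=D-M$ gives (ii) --- this settles the admissibility equivalence with no completion theorem at all. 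As written, your proof is incomplete precisely at its load-bearing step, and that step should not be in the proof in the first place.

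Two further points on the deterministic characterization, which is otherwise a sound plan (the affine-span reduction to your conditions (a)--(b) and the block analysis in $\spn\{I_x\}\oplus\set{\Delta}_x\oplus\overline{\set{\Delta}}_x$ do reproduce $\set{\Delta}_{x\to y}$, and transpose-invariance of $\set{\Delta}_x$ does propagate through the recursion). First, your induction scheme is inadequate as stated: determinism for $x\to y$ quantifies over ancilla-extended inputs of type $x||E$, and $x||E$ is \emph{not} among ``the strictly smaller types $x$ and $y$'' (it contains one more elementary system than $x$), so the inductive hypothesis as you set it up does not apply to it; the induction must instead be organized, e.g., over the number of arrows in the type expression, proving the statement simultaneously for all extensions $x||E$. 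Second, the assertion that ancillas ``add no affine constraints beyond (a)--(b)'' is itself a lemma requiring proof --- it is exactly where complete positivity interacts with the affine structure --- not a compatibility check that can be waved through; as it stands it is the other unproven ingredient your argument silently relies on.
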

This result is the main tool in the study of the
hierarchy of higher order maps.  For example, it
proves that the axiomatic construction
is consistent with the
characterization of quantum operations and also
incorporates
all the higher order maps that have been
considered in the literature, namely networks of
quantum operations (also known as \emph{quantum
  combs}
\cite{PhysRevA.80.022339,bisioacta}) and
\emph{process matrices}
\cite{Oreshkov:2012aa,araujo2015witnessing,oreshkov2016causal}.

\section{One type, many uses.}
For our purposes, it is important to stress
that a given map $M$ could in principle
be assigned more than one type.  If
$M \in \Evd{x} \cap \Evd{y}$ (we restrict to the
deterministic case for sake of simplicity), that
means that $M$ could be equivalentely used as a
map of type $x$ or as a map of type $y$.  For
example, let us consider a map
$M \in \Evd{(A \to B) \to (C \to D)})$ (see
Fig. \ref{fig:supermaps}).
\begin{figure}
    \centering
      \includegraphics[width=0.9\linewidth]{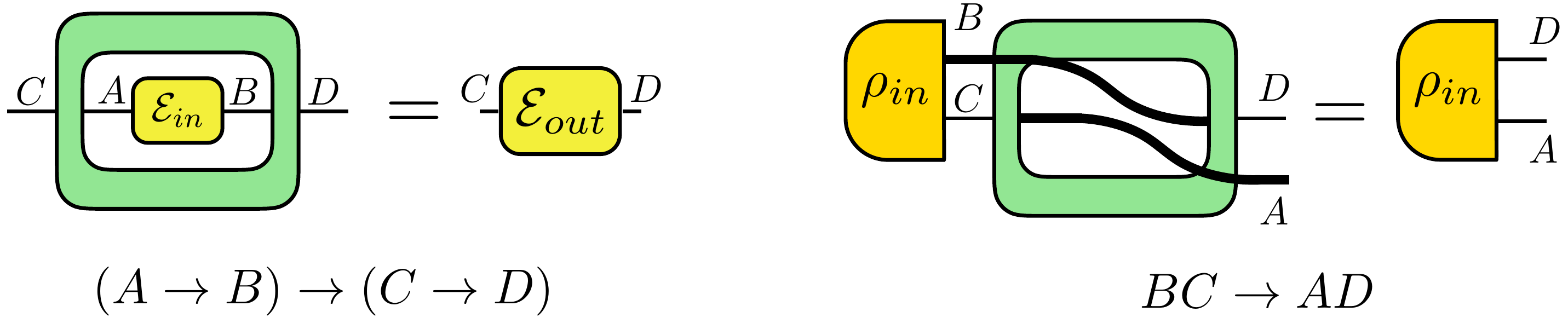}
    \caption{The same map $M$ could be regarded as a map
     from quantum operations to quantum operations
     %(i.e. $M \in \Evd{(A \to B) \to (C \to D)})$)
     or as a channel from bipartite states to
     bipartite states %(i.e. $M \in \Evd{CB \to
     % AD}$
   }
\label{fig:supermaps}
\end{figure}
From Proposition \ref{thm:charthm}, it follows that $M$ is a positive
operator which satisfies
$\Tr_D [M] = I_B \otimes M'$ , $\Tr_A [M'] = I_C$
which implies that
$\Tr_{AD} [M] = I_B \otimes I_C$. This means that
$M$ is a quantum channel from systems $C$ and $B$
to systems $A$ and $B$, namely $M \in \Evd{CB \to
  AD}$.
A relevant case is when any admissible map of type $x$ is also an
admissible map of type $y$:
\begin{definition}
  \label{def:partialorder}
  If $\Eva{x}
  \subseteq \Eva{y}$
   we say that $x$ is
\emph{included} in $y$ and we write $x \subseteq
y$. If both $x \subseteq y$ and $y \subseteq x$
we say that $x$ is equivalent to $y$ and we write
$x \equiv y$ \footnote{A straightforward consequence of Proposition
\ref{thm:charthm} is that
$\Evd{x}
\subseteq \Evd{y} \implies \Eva{x}
  \subseteq \Eva{y}$; therefore,
we can replace 
$\Eva{x}
\subseteq \Eva{y}$ with
$\Evd{x}
\subseteq \Evd{y}$
in
Definition~\ref{def:partialorder}.}.
\end{definition}
Thanks to Proposition \ref{thm:charthm} one prove
inclusions and
equivalences between types.
The previous example proved the inclusion
$(A \to B) \to (C \to D) \subseteq (CB \to AD)$.
Let us now define the types 
\begin{align}
  \label{eq:11}
  \overline{x}  := x \to I, \\
  x \otimes y := \overline{x \to \overline{y}}.
\end{align}
Then we have the 
equivalences 
\cite{doi:10.1098/rspa.2018.0706}
\begin{align}
  \label{eq:12}
  &\overline{\overline{x}} \equiv x, \\
  \label{eq:tensorequivalences}
  &\begin{aligned}
    &x \otimes y \equiv y \otimes x, \\
    &(x \otimes y)
  \otimes z \equiv x \otimes (y \otimes z) .   
  \end{aligned}
\end{align}
The maps of type $x \to I$ are linear functionals
on maps of type $x$. One can verify that the affine span of deterministic map
of type $x\to I$
is the dual affine space\cite{chiribella2016optimal} of the
affine span
of deterministic map
of type $x$.

The identities in Eq. \eqref{eq:tensorequivalences}
justify why the expression $x \otimes y$ is called
the \emph{tensor product } of $x$ and $y$. It is
worth noticing
that $\Evd{x\otimes y} $ is
strictly larger than the convex hull
$\set{Conv} \{ \Evd{x} \otimes \Evd{y}\}$. For
example, the set
$\Evd{(A\to B)\otimes (C \to D)} $ is the set of
\emph{non-signalling} channels from $AC$ to $BD$.
Process matrices, which describe the most general
quantum correlations among $N$ distant parties
that can perform local experiments
\cite{Oreshkov:2012aa,branciard2015simplest,PhysRevX.8.011047},
are maps of type
$ \overline{\bigotimes_{i=1}^N (A_i \to B_i) }$,
i.e.  functionals on non-signalling channels.  The
identities of Equation \eqref{eq:12} are useful
for proving other type idenitities, for example we
have
$\overline{A \to B} = \overline{A \to
  \overline{\overline{B}}}= A \otimes \overline{B}
$. One can easily show that
$\Evd{\overline{B}\otimes A := \{ I_B \otimes
  \rho_A , \Tr\rho_A =1 \} } $ i.e. the most
general deterministic functional on $A \to B$ is a
map which prepares an arbitrary state on system
$A$ and then discard system $B$.

It turns out that all the examples of higher order
maps that have been considered % (i.e. networks of
%channels and process matrices)
share the feature that they can always be
interpreted (and used) as multipartite quantum
channels which arise by ``stretching out'' the
wires (see Fig.~\ref{fig:supermaps}). This is a
general feature of any type in the hierarchy. The set
$\set{Ele}_x$ of the (non-trivial) elementary
types that occur in the expression of the type $x$
can be split into two disjont subsets $\set{in}_x$
and $\set{out}_x$ such that any map in $\Evd{x}$
is also a channel from $\set{in}_x$ to
$\set{out}_x$.  Moreover, we have that any map
which discards the systems $\set{in}_x$ and prepare
an arbitrary state on $\set{out}_x$ belongs to
$\Evd{x}$.
\begin{figure}
    \centering
      \includegraphics[width=\linewidth]{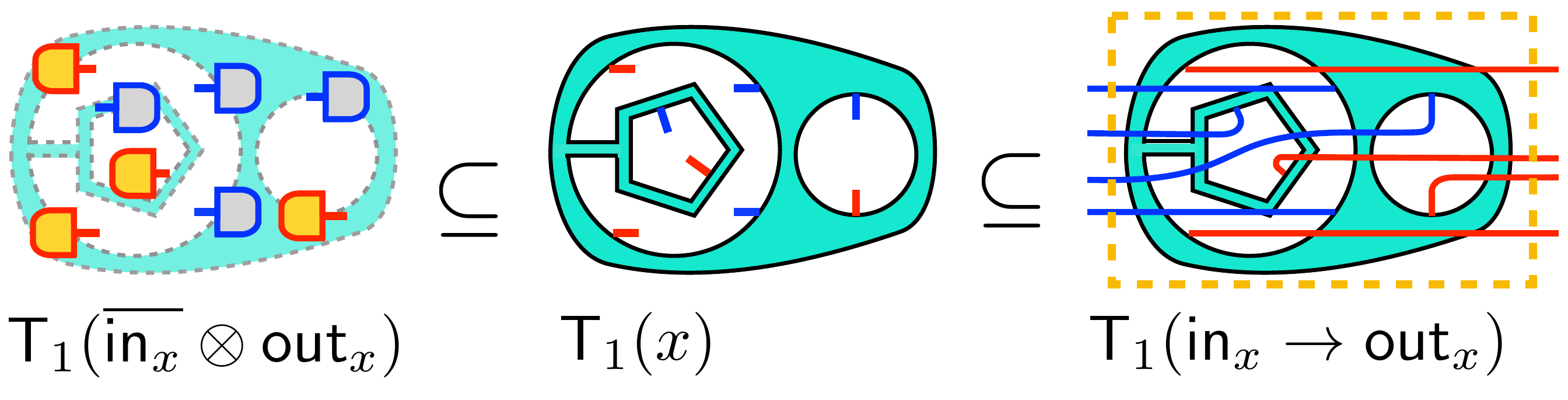}
    \caption{ Any (non-trivial) elementary type
      occurring in a type $x$ is identified as an
      input (blue) or output (red) system.
On the one hand, a higher order map of type $x$ can be
used as a
      quantum channel from $\set{in}_x$ to
      $\set{out}_x$ (right inclusions).
    On the other hand, a map which discards
    $\set{in}_x$ and prepare a state in
    $\set{out}_x$, i.e. a map of type
    $\overline{\set{in}_x} \otimes \set{out}_x$,
    is always a deterministic map of type $x$
    (left inclusion).
  }
\label{fig:inclusions}
    \end{figure}
These considerations are pictorially
represented in Fig. \ref{fig:inclusions} and 
are stated in the following Proposition.
\begin{proposition}\label{prp:inclusion}
  For any type $x$  we denote with $\set{Ele}_x$
the set   of the (non-trivial) elementary
types that occur in the expression of $x$
and  we define
  the function $K_x :  \set{Ele}_x \to \{0,1 \}$ as
   $K_x(A) = \# [``\to" ] + \# [``(" ]  \pmod2$
 where $\# [``\to " ]$ and $\#  [`` ( " ] $ denote
the number of arrows $" \to "$ and open round brackets
$" ( "$ to the right of $A$ in the expression of
$x$, respectively.
If we define the  
the sets $\set{in}_x := \{A \in \set{Ele}_x
\mbox{ s.t. } K_x(A) = 1\}$ and $ \set{out}_x : =
\set{Ele}_x \setminus  \set{in}_x$,
we have the inclusions
\begin{align}
  \label{eq:3}
  \overline{\set{in}_x} \otimes \set{out}_x
  \subseteq x \subseteq \set{in}_x \to \set{out}_x
\end{align}  
\end{proposition}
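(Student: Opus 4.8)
The plan is to prove the two inclusions of \ref{prp:inclusion} simultaneously by reducing them, through Proposition~\ref{thm:charthm} and the footnote to Definition~\ref{def:partialorder} (which lets us work with deterministic sets only), to a single two-sided inclusion between the $\set{\Delta}$-spaces. First I observe that $\set{in}_x\to\set{out}_x$ and $\overline{\set{in}_x}\otimes\set{out}_x$ are \emph{flat} types, obtained by stretching out all wires. A routine application of the recursion of Proposition~\ref{thm:charthm} to these two types gives $\lambda_{\set{in}_x\to\set{out}_x}=\lambda_{\overline{\set{in}_x}\otimes\set{out}_x}=1/d_{\set{out}_x}$ together with
\begin{align*}
\set{\Delta}_{\set{in}_x\to\set{out}_x} &= \set{Hrm}(\hilb{H}_{\set{in}_x})\otimes\set{Trl}(\hilb{H}_{\set{out}_x}),\\
\set{\Delta}_{\overline{\set{in}_x}\otimes\set{out}_x} &= \mathbb{R} I_{\set{in}_x}\otimes\set{Trl}(\hilb{H}_{\set{out}_x}).
\end{align*}
Since Proposition~\ref{thm:charthm} gives $\Evd{u}\subseteq\Evd{v}$ if and only if $\lambda_u=\lambda_v$ and $\set{\Delta}_u\subseteq\set{\Delta}_v$, the whole statement is equivalent to $\lambda_x=1/d_{\set{out}_x}$ together with
\begin{align*}
\mathbb{R} I_{\set{in}_x}\otimes\set{Trl}(\hilb{H}_{\set{out}_x}) \subseteq \set{\Delta}_x \subseteq \set{Hrm}(\hilb{H}_{\set{in}_x})\otimes\set{Trl}(\hilb{H}_{\set{out}_x}). \tag{$\star$}
\end{align*}

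Second, I would record how the in/out partition behaves under the arrow. I claim $\set{in}_{y\to z}=\set{out}_y\sqcup\set{in}_z$ and $\set{out}_{y\to z}=\set{in}_y\sqcup\set{out}_z$. This follows directly from the definition of $K_x$ once one notes that in any balanced type expression the number of arrows equals the number of opening brackets, so that the arrows and brackets contributed by a complete sub-expression to its right always sum to an even number. Hence, passing from $z$ (respectively $y$) to $y\to z$, the count to the right of an elementary symbol of $\set{Ele}_z$ is unchanged, whereas the count for a symbol of $\set{Ele}_y$ increases by exactly $1$ (the single top arrow) modulo $2$; this preserves membership on $\set{Ele}_z$ and flips it on $\set{Ele}_y$, which is precisely the ``flip on the domain, keep on the codomain'' rule.

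I then prove $(\star)$, together with $\lambda_x=1/d_{\set{out}_x}$, by induction on the structure of $x$. The base case $x=A$ is immediate since $\set{in}_A=\emptyset$ and $\set{\Delta}_A=\set{Trl}(\hilb{H}_A)$. For $x=y\to z$ I use
\begin{align*}
\set{\Delta}_{y\to z}=[\set{Hrm}(\hilb{H}_y)\otimes\set{\Delta}_z]\oplus[\overline{\set{\Delta}}_y\otimes\set{\Delta}_z^\perp]
\end{align*}
and the partition above; the normalisation $\lambda_x=1/d_{\set{out}_x}$ drops out of the recursion $\lambda_{y\to z}=\lambda_z/(d_y\lambda_y)$. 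For the upper bound in $(\star)$ I show $\Tr_{\set{out}_x}\set{\Delta}_x=0$: the first summand is annihilated by $\Tr_{\set{out}_z}$ via the \emph{upper} inductive bound on $z$, while the second is annihilated by $\Tr_{\set{in}_y}$ because the \emph{lower} inductive bound on $y$, after complementation in $\set{Trl}(\hilb{H}_y)$, forces $\overline{\set{\Delta}}_y\subseteq\set{Trl}(\hilb{H}_{\set{in}_y})\otimes\set{Hrm}(\hilb{H}_{\set{out}_y})$. For the lower bound I take a generator $I_{\set{in}_x}\otimes T$ with $T\in\set{Trl}(\hilb{H}_{\set{out}_x})$, split it along $\set{Trl}(\hilb{H}_{\set{out}_x})=[\set{Hrm}(\hilb{H}_{\set{in}_y})\otimes\set{Trl}(\hilb{H}_{\set{out}_z})]\oplus[\set{Trl}(\hilb{H}_{\set{in}_y})\otimes\mathbb{R} I_{\set{out}_z}]$, and place the two pieces into the two summands of $\set{\Delta}_{y\to z}$, using the lower bound on $z$ and the inclusion $\set{Trl}(\hilb{H}_{\set{in}_y})\otimes\mathbb{R} I_{\set{out}_y}\subseteq\overline{\set{\Delta}}_y$ coming from the \emph{upper} bound on $y$.

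The main obstacle is that the induction is genuinely \emph{coupled}: the upper bound for $x$ consumes the lower bound for $y$, and the lower bound for $x$ consumes the upper bound for $y$, so the two inclusions of \ref{prp:inclusion} cannot be proved in isolation and must be carried through \ref{thm:charthm} as the single strengthened hypothesis $(\star)$. The delicate steps are exactly the two statements about $\overline{\set{\Delta}}_y$ above; both are obtained by intersecting the inductive bounds with the orthogonal decomposition $\set{Trl}(\hilb{H}_y)=[\set{Hrm}(\hilb{H}_{\set{in}_y})\otimes\set{Trl}(\hilb{H}_{\set{out}_y})]\oplus[\set{Trl}(\hilb{H}_{\set{in}_y})\otimes\mathbb{R} I_{\set{out}_y}]$ and using that orthogonal complementation reverses inclusions. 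Once this tensor/complement bookkeeping is set up, both bounds close and $(\star)$, hence \ref{prp:inclusion}, follows.
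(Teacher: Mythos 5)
Your proof is correct and is essentially the paper's own argument (supplemental Proposition~\ref{Spropositionchannel}): the same reduction through Proposition~\ref{thm:charthm} to equality of the normalisation constants plus a two-sided inclusion of the characterising subspaces, followed by the same coupled structural induction in which the bound for $y\to z$ on one side consumes the complemented bound on $y$ from the other side. The only difference is notational: the paper encodes the subspaces $\set{\Delta}_x$ as sets $D_x$ of binary strings (with $W$, $T$, $\mathbf{e}$ playing the roles of $\set{Hrm}$, $\set{Trl}$, $\mathbb{R}I$), so that your complemented inclusions $\set{Trl}(\hilb{H}_{\set{in}_y})\otimes\mathbb{R}I_{\set{out}_y}\subseteq\overline{\set{\Delta}}_y\subseteq\set{Trl}(\hilb{H}_{\set{in}_y})\otimes\set{Hrm}(\hilb{H}_{\set{out}_y})$ are precisely its string inclusions $T_{\set{in}_x}\mathbf{e}_{\set{out}_x}\subseteq D_{\overline{x}}\subseteq T_{\set{in}_x}W_{\set{out}_x}$.
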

\begin{proof}
The proof of this proposition relies on Proposition
\ref{thm:charthm} and can be found in the
supplemental Material
\cite{supplement}.
\end{proof}

\section{Compositional structure of higher order
  maps.}
Applying a quantum channel to a state, is not the
only way to use a channel.
We can compose a channel from $A$ to $B$ with one from $B$
to $C$ to obtain a channel
from $A$ to $C$. More generally, multipartite
channels can be connected only through
some of their inputs and outputs,
e.g. $\begin{array}{l}
        \!\!\!\!\includegraphics[width=0.08\linewidth]{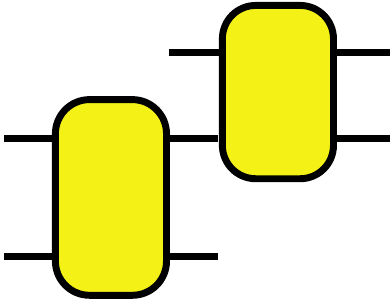}\end{array}$.

What does it mean to compose two higher order maps?
Any map of type $x \to y$ can clearly be composed
with a map of type $y \to z$ to obtain a map of
type $x \to z$ but what about more general
composition schemes? 
 Let's consider two maps
$R=\begin{array}{l}\includegraphics[width=0.17\linewidth]{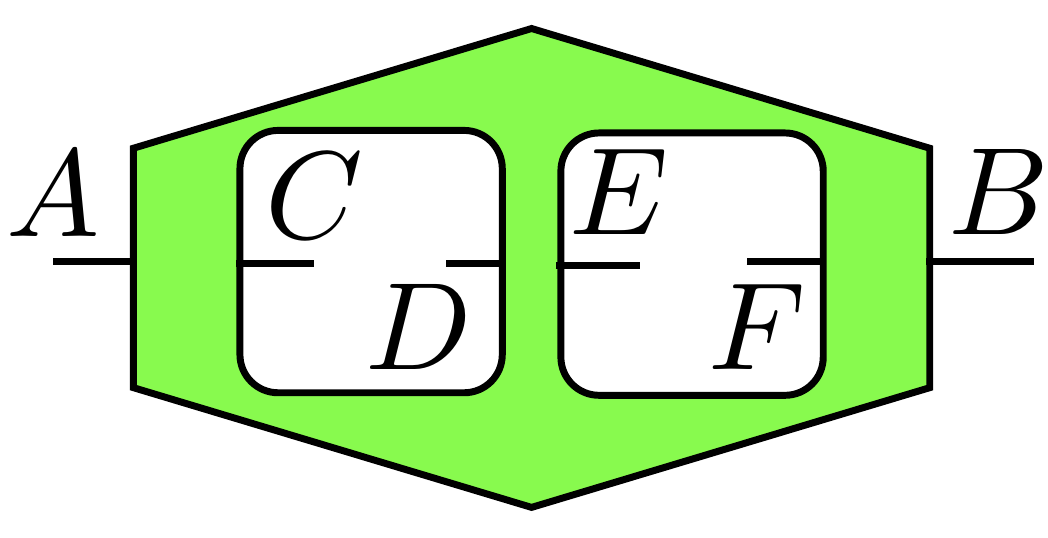}\end{array}$
of type $x$ and
$T=\begin{array}{l}\includegraphics[width=0.13\linewidth]{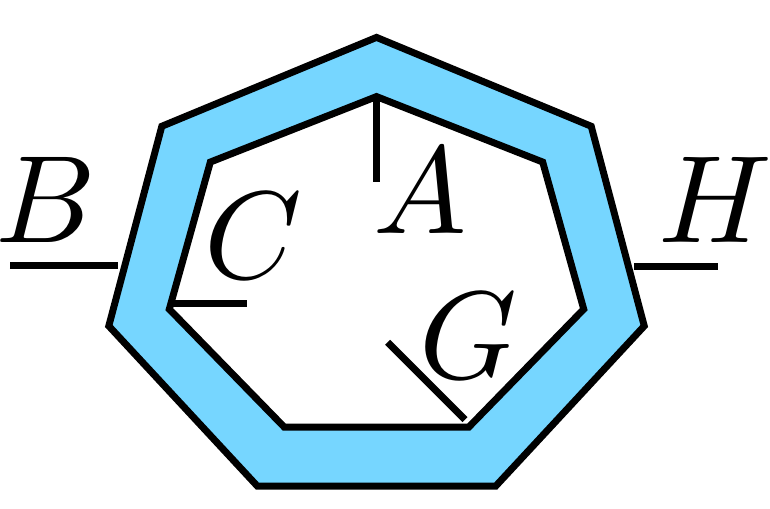}\end{array}$
of type $y$. Intuitively, to compose $R$ and $T$
should mean to connect the elementary systems that
they share, i.e.
$R * T := \begin{array}{l}\includegraphics[width=0.4\linewidth]{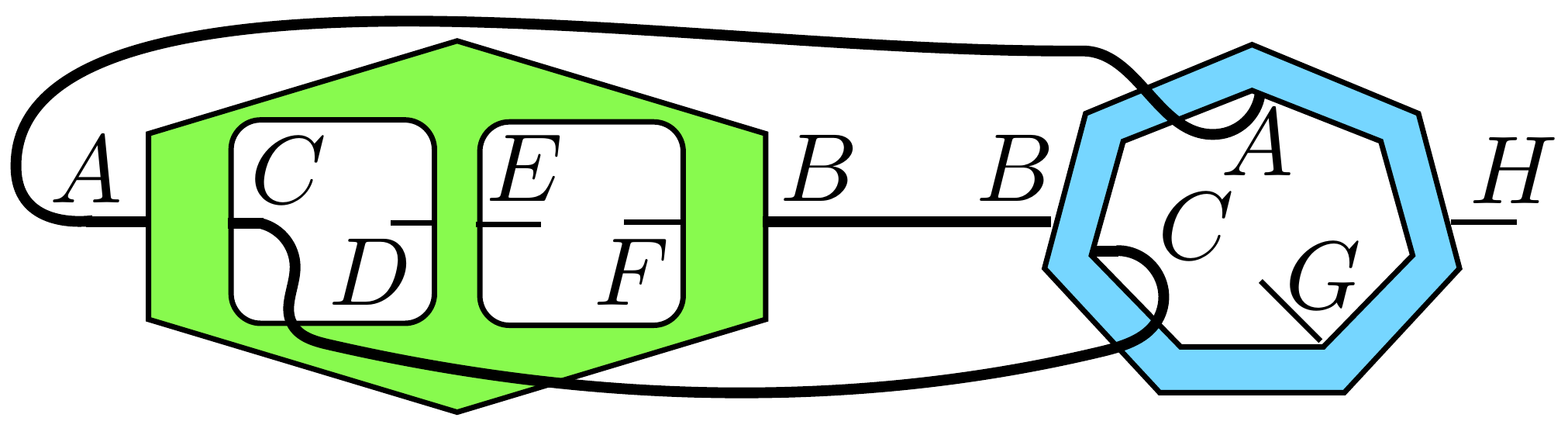}\end{array}
= \begin{array}{l}\includegraphics[width=0.3\linewidth]{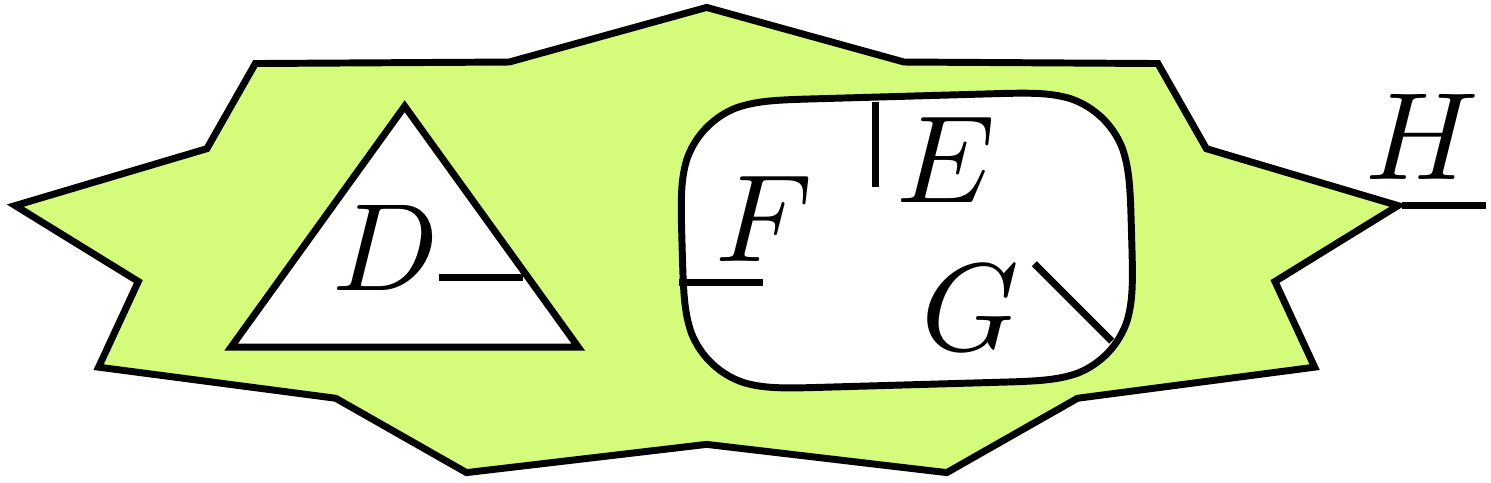}\end{array}$.
In terms of the Choi isomorphism, this operation is 
given by the so called \emph{link product} \cite{PhysRevA.80.022339}, namely 
\begin{align}
  \label{eq:5}
  \begin{aligned}
    & R *  T= \Tr_{\set{S}}
  [(R\otimes
  I_{\set{B}\setminus \set{S}})
  (I_{\set{A}\setminus \set{S}}  \otimes T^{\theta_{\set{S}}} ) ]\\
 & R \in  \mathcal{L}(\hilb{H}_{\set{A}}) , \;\;
  T \in  \mathcal{L}(\hilb{H}_{\set{B}}) ,
  \end{aligned}
\end{align}
where $\set{S}:= \set{A} \cup \set{B}$ is the set of systems that $R$ and
$S$ share and $T^{\theta_{\set{S}}}$ is the
partial traspose of $T$ on the systems of
$\set{S}$. Notice that for the case $\set{S} =
\emptyset$ we have $ R *  T = R\otimes T$

However, we cannot expect that  any composition
between two arbitrary maps of  type $x$ and $y$
should be physically admissible. For example, 
the composition of  two effects $E_1 ,E_2 \in
\Eva{\overline{A}}=\{  E\geq0 , E \leq I  \}$ acting on the same
system $A$, i.e. 
$E_1 * E_2 = \begin{array}{l}\includegraphics[width=0.12\linewidth]{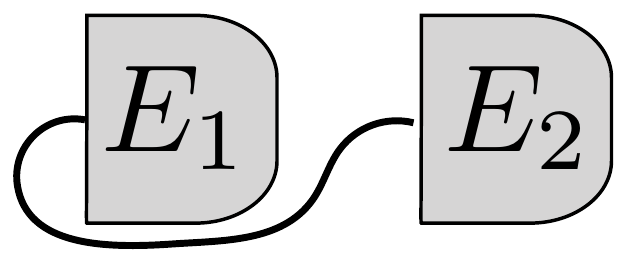}\end{array}$,
is a meaningless operation \footnote{Indeed if we
choose e.g.
$\hilb{H}_A = \mathbb{C}^3$ and $E_1 = E_2 =
\ketbra{0}{0} +\ketbra{1}{1}$ we obtain $E_1 * E_2 =
2$ which has no physical interpretation}.
At the very least, we should require that
the composition between two types $x$ and $y$ is
admissible if, whenever we compose a map of type
$x$ with a map of type $y$ we obtain something
that could be interpreted as an admissible higher
order map. This motivates the following
definition:
\begin{definition}
  \label{def:admiscomp}
Let $x$ and $y$ be two types such that they share
a set of elementary types.
We say that the composition $x * y$ is
\emph{admissible}
if
\begin{align}
  \begin{split}
  \label{eq:7}
  \forall R \in  \Evd{x} , \,  \forall T \in \Evd{y}, \; \exists
  z \mbox{ s.t. } R * T \in  \Evd{z}
  \end{split}
\end{align}
\end{definition}
\noindent We notice that Equation \eqref{eq:7}
only deals with deterministic maps. This is not
too permissive since Equation \eqref{eq:7} 
implies that 
$ R * T \in  \Eva{z}$ for any $R \in  \Eva{x}$ and
 $T \in \Eva{y}$.

 Our next goal is to provide a 
 characterization of the composition of type that
 are admissible according to Definition
 \ref{def:admiscomp}. Before doing that,
 we introduce the following slight variation of the 
composition of maps.
\begin{definition}
  Let $x$ be a type  and let $A, B \in
  \set{Ele}_x$ with $\dim{\hilb{H}_A} =
  \dim{\hilb{H}_B}$ \footnote{The condition $\dim{\hilb{H}_A} =
  \dim{\hilb{H}_B}$ makes the discussion
  more transparent but it is immaterial}. 
  We say that the \emph{contraction} $\mathcal{C}_{AB}(x)$ of $A$ and
  $B$ in $x$ is admissible if
  \begin{align}
    \label{eq:9}
    \forall R \in  \Evd{x}  \; \exists
  z \mbox{ s.t. } \mathcal{C}_{AB}(R) :=  R * \Phi_{AB} \in  \Evd{z}
  \end{align}
  where $\Phi_{AB} := \sum_{i,j} \ket {ii} \bra{jj} \in
  \Lin{\hilb{H}_A\otimes \hilb{H}_B}$.
\end{definition}
\noindent Basically, this definition determines whether it
is allowed to connect system $A$ with system $B$
for any map of type $x$, e.g.
$\mathcal{C}_{AB}(R)
= \begin{array}{l}\includegraphics[width=0.2\linewidth]{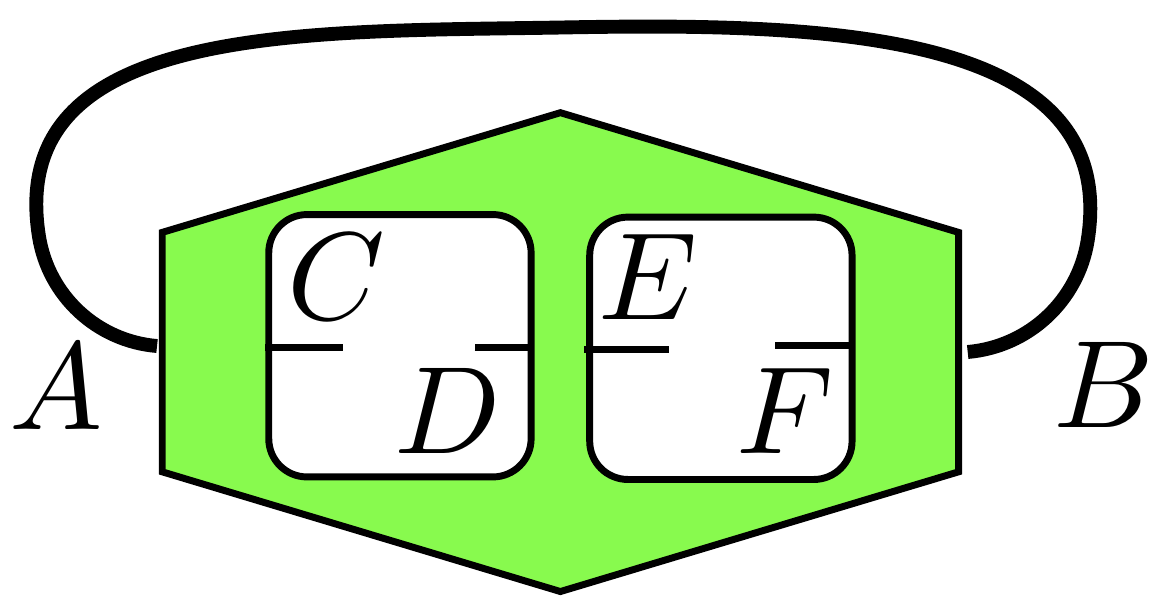}\end{array}$. A 
straightforward computation proves the
(diagrammatically trivial) fact 
that the composition of two
higher order maps can be written in terms of contractions as
follows:
\begin{align}
  \label{eq:13}
  R * S =  \mathcal{C}_{AA} ( \mathcal{C}_{BB}
  (\dots (R \otimes S)\dots ) 
\end{align}
where $A, B ,\dots $ are the elementary types
involved in the composition.  This observation
leads to the following result presented informally
here whose formalisation and proof is given in the
supplemental material \cite{supplement}.
\begin{lemma} \label{lmm:manycontractions}
  The composition $x * y$ is admissible if and
  only if the contractions
  $\mathcal{C}_{AA} ( \mathcal{C}_{BB}
  (\dots (x \otimes y)\dots ) $ are admissible.
\end{lemma}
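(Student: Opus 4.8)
The plan is to reduce the statement to the single-contraction case via Eq.~\eqref{eq:13}, and then to characterise the admissibility of one contraction as a purely \emph{affine} condition, which can consequently be tested on any affinely spanning subset of deterministic maps. First I would observe that a contraction never spoils positivity: since $\Phi_{AB}^{\theta_{AB}}=\Phi_{AB}\ge 0$, the contraction reads $\mathcal{C}_{AB}(R)=\Tr_{AB}[R\,(I\otimes\Phi_{AB})]$, and because the trace of a product of two positive operators is nonnegative one gets $\langle\psi|\mathcal{C}_{AB}(R)|\psi\rangle=\Tr[R\,(\ketbra{\psi}{\psi}\otimes\Phi_{AB})]\ge 0$, hence $\mathcal{C}_{AB}(R)\ge 0$ whenever $R\ge 0$. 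By Proposition~\ref{thm:charthm} the only remaining content of ``$\mathcal{C}_{AB}(R)$ is deterministic'' is then the normalisation condition of Eq.~\eqref{eq:2}, namely $\mathcal{C}_{AB}(R)=\lambda_z I_z+X_z$ with $X_z\in\set{\Delta}_z$. As $\mathcal{C}_{AB}$ is linear and $\Evd{x}$ has affine hull $\{\lambda_x I_x+X_x:X_x\in\set{\Delta}_x\}$, the contraction is admissible if and only if $\mathcal{C}_{AB}$ carries this affine space into $\{\lambda_z I_z+X_z:X_z\in\set{\Delta}_z\}$ --- an affine (indeed linear, up to the fixed centre $\lambda_z I_z$) constraint.

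For the ``if'' direction I would take $R\in\Evd{x}$, $T\in\Evd{y}$ and first check $R\otimes T\in\Evd{x\otimes y}$. Writing $R=\lambda_x I_x+X_x$, $T=\lambda_y I_y+X_y$ and expanding, the identity component is $\lambda_x\lambda_y\,I_{x\otimes y}=\lambda_{x\otimes y}I_{x\otimes y}$ (using $\lambda_{x\otimes y}=\lambda_x\lambda_y$, which follows from the recursion in Proposition~\ref{thm:charthm}), while the three remaining terms lie in $\set{\Delta}_x\otimes I_y$, $I_x\otimes\set{\Delta}_y$ and $\set{\Delta}_x\otimes\set{\Delta}_y$; since positivity is manifest, $R\otimes T\in\Evd{x\otimes y}$. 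Assuming each contraction in the nest is admissible, I would apply them successively: each step keeps the running operator deterministic for the next type, and by Eq.~\eqref{eq:13} the final operator equals $R * T$, which therefore lies in some $\Evd{z}$, so $x * y$ is admissible.

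The ``only if'' direction is the delicate one, and its heart is the gap between product inputs and general bipartite deterministic maps: admissibility of $x * y$ constrains only inputs of the form $R\otimes T$, whereas admissibility of the contractions must hold for \emph{all} of $\Evd{x\otimes y}$, a convex set strictly larger than $\set{Conv}\{\Evd{x}\otimes\Evd{y}\}$. The key observation closing this gap is that, although products do not convexly generate $\Evd{x\otimes y}$, they do \emph{affinely} span it. To prove this I would compute $\set{\Delta}_{x\otimes y}$ from the recursion: using $\set{\Delta}_{\overline{x}}=\overline{\set{\Delta}}_x$ (the $y=I$ instance of the recursion in Proposition~\ref{thm:charthm}), the identity $x\otimes y=\overline{x\to\overline{y}}$, and the splitting $\set{Hrm}(\hilb{H}_y)=\mathbb{R} I_y\oplus\set{\Delta}_y\oplus\overline{\set{\Delta}}_y$, one finds
\begin{align}
\label{eq:Deltatensor}
\set{\Delta}_{x\otimes y}=(\set{\Delta}_x\otimes I_y)\oplus(I_x\otimes\set{\Delta}_y)\oplus(\set{\Delta}_x\otimes\set{\Delta}_y).
\end{align}
This is exactly the traceless part of $\spn(\Evd{x})\otimes\spn(\Evd{y})=(\mathbb{R} I_x\oplus\set{\Delta}_x)\otimes(\mathbb{R} I_y\oplus\set{\Delta}_y)$, so $\operatorname{aff}(\Evd{x}\otimes\Evd{y})=\operatorname{aff}(\Evd{x\otimes y})$.

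With Eq.~\eqref{eq:Deltatensor} in hand, the linear contraction sends every product input into $\operatorname{aff}(\Evd{z})$ by hypothesis, hence by affine spanning it sends all of $\operatorname{aff}(\Evd{x\otimes y})$ into $\operatorname{aff}(\Evd{z})$; together with the automatic preservation of positivity this shows the total contraction maps $\Evd{x\otimes y}$ into $\Evd{z}$. The last point to settle, and the step I expect to be the main obstacle, is to pass from admissibility of the \emph{total} contraction to admissibility of \emph{each} contraction in the nest, since a composite linear map can be well behaved even when its factors are not. Here I would argue that the normalisation constraints defining each intermediate type are supported on disjoint groups of elementary systems, so that contracting a single pair $A,B$ neither creates nor destroys the determinism conditions on the untouched systems; the intermediate type therefore exists and each $\mathcal{C}_i$ already lands in the corresponding $\Evd{w_i}$. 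Making this locality argument rigorous --- tracking how $\set{\Delta}$ transforms under a single contraction and verifying that the centres $\lambda_{w_i}$ match at every stage --- is the technical core I would defer to the supplemental material.
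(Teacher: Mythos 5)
Your route is in essence the paper's own: the easy direction follows from $R\otimes T\in\Evd{x\otimes y}$ together with Eq.~\eqref{eq:13}, and the hard direction rests on the fact that $\Evd{x\otimes y}$ is exactly the set of positive operators in the affine hull of $\Evd{x}\otimes\Evd{y}$, so that a positivity-preserving linear map which is well behaved on products is well behaved on the whole type. Your computations of $\set{\Delta}_{x\otimes y}$ and of $\lambda_{x\otimes y}=\lambda_x\lambda_y$ are correct and coincide with the combinatorial facts the paper uses.

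There is, however, a genuine gap at the pivotal step ``the linear contraction sends every product input into $\operatorname{aff}(\Evd{z})$ by hypothesis''. The hypothesis, Definition~\ref{def:admiscomp}, puts the existential quantifier over $z$ \emph{inside} the universal quantifier over $(R,T)$: each product $R*T$ lands in $\Evd{z_{R,T}}$ for some type $z_{R,T}$ which may a priori depend on the pair. Distinct types on the same set of elementary systems generally have distinct normalisations $\lambda_z$ and distinct subspaces $\set{\Delta}_z$, so without a uniform choice of $z$ there is no single affine subspace $\operatorname{aff}(\Evd{z})$ into which the affine span of the products can be said to be carried, and your affine-spanning argument cannot start. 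Promoting the varying $z_{R,T}$ to the fixed channel type $\widetilde{\set{in}}\to\widetilde{\set{out}}$ is precisely the content of Proposition~\ref{prop:contractadmiss} of the main text (Proposition~\ref{Sprop:charapropositioncomposition} of the supplement), and proving it is the real technical core of the result: the paper needs the special deterministic maps $Q_x=I_{\set{in}_x}\otimes\ketbra{0}{0}_{\set{out}_x}$, mixtures $pR+(1-p)Q_x$ with irrational weight $p$ to pin down $\Tr[R*S]=d_{\widetilde{\set{in}}}$ (Lemma~\ref{Slmm:tracefixeddin}), and dimension-counting arguments that exclude input-input and output-output identifications (Lemma~\ref{Slmm:noinputinput}) and force $\set{in}_z=\widetilde{\set{in}}$. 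None of this is recoverable from the affine algebra you set up; once it is available, the paper's proof of the present Lemma (Lemma~\ref{Slmm:contrccomposeq} in the supplement) is exactly your two-line affine-combination argument. Conversely, the step you single out as the ``main obstacle'' --- descending from the total contraction to each individual contraction in the nest --- is not needed for this Lemma at all: in the paper's formalisation the statement concerns the joint contraction $\mathcal{C}_{\set{H}}$ as a single operation, and the relation between joint and individual admissibility is a separate, later result (Proposition~\ref{Sprop:subsetofcontractionarenotadmissibleyoucannotrecover}). So you defer a difficulty that does not arise, while silently assuming the one that does.
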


%\begin{proof}
% The ``if'' direction of the proof is a
% straightforward consequence of Equation
% \eqref{eq:13}. Since the $\Evd{x
% \otimes y}$ is a larger set
%  than the convex hull $\set{Conv} \{
%   \Evd{x} \otimes \Evd{y}\}$, ``only if'' part is less trivial
% and it needs the characterization of $\Evd{x
% \otimes y}$ which follows from Proposition \ref{thm:charthm}
% (see
% Ref. \cite{supplement} for a detailed proof).  
% \end{proof}

\begin{figure}[t]
    \includegraphics
   [width=0.47\textwidth]{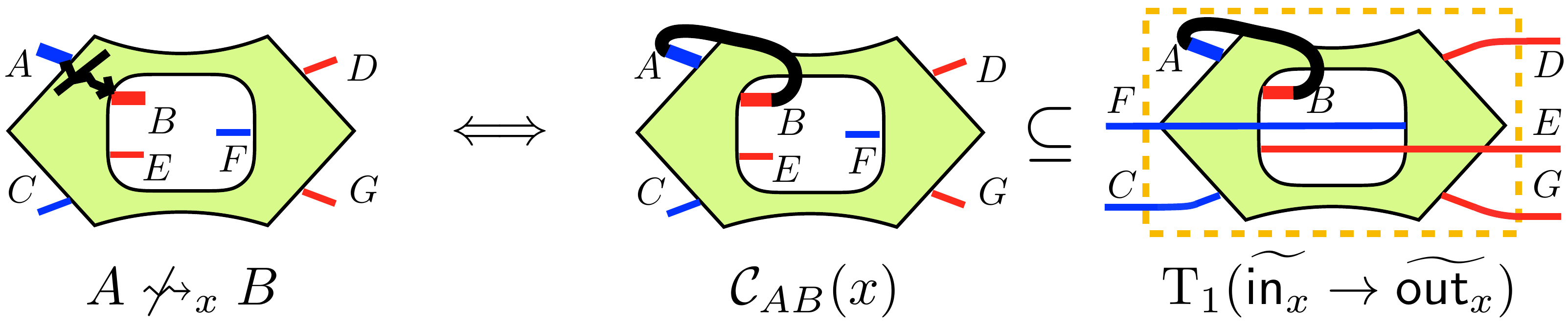}
   \caption{Admissibility of a contraction
     and no-signalling. 
     %$x = (E \to ((A \to D)\to (B \to
     %(\overline{C} \to F)) ))\to G $
     }
   \label{fig:admcontraction}
 \end{figure}

Thanks to Lemma \ref{lmm:manycontractions}  the characaterization of
  the admissible compositions is equivalent to
  the characterization of the admissible
  contractions. This is achieved by the following
  proposition which we prove in the supplemental material
  \cite{supplement} (see also Fig. \ref{fig:admcontraction}).
\begin{proposition}\label{prop:contractadmiss}
Let $x$ be a type and $A,B \in \set{Ele}_x$.
If $A,B \in \set{in}_x$ or $A,B \in \set{out}_x$
then  $\mathcal{C}_{AB} (x)$ is not admissible.
If $A \in \set{in}_x$ and  $B \in \set{out}_x$
then 
$\mathcal{C}_{AB} (x)$ is admissible if and only
if, for any $R \in \Evd{x}$ we have that 
$\mathcal{C}_{AB} (R) \in
\Evd{\widetilde{\set{in}_x} \to
  \widetilde{\set{out}_x} }$
where we defined $\widetilde{\set{in}_x} :=
\set{in}_x \setminus A$ and
$\widetilde{\set{out}_x} :=
\set{out}_x \setminus B$.
\end{proposition}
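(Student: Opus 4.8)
The plan is to run everything through the two inclusions $\overline{\set{in}_x}\otimes\set{out}_x \subseteq x \subseteq \set{in}_x \to \set{out}_x$ of Proposition~\ref{prp:inclusion}. The right inclusion lets me treat every $R\in\Evd{x}$ as a channel from $\set{in}_x$ to $\set{out}_x$, so that $R\geq 0$ and $\Tr_{\set{out}_x}[R]=I_{\set{in}_x}$, while the left inclusion supplies the explicit deterministic probes $I_{\set{in}_x}\otimes\sigma$ (discard $\set{in}_x$, prepare a state $\sigma$ on $\set{out}_x$), which lie in $\Evd{x}$. Writing $\ket{\Phi}_{AB}:=\sum_i\ket{ii}$, I first record the elementary fact that the contraction is a conjugation, $\mathcal{C}_{AB}(R)=(\bra{\Phi}_{AB}\otimes I)\,R\,(\ket{\Phi}_{AB}\otimes I)$; hence $\mathcal{C}_{AB}(R)\geq 0$ holds automatically and the whole question becomes one of \emph{normalization} rather than positivity. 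I will use repeatedly that any $D\in\Evd{z}$ has strictly positive trace $\lambda_z d_z$ (since $\lambda_z>0$ by the recursion of Proposition~\ref{thm:charthm} and $X_z$ is traceless).

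For the case $A,B\in\set{out}_x$ a single probe suffices. Take $R=I_{\set{in}_x}\otimes\sigma$ with $\sigma=\ketbra{0}{0}_A\otimes\ketbra{1}{1}_B\otimes\sigma'$ (possible since $\dim\hilb{H}_A=\dim\hilb{H}_B\geq 2$ for non-trivial systems). Because $\bra{\Phi}_{AB}\,(\ketbra{0}{0}_A\otimes\ketbra{1}{1}_B)\,\ket{\Phi}_{AB}=0$, the contraction gives $\mathcal{C}_{AB}(R)=0$, which is not deterministic for any $z$. Hence, by Eq.~\eqref{eq:9}, $\mathcal{C}_{AB}(x)$ is not admissible.

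The two remaining cases I organise around the ``role'' of the surviving elementary systems inside the output type $z$. The easy half is that an original output stays an output: if a surviving $C\in\set{out}_x$ were in $\set{in}_{z}$, then $z\subseteq\set{in}_z\to\set{out}_z$ would force the $C$-marginal of $\mathcal{C}_{AB}(R)$ to be $\propto I_C$; probing with $R=I_{\set{in}_x}\otimes\sigma$ carrying $\ketbra{0}{0}_C$ contradicts this, so $C\in\set{out}_z$. For $A,B\in\set{in}_x$ this already closes the argument: a direct computation gives $\Tr_{\set{out}_x}[\mathcal{C}_{AB}(R)]=d\,I_{\set{in}'}$ with $d=\dim\hilb{H}_A\geq 2$ and $\set{in}':=\set{in}_x\setminus\{A,B\}$, whereas $\mathcal{C}_{AB}(R)\in\Evd{z}$ together with $\set{out}_x\subseteq\set{out}_z$ forces $\Tr_{\set{out}_z}[\mathcal{C}_{AB}(R)]=d\,d_S\,I \stackrel{!}{=}I$ for some factor $d\,d_S>1$; the spurious constant is the contradiction, so the two-input contraction is never admissible. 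For the mixed case $A\in\set{in}_x$, $B\in\set{out}_x$, the ``$\Leftarrow$'' direction is immediate from Eq.~\eqref{eq:9} by taking $z=\widetilde{\set{in}_x}\to\widetilde{\set{out}_x}$, so only ``$\Rightarrow$'' remains: I must upgrade ``outputs stay outputs'' to the full identification $\set{out}_z=\widetilde{\set{out}_x}$, $\set{in}_z=\widetilde{\set{in}_x}$, whence $z\subseteq\widetilde{\set{in}_x}\to\widetilde{\set{out}_x}$ and $\mathcal{C}_{AB}(R)\in\Evd{\widetilde{\set{in}_x}\to\widetilde{\set{out}_x}}$.

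The hard part is exactly this last step, namely ruling out that a surviving input is promoted to an output of $z$ — equivalently, establishing the normalization $\Tr_{\widetilde{\set{out}_x}}[\mathcal{C}_{AB}(R)]=I_{\widetilde{\set{in}_x}}$ whenever the contraction is admissible. The obstacle is that the maximally entangled contraction does \emph{not} commute with the single-system marginal, so the top-level channel condition $\Tr_{\set{out}_x}[R]=I_{\set{in}_x}$ is by itself insufficient (it controls only the $AB$-marginal of $R$, not the conjugated operator). Resolving it requires feeding in the full recursive characterization of Proposition~\ref{thm:charthm} and tracking how the traceless subspaces $\set{\Delta}_x$ transform under conjugation by $\ket{\Phi}_{AB}$. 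This is precisely where the nested comb/causal structure of $x$ enters, and it is the no-signalling content depicted in Fig.~\ref{fig:admcontraction}; I expect it to be the technical crux of the proof.
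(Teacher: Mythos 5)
Your treatment of the two impossibility cases is sound and essentially matches the paper's own probe arguments: for $A,B\in\set{out}_x$ the deterministic probe $I_{\set{in}_x}\otimes\ketbra{0}{0}_A\otimes\ketbra{1}{1}_B\otimes\sigma'$ annihilates the contraction, and for $A,B\in\set{in}_x$ the spurious factor $d_A\,d_S>1$ in the normalization gives the contradiction; the ``$\Leftarrow$'' of the mixed case is indeed immediate. But the actual content of the proposition is the ``$\Rightarrow$'' of the mixed case, and there you stop: you correctly isolate the missing step (upgrading ``outputs stay outputs'' to the full identification $\set{in}_z=\widetilde{\set{in}_x}$, $\set{out}_z=\widetilde{\set{out}_x}$ for \emph{every} $R$), but you only describe the obstacle and declare it ``the technical crux''. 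This is a genuine gap, and not a small one: in Eq.~\eqref{eq:9} the type $z$ is allowed to depend on $R$, so your probing arguments pin down $z$ only for the special product probes, and nothing you have written prevents a surviving input from being promoted to an output of the particular $z$ attached to a generic deterministic $R$.

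The paper closes exactly this step (in the proof of Proposition \ref{Sprop:charapropositioncomposition}, to which Proposition \ref{Sprop:admcontractions} defers), and it does so with elementary convexity arguments rather than the machinery you anticipate. One mixes the given $R$ with the product probe $Q_x:=I_{\set{in}_x}\otimes\ketbra{0}{0}_{\set{out}_x}$: since $\Evd{x}$ is convex, $pR+(1-p)Q_x\in\Evd{x}$, and admissibility applies to the mixture. Two tricks then pin down $z$. First, the trace of any deterministic map of type $\set{in}_z\to\set{out}_z$ is the integer $d_{\set{in}_z}$, so $p\,\Tr[\mathcal{C}_{AB}(R)]+(1-p)\,\Tr[\mathcal{C}_{AB}(Q_x)]\in\mathbb{N}$ for all $p\in[0,1]$ forces $\Tr[\mathcal{C}_{AB}(R)]=\Tr[\mathcal{C}_{AB}(Q_x)]=d_{\widetilde{\set{in}_x}}$ (the contraction analogue of Lemma \ref{Slmm:tracefixeddin}). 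Second, pairing the contracted mixture with a pure state $\ketbra{1}{1}_{\set{in}_z}$ orthogonal to the $\ket{0}$'s: if some system of $\widetilde{\set{out}_x}$ sat in $\set{in}_z$, the $(1-p)$-term would vanish by orthogonality and one would get $1\leq p\,d_{\widetilde{\set{in}_x}}<1$ for $p<d_{\widetilde{\set{in}_x}}^{-1}$, a contradiction; hence $\set{in}_z\subseteq\widetilde{\set{in}_x}$, and equality follows from the trace already being $d_{\widetilde{\set{in}_x}}$. Linearity of the partial trace then transfers the normalization $\Tr_{\widetilde{\set{out}_x}}[\,\cdot\,]=I_{\widetilde{\set{in}_x}}$ from the mixture back to $R$ itself, since positivity is automatic. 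Note that this never ``tracks $\set{\Delta}_x$ under conjugation by $\ket{\Phi}_{AB}$'': the recursive combinatorial characterization you gesture at is a separate, later refinement in the paper (Proposition \ref{Sadmissibilitystring}), not the route by which this proposition is proved.
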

% \begin{proof}
%  ammissibile.
% Supponendo per assurdo che una contrazione AB
% input input sia ammissibile
% \end{proof}

Thanks to Proposition \ref{prop:contractadmiss},
the admissibility of contractions (and therefore
of compositions) can be stated in the more
elementary language of quantum channels: a
contraction between $A$ and $B$ is admissible if
and only if $A$ and $B$ can be connected in a
loop.  This result dramatically reduces the
complexity of the problem, since we only need to
establish that the contracted maps belongs to the
type of quantum channels
$\widetilde{\set{in}_x} \to
\widetilde{\set{out}_x}$.  It may seems that we
have to check this condition for all the
infinitely many maps that belongs to the type $x$,
but this is not the case. Indeed, it is possible
to prove that the contraction transforms the set
of deterministic maps $\Evd{x}$, which, thanks to
Proposition~\ref{thm:charthm}, is characterized by
a subspace of linear operators, to another set of
operators which is also characterized by a linear
subspace. As a consequence, a contraction is
admissible if and only if this subspace generated
by the contraction is a subspace of the one that
characterizes the type of quantum channels. Thanks
to this observation, the admissibility of a
contraction can be further simplified to a finite
combinatorial problem. The proof of these
statements requires several technical steps which
can be found in the supplemental material \cite{supplement}.

One could easily extend the statement of
Proposition \ref{prop:contractadmiss} to multiple
contractions: the contractions
$\mathcal{C}_{AA}(\mathcal{C}_{BB}( \dots
x)\dots)$ are admissible if and only if
$\mathcal{C}_{AA}(\mathcal{C}_{BB}( \dots
R)\dots) \in \set{T}_1(\tilde{\set{in}} \to \tilde{\set{out}})$.
As we prove in the supplemental material
\cite{supplement} (see. Proposition
\ref{Sprop:subsetofcontractionarenotadmissibleyoucannotrecover})
the admissibility of a set of contraction
follows by chaining
together verifications of the admissibility of
each singular contraction in a sequence  (since
different contractions commute,
the order in which we put the
set of contractions is immaterial).
A necessary condition for a set of contraction to
be admissible is that each individual contraction
in the set is admissible. On the other hand,
the
joint admissibility of a set of contractions is a
requirement which is strictly stronger than the
individual admissibility of each contraction in
the set. Indeed, $\mathcal{C}_{AB}$ and
$\mathcal{C}_{CD}$ can be admissible (e.g. $
 \begin{array}{l}\includegraphics[width=0.2\linewidth]{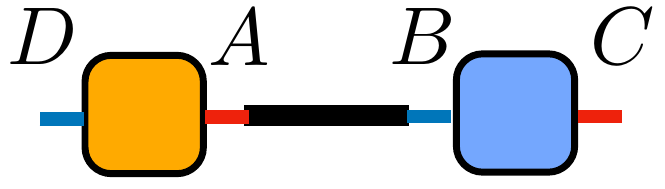}\end{array}$
 and
$\begin{array}{l}\includegraphics[width=0.2\linewidth]{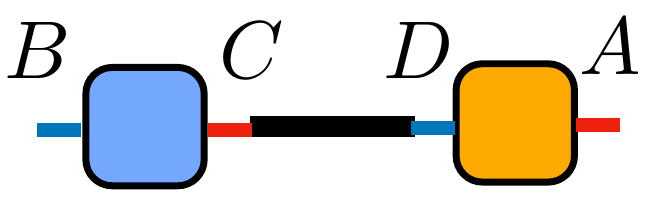}\end{array}$) but
$\mathcal{C}_{AB}\mathcal{C}_{CD}$ may not be
(e.g. $\begin{array}{l}\includegraphics[width=0.2\linewidth]{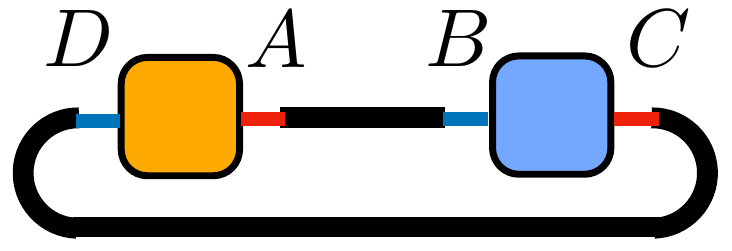}\end{array}$).

 \section{Causal structure of higher order maps}
 We have seen that the admissibility condition of
 the composition of higher order maps can be
 traced back to the admissibility of closing
 input and output of a channel
in a loop. We now wonder what is the physical
significance of the admissibility of these loops.
In particular,
we expect that this is related to how
information can flow between the quantum systems,
i.e. to the causal structure of the channel.

We remind\cite{PhysRevA.64.052309,refId0} that a
multipartite quantum channel
$R \in \Evd{A_1A_2 \dots A_n \to B_1B_2 \dots B_m
} $ does not allow for signalling from $A_i$ to
$B_j$ if and only if
$\Tr_{\neg B_j}[R] = I_{A_i} \otimes R'$, and we
write $A_i \not\rightsquigarrow_{R} B_j$,
\footnote{We use the shortcut
  $\neg B_j := B_1B_2 \dots B_{j-1}B_{j+1}\dots
  B_m $}.  Otherwise, we say that
$R \in \Evd{A_1A_2 \dots A_n \to B_1B_2 \dots B_m
} $ does allow for signalling from $A_i$ to $B_j$
and we write $A_i \not\rightsquigarrow_{R} B_j$.

 Intuitively, we would expect that if we have
 signalling from 
 $A$ to $ B$ it would not be
 possible to close this information flow in a loop
 and therefore that the contraction of $A$ with
 $B$ is not admissible. On the other hand, this
 would be possible if no information flows from
 $A$ to $B$. As the following proposition shows, this
 intuition is correct.
 \begin{proposition}\label{prop:causal}
   Let $x$ be a type, $A\in \set{in}_x$, and $B
   \in \set{out}_x$. Then we have that
$
   \mathcal{C}_{AB}(x) \mbox{ is admissible} \iff  A
 \not\rightsquigarrow_{x} B
  $
where $A
\not\rightsquigarrow_{x} B$ means that
$A
 \not\rightsquigarrow_{R} B$ for any $R \in
 \Evd{x}$. 
 \end{proposition}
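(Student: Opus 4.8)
The plan is to feed everything into Proposition~\ref{prop:contractadmiss} and then analyse the single normalisation condition that survives. Since $A\in\set{in}_x$ and $B\in\set{out}_x$, that proposition says $\mathcal{C}_{AB}(x)$ is admissible iff $\mathcal{C}_{AB}(R)\in\Evd{\widetilde{\set{in}_x}\to\widetilde{\set{out}_x}}$ for every $R\in\Evd{x}$. By Proposition~\ref{prp:inclusion} each such $R$ is already a channel of type $\set{in}_x\to\set{out}_x$, i.e. $R\ge 0$ and $\Tr_{\set{out}_x}[R]=I_{\set{in}_x}$. The first observation I would record is that $\mathcal{C}_{AB}(R)=\bra{\Phi^+}R\ket{\Phi^+}$, the compression of the $A,B$ legs by $\ket{\Phi^+}:=\sum_i\ket{ii}_{AB}$, is manifestly positive; hence the only thing left to verify is the trace condition $\Tr_{\widetilde{\set{out}_x}}[\mathcal{C}_{AB}(R)]=I_{\widetilde{\set{in}_x}}$.

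Next I would compute this marginal. With $S:=\Tr_{\widetilde{\set{out}_x}}[R]$ on $\hilb{H}_A\otimes\hilb{H}_{\widetilde{\set{in}_x}}\otimes\hilb{H}_B$, the compression commutes with the remaining trace, giving $\Tr_{\widetilde{\set{out}_x}}[\mathcal{C}_{AB}(R)]=\bra{\Phi^+}S\ket{\Phi^+}$. Decomposing $S=I_A\otimes S_0+\tilde S$ with $S_0=\tfrac1{d_A}\Tr_A[S]$ and $\Tr_A[\tilde S]=0$, the constraint $\Tr_{\set{out}_x}[R]=I_{\set{in}_x}$ makes the $I_A$-part contribute exactly $I_{\widetilde{\set{in}_x}}$, so that
\begin{align}
\Tr_{\widetilde{\set{out}_x}}[\mathcal{C}_{AB}(R)]=I_{\widetilde{\set{in}_x}}+\bra{\Phi^+}\tilde S\ket{\Phi^+}.
\end{align}
By definition $A\not\rightsquigarrow_R B$ means $S=I_A\otimes R'$, i.e. $\tilde S=0$. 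This settles the easy direction: if $A\not\rightsquigarrow_x B$ then $\tilde S=0$ for every $R$, the marginal is $I_{\widetilde{\set{in}_x}}$, and admissibility follows from Proposition~\ref{prop:contractadmiss}.

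For the converse I would argue by contraposition, using that $\Evd{x}$ is invariant under conjugation by a local unitary $U$ on the elementary system $B$ (a covariance manifest in the recursive description of $\set{\Delta}_x$ in Proposition~\ref{thm:charthm}). Suppose some $R_0\in\Evd{x}$ signals, $\tilde S_{R_0}\neq 0$. The decisive structural fact is that determinism pins down \emph{two} trace conditions on the signalling part: $\Tr_A[\tilde S_{R_0}]=0$ by construction and $\Tr_B[\tilde S_{R_0}]=0$ from $\Tr_{\set{out}_x}[R]=I_{\set{in}_x}$. Conjugating $R_0$ by $U_B$ keeps it in $\Evd{x}$ and sends $\tilde S_{R_0}\mapsto U_B\tilde S_{R_0}U_B^\dagger$, so the associated marginal is $I_{\widetilde{\set{in}_x}}+\bra{\Phi^+}U_B\tilde S_{R_0}U_B^\dagger\ket{\Phi^+}$. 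It thus suffices to prove the linear-algebraic claim: a \emph{doubly traceless} operator $X$ on $\hilb{H}_A\otimes\hilb{H}_B$ (meaning $\Tr_A[X]=0=\Tr_B[X]$) with $\bra{\Phi^+}U_BXU_B^\dagger\ket{\Phi^+}=0$ for all unitary $U$ must vanish. Applied componentwise in $\widetilde{\set{in}_x}$, the claim yields a unitary $U$ making the marginal differ from $I_{\widetilde{\set{in}_x}}$, so $\mathcal{C}_{AB}(U_BR_0U_B^\dagger)\notin\Evd{\widetilde{\set{in}_x}\to\widetilde{\set{out}_x}}$ and the contraction is inadmissible.

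The crux, and the step I expect to be the main obstacle, is this claim. Via the ricochet identity the hypothesis becomes $\bra{\psi_U}X\ket{\psi_U}=0$ for the full family of maximally entangled states $\ket{\psi_U}=(I_A\otimes U_B)\ket{\Phi^+}$; expanding $X$ in a traceless Hermitian operator basis converts it into $\langle\mathrm{Ad}_U,Y\rangle=0$ for all $U$, where $Y$ is an invertible linear image of the coefficient matrix of $X$ and $\mathrm{Ad}_U$ is the adjoint-representation matrix. Because the adjoint representation of $\mathrm{SU}(d)$ is absolutely irreducible, its image spans the full matrix algebra by Burnside's density theorem, forcing $Y=0$ and hence $X=0$. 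It is essential that \emph{both} trace conditions hold: dropping $\Tr_B[X]=0$ leaves the nonzero counterexample $X=P\otimes I_B$ with $P$ traceless, which is annihilated by every maximally entangled state. This is precisely why it is the normalisation constraint coming from determinism, and not positivity alone, that enforces no-signalling.
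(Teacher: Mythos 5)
Your proof is correct, and it reaches the statement by a genuinely different route than the paper does. The paper's proof is structural: it first shows (via the binary-string formalism, Corollary~\ref{Slmm:admisssupermap}) that $\mathcal{C}_{AB}(x)$ is admissible if and only if the type inclusion $x \subseteq (B \to A) \to (\widetilde{\set{in}}_x \to \widetilde{\set{out}}_x)$ holds, and then invokes the realisation theorem of supermaps from the literature, which says precisely that a channel belongs to $\Evd{(B \to A) \to (\widetilde{\set{in}}_x \to \widetilde{\set{out}}_x)}$ if and only if it does not signal from $A$ to $B$; combining the two facts gives the proposition. You instead remain entirely at the level of Choi operators: after the same reduction through Proposition~\ref{prop:contractadmiss} (on which the paper's own chain of lemmas also ultimately rests), you isolate the signalling part $\tilde S$ of the marginal, observe that determinism makes it traceless over $B$ as well as over $A$, and close the converse with two ingredients the paper never uses: invariance of $\Evd{x}$ under local unitary conjugation of an elementary system (which does follow from Proposition~\ref{thm:charthm}, since conjugation by $U_B$ preserves positivity, the identity, and each summand of $\set{\Delta}_x$), and the Burnside-type lemma that a doubly traceless operator annihilated by every maximally entangled state must vanish. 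Your route buys self-containedness --- no appeal to the external realisation theorem --- and it pinpoints the exact linear-algebraic mechanism by which the normalisation constraint, rather than positivity, forbids closing a signalling loop; your counterexample $P \otimes I_B$ makes that point sharp. The paper's route buys the intermediate type inclusion, which is of independent structural interest (it exhibits every map of type $x$ as a supermap acting on channels from $B$ to $A$), and a combinatorial formalism that extends uniformly to the joint admissibility of sets of contractions (Proposition~\ref{Sprop:subsetofcontractionarenotadmissibleyoucannotrecover}), a case where your pairwise operator argument would have to be reworked.
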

 \begin{proof} 
   First we prove that $\mathcal{C}_{AB}(x)$ is
   admissible if and only if 
$x \subseteq (B \to A) \to (\widetilde{\set{in}} \to
\widetilde{\set{out}} )$. Then,  the realisation
theorem of supermaps \cite{PhysRevLett.101.060401,chiribella2008transforming} implies that
$A
 \not\rightsquigarrow_{R} B \iff R \in \Evd{(B \to A) \to (\widetilde{\set{in}} \to
\widetilde{\set{out}} )} $. The thesis
follows by combining these 
results (see the supplemental material
\cite{supplement} for the details).
    \end{proof}
 
Proposition \ref{prop:causal}  relates the causal and the
compositional structure of a type: the admissible
contractions are the ones that are allowed by the
causal structure. 
The admissibility of a set of contractions (and
therefore of a composition)
is therefore equivalent to 
verify a sequence of no-signaling
conditions.

Finally, we present a result that allows to
determine the signalling relations
between the elementary systems of type $x$ from
the expression of the type itself. The proof
can be found in the supplemental material
\cite{supplement}.
 \begin{proposition} \label{prp:algosignalling}
   Let $x$ be a type, $A \in \set{in}_x$,
   $B \in \set{out}_x$. Then, there exists a
   unique type $y \prec x$ such that
   $\{A, B \} \subseteq \set{Ele}_y$ and $y' \prec y
   \implies \{A, B \} \not \subseteq
   \set{Ele}_y$. Moreover, we have
   \begin{align}
     \label{eq:8}
     A \in \set{in}_y \implies A
\rightsquigarrow_{x} B, \quad
     A \in \set{out}_y \implies A
     \not\rightsquigarrow_{x}  B .
   \end{align}
 \end{proposition}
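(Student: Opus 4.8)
The plan is to reduce the statement to a combinatorial fact about the parse tree of $x$ together with the subspace characterisation of deterministic maps in Proposition \ref{thm:charthm}. Throughout I use the recursive form of the input/output partition encoded by the function $K_x$ of Proposition \ref{prp:inclusion}, namely $\set{in}_{u\to v}=\set{out}_u\cup\set{in}_v$ and $\set{out}_{u\to v}=\set{in}_u\cup\set{out}_v$, with $\set{out}_E=\{E\}$ and $\set{in}_E=\emptyset$ on elementary types. For existence and uniqueness of $y$ I would view $x$ as a binary tree with leaves $\set{Ele}_x$ and internal nodes the arrows: the subtypes containing a fixed leaf are precisely its ancestors and hence form a chain under $\prec$, so the common ancestors of $A$ and $B$ form a chain with a unique minimal element, the lowest common ancestor $y$. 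This $y$ is non-elementary, $y=u\to v$, with $A$ and $B$ in different children.

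Next I would record the parity lemma that makes the role of $A$ inside $y$ the decisive quantity. Since $A$ and $B$ share the same ancestry above $y$, on the way up to $x$ the role of each flips the same number of times, a flip occurring exactly at each arrow whose $y$-side is the argument (by the recursion above). Hence the two truth values $A\in\set{in}_x$ and $B\in\set{in}_x$ differ if and only if $A\in\set{in}_y$ and $B\in\set{in}_y$ differ; as the hypothesis $A\in\set{in}_x$, $B\in\set{out}_x$ makes the former pair differ, so does the latter, and $A,B$ carry opposite roles inside $y$. In particular $A\in\set{in}_y\iff B\in\set{out}_y$, so it suffices to prove the single equivalence $A\rightsquigarrow_x B\iff A\in\set{in}_y$, which yields both implications of the claim.

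For the equivalence I would route the signalling condition through the two earlier results. By Proposition \ref{prop:causal} (precisely the inclusion established in its proof), $A\not\rightsquigarrow_x B$ is equivalent to $x\subseteq(B\to A)\to(\widetilde{\set{in}_x}\to\widetilde{\set{out}_x})=:w$. The types $x$ and $w$ involve the same elementary systems and, as one checks from the recursion, the same partition $\set{in}_w=\set{in}_x$, $\set{out}_w=\set{out}_x$; by Proposition \ref{thm:charthm} this forces $\lambda_x=\lambda_w=1/d_{\set{out}_x}$, so the inclusion of the affine deterministic sets collapses to the purely linear containment $\set{\Delta}_x\subseteq\set{\Delta}_w$ of the traceless subspaces. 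Thus the whole proposition reduces to proving $\set{\Delta}_x\subseteq\set{\Delta}_w\iff A\in\set{out}_y$.

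This containment I would establish by induction on the structure of $x$ via the recursion $\set{\Delta}_{u\to v}=[\set{Hrm}(\hilb{H}_u)\otimes\set{\Delta}_v]\oplus[\overline{\set{\Delta}}_u\otimes\set{\Delta}_v^\perp]$. Peeling the outermost arrow $x=u\to v$ splits into three cases keyed to the position of the pair. If $A,B$ both lie in $v$ the step is covariant: both $\set{\Delta}_x$ and $\set{\Delta}_w$ see the pair through the common factors $\set{\Delta}_v$ and $\set{\Delta}_v^\perp$, and the containment reduces to the identical one for $v$, whose LCA is unchanged. If $A,B$ both lie in $u$ the step is contravariant: the pair is now seen through $\set{Hrm}(\hilb{H}_u)$ and the complement $\overline{\set{\Delta}}_u$, which reproduces exactly the role flip of the parity lemma and reduces the containment to the analogous one for $u$ with $A$ and $B$ interchanged. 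In both recursive cases $A\in\set{out}_y$ is preserved. The base case is $y=x$, where $A$ and $B$ sit in opposite children; there $A\in\set{in}_x=\set{in}_y$ by hypothesis and one shows directly that signalling occurs, either by exhibiting a deterministic $R\in\Evd{x}$ whose channel form routes the $A$-system into the $B$-system, or by a dimension count giving $\set{\Delta}_x\not\subseteq\set{\Delta}_w$. The main obstacle I anticipate is precisely this inductive comparison, and in particular making the contravariant $u$-side step rigorous: one must verify that passing to the orthogonal complement $\overline{\set{\Delta}}_u$ inside $\set{Trl}(\hilb{H}_u)$ turns the inclusion for $(A,B)$ into the inclusion for $(B,A)$ one level down, matching the parity bookkeeping rather than corrupting it. The remaining ingredients—the tree argument, the parity lemma, and the reduction through Propositions \ref{prop:causal} and \ref{thm:charthm}—are routine, leaving exactly the finite linear-algebraic check over the $\set{\Delta}$ subspaces that the paper already advertises.
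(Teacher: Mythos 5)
Your skeleton runs parallel to the paper's proof: existence and uniqueness of $y$ is the paper's Lemma \ref{Sinnertype}, your parity bookkeeping is the paper's count of covariant/contravariant embeddings $\mathcal{C}^{z_i}_{s_i}$ along the path from $y$ up to $x$, and your reduction through Propositions \ref{prop:causal} and \ref{thm:charthm} lands on the same combinatorial criterion the paper uses (Corollary \ref{Scoroll:criticstringsetAB}): $A\not\rightsquigarrow_x B$ iff $D_x\cap S^x_{AB}=\emptyset$, with $S^x_{AB}=W_{\set{in}_x\setminus A}\,0_A\,\mathbf{e}_{\set{out}_x\setminus B}\,0_B$. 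Your covariant step is also correct: for $x=u\to v$ with $A,B\in v$ one has $D_x=W_uD_v\cup\overline{D}_uD_v^{\perp}$ and $S^x_{AB}=W_{\set{out}_u}\mathbf{e}_{\set{in}_u}\,S^v_{AB}$; the second piece of $D_x$ never meets $S^x_{AB}$ (since $\overline{D}_u\cap W_{\set{out}_u}\mathbf{e}_{\set{in}_u}=\emptyset$ by Proposition \ref{prp:inclusion}), so no-signalling in $x$ is equivalent to no-signalling in $v$, same pair, same LCA.

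The contravariant step, however, does not do what you claim, and this is a genuine gap rather than a verification to be tidied up. Take $x=u\to v$ with $A,B\in u$, so $A\in\set{out}_u$, $B\in\set{in}_u$. The piece $W_uD_v$ again never meets $S^x_{AB}$ (because $D_v\subseteq W_{\set{in}_v}T_{\set{out}_v}$ avoids $\mathbf{e}_{\set{out}_v}$), and what remains is
\begin{align*}
D_x\cap S^x_{AB}=\emptyset
\;\iff\;
\overline{D}_u\cap\Sigma=\emptyset,
\qquad
\Sigma:=W_{\set{out}_u\setminus A}\,0_A\,\mathbf{e}_{\set{in}_u\setminus B}\,0_B .
\end{align*}
Since $\Sigma\subseteq T_u$ and $\overline{D}_u=T_u\setminus D_u$, this reads $\Sigma\subseteq D_u$: a \emph{superset} ("fullness") condition on $D_u$, i.e.\ precisely the statement that $\overline{u}$ is no-signalling from $A$ to $B$ --- what the paper calls \emph{full-signalling} from $B$ to $A$ in $u$ (Definition \ref{SFS}). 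It is \emph{not} the containment $D_u\cap S^u_{BA}=\emptyset$ with the roles of $A$ and $B$ interchanged (note $\Sigma\neq S^u_{BA}$: the "arbitrary" and "all-ones" blocks sit on the opposite subsets), and its negation is \emph{not} the existence of a signalling map. Consequently your induction hypothesis --- the single equivalence "no-signalling iff $A\in\set{out}_y$" --- cannot pass through a contravariant step: after the flip you need to know that the subtype is \emph{full}-signalling from $B$ to $A$, and "not no-signalling" (merely: some deterministic map signals) is strictly weaker. The dichotomy "every type is either no-signalling or full-signalling between a given input/output pair" is itself the content of the proposition, so appealing to it inside the induction would be circular. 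The repair is exactly the paper's key move: strengthen the inductive statement to the pair of claims ($A\in\set{out}_y\Rightarrow$ no-signalling from $A$ to $B$) \emph{and} ($A\in\set{in}_y\Rightarrow$ full-signalling from $A$ to $B$), show that covariant steps (and tensoring, Lemmas \ref{Slmm:strongnosignal1} and \ref{Slmm:strongnosignal2}) preserve each property while contravariant steps swap them with the direction reversed, and prove the base case at the LCA in the strong, full-signalling form (Lemma \ref{SFSinout}). Your proposed base case, which only exhibits one signalling map of type $y$, is likewise too weak to be propagated upward through any contravariant constructor sitting above $y$.
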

 Thanks to Proposition \ref{prp:algosignalling} we
 have an efficient (i.e. polynomial in the size of the
 type) algorithm for determining whether or not a
 type $x$ allows for signalling between
 $A$ and $B$: $i)$ find $y$ by
 computing the smallest substring of the type $x$
 which contains both $A$ and $B$ and has balanced
 brackets; $ii)$ Compute $K_y(A)$ and $K_x(A)$
 (see Prop.  \ref{prp:inclusion}); then $x$ allows
 signalling form $A$ to $B$ if and only if
 $K_y(A)=K_x(A)$. 

 \section{Discussion} We extended
 higher order quantum theory making it into a
 higher order \emph{computation} in which the
 object of the framework can be composed together.
 The construction and the results that we derived
 have been grounded on
assumptions which translate
the requirement that the probabilistic structure
of quantum theory should be preserved.

In particular, no notions of local interventions
or of causal
order between parties has been used. However, somewhat
surprinsingly, the \emph{compositional} structure of higher order
maps is intimately related with the causal
relations among the systems that constitute a
given type, since it is the signalling structure
that dictates whether a given contraction is
admissible or not.

Our results focused on the
signalling relation between couples of systems,
but the compositional structure generally involves
more systems at the same time. It is therefore
interesting to deepen our understanding of the
relation between causal and compositional
structure in this more general case:
how does a contraction modify the causal structure
of a type?  Is it possible to retrieve this
information only from the expression of the type itself?

A further line of development of this research is
the generalization of the framework beyond quantum
theory itself. Since the axioms of
higher order quantum theory do not rely on the
mathematical structure of quantum theory
\footnote{The most important assumption is
  the Choi isomorphism, that can be always
  provided in theories where local
  discriminability holds.}, one could wonder what
would emerge if we replaced the fundamental level
of hierarchy with a general operational
probabilistic theory. Since in a general
operational theory causality and non-signalling
exhibits a finer structure, we could expect that
these features propagates and maybe get emphasized
throughout the hierarchy. This could inform the
search for principles that single out quantum
mechanics and deepen our understanding of quantum
computing.

\begin{acknowledgments}
  L.A. acknowledge financial support by the
  Austrian Science Fund (FWF) through BeyondC
  (F7103-N48), the Austrian Academy of Sciences
  ({\"O}AW) through the project “Quantum Reference
  Frames for Quantum Fields” (ref. IF 2019 59
  QRFQF), the European Commission via Testing the
  Large-Scale Limit of Quantum Mechanics (TEQ)
  (No. 766900) project, the Foundational Questions
  Institute (FQXi) and the Austrian-Serbian
  bilateral scientific cooperation
  no. 451-03-02141/2017-09/02. L.A. also
  acknowledge the support of the ID 61466 grant
  from the John Templeton Foundation, as part of
  The Quantum Information Structure of Spacetime
  (QISS) Project (qiss.fr).
  A.B. acknowledge financial support from
  PNNR MUR project CN0000013-ICSC.
  The authors
  acknowledge useful discussions about the topic
  of the present paper with {\v{C}}. Brukner,
  A. Baumeler, M. Renner and E. Tselentis.
  
\end{acknowledgments}

\bibliographystyle{unsrtnat}
\bibliography{bibliography}

\clearpage

\appendix
\section*{\Large{SUPPLEMENTAL MATERIAL}}

  \section{Axiomatic approach to higher order
  quantum theory}

In this section we present the axiomatic framework for
higher order quantum theory. Most the material of
this section is
a review of
Ref. \cite{doi:10.1098/rspa.2018.0706}, which we
refer for a more exhaustive presentation. 

\subsection{The hierarchy of types}
The starting point of the framework is the notion of
\emph{type}:
\begin{definition}[Types]\label{Sdef:type}
  Every finite dimensional quantum system
  corresponds to an \emph{elementary type}
  $A$. The elementary type corresponding to the
  tensor product of quantum systems $A$ and $B$ is
  denoted with $AB$. The type of the trivial
  system, one dimensional system, is denoted by
  $I$. We denote with $\set{EleTypes}$ the set of
  elementary types. 

  Let
  $A := \set{EleTypes} \cup \{(\} \cup \{)\} \cup
  \{\rightarrow\}$ be an alphabet. We define the
  set of types as the smallest subset
  $\set{Types}\subset A^{\ast}$ such that:
\begin{itemize}
\item $\set{EleTypes} \subset \set{Types}$,
\item if $x,y\in \set{Types}$ then $(x\rightarrow y)\in\set{Types}$.
\end{itemize}
where $A^\ast$ is the set of words given
the alphabet $A$. % If $x$ is a type, we denote with
% $\set{Ele}_x$ the set of elementary types occurring
% in the expression of $x$.
\end{definition}
From  the previous definition it follows that a
type
is a string of
elementary systems, arrows and
balanced  brackets, that is  every open 
 bracket $"("$ is balanced by a closed one $")"$,
 for example:
 \begin{align}\label{Seq:3}
 x=(((A\to B)\to (C\to A ))\to (B \to
   E) )   
 \end{align}
 where $A,B ,\dots$
 are elementary types.
 % \begin{remark}
 %   From Definition~\ref{Sdef:type} we have that the
 %   same elementary type can appear more than once
 %   in the expression of a type, see e.g. Equation
 %   \eqref{Seq:3}.
 %   However, it is more convenient 
 % \end{remark}

 The hierachical structure of the set of types
 allows to define the following partial order
 between types.

\begin{definition}[Partial ordering $\preceq$]\label{Spartialordering}
Given the types $x$ and $y$, then  $x\preceq_p y$
if there exist a type $z$ such that either $y = x
\to z$ or $y = z \to x$. The relation
$x\preceq y$ is defined as the transitive closure
of the relation  $x\preceq_p y$.
\end{definition}
According to definition \ref{Spartialordering} we
say that $x\preceq y$ if the type $x$ appears in
the string which defines $y$, for example if
$ w := (x\rightarrow y)\rightarrow z$ we have
$x,z,y \preceq w$.  It is worth noticing that the
relation $\preceq$ is a well-founded and
Noetherian induction can be used.
\begin{lemma}
  The relation $x\preceq y$ is well-founded.
\end{lemma}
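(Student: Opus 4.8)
The plan is to exhibit a strictly order-preserving map from $(\set{Types}, \preceq)$ into the naturals with their usual order, which is well-founded; any strict relation that embeds into a well-founded relation is itself well-founded. The natural measure is the string length: for $x \in \set{Types}$ let $\ell(x)$ denote the number of symbols (elementary types, arrows and brackets) appearing in the word $x \in A^{\ast}$. Since $(\mathbb{N}, <)$ has no infinite descending chain, it suffices to show that $\ell$ strictly decreases along the strict part of $\preceq$.

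First I would establish the one-step bound: if $x \preceq_p y$ then $\ell(x) < \ell(y)$. By Definition \ref{Spartialordering}, $x \preceq_p y$ means that $y$ arises as $(x \to z)$ or $(z \to x)$ for some $z \in \set{Types}$. In either case the construction rule of Definition \ref{Sdef:type} appends one arrow and two brackets to the concatenation of $x$ and $z$, so $\ell(y) = \ell(x) + \ell(z) + 3$. As every type is a non-empty word, $\ell(z) \geq 1$, whence $\ell(y) \geq \ell(x) + 4 > \ell(x)$.

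Next I would lift this to the transitive closure. Since $\preceq$ is the transitive closure of $\preceq_p$, any pair $x \preceq y$ is witnessed by a finite chain $x = x_0 \preceq_p x_1 \preceq_p \dots \preceq_p x_n = y$ with $n \geq 1$. Applying the one-step bound along the chain yields $\ell(x) = \ell(x_0) < \ell(x_1) < \dots < \ell(x_n) = \ell(y)$, so $x \preceq y$ implies $\ell(x) < \ell(y)$; in particular $\preceq$ is irreflexive and is a genuine strict partial order.

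Finally I would conclude by contradiction: an infinite descending chain $y_0 \succ y_1 \succ y_2 \succ \dots$ would produce a strictly decreasing sequence of naturals $\ell(y_0) > \ell(y_1) > \ell(y_2) > \dots$, which is impossible. Hence $\preceq$ admits no infinite descending chain and is well-founded, justifying the Noetherian induction invoked throughout. I do not expect a genuine obstacle here; the only point requiring care is the bracket bookkeeping relating the loose notation $y = x \to z$ of Definition \ref{Spartialordering} to the bracketed construction $(x \to z)$ of Definition \ref{Sdef:type}, but under either reading the strict increase of $\ell$ is immediate.
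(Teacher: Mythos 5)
Your proof is correct and follows essentially the same strategy as the paper: exhibit a strictly $\preceq$-monotone measure into $(\mathbb{N},<)$ and conclude that an infinite descending chain of types is impossible. The only difference is cosmetic — the paper uses the number of elementary types occurring in $x$ as the measure, while you use the total string length; both strictly increase under $\preceq_p$ and hence under its transitive closure, and your write-up is in fact somewhat more explicit about the one-step bound and the lifting step, which the paper states without proof.
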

\begin{proof}
  Let's denote with $\# x$ the number of
  elementary types occurring in the expression of
  $x$.  We have that
  $x\preceq y \implies \# x < \#y$.  The binary
  relation $\preceq$ is well-founded if there
  exist no infinite sequence of types
  $\{ x_n \}_{n \in \mathbb{N}}$ such that
  $x_{n+1} \preceq x_n$ for any $n \in \mathbb{N}$
  (we are assuming the axiom of choice) \cite{hrbacek2017introduction}.
  If such a sequence
  $\{ x_n \}_{n \in \mathbb{N}}$ would exist, then
  $\{\#x_n \}_{n \in \mathbb{N}}$ would be an
  infinite sequence of strictly decreasing natural
  numbers.
\end{proof}

We now prove a couple of properties of the
hierarchy of types as a partially ordered set.
\begin{lemma}\label{Ssubstring}
  Let $s$ be a substring of the type $x$, then $s$
  is a type such that $s\preceq x$ if and only if
  the following conditions are met:
  \begin{enumerate}
  \item $s$ has balanced brackets
    \item the first symbol of $s$ is either an
      elementary type or the open bracket $"("$
    \item the last symbol of  $s$ is either an
      elementary type or the closed bracket $")"$
  \end{enumerate}
\end{lemma}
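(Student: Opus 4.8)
The plan is to prove the two implications separately, isolating the combinatorial core in the non-trivial direction. For the direction ``$s$ a type with $s\preceq x$ $\Rightarrow$ conditions 1--3'', I would simply show that \emph{every} type satisfies 1--3, so that the relation $s\preceq x$ is not even needed beyond ensuring $s$ is a type. This is an immediate structural induction on the formation of types: an elementary type is a single symbol, hence balanced and bounded by elementary symbols; and if $x'$ and $y'$ satisfy 1--3 then so does $x'\to y'$, since its brackets are those of $x'$ and $y'$ together with at most one added matched pair, its first symbol is the first symbol of $x'$ (an elementary type or $($), and its last symbol is the last symbol of $y'$ (an elementary type or $)$).

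For the converse I would argue by well-founded induction on $x$, which is legitimate since $\preceq$ has already been shown to be well-founded. In the base case $x$ is elementary, its only non-empty substring is $x$ itself, and $x$ is a type with $x\preceq x$. In the inductive step the only non-elementary shape is $x=(x'\to y')$; I would write the string of $x$ as the opening bracket $($, then the block occupied by $x'$, then the top-level arrow $\to$, then the block occupied by $y'$, then the closing bracket $)$. Everything reduces to a \emph{localization} claim: any substring $s$ obeying 1--3 is either all of $x$, or lies entirely inside the $x'$-block, or lies entirely inside the $y'$-block. Granting this, in the latter two cases $s$ is a substring of $x'$ (resp.\ $y'$) still obeying 1--3, because 1--3 are intrinsic to $s$; the inductive hypothesis then makes $s$ a type with $s\preceq x'$ (resp.\ $s\preceq y'$), and since $x'\preceq_p x$ and $y'\preceq_p x$ transitivity gives $s\preceq x$. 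The first case gives $s=x$ directly.

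The main obstacle is the localization claim, a statement purely about balanced strings that I would prove with the bracket-depth function $d(i)$, the number of unmatched $($ in the length-$i$ prefix of $x$. Two configurations must be excluded. First, if $s$ contains the outer bracket $($ at position $1$, then balance forces its match---the final $)$ of $x$---to lie in $s$ as well, so $s=x$. Second, suppose $s$ contains the top-level arrow; since by 2--3 the string $s$ neither begins nor ends with an arrow, $s$ reaches strictly to both sides of it. The arrow sits at absolute depth $1$, so requiring the depth of $s$ relative to its own starting point to stay non-negative forces the position just before $s$ to have depth exactly $1$; inspecting where depth $1$ can occur inside the $x'$-block (only at its very start, whether $x'$ is elementary or fully bracketed) pins the left end of $s$ to the beginning of the $x'$-block, and the symmetric argument pins its right end to the end of the $y'$-block. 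Hence $s$ is the string of $x$ with its outermost brackets stripped, which under the standing convention that outermost brackets may be dropped (as in writing $A\to B$) again denotes the type $x$. Any $s$ avoiding both the outer bracket pair and the top-level arrow must lie wholly inside one of the two blocks, completing the dichotomy.

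I would close with two bookkeeping remarks that the argument quietly relies on: that types are read modulo omission of outermost brackets, so the two ``whole-$x$'' outcomes above are genuinely the type $x$; and that $\preceq$ is taken reflexive here, so that the case $s=x$ is included in the statement.
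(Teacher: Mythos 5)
Your proof is correct and takes essentially the same route as the paper's: necessity via structural induction on the formation of types, and sufficiency via well-founded induction with exactly the same trichotomy of $s$ lying inside the left block, inside the right block, or straddling the top-level arrow. If anything, your depth-function treatment of the straddling case is more careful than the paper's own proof, which simply asserts that such an $s$ inherits an unbalanced bracket and thereby overlooks the (harmless) possibility that $s$ spans both blocks entirely, i.e.\ equals $x$ up to the outermost bracket pair --- precisely the case your localization claim resolves, along with the bracket-dropping and reflexivity conventions that the paper uses tacitly and you rightly make explicit.
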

\begin{proof}
The necessity condition follows immediately from the definition of type.\\
For the sufficient condition, let us proceed by induction. Given $x$ an elementary type, 
the thesis is trivially satisfied. At this point we suppose that $\forall y\preceq x$ the 
thesis holds. So we  consider $y_1,y_2\preceq x$  such that
$x=y_1\rightarrow y_2$. If $s$ is a
substring of $ y_1$ then $s \preceq y_1$ by
induction hypothesis. Finally $s \preceq y_1$ and
$y_1 \preceq x$ imply that   $s \preceq x$ because
the relation $\preceq $ is transitive.
The same resoning applies if $s$ is a substring of $s_2$.

It remains the case in which  $s$ is a substring
of $ x$, with no empty intersection with both $y_1$ and $y_2$.
By definition of type we can rewrite $x=(y'_1)\rightarrow (y'_2)$. This makes evident that the 
string $s$ has at least one unbalanced bracket corresponding to the external ones in $(y'_1)$ and $(y'_2)$.
\end{proof}
\begin{lemma}\label{Sinnertype}
Let $z$ be a type and let  $A,B$ two elementary
types occurring in the expression of $z$  
such that $A$ precedes $B$ in the expression of
$z$.
Let us also assume that $A$ and $B$ do not appear
as the label of a
bipartite system, i.e. there exists at least one
arrow $"\to"$ between $A$ and $B$ in the
espression of $z$.
Then, there exists a unique type $x$ such that $x\preceq z$ and
$x=x_1\rightarrow x_2$ with either $A\preceq x_1
\land B\preceq x_2$
or $B\preceq x_1 \land A\preceq x_2$.
\end{lemma}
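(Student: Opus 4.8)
The plan is to argue by structural (Noetherian) induction on $z$, which is legitimate since $\preceq$ has been shown to be well-founded. The engine of the whole proof is a single structural fact which I would record first: if $z = z_1 \to z_2$, then every type $s$ with $s \prec z$ (strictly) satisfies $s \preceq z_1$ or $s \preceq z_2$. This is immediate from the definition of $\preceq$ as the transitive closure of $\preceq_p$, because the only immediate $\preceq_p$-predecessors of $z_1 \to z_2$ are $z_1$ and $z_2$ (by unique readability of the bracketed expression); hence any strict predecessor must descend into one of the two operands. Equivalently, this is the content of Lemma~\ref{Ssubstring}: a proper balanced substring cannot straddle the top-level arrow of $z_1 \to z_2$.

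For the induction, the base case is vacuous: if $z$ is elementary there is no arrow between $A$ and $B$, contrary to the hypothesis. So I write $z = z_1 \to z_2$. Since $z_1$ and $z_2$ are disjoint contiguous substrings of $z$, each of $A$ and $B$ occurs in exactly one of them, and because $A$ precedes $B$ the configuration $A \preceq z_2 \wedge B \preceq z_1$ is impossible. For \emph{existence}, if $A \preceq z_1$ and $B \preceq z_2$ then $x := z$ itself separates $A$ and $B$, giving the first disjunct with $x_1 = z_1$, $x_2 = z_2$. If instead both $A, B \preceq z_1$, then the arrow lying between $A$ and $B$ in $z$ sits inside the contiguous block $z_1$, so the induction hypothesis applies to $z_1$ and produces a separating $x \preceq z_1 \preceq z$; the case $A, B \preceq z_2$ is symmetric.

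For \emph{uniqueness}, let $x' = x'_1 \to x'_2$ be any separating subtype of $z$. In the cross case $A \preceq z_1$, $B \preceq z_2$, any proper $x' \prec z$ would lie in $z_1$ or $z_2$ by the structural fact, forcing both $A$ and $B$ into a single operand; this is impossible since $z_1$ omits $B$ and $z_2$ omits $A$, so $x' = z = x$. If both $A, B \preceq z_1$, then $x'$ cannot equal $z$ (that would split them across $z_1$ and $z_2$), so $x'$ is proper and hence $x' \preceq z_1$ or $x' \preceq z_2$; since $x'$ contains $A \notin z_2$ we get $x' \preceq z_1$, whence $x'$ is a separating subtype of $z_1$ and the induction hypothesis gives $x' = x$. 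The case $A, B \preceq z_2$ is symmetric.

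The only genuinely delicate point — and the one I expect to be the main obstacle — is the structural fact that a proper subtype cannot span the top-level arrow; everything else is case bookkeeping. I would nail this down cleanly at the outset (either directly from the definition of $\preceq_p$ together with unique readability, or by quoting Lemma~\ref{Ssubstring}), since both the existence and the uniqueness arguments rely on it repeatedly.
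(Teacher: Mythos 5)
Your proof is correct, but it takes a genuinely different route from the paper's. The paper argues non-inductively: it considers the \emph{smallest} balanced substring of $z$ containing both $A$ and $B$ (not starting or ending with an arrow), invokes Lemma~\ref{Ssubstring} to certify that this substring is a type $x\preceq z$, notes that the arrow between $A$ and $B$ forces $x=x_1\to x_2$ to be non-elementary, and then uses minimality to conclude that neither $x_1$ nor $x_2$ can contain both $A$ and $B$, so they are separated. You instead run a structural induction on $z$, powered by the observation that the only immediate $\preceq_p$-predecessors of $z_1\to z_2$ are $z_1$ and $z_2$ (unique readability), so every strict subtype descends into a single operand; this is indeed the same combinatorial fact that underlies Lemma~\ref{Ssubstring}, just packaged at the level of the syntax tree rather than of strings. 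What your approach buys is a fully explicit uniqueness argument: the paper's proof constructs a canonical candidate and verifies the separation property, but never actually checks that no \emph{other} type $x'\preceq z$ separates $A$ and $B$, whereas your case analysis (a separating $x'$ contains both $A$ and $B$, hence in the cross case cannot be proper, and in the nested case cannot be $z$ and so descends into the operand where the induction hypothesis applies) closes exactly that gap. What the paper's approach buys is brevity and an explicitly algorithmic flavor — ``find the smallest balanced substring containing $A$ and $B$'' — which is what the main text later turns into the polynomial-time signalling test following Proposition~\ref{prp:algosignalling}. One point to keep tidy in your write-up: as in the paper, the relation $\preceq$ must be read as including equality (reflexivity), since in the cross case the separating type is $z$ itself.
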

\begin{proof}
Consider the smallest substring of $z$ with balanced
brackets which contains $A$ and $B$ and that it
does not start or end with an arrow. From Lemma
\ref{Ssubstring} we know that this substring
identifies a type $x \preceq z$. Since there
exists an arrow between $A$ and $B$, we know that
$x$ is not elementary.
Then it must be $x = x_1 \to x_2$. Neither $x_1$
or $x_2$ must contain both $A$ and $B$, otherwise
$x$ would not be smallest string with balanced
parentheses that contain $A$ and $B$. We then
conclude that $A \in x_1$ and $B \in x_2$.
\end{proof}

\subsection{Higher order maps}

We now relate the hierarchy of types to the linear
maps on Hilbert spaces.
We start from the following preliminary
definition, which is still devoid of any physical
content but it is useful in order to develop the formalism

\begin{definition}[Generalized map]
If $x$ is a type, the set of generalized maps of type $x$, denoted by $\set{T}_{\mathbb{R}}(x)$, is defined by the following recursive definition.
\begin{itemize}
\item if A is an elementary type, then every
  $M \in \mathcal{L}(\hilb{H}_A)$ is a generalized
  map of type $A$,
  i.e.$\set{T}_{\mathbb{R}}(A) :=
  \mathcal{L}(\hilb{H}_A )$.
\item if $x$, $y$ are two types, then every Choi
  operator of linear maps
  $M : \set{T}_{\mathbb{R}}(x) \rightarrow
  \set{T}_{\mathbb{R}}(y)$, is a generalized map
  $M$ of type $(x \to y)$.
\end{itemize}
\end{definition}
Since we are describing linear map in terms of
their Choi operator, the choice of an orthonormal
basis is understood. 
The following characterization easily follows from the previous definition
\begin{lemma}[Characterization of generalized maps]
  Let $x$ be a type. Then
  $\set{T}_{\mathbb{R}}(x) =
  \mathcal{L}(\hilb{H}_x )$ where
  $\hilb{H}_x := \bigotimes_i \hilb{H}_i$ and
  $\hilb{H}_i$ are the Hilbert spaces
  corresponding to the elementary types $\{A_i\}$
  occurring in the expression of $x$.
\end{lemma}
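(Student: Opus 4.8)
The plan is to argue by well-founded (Noetherian) induction on the type $x$ with respect to the relation $\preceq$, whose well-foundedness was established above; equivalently, this is structural induction following the two clauses that generate $\set{Types}$ in Definition~\ref{Sdef:type}. Since the recursive definition of $\set{T}_{\mathbb{R}}$ has exactly the same two clauses, the base case and the inductive step are in one-to-one correspondence with them. For the base case, let $x=A$ be elementary. Then $\set{T}_{\mathbb{R}}(A)=\mathcal{L}(\hilb{H}_A)$ by definition, and the only elementary type occurring in $A$ is $A$ itself, so $\hilb{H}_x=\hilb{H}_A$ and the claim holds.

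For the inductive step, suppose $x=(y\to z)$, so that $y\preceq x$ and $z\preceq x$, and assume the claim for the strictly smaller types $y$ and $z$: $\set{T}_{\mathbb{R}}(y)=\mathcal{L}(\hilb{H}_y)$ and $\set{T}_{\mathbb{R}}(z)=\mathcal{L}(\hilb{H}_z)$. By definition, $\set{T}_{\mathbb{R}}(x)$ is the set of Choi operators of linear maps $M\colon\set{T}_{\mathbb{R}}(y)\to\set{T}_{\mathbb{R}}(z)$, i.e.\ of linear maps $M\colon\mathcal{L}(\hilb{H}_y)\to\mathcal{L}(\hilb{H}_z)$. The Choi--Jamiolkowski correspondence is a linear bijection between the space of such maps and $\mathcal{L}(\hilb{H}_y\otimes\hilb{H}_z)$, so the collection of their Choi operators is the entire space $\mathcal{L}(\hilb{H}_y\otimes\hilb{H}_z)$. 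Finally, the elementary types occurring in $x=(y\to z)$ are precisely those of $y$ together with those of $z$; hence, using the identification $\hilb{H}\otimes\hilb{K}=\hilb{K}\otimes\hilb{H}$ adopted throughout, $\hilb{H}_x=\hilb{H}_y\otimes\hilb{H}_z$. Combining these observations yields $\set{T}_{\mathbb{R}}(x)=\mathcal{L}(\hilb{H}_x)$, closing the induction.

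The only step that requires genuine attention---and hence the main, though mild, obstacle---is the surjectivity half of the Choi correspondence in the inductive step: one must note that generalized maps carry no positivity, normalization, or other admissibility requirement, so that every operator on $\hilb{H}_y\otimes\hilb{H}_z$ actually occurs as the Choi operator of some linear map. Without this remark one would obtain only the inclusion $\set{T}_{\mathbb{R}}(x)\subseteq\mathcal{L}(\hilb{H}_x)$; it is precisely the absence of constraints at this \emph{generalized} level that makes $\set{T}_{\mathbb{R}}(x)$ fill the full operator space. All remaining steps amount to bookkeeping of which Hilbert-space factors appear in the tensor product.
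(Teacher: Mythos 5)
Your proof is correct. Note that the paper itself does not prove this lemma at all: its ``proof'' is a citation to Ref.~\cite{doi:10.1098/rspa.2018.0706}, so there is no in-paper argument to compare against. Your structural induction (equivalently, well-founded induction on $\preceq$, whose well-foundedness the paper establishes separately) is the natural argument, and you correctly isolate the one non-trivial ingredient: since generalized maps carry no positivity or normalization constraints, the Choi--Jamiolkowski correspondence is a linear \emph{bijection} between the full space of linear maps $\mathcal{L}(\hilb{H}_y)\to\mathcal{L}(\hilb{H}_z)$ and $\mathcal{L}(\hilb{H}_y\otimes\hilb{H}_z)$ (surjectivity follows by reading $M\mapsto\mathcal{M}(\rho)=\Tr_y[(\rho^{\theta}\otimes I_z)M]$ as an inverse), which is exactly what makes $\set{T}_{\mathbb{R}}(x)$ exhaust the whole operator space rather than a proper subset. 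The only cosmetic caveat, which concerns the statement rather than your proof, is that repeated occurrences of the same elementary type in $x$ must be counted with multiplicity in the tensor product $\hilb{H}_x$; the paper handles this later by relabeling, and your induction is insensitive to it since it only ever tensors the spaces of the two immediate subtypes.
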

\begin{proof}
  See Ref. \cite{doi:10.1098/rspa.2018.0706}
\end{proof}

The next step is to establish the set of
requirements
that a generalized map of certain type must satisfy
in order to represent a physical process.
Clearly, given two quantum systems $A$
and $B$, not every element
$M\in \mathcal{L}(\hilb{H}_A\otimes\hilb{H}_B)$
represents a Choi operator of an allowed quantum
transformation from $A$ to $B$, i.e. it is not a
physically \emph{admissible}  map of type $A \to
B$. Indeed, $M$ 
is an admissible process of type $A \to B$ if and
only if
it is the Choi operator of a quantum operations, i.e.
$0\leq M\leq N$, where $\text{Tr}_B[N]=I$.

Our goal is to generalise Kraus axiomatic
characterization of quantum operation to map of
arbitrary type, i.e. a map is admissible if it
preserves the probabilistic structure of quantum
theory. In Kraus' axiomatic characterization the
requirements that translates the compatibility
with the probabilistic structure of quantum theory
are linearity, normalization and complete positivity.
When applying this idea to maps of arbitrary type, the non
trivial problem is to find an appropriate
generalization of complete positivity.
In order to do that we will use the
following 
notion of \emph{type extension}:
\begin{definition}[Extension with elementary types]
  Let $x$ be a type and $E$ be an elementary type,
  then the extension $x|| E$ of $ x$ by the elementary type $E$ is defined recursively as follows: 
\begin{itemize}
\item If  $A$ is an elementary types $A||E:=AE$,
\item  for any $x$,$y$ types, $(x\rightarrow y)||E:=(x\rightarrow y||E)$.
\end{itemize}
\end{definition}
We are now in the position to define 
a notion of extension for generalized maps:
\begin{definition}[Extended map]
  Let $x$ be a non-elementary type, $E$ an
  elementary type and
  $M\in \set{T}_{\mathbb{R}}(x) $. We denote with
  $M_E$ the extension of $M$ by $E$ which is
  defined recursively as follows: If $x$, $y$ are
  two types and
  $ M \in \set{T}_{\mathbb{R}}(x \rightarrow y)$
  then
  $M_E \in \set{T}_{\mathbb{R}}(x||E \rightarrow
  y||E)$ is the Choi operator of the map
  $\mathcal{M} \otimes \mathcal{I}_E :
  \set{T}_{\mathbb{R}}(x
  ||E)\rightarrow\set{T}_{\mathbb{R}}(y||E)$,
  where
  $\mathcal{I}_E
  :\mathcal{L}(\hilb{H}_E)\rightarrow\mathcal{L}(\hilb{H}_E)$
  is the identity map.
\end{definition}
The definition above enables us to formulate two set of admissibility conditions for maps.
The first stems directly  from the definition of quantum states as positive operators:
\begin{definition}[Admissible elementary map]
Let $A$ be an elementary type and $M \in \set{T}_{\mathbb{R}}(A)$. We say that:

\begin{itemize}
\item $M$ is a deterministic map if $M \geq 0$ and $\text{Tr}[M ] = 1$. $\set{T}_1 (A)$ denotes the set of deterministic maps of type $A$.

\item  $M$ is admissible if $M \geq 0$ and there exists $N \in \set{T}_1(A)$ such that $M\leq N$. $\set{T}(A)$ is the set of admissible maps of type $A$.
\end{itemize}

\end{definition}
The admissible elementary maps which are not deterministic are called \textit{probabilistic} 
elementary maps. Let us proceed showing the second set of  conditions for maps:
\begin{definition}[Admissible maps]\label{Sadmap}
 Let $x, y$ be types, $M\in \set{T}_{\mathbb{R}}(x \rightarrow y)$ be an map of type $x\rightarrow y$ and $M_E \in \set{T}_{\mathbb{R}}(x||E \rightarrow y||E)$ be the extension of $M$ by $E$. Let $\mathcal{M} :\set{T}_{\mathbb{R}}(x) \rightarrow \set{T}_{\mathbb{R}}(y)$ and $\mathcal{M} \otimes \mathcal{I}_E : \set{T}_{\mathbb{R}}(x ||E)\rightarrow\set{T}_{\mathbb{R}}(y||E)$ be the linear maps whose Choi operator are $M$ and $M_E$ respectively.
We say that $M$ is admissible if,
\begin{enumerate}
\item[(i)] for all elementary types $E$, the map $\mathcal{M} \otimes \mathcal{I}_E$ sends admissible maps of type $x|| E$ to admissible maps of type $y|| E$,

\item[(ii)] there exist $\{N_i\}^n_{i=1} \subseteq \set{T}_{\mathbb{R}}(x \rightarrow y)$, $0 \leq n < \infty$ such that, for all elementary types $E$,
\begin{itemize}
\item $\forall 1\leq i \leq n$ the map $N_i$ satisfies item $\textit{(i)}$.
\item For all elementary types $E$, the map $(\mathcal{M} + \sum^n_{i=1} \mathcal{N}_i) \otimes \mathcal{I}_E$ maps deterministic maps of
  type $x||E$ to deterministic maps of type $y||E$.
\end{itemize}
\end{enumerate}
The set of admissible maps of type $x $ is
denoted with $\set{T}(x)$. We say that an
operator $D \in \set{T}_{\mathbb{R}}(x \rightarrow
y)$ is a \emph{deterministic} map of type $x \rightarrow
y$, if $D \in \set{T}(x \rightarrow y)$ and
$(\mathcal{D} \otimes \mathcal{I}_E )$ maps
deterministic admissible maps of type $x || E$ to
deterministic admissible maps of type $y || E$.
 The set of deterministic map of type $x$ is
  denoted by $\Evd{x}$. 
\end{definition}
We can say that Definition~\ref{Sadmap} generalises the Kraus' axiomatic definition of 
quantum operations to higher-order  maps. If we consider
the simplest case $x=A\rightarrow B$,
definition~\ref{Sadmap} we have the usual notion
of completely positivity trace non-increasing map
from $\mathcal{L}(\hilb H_A)$ to
$\mathcal{L}(\hilb H_B)$.

Definition~\ref{Sadmap} gives the characterization
of admissible maps in terms of physically
motivated assumptions.
The following theorem provides the mathematical
characterization of the set of admissible maps.
\begin{proposition}
  \label{Sthm:charthm}
  Let $x$ be a type and $M \in
  \mathcal{L}(\hilb{H}_x)$ a map of type $x$. Let
  $\set{Hrm}(\hilb{H})$ and  $\set{Trl}(\hilb{H})$
  denotes  the subspace of
  Hermitian operator and traceless Hermitian
  operators, respectively.
  Then
  $M \in \Eva{x}$ if and only if $M \geq 0$
    and $M \leq D$ for a map $D \in \Evd{x}$.
A map $D$ is a  deterministic map of type $x$ ($D
\in  \Evd{x}$)   if and
  only if    $D \geq 0$ and
  \begin{align}
    \label{Seq:2}
D = \lambda_x I_x + X_x, \quad \lambda_x \geq 0,
    \quad X_x \in \set{\Delta}_x \subseteq \set{Trl}(\hilb{H}_x)
  \end{align}
 where 
  $\lambda_x $ and
  $\set{\Delta}_x$ are
  defined recursively as
  \begin{align}
    \label{Seq:5}
    \begin{split}
     & \lambda_E = \frac{1}{d_E}, \;
      \lambda_{x\to y} = \frac{\lambda_y}{d_x
      \lambda_x}, \qquad \set{\Delta}_E =
    \set{Trl}(\hilb{H}_E), \\
       & 
       \set{\Delta}_{x \to y} =
[\set{Hrm}(\hilb{H}_x) \otimes {\set{\Delta}_{y}}]
    \oplus
[\overline{\set{\Delta}}_{x}   \otimes \set{\Delta}^\perp_{y} ],  
    \end{split}
  \end{align}
  and where $\set{\Delta}^\perp$,
  denote the orthogonal
   complement (with respect the Hilbert-Schmidt
   inner product) of $\Delta$ in
   $\set{Hrm}(\hilb{H})$ while
   $\overline{\set{\Delta}}$ is
    the orthogonal
   complement in
   $\set{Trl}(\hilb{H})$.
 \end{proposition}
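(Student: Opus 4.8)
The plan is to prove the statement by Noetherian induction on the well-founded relation $\preceq$, characterising the deterministic maps $\Evd{x}$ first and then deducing the characterisation of $\Eva{x}$ from it. For the base case, when $x=A$ is elementary, a deterministic map is by definition a state, $D\geq 0$ with $\Tr D = 1$; writing $D = \frac{1}{d_A} I_A + \left(D-\frac{1}{d_A}I_A\right)$ exhibits it in the claimed form with $\lambda_A = 1/d_A$ and traceless part lying in $\set{\Delta}_A = \set{Trl}(\hilb{H}_A)$, while the admissible case $0\leq M\leq D$ is exactly the defining condition of an admissible elementary map. So the substance is the inductive step for a type $x\to y$, where I may assume the characterisation for $x$ and $y$, since $x,y\prec x\to y$.

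For the deterministic characterisation I would split the two implications. Positivity $D\geq 0$ follows from the completely-admissible-preserving condition (i) of Definition~\ref{Sadmap} by a higher-order Choi argument: extending by an $E$ with $\dim \hilb{H}_E = \dim \hilb{H}_x$ and feeding the map an admissible input built from the maximally entangled vector on $x||E$ forces the output operator to be positive, which is equivalent to $D\geq 0$. The affine structure is then obtained by translating ``$\mathcal{D}$ sends $\Evd{x}$ into $\Evd{y}$'' through the link product. Using $\mathcal{D}(\rho)=\Tr_x[(\rho^{\theta}\otimes I_y)D]$ and the inductive form $\rho = \lambda_x I_x + X_x$ with $X_x\in\set{\Delta}_x$, I separate this into two linear requirements: (A) $\mathcal{D}(I_x)=\Tr_x[D]$ lies in $\frac{\lambda_y}{\lambda_x}I_y+\set{\Delta}_y$, and (B) $\mathcal{D}(X_x)\in\set{\Delta}_y$ for every $X_x\in\set{\Delta}_x$. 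Writing $D=\lambda_{x\to y}I_{xy}+X$ and decomposing the traceless part $X$ in the tensor product of the eigenspaces $\set{Hrm}(\hilb{H}_x)=\mathbb{R}I_x\oplus\set{\Delta}_x\oplus\overline{\set{\Delta}}_x$ and $\set{Hrm}(\hilb{H}_y)=\set{\Delta}_y\oplus\set{\Delta}^\perp_y$, condition (B) forces every component whose $y$-factor lies in $\set{\Delta}^\perp_y$ to have its $x$-factor orthogonal to $\set{\Delta}_x$, i.e. in $\mathbb{R}I_x\oplus\overline{\set{\Delta}}_x$, while components with $y$-factor in $\set{\Delta}_y$ remain unconstrained; condition (A) then eliminates the leftover $I_x\otimes\set{\Delta}^\perp_y$ piece. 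What survives is exactly
\begin{align}
  [\set{Hrm}(\hilb{H}_x)\otimes\set{\Delta}_y]\oplus[\overline{\set{\Delta}}_x\otimes\set{\Delta}^\perp_y] = \set{\Delta}_{x\to y},
\end{align}
the two summands being orthogonal because their $y$-factors are. For the converse I would check directly that any $D\geq 0$ of this form preserves determinism, and crucially that it continues to do so under every extension $E$.

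The admissible characterisation is then short. If $M\geq 0$ and $M\leq D$ with $D$ deterministic, then $N:=D-M\geq 0$ satisfies condition (i) and $M+N=D$ is deterministic, so $M$ is admissible by Definition~\ref{Sadmap}(ii). Conversely, if $M$ is admissible there exist $N_i$ satisfying (i) with $M+\sum_i N_i =: D$ deterministic; since condition (i) entails positivity of each operator, $M\geq 0$ and $N:=\sum_i N_i\geq 0$, whence $M\leq D$, closing the equivalence.

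The main obstacle I anticipate is the deterministic step, and within it two delicate points. First, one must make the extension clause (the quantifier over all elementary $E$ in Definition~\ref{Sadmap}) produce exactly the linear constraints above rather than a weaker set: this is the higher-order analogue of upgrading ``positivity-preserving'' to ``completely positive'', and it is what simultaneously forces $D\geq 0$ and pins down the precise shape of $\set{\Delta}_{x\to y}$. Second, one must handle the partial transpose $\rho^{\theta}$ carefully, since the orthogonality appearing in condition (B) is with respect to the transposed pairing; I would verify that $\set{\Delta}_x$ together with $\overline{\set{\Delta}}_x$ and the identity direction behave compatibly under $\theta$, so that the eigenspace decomposition of $X$ goes through unchanged and the two directions of the deterministic characterisation match up.
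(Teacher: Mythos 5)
The paper contains no internal proof of Proposition~\ref{Sthm:charthm}: its ``proof'' is a citation to Ref.~\cite{doi:10.1098/rspa.2018.0706}, and the argument there follows, in outline, your plan --- structural induction on types, a Choi-type argument for positivity, and an affine analysis identifying the traceless part. Your core computation is also right: with $\set{Hrm}(\hilb{H}_x)=\mathbb{R}I_x\oplus\set{\Delta}_x\oplus\overline{\set{\Delta}}_x$ and $\set{\Delta}_y^\perp=\mathbb{R}I_y\oplus\overline{\set{\Delta}}_y$, your conditions (A) and (B) exclude precisely the blocks $\set{\Delta}_x\otimes I_y$, $\set{\Delta}_x\otimes\overline{\set{\Delta}}_y$ and $I_x\otimes\overline{\set{\Delta}}_y$, force $\lambda_{x\to y}=\lambda_y/(d_x\lambda_x)$, and leave exactly $[\set{Hrm}(\hilb{H}_x)\otimes\set{\Delta}_y]\oplus[\overline{\set{\Delta}}_x\otimes\set{\Delta}_y^\perp]$. (Incidentally, this shows the shape of $\set{\Delta}_{x\to y}$ already follows from the \emph{unextended} deterministic condition; the extension clause is what buys positivity, not the shape.)

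The genuine gap is in how you set up the induction. You induct on $\prec$ and assume the characterisation ``for $x$ and $y$, since $x,y\prec x\to y$''. But by Definition~\ref{Spartialordering} the only types strictly below $x\to y$ are $x$, $y$ and their subtypes, whereas the steps of your proof that discharge Definition~\ref{Sadmap} need the characterisation for the \emph{extended} types $x||E$ and $y||E$, which are not below $x\to y$: (a) to know that your (suitably scaled) maximally entangled input is an admissible map of type $x||E$, you need that $\lambda_{x||E}I\in\Evd{x||E}$ and that positive operators dominated by a deterministic map are admissible --- i.e.\ the characterisation at $x||E$; (b) to turn ``the output is an admissible map of type $y||E$'' into ``the output operator is positive'', you need the characterisation at $y||E$; the same issue affects the claimed positivity of $M$ and of the $N_i$ in your final paragraph. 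As written, none of these invocations is licensed by your inductive hypothesis, so the positivity step would fail. The repair is standard but must be stated: prove the statement for all types with a given arrow/bracket skeleton, quantified over \emph{all} assignments of finite-dimensional elementary systems, and induct on the skeleton; since $x||E$ and $y||E$ have the same skeletons as $x$ and $y$, they then fall under the hypothesis. Separately, your sufficiency direction is only asserted (``check directly''); that is where the extension quantifier really has to be discharged, by verifying that the form \eqref{Seq:2} is stable under extension --- e.g.\ $\set{\Delta}_{(x\to y)||E}=[\set{Hrm}(\hilb{H}_x)\otimes\set{\Delta}_{y||E}]\oplus[\overline{\set{\Delta}}_x\otimes\set{\Delta}_{y||E}^\perp]$ --- together with positivity of link products of positive operators.
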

 \begin{proof}
  See Ref. \cite{doi:10.1098/rspa.2018.0706}.
\end{proof}
This result is the most important tool in the
study of higher-order quantum maps.  It shows that
the convex set $\Eva{x}$ of the probabilistic maps of type
$x$ is determined by the set of the convex set
$\Evd{x}$
of deterministic maps which is in turn
determined by a normalization factor and a linear
constraint over the cone of positive operators.
Moreover, this result shows that the whole
axiomatic construction is consistent with all the
known examples of higher order maps considered in
the literature.

Let us introduce a
notion of inclusion between types.

\begin{definition}[type inclusion]
 If $\Eva{x}
  \subseteq \Eva{y}$
   we say that $x$ is
\emph{included} in $y$ and we write $x \subseteq
y$. If both $x \subseteq y$ and $y \subseteq x$
we say that $x$ is equivalent to $y$ and we write
$x \equiv y$.
\end{definition}
Thanks to Proposition \ref{Sthm:charthm} one can
prove several equivalences between types.
\begin{proposition}
  \label{Sprop:typeequivalences}
  Let $x$ and $y$ be types  and let us define
  \begin{align}
    \label{Seq:51}
    \overline{x} :=  x \to I, \\
    \label{Seq:52}
    x \otimes y := \overline{x \to \overline{y}}.
  \end{align}
  Then the following type equivalences holds
  \begin{align}
    \label{Seq:53}
    \overline{\overline{x} }:=  x \\
    A \otimes B \equiv AB \label{Seq:54}\\
    x \otimes y \equiv y \otimes x \label{Seq:55}\\
    (x \otimes y) \otimes z \equiv x \otimes (y \otimes z)\label{Seq:56}.
  \end{align}
\end{proposition}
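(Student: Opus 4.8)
The plan is to reduce every claimed equivalence to an equality of the defining data $(\lambda,\set{\Delta})$ furnished by Proposition~\ref{Sthm:charthm}. Two types $x,y$ built from the same elementary systems have $\hilb{H}_x=\hilb{H}_y$, so by that proposition $\Evd{x}=\{D\geq 0 : D=\lambda_x I_x + X_x,\ X_x\in\set{\Delta}_x\}$ and $\Evd{y}$ coincide as soon as $\lambda_x=\lambda_y$ and $\set{\Delta}_x=\set{\Delta}_y$; since $\Eva{x}=\{M\geq 0 : \exists D\in\Evd{x},\ M\leq D\}$ is then determined by $\Evd{x}$, this gives $\Eva{x}=\Eva{y}$, i.e.\ $x\equiv y$. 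Hence it suffices to compute $\lambda$ and $\set{\Delta}$ for the two sides of each identity, and all four types involved manifestly share the same elementary systems.

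First I would establish a dualization lemma. Feeding $y=I$ into the recursion, where $\hilb{H}_I=\mathbb{C}$ forces $\set{\Delta}_I=\set{Trl}(\hilb{H}_I)=\{0\}$, $\set{\Delta}^\perp_I=\set{Hrm}(\hilb{H}_I)=\mathbb{R}$, and $\lambda_I=1$, yields $\lambda_{\overline{x}}=1/(d_x\lambda_x)$ and $\set{\Delta}_{\overline{x}}=\overline{\set{\Delta}}_x$. Iterating gives $\lambda_{\overline{\overline{x}}}=\lambda_x$, and because a double orthogonal complement inside the finite-dimensional space $\set{Trl}(\hilb{H}_x)$ returns the original subspace, $\set{\Delta}_{\overline{\overline{x}}}=\overline{(\overline{\set{\Delta}}_x)}=\set{\Delta}_x$. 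This proves $\overline{\overline{x}}\equiv x$.

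Next I would derive a closed form for the tensor product. Unfolding $x\otimes y=\overline{x\to\overline{y}}$ with the recursion and the dualization lemma, and using $\set{Hrm}(\hilb{H}_x)=\mathbb{R}I_x\oplus\set{\Delta}_x\oplus\overline{\set{\Delta}}_x$ together with $\set{\Delta}^\perp_y=\mathbb{R}I_y\oplus\overline{\set{\Delta}}_y$ and $(\overline{\set{\Delta}}_y)^\perp=\mathbb{R}I_y\oplus\set{\Delta}_y$, one obtains
\begin{align}
\lambda_{x\otimes y}=\lambda_x\lambda_y,\qquad
\set{\Delta}_{x\otimes y}=(\mathbb{R}I_x\otimes\set{\Delta}_y)\oplus(\set{\Delta}_x\otimes\mathbb{R}I_y)\oplus(\set{\Delta}_x\otimes\set{\Delta}_y).
\end{align}
The lever for the $\set{\Delta}$-computation is the decomposition of $\set{Hrm}(\hilb{H}_x\otimes\hilb{H}_y)$ into the nine mutually Hilbert--Schmidt-orthogonal blocks $\{\mathbb{R}I_x,\set{\Delta}_x,\overline{\set{\Delta}}_x\}\otimes\{\mathbb{R}I_y,\set{\Delta}_y,\overline{\set{\Delta}}_y\}$: orthogonality makes the complement in $\set{Trl}$ of any sub-sum of blocks equal to the sum of the remaining blocks, so one just reads off which of the five blocks of $\set{\Delta}_{x\to\overline{y}}$ survive. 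From this closed form $A\otimes B\equiv AB$ is immediate: with $\set{\Delta}_A=\set{Trl}(\hilb{H}_A)$ and $\overline{\set{\Delta}}_A=\{0\}$ the three surviving blocks reassemble $\set{Trl}(\hilb{H}_A\otimes\hilb{H}_B)=\set{\Delta}_{AB}$, while $\lambda_A\lambda_B=1/(d_Ad_B)=\lambda_{AB}$.

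Finally, commutativity and associativity are read off the closed form. The expression for $\set{\Delta}_{x\otimes y}$ is symmetric under $x\leftrightarrow y$ and $\lambda_{x\otimes y}=\lambda_x\lambda_y$ is symmetric, giving $x\otimes y\equiv y\otimes x$. For associativity, labelling the identity, $\set{\Delta}$ and (irrelevant) $\overline{\set{\Delta}}$ parts of each factor by indices $0,1$, one checks that both $\set{\Delta}_{(x\otimes y)\otimes z}$ and $\set{\Delta}_{x\otimes(y\otimes z)}$ equal the sum of all triple-blocks $a_ib_jc_k$ except the all-identity block, while $\lambda=\lambda_x\lambda_y\lambda_z$ in either bracketing. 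I expect the main obstacle to be purely the block-bookkeeping: correctly tracking which tensor blocks survive each orthogonal complement, and justifying that complementation commutes with the block decomposition, which rests on the multiplicativity $\langle P_1\otimes Q_1,P_2\otimes Q_2\rangle=\langle P_1,P_2\rangle\langle Q_1,Q_2\rangle$ of the Hilbert--Schmidt inner product.
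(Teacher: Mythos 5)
Your proposal is correct, but it cannot be matched line-by-line against the paper's own argument for the simple reason that the paper does not prove Proposition~\ref{Sprop:typeequivalences} at all: its ``proof'' is a citation to Ref.~\cite{doi:10.1098/rspa.2018.0706}. Your self-contained derivation from Proposition~\ref{Sthm:charthm} is sound: the reduction of equivalence to equality of the data $(\lambda_x,\set{\Delta}_x)$, the dualization identities $\lambda_{\overline{x}}=1/(d_x\lambda_x)$ and $\set{\Delta}_{\overline{x}}=\overline{\set{\Delta}}_x$ obtained by feeding $y=I$ into the recursion, and the closed forms $\lambda_{x\otimes y}=\lambda_x\lambda_y$, $\set{\Delta}_{x\otimes y}=(\mathbb{R}I_x\otimes\set{\Delta}_y)\oplus(\set{\Delta}_x\otimes\mathbb{R}I_y)\oplus(\set{\Delta}_x\otimes\set{\Delta}_y)$ all check out, and the blockwise-complementation step is legitimate because $\set{Hrm}(\hilb{H}_x\otimes\hilb{H}_y)=\set{Hrm}(\hilb{H}_x)\otimes_{\mathbb{R}}\set{Hrm}(\hilb{H}_y)$ and the Hilbert--Schmidt inner product factorizes on tensor blocks (commutativity also uses the identification $\hilb{H}\otimes\hilb{K}=\hilb{K}\otimes\hilb{H}$, which the paper assumes throughout). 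It is worth pointing out that your argument is exactly a coarse-grained version of the combinatorial formalism the paper develops later in the supplement: Lemma~\ref{Slmm:charactstring} writes $\set{\Delta}_x=\bigoplus_{\mathbf{b}\in D_x}\set{L}_{\mathbf{b}}$ with one bit per elementary system ($\set{L}_0=\set{Trl}$, $\set{L}_1=\mathbb{R}I$), and Equation~\eqref{eq:4} records precisely your two closed forms, $D_{\overline{x}}=\overline{D_x}$ and $D_{x\otimes y}=\mathbf{e}_xD_y\cup D_x\mathbf{e}_y\cup D_xD_y$; your three blocks $\{\mathbb{R}I_x,\set{\Delta}_x,\overline{\set{\Delta}}_x\}$ per factor arise by grouping those binary strings into $\mathbf{e}_x$, $D_x$ and $\overline{D}_x$. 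So your route buys a short, self-contained proof of the equivalences from the characterization theorem alone, whereas the paper (and the reference it cites) invests in the finer string bookkeeping because that machinery is needed anyway for the subsequent results on contractions, admissibility and signalling.
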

\begin{proof}
    See Ref. \cite{doi:10.1098/rspa.2018.0706}.
\end{proof}
Equation \eqref{Seq:52} defines a tensor product
between types.  We refer to this operation as a
tensor product because it coincides with the usual
tensor product for elementary types and because it
is commutative and associative.  However, it is
worth stressing that the determinstic higher order
maps of type $x \otimes y$ are not the tensor
product of the deterministic maps of type $x$ with
determinstic maps of type $y$.  The set
$\Evd{x \otimes y}$ of deterministic maps of type
$x\otimes y$ is given by the intersection of the
cone of positive operators with the \emph{affine}
hull of the set $\{ \Evd{x} \otimes \Evd{y} \}$. This set is larger the intersection of the cone of positive operators with the \emph{convex}
hull of the set $\{ \Evd{x} \otimes \Evd{y} \}$.
For example the deterministic maps of type
$(A \to B) \otimes (C \to D)$ is the set of
\emph{non-signalling} channels and not every
non-signalling channel is a convex combination of
tensor products of channels.

\section{Combinatorial strucure of higher order
  maps}
\label{Ssec:comb-struc-high}
In this section we will develop a combinatorial
formalism that turns out to be very convenient for
our purposes. Moreover, this formal language
will make evident the algorithmic structure of
many aspects of higher order quantum theory.

Given a finite dimensional Hilbert space
$\hilb{H}$, the linear space of Hermitian
operators $\set{Hrm}(\hilb{H})$ can be expressed
as direct sum of the subspace
$\set{Trl}(\hilb{H})$ of traceless operators and
the one dimensional subspace generated by the
identity operator as follows:
\begin{align}
\label{SI+T}
  \set{Hrm}(\hilb{H})=\set{L}_0\oplus \set{L}_1,
\end{align}
where
$\set{L}_0:=\{X|\set{Tr}(X)=0,
X^\dagger=X\}=\set{Trl}(\hilb{H})$ and
$\set{L}_1:=\set{span}\{I\}$.  This allows us to
decomposed any element $O$ in
$\set{Hrm}(\hilb{H})$ as $O=\lambda I+ X$, with
$\lambda \in \mathbb{R}$ and $X\in \set{L}_0$.
Let us consider the case where our Hilbert space
can be factorized in a tensor product of 
Hilbert spaces labeled by some index set $\set{A}
:= \{A_1, \dots, A_n\}$ as follows:
\begin{equation}
  \hilb{H}_{\set{A}}=\hilb{H}_{A1}\otimes \hilb{H}_{A2}
  \otimes \cdots\otimes \hilb{H}_{An}.
\end{equation}
Then, according to the previous decomposition of
$\set{Hrm}(\hilb{H})$, it is natural to define
string of $l$ bits
\begin{align}
  {\textbf{b}} := b_{A1}b_{A2} \dots b_{An}, \quad b_{Ai} \in \{0,1\}
\end{align}
and the corresponding linear subspaces
\begin{align}
 \set{L}_{\textbf{b}}:=\set{L}_{b_{A1}}\otimes
  \set{L}_{b_{A2}}
  \otimes\cdots\otimes \set{L}_{b_{An}}
\end{align}
The spaces
$\set{L}_\textbf{b}$ have the following
property:
\begin{lemma}\label{Slmm:stringdefinedlabels}
Let $\textbf{b}$ be a binary string of lenght $n$
  labeled by the set $\set{A}
:= \{A_1, \dots, A_n\}$.
  Let $\set{L}_\textbf{b}$ be the corresponding
  subset of
  $\hilb{H}= \bigotimes_{i =1 }^{n}
  \hilb{H}_{A_i}$.
 If
  $\textbf{b}\neq\textbf{b}'$ then
  $\set{L}_\textbf{b}$ and $\set{L}_{\textbf{b}'}$
  are orthogonal subspaces with respect the
  Hilbert-Schmidt product .
\end{lemma}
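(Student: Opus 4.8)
The plan is to reduce the statement to the single-system case and then exploit the multiplicativity of the Hilbert-Schmidt product across the tensor factors. First I would record the elementary observation underlying the whole argument: for a single Hilbert space, the two summands of the decomposition \eqref{SI+T} are already mutually orthogonal. Indeed, given $X \in \set{L}_0 = \set{Trl}(\hilb{H})$ and $\lambda I \in \set{L}_1 = \spn\{I\}$, the Hilbert-Schmidt product is $\Tr[(\lambda I)^\dagger X] = \bar{\lambda}\,\Tr[X] = 0$, since $X$ is traceless. Thus $\set{L}_0 \perp \set{L}_1$ for every individual tensor slot.

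Next I would use the fact that the Hilbert-Schmidt product factorizes over tensor products: for simple tensors $X = X_1 \otimes \cdots \otimes X_n$ and $Y = Y_1 \otimes \cdots \otimes Y_n$ one has $\Tr[X^\dagger Y] = \prod_{j=1}^{n} \Tr[X_j^\dagger Y_j]$. Since $\textbf{b} \neq \textbf{b}'$, there is at least one index $i$ with $b_{Ai} \neq b'_{Ai}$; without loss of generality $b_{Ai} = 0$ and $b'_{Ai} = 1$. For any $X \in \set{L}_\textbf{b}$, whose $i$-th factor lies in $\set{L}_0$, and any $Y \in \set{L}_{\textbf{b}'}$, whose $i$-th factor lies in $\set{L}_1$, the $i$-th factor of the product vanishes by the single-system orthogonality just established, so the entire product is zero and $\Tr[X^\dagger Y] = 0$.

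Finally, since $\set{L}_\textbf{b}$ is by definition the span of simple tensors $X_1 \otimes \cdots \otimes X_n$ with $X_j \in \set{L}_{b_{Aj}}$ (and likewise for $\set{L}_{\textbf{b}'}$), the bilinearity of the Hilbert-Schmidt product extends the vanishing from simple tensors to all elements of the two subspaces, which is the claimed orthogonality. There is no genuine obstacle in this argument; the only point requiring care is the bookkeeping that guarantees the differing bit places a traceless operator opposite a multiple of the identity in the \emph{same} tensor slot, which is exactly what the factorization of the trace isolates.
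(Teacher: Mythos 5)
Your proof is correct. The paper itself does not spell out an argument for this lemma (it simply defers to Ref.~\cite{doi:10.1098/rspa.2018.0706}), and what you give is the standard, self-contained argument one would expect there: single-slot orthogonality of $\set{L}_0=\set{Trl}(\hilb{H})$ and $\set{L}_1=\spn\{I\}$ via tracelessness, factorization of the Hilbert--Schmidt trace over simple tensors, and extension to the full subspaces by bilinearity, with the differing bit correctly pinned to a single common tensor slot. Nothing is missing.
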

\begin{proof}
  See Ref. \cite{doi:10.1098/rspa.2018.0706}
\end{proof}
Thanks to the Hilbert-Schmidt orthogonality of
$\set{L}_\textbf{b}$ spaces, we can write the sum
of two different spaces
$\set{L}_\textbf{b}+\set{L}_{\textbf{b}'}$, as the
direct sum
$\set{L}_\textbf{b}\oplus\set{L}_{\textbf{b}'}$.
Let us now introduce some notation:
\begin{align}
  &W_{\set{A}}:= \mbox{ the set of all binary
    strings labelled by } \set{A}, \label{SW} \\ 
  &T_{\set{A}}:=W_{\set{A}}\setminus
    \{\textbf{e}_{\set{A}} \}, \quad \textbf{e}_{\set{A}}:=1_{A1}1_{A2}\dots 1_{An}. \label{ST}
\end{align}
Moreover, we will denote with ${\varepsilon}$ the null string
and $\emptyset$ the empty set which contains no strings.

Each set of strings $J \subseteq W_{\set{A}}$
corresponds to a subspace of $\set{Hrm}(\hilb{H}_{\set{A}})$
as follows:
\begin{align}
&\set{L}_{J}:=\underset{\textbf{b}\in
                J}{\bigoplus}\set{L}_{\textbf{b}},
 &\set{L}_{{\varepsilon}}=\mathbb{R}, \quad \set{L}_{\emptyset}=\{0\},
\end{align}
where we stress the difference between the null
string $\varepsilon$ and the empty set $\emptyset$.
Clearly, we have $\set{L}_{W _{\set{A}}}=
\set{Hrm}(\hilb{H}_{\set{A}})$ and 
$\set{L}_{T _{\set{A}}}=
\set{Trl}(\hilb{H}_{\set{A}})$ 

For any subspace $\set{L}_J\subseteq
\set{Trl}(\hilb{H})$ 
we define
\begin{align}
  \label{Seq:4}
\overline{\set{L}_J} &:=   \set{L}_{\overline{J}},
                       \qquad \overline{J}  := T \setminus  J  \\
  \label{Seq:8}
  {\set{L}_J}^\perp &:=   \set{L}_{{J}^\perp},
  \qquad {J}^\perp  := W \setminus  J,
\end{align}
where we omitted the label ${\set{A}}$ from the
string sets $W$ and $T$ in order to lighten the notation.

It is worth notice that, whenever one of the
factor in the decomposition
$\hilb{H}_{\set{A}}=\hilb{H}_{A1}\otimes
\hilb{H}_{A2}\otimes\cdots\otimes \hilb{H}_{An}$
is one dimensional, e.g. $\hilb{H}_{Ak} = \mathbb{C}$
for some $A_k$,
the non trivial spaces $\set{L}_{\textbf{b}}$ are those determined only by the bits $b_{Aj}$ 
with $j\neq k$.
Indeed, for $b_{Ak}=0$ the associated space is 
$\set{L}_{b_{Ak}}=\{0\}$ implying  $\set{L}_{\textbf{b}}=\{0\}$, while for $b_{Ak}=1$ we have 
$\set{L}_{b_{Ak}}=\mathbb{R}$, then
$\set{L}_{\textbf{b}}=\set{L}_{\textbf{b}'_{{ Ak}}}$, where $\textbf{b}'_{Ak}$ is obtained from the string $\textbf{b}$ 
by dropping the $k$-th bit.

We now introduce some
operations we can perform on strings of bits that
will be useful for our purposes.

\begin{definition}[Concatenation]
  Let $\textbf{b} = b_{A1}b_{A2} \dots b_{An}$ and
  $\textbf{b}' = b'_{A'1}b'_{A'2} \dots b'_{A'm}$ be two
  labeled strings of bits. The
  \emph{concatenation} of
  $\textbf{b} $ with
  $\textbf{b}' $
  is denoted as $\textbf{b} \textbf{b}' $
  and reads $\textbf{b} \textbf{b}'  :=
  b_{A1}b_{A2} \dots b_{An}b'_{A'1}b'_{A2} \dots b'_{Am}
  $.

  If $J\subseteq W _{\set{A}}$, $J'\subseteq W _{\set{A}'}$
  are set of strings we define the
  \emph{concatenation} of
  $J $ with
  $J' $ as as follows:
  \begin{align}
    \label{Seq:10}
   & JJ' :=\{\textbf{b}=\textbf{w}\textbf{w}' \,
     | \, \textbf{w}\in J, \textbf{w}' \in J'\}.
\end{align}
\end{definition}

We notice that the concatenation of sets of string 
is basically the cartesian product of the sets.
For example, we have, for any $ J \subseteq W _{\set{A}} $
\begin{align}
  \begin{aligned}
  {\varepsilon} J=J{\varepsilon}=J, \quad
  \emptyset  J=J{\emptyset}=\emptyset.
  \end{aligned}
\end{align}
In terms of the linear spaces
$\set{L}_{\mathbf{b}}$ and $\set{L}_{J}$  the
concatenation translates as a tensor product, e.g.
$\set{L}_{JJ'} = \set{L}_{J}\otimes \set{L}_{J'} $.

\begin{definition}[Contraction]\label{Sstringcontr}
Let  $\textbf{b}=b_{A1} b_{A2}\dots b_{An}$  be a
labeled string of $n$ bits.
For any pair $(A_i,A_j)$, the
\emph{$(A_i,A_j)$-contraction} of $\textbf{b}$  i
defined as:
\begin{align}
 & \begin{aligned}
  \mbox{if }  b_{Ai}=b_{Aj} \mbox{ then }
  \mathcal{C}_{Ai,Aj}(\textbf{b})=& \,\,
  b_{A1}\dots b_{A(i-1)} b_{A(i+1)} \dots\\
  \dots b&_{A(j-1)}  b_{A(j+1)} \dots b_n  
\end{aligned} \nonumber \\
&   \begin{aligned}
  \mbox{if }  b_{Ai}\neq b_{Aj} \mbox{ then }
  \mathcal{C}_{Ai,Aj}(\textbf{b})= {\varepsilon}
\end{aligned} \label{Seq:6}
\end{align}
% \begin{align}
% \mathcal{C}_{ij}(\textbf{b})=
% \begin{cases}
% b_1,\cdots,b_{i-1},b_{i+1},\cdots,b_{j-1},b_{j+1},\cdots,b_n\;\textit{if}\; b_i=b_j\\
% \bm{\varepsilon}\;\textit{if}\;b_i\neq b_j
% \end{cases}    
% \end{align}
If  $S$ is a set of
strings, then the $(A_i,A_j)$-contraction of $S$ is
defined as follows: $\mathcal{C}_{Ai,Aj}(S):=\{\mathcal{C}_{Ai,Aj}(\textbf{b}), \;\textbf{b}\in S \}$
\end{definition}
Let us work out an explicit example of
contraction:
\begin{align*}
  \begin{aligned}
  S := & \Big \{0_A1_B 0_C 1_D, \, 0_A0_B 0_C 1_D, \,1_A1_B
  0_C 1_D  \Big \} ;\\
  \mathcal{C}_{A,D}(S) = &                  
                      \Big \{
                            \mathcal{C}_{A,D}(0_A1_B
                            0_C 1_D) , \, 
                            \mathcal{C}_{A,D}(0_A0_B
                            0_C 1_D),  \\
  & \;\;   \mathcal{C}_{A,D}(1_A1_B 0_C 1_D)
  \Big \} =
                       \\
=  &\Big \{   1_B 0_C \Big \} .   
  \end{aligned}
\end{align*}

\begin{definition}[Composition]
  Let $\set{A} := \{A_1, \dots, A_n\}$ be a set of
  $n$ indexes
  and $\set{A'} := \{A'_1, \dots, A'_m\}$ be a set
  of $m$ indexes.
Let $\textbf{b}=b_{A1}\dots  b_{An} \in W _{\set{A}}$ and
$\textbf{b}'=b'_{A'1},\dots ,b'_{A'm} \in W _{\set{A}'}$
be two strings.
Let $\set{H} \subseteq \set{A} \times \set{A}'$
be a set of mutually disjoint couples $(A_i,
A'_j) \in \set{A} \times \set{A}'$, i.e.
if  $(A_i,
A'_j) , (A_k,
A'_l) \in \set{H}$ then $i \neq k$ and $j \neq l$.
The \emph{composition of $\textbf{b}$ and
  $\textbf{b}'$ over $\set{H}$}, which we denote as
$\textbf{b}\ast_{\set{H}} \textbf{b}'$ is defined
as follows:
\begin{align}
\textbf{b}\ast_{\set{H}}
  \textbf{b}':=\mathcal{C}_{\set
  H}(\textbf{b}\textbf{b}')
  :=\mathcal{C}_{Ai,A'j}(\cdots \mathcal{C}_{Ak,
  A'l}
  (\textbf{b}\textbf{b}')\cdots).   \label{Seq:compositionstring}
\end{align}
If $J$ and $J'$ are two set of strings which are
indexed by $\set{A}$ and $\set{A}'$ respectively,
the composition of $J$ and $J'$ over $\set{H}$
is defined as
\begin{align}
  \label{Seq:9}
  J \ast_{\set{H}} J' :=
  \{
  \textbf{b}\ast_{\set{H}} \textbf{b}' \, | \, \textbf{b}\in J, \textbf{b}' \in J' \}
\end{align}

\end{definition}
One can notice that the order in which the
contractions of Equation
\eqref{Seq:compositionstring} are carried is
immaterial. For sake of clarity, let us work out
an explicit example of string composition:

\begin{align*}
  \begin{aligned}
    \mathbf{b} &:=  0_A1_B 0_C 1_D \quad
    \mathbf{b}' :=  0_{A'}1_{B'} 0_{C'} 1_{D'} 
    \\
    \set{H} &:= \Big \{  (A,A') , \, (B,B')   \Big \}\\ 
    \mathbf{b} \ast_{\set{H}} \mathbf{b}'
= &
    \mathcal{C}_{\set{H}}( \mathbf{b} \mathbf{b}')
    = 
     \mathcal{C}_{\set{H}}(0_A1_B 0_C 1_D
     0_{A'}1_{B'} 0_{C'} 1_{D'} ) = \\
     = &\mathcal{C}_{B,B'}\mathcal{C}_{A,A'}(0_A1_B 0_C 1_D
     0_{A'}1_{B'} 0_{C'} 1_{D'} )  =\\
          =&\mathcal{C}_{B,B'}(1_B 0_C 1_D
     1_{B'} 0_{C'} 1_{D'} )  =
0_C 1_D 0_{C'} 1_{D'}
\end{aligned}
\end{align*}

\section{Types inclusion and equivalences}

The aim of this section is to prove Proposition 2
of the main text.
In doing do, we will use the combinatorial
formalism of the preceding section.
We will need to label binary string with the
\emph{non-trivial} elementary types occurring in
the expression of a type $x$.
However, this is a potential
  source of ambiguity because, according to
  Definition~\ref{Sdef:type}, the same elementary
  type can occur more than once in the expression
  of a given type, e.g. $(A \to B) \to A$ (that
  means that that two copies of system $A$ are
  involved).  Having the same label
  repeated twice is problematic when we would like
  to use the elementary systems to label the bits
  of a string as we do in Lemma
  \ref{Slmm:stringdefinedlabels}.  It is therefore
  convenient to relabel the elementary types in
  the expression of a type $x$ in such a way that
  no repetition occur. For example the type
  $(A \to B) \to A$ should be rewritten as
  $(A \to B) \to C$ where we now assigned the
  label $C$ to a copy of system $A$ (system $C$
  will be isomorphic to $A$).  From now on, we
  will assume that the same label of \emph{non-trivial}
  elementary type cannot occur more than once in
  the expression of a given type.
Since multiple occurrence of the \emph{trivial} type
$I$,
is not problematic, we will avoid to introduce
multiple lables for isomorphic trivial systems,
e.g. we will not turn expressions like $((A \to B) \to I)
\to (C \to I)$ into something like
$((A \to B) \to I_1)
\to (C \to I_2)$.

From now on, we will take for granted such a  relabeling.

Let us now reformulate the
characterization theorem of  Proposition \ref{Sthm:charthm}
in this language.

\begin{lemma} \label{Slmm:charactstring}
  Let $x$ be a type and let us denote  by
  $\set{Ele}_x = \{ A_1 ,\dots , A_n \}$
the set of \emph{non-trivial} elementary types $A_i$
occurring in the expression of a type $x$. Then,
the linear subspace
$\Delta_x$ defined in Equation \eqref{Seq:5} satisfies
\begin{equation}\label{SD}
 \Delta_x=\underset{\textbf{b}\in D_x}{\bigotimes}   \set L_{\textbf{b}}
\end{equation}
for a set $D_x$ of string which are labelled by
the set $\set{Ele}_x $.

The set $D_x$ is defined
recursively as follows:
\begin{align}
  \label{Seq:setofstrings20}
  \begin{split}
    &D_A=\{0\} \quad
   \forall A \in  \set{EleTypes}, A \neq I \\
   &D_I=\emptyset,\qquad
   D_I^\perp=\{{\varepsilon}\}, \quad 
   \\
    &D_{(x\rightarrow y)}=W_xD_y\cup \overline{D}_xD_y^\perp,
  \end{split}
\end{align}
where Equations \eqref{Seq:4}~\eqref{Seq:8} and
~\eqref{Seq:10}
are understood.
Moreover we have
\begin{align}
  &\lambda_x=\underset{A_i\in \set{Ele}_x}
    {\prod}d_{{Ai}^{-K_x(Ai)}}\label{Seq:16}\\
    &K_x(A_i):=\#[\text{''} \rightarrow \text{''}]+\#[\text{''} ( \text{''}] \;(\text{mod $2$})
\end{align}
where $\#[\text{''} \rightarrow \text{''}]$ and $\#[\text{''} ( \text{''}]$ denotes the number of arrows $\rightarrow$ and left round brackets $($ to the right of $A_i$ in the expression of $x$, respectively.
\end{lemma}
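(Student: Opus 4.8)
I would prove both assertions simultaneously by structural induction on $x$, mirroring the operator-space recursion of Proposition~\ref{Sthm:charthm} inside the string calculus of the previous section. The entire content of the lemma is that each operation occurring in Equation~\eqref{Seq:5} has a faithful set-theoretic counterpart acting on string sets: by Lemma~\ref{Slmm:stringdefinedlabels} the spaces $\set{L}_{\mathbf{b}}$ are mutually Hilbert--Schmidt orthogonal, so $\set{Hrm}(\hilb{H}_x)=\set{L}_{W_x}$ and $\set{Trl}(\hilb{H}_x)=\set{L}_{T_x}$; a tensor product of two such spaces concatenates their string sets, $\set{L}_{J}\otimes\set{L}_{J'}=\set{L}_{JJ'}$ in the sense of Equation~\eqref{Seq:10}; and the two notions of orthogonal complement are realised exactly by Equations~\eqref{Seq:4} and~\eqref{Seq:8}. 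Once this dictionary is in place, Equation~\eqref{Seq:5} translates term by term into the recursion~\eqref{Seq:setofstrings20}.

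\textbf{The subspace $\set{\Delta}_x$.} For the base cases, a non-trivial elementary type gives $\set{\Delta}_A=\set{Trl}(\hilb{H}_A)=\set{L}_{T_A}$, and since $T_A=W_A\setminus\{\mathbf{e}_A\}=\{0\}$ one reads off $D_A=\{0\}$; the trivial type is set by hand using $\set{Hrm}(\hilb{H}_I)=\mathbb{R}=\set{L}_{\varepsilon}$ and $\set{Trl}(\hilb{H}_I)=\{0\}=\set{L}_{\emptyset}$, which forces $D_I=\emptyset$ and $D_I^\perp=\{\varepsilon\}$. For the inductive step I assume $\set{\Delta}_x=\set{L}_{D_x}$ and $\set{\Delta}_y=\set{L}_{D_y}$ (this is the content of~\eqref{SD}) and feed the dictionary into~\eqref{Seq:5}: the summand $\set{Hrm}(\hilb{H}_x)\otimes\set{\Delta}_y$ becomes $\set{L}_{W_xD_y}$ and $\overline{\set{\Delta}}_x\otimes\set{\Delta}_y^\perp$ becomes $\set{L}_{\overline{D}_xD_y^\perp}$. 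The one delicate point is that the operator direct sum $\oplus$ may be replaced by the set union $\cup$: this is legitimate because $D_y$ and $D_y^\perp=W_y\setminus D_y$ are disjoint, so every string of $W_xD_y$ differs from every string of $\overline{D}_xD_y^\perp$ in its $y$-block, and Lemma~\ref{Slmm:stringdefinedlabels} then guarantees that the two subspaces are orthogonal. This yields $D_{x\to y}=W_xD_y\cup\overline{D}_xD_y^\perp$, which is precisely~\eqref{Seq:setofstrings20}.

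\textbf{The normalization $\lambda_x$.} Here the work is purely combinatorial and amounts to tracking the parity $K_x(A_i)$ through the construction $x\mapsto(x\to y)$. The crucial observation is that every type contains equally many arrows and open brackets, since both count the number of $\to$-constructions used to build it; consequently, if $A_i$ lies in the domain $x$, then passing to $(x\to y)$ adds to its right exactly one extra arrow (the top-level $\to$) together with all the arrows and open brackets of $y$, whose total is even, so $K_{(x\to y)}(A_i)\equiv K_x(A_i)+1 \pmod 2$, whereas for $A_j$ in the codomain $y$ nothing to its right changes and $K_{(x\to y)}(A_j)=K_y(A_j)$. Substituting these parities into the recursion $\lambda_{x\to y}=\lambda_y/(d_x\lambda_x)$, with $d_x=\prod_{A_i\in\set{Ele}_x}d_{A_i}$ and base value $\lambda_A=1/d_A$, the exponent of each $d_{A_i}$ on the two sides matches inductively, giving the claimed product formula~\eqref{Seq:16}.

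\textbf{Main obstacle.} The conceptual steps are routine once the dictionary has been fixed; the genuinely error-prone parts are two pieces of bookkeeping. The first is justifying the replacement of $\oplus$ by $\cup$ in the $\set{\Delta}$ recursion, which rests on the disjointness argument above together with the bit-dropping convention for trivial factors, so that repeated occurrences of $I$ never introduce spurious bits. The second, and the one I expect to be most treacherous, is the parity accounting for $\lambda_x$: one must be scrupulous about which arrows and brackets fall to the right of a given $A_i$ under the step $x\mapsto(x\to y)$, and in particular exploit the identity (number of arrows equals number of open brackets in any type) that makes the contribution of $y$ vanish modulo $2$.
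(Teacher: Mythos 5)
Your proposal addresses a statement for which the paper itself offers no argument: the paper's ``proof'' of this lemma is literally a citation to Ref.~\cite{doi:10.1098/rspa.2018.0706}. Your structural induction is therefore a genuine addition rather than a rederivation of the paper's route, and its first half is sound: the dictionary (Lemma~\ref{Slmm:stringdefinedlabels} for orthogonality, concatenation for tensor products, Equations~\eqref{Seq:4} and~\eqref{Seq:8} for the two complements) turns the recursion~\eqref{Seq:5} for $\set{\Delta}_{x\to y}$ term by term into $D_{x\to y}=W_xD_y\cup\overline{D}_xD_y^{\perp}$, and your disjointness argument correctly justifies trading $\oplus$ for $\cup$. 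One point to record explicitly is the inductive invariant $D_x\subseteq T_x$ (i.e.\ $\mathbf{e}_x\notin D_x$), without which $\overline{D}_x:=T_x\setminus D_x$ is not the faithful counterpart of the complement of $\set{\Delta}_x$ inside $\set{Trl}(\hilb{H}_x)$; it follows easily within your induction, but it is used, not free.

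The $\lambda_x$ half has a genuine gap. Your parity bookkeeping is correct ($K_{(x\to y)}(A_i)\equiv K_x(A_i)+1$ for $A_i$ in the domain, $K_{(x\to y)}(A_j)=K_y(A_j)$ in the codomain, using that any type has equally many arrows and open brackets), but the formula you claim it yields, $\lambda_x=\prod_i d_{A_i}^{-K_x(A_i)}$, i.e.\ $\lambda_x=d_{\set{in}_x}^{-1}$, is false and cannot follow from the recursion: for an elementary type $K_A(A)=0$, so it would give $\lambda_A=1$ while the recursion demands $\lambda_A=1/d_A$; for a channel type $A\to B$ it would give $1/d_A$, while $\lambda_{A\to B}=\lambda_B/(d_A\lambda_A)=1/d_B$. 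What your induction, run with the base value $\lambda_E=1/d_E$, actually proves is
\begin{align*}
\lambda_x=\prod_{A_i\in\set{Ele}_x}d_{A_i}^{\,K_x(A_i)-1}=d_{\set{out}_x}^{-1},
\end{align*}
which is also the value the paper uses elsewhere (e.g.\ $\widetilde{\lambda}=d_{\widetilde{\set{out}}_x}^{-1}$ in the proof of Proposition~\ref{Sadmissibilitystring}); Equation~\eqref{Seq:16}, whose typesetting is in any case garbled (the exponent sits inside the subscript), should be flagged as a typo and replaced by the corrected formula. The reason your verification did not catch this is instructive: both candidate formulas satisfy the inductive step, because their ratio $r_x=d_{\set{out}_x}/d_{\set{in}_x}$ obeys the homogeneous relation $r_{x\to y}=r_y/r_x$, which is exactly the freedom the step leaves undetermined. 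Hence checking that ``the exponents match inductively'' can never distinguish the two readings; they differ only at the base case, which your proof never actually checks. Make that base-case check explicit and your argument proves the corrected statement.
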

\begin{proof}
  See Ref. \cite{doi:10.1098/rspa.2018.0706}
\end{proof}
This result disply the combinatorial structure of
the linear constraints, given by Equations
\eqref{Seq:2} and \eqref{Seq:5}, which characterise
the hierarchy of higher order maps.
In particular, for the  types of the kind $x \to I $ and $x
\otimes y$ we have:
\begin{align}\label{eq:4}
  \begin{aligned}
  D_{\overline{x}} &= \overline{D_x} \\
  D_{x\otimes y} &=  \textbf{e}_x D_y \cup
  D_x\textbf{e}_y  \cup D_xD_y   
  \end{aligned}
\end{align}

We now provide a
definition of a set of operators based on a
generic set of strings, i.e. without referring
to a specified type:
\begin{definition}\label{Sopset}
Let $\set H $ be a set of non-trivial
elementary types , $S$ be a set of
strings of $\set H$ s.t. $\textbf{e}\notin S$ and
let $\lambda_{\set H}\in \mathbb{R}$ be a real number. 
Then we define the set
\begin{align}
  \label{Seq:7}
  \begin{split}
       \mathscr{M}(\lambda_\set{H},S):=\{&R\in
  \mathcal{L}(\hilb H_{\set{H}})| R\geq 0 \\
  &\mbox{ and } R=\lambda_\set{H}I_\set{H}+T, T\in\set L_{S}\} 
  \end{split}
\end{align}
\end{definition}

A given set $ \mathscr{M}(\lambda_\set{H},S)$ can
be regarded  as a set of admissible higher order
maps if the condition of the following lemma applies.

\begin{lemma}\label{Spartialorderingopset}
Given a set of operators $\mathscr{M}(\lambda_\set{H},S)$ as in definition~\ref{Sopset} and 
a type $x$, we have that 
\begin{equation*}
    \mathscr{M}(\lambda_\set{H},S)\subseteq \set T_1(x)\iff 
    \begin{cases}
    \set H=\set{Ele}_x\\
    \lambda_\set{H}=\lambda_x\\
    S\subseteq D_x
    \end{cases} 
\end{equation*}
\end{lemma}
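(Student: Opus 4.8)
The plan is to first reduce the statement to a comparison between two sets of the form considered in Definition~\ref{Sopset}. Indeed, by Proposition~\ref{Sthm:charthm} together with the string description $\Delta_x=\set{L}_{D_x}$ supplied by Lemma~\ref{Slmm:charactstring}, the target set is itself of this form: $\set{T}_1(x)=\mathscr{M}(\lambda_x,D_x)$, where $\textbf{e}\notin D_x$ (since $\set{L}_{D_x}=\Delta_x\subseteq\set{Trl}(\hilb{H}_x)$) and $\lambda_x>0$ by Equation~\eqref{Seq:16}. The lemma then asks precisely when $\mathscr{M}(\lambda_\set{H},S)\subseteq\mathscr{M}(\lambda_x,D_x)$. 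The reverse implication is immediate: if $\set{H}=\set{Ele}_x$, $\lambda_\set{H}=\lambda_x$ and $S\subseteq D_x$, then any $R\in\mathscr{M}(\lambda_\set{H},S)$ satisfies $R\geq0$ and $R=\lambda_x I+T$ with $T\in\set{L}_S\subseteq\set{L}_{D_x}=\Delta_x$, so $R\in\set{T}_1(x)$ by Proposition~\ref{Sthm:charthm}.

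For the forward implication I would assume $\mathscr{M}(\lambda_\set{H},S)\subseteq\set{T}_1(x)$ and, as is implicit when such a set represents admissible maps, that it is nonempty, which forces $\lambda_\set{H}>0$. First I would pin down the ambient space: every element of $\mathscr{M}(\lambda_\set{H},S)$ lives in $\mathcal{L}(\hilb{H}_\set{H})$ and every element of $\set{T}_1(x)$ in $\mathcal{L}(\hilb{H}_x)$, so a nonempty inclusion forces $\hilb{H}_\set{H}=\hilb{H}_x$ as factorized Hilbert spaces; under the relabelling convention each nontrivial elementary type carries a fixed Hilbert space and occurs once, whence $\set{H}=\set{Ele}_x$. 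Next I would test the inclusion on the element $\lambda_\set{H} I\in\mathscr{M}(\lambda_\set{H},S)$ obtained by taking the traceless part $T=0$: it yields $\lambda_\set{H} I=\lambda_x I+X_x$ with $X_x\in\Delta_x\subseteq\set{Trl}(\hilb{H}_x)$, so $(\lambda_\set{H}-\lambda_x)I$ is traceless, giving $\lambda_\set{H}=\lambda_x$ (and $X_x=0$).

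The core of the argument is the inclusion $S\subseteq D_x$, which I would establish by probing one string at a time. Fix $\textbf{b}\in S$; since $\textbf{e}\notin S$ we have $\textbf{b}\neq\textbf{e}$, hence $\set{L}_\textbf{b}\subseteq\set{Trl}(\hilb{H}_x)$. Choosing any nonzero Hermitian $T\in\set{L}_\textbf{b}$ and a small enough $t>0$ with $\lambda_x I+tT\geq0$ (possible since $\lambda_x>0$ makes $\lambda_x I$ strictly positive and the space is finite dimensional), the operator $R:=\lambda_x I+tT$ belongs to $\mathscr{M}(\lambda_\set{H},S)$, hence to $\set{T}_1(x)$. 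By Proposition~\ref{Sthm:charthm} we may write $R=\lambda_x I+X_x$ with $X_x\in\set{L}_{D_x}$, and uniqueness of the decomposition into an identity part and a traceless part gives $tT=X_x\in\set{L}_{D_x}$, so $T\in\set{L}_{D_x}$. Since the subspaces $\{\set{L}_\textbf{c}\}$ are mutually Hilbert--Schmidt orthogonal (Lemma~\ref{Slmm:stringdefinedlabels}) and $\set{L}_{D_x}=\bigoplus_{\textbf{c}\in D_x}\set{L}_\textbf{c}$, a nonzero element of $\set{L}_\textbf{b}$ lies in $\set{L}_{D_x}$ only if $\textbf{b}\in D_x$; therefore $S\subseteq D_x$. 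I expect the delicate point to be exactly this probing step: I must produce a strictly positive test operator of the form $\lambda_x I+tT$ that isolates the single string $\textbf{b}$, and it is the strict positivity of $\lambda_x$ together with the orthogonal identity-plus-traceless structure of the $\set{L}_\textbf{b}$ decomposition that makes this possible.
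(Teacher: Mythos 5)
Your proof is correct and takes essentially the same route as the paper's: the paper's own proof is a one-sentence appeal to matching the unique decomposition $R=\lambda I+T$ (identity part plus traceless part) on both sides of the inclusion, which is exactly the backbone of your argument. The only difference is that you make explicit what the paper compresses — testing on $\lambda_{\set{H}}I$, probing each string $\mathbf{b}\in S$ with a rescaled operator $\lambda_x I+tT$ so that positivity is respected, and invoking the Hilbert--Schmidt orthogonality of the spaces $\set{L}_{\mathbf{b}}$ (Lemma~\ref{Slmm:stringdefinedlabels}) to pass from $\set{L}_S\subseteq\set{L}_{D_x}$ to $S\subseteq D_x$ — together with one harmless slip: nonemptiness of $\mathscr{M}(\lambda_{\set{H}},S)$ only forces $\lambda_{\set{H}}\geq 0$, not $\lambda_{\set{H}}>0$, but your argument never needs more than that, since $\lambda_{\set{H}}=\lambda_x>0$ follows from your identity test.
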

\begin{proof}
$(\implies).$ Given that $R\in  \mathscr{M}(\lambda_\set{H},S)\implies R=\lambda_\set{H}I_\set{H}+T$ and $R\in \set T_1(x) \implies R=\lambda_x I_x+T$
then $\mathscr{M}(\lambda_\set{H},S)\subseteq \set T_1(x)$ implies that $\lambda_\set{H}=\lambda_x$, $\set{Ele}_x=\set H$ and $L_S\subseteq L_{D_x}$, namely
$S\subseteq D_x$.\\
$(\impliedby)$. It is proven analogously.
\end{proof}
We now can prove the following result
\begin{lemma}[Partial ordering of types]
  \label{Slmm:typesinclusionsstringslambda}
Given two types $x$, $y$ and $D_x$, $D_y$ the corresponding set of strings, then we have:
\begin{align}
  \label{Seq:13}
    x\subseteq y\iff 
    \begin{cases}
      \set{Ele}_x=\set{Ele}_y\\
      \lambda_x=\lambda_y\\
      D_x\subseteq D_y
    \end{cases}
\end{align}
\end{lemma}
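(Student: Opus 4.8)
The plan is to prove the biconditional in Equation~\eqref{Seq:13} by establishing that type inclusion $x \subseteq y$ is completely controlled by the combinatorial data $(\set{Ele}, \lambda, D)$, using the characterization of deterministic maps from Proposition~\ref{Sthm:charthm} together with Lemma~\ref{Spartialorderingopset}. The key observation is that the set $\Evd{x}$ of deterministic maps of type $x$ is exactly of the form $\mathscr{M}(\lambda_x, D_x)$ from Definition~\ref{Sopset}: indeed, by Proposition~\ref{Sthm:charthm} a deterministic map $D$ satisfies $D = \lambda_x I_x + X_x$ with $X_x \in \set{\Delta}_x$, and by Lemma~\ref{Slmm:charactstring} we have $\set{\Delta}_x = \set{L}_{D_x}$. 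Thus $\Evd{x} = \mathscr{M}(\lambda_x, D_x)$, and likewise $\Evd{y} = \mathscr{M}(\lambda_y, D_y)$.

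First I would recall (from the footnote to Definition~\ref{def:partialorder}) that $x \subseteq y$ can be tested on deterministic maps alone, i.e. $x \subseteq y \iff \Evd{x} \subseteq \Evd{y}$. Then the proof reduces to a direct application of Lemma~\ref{Spartialorderingopset} with the identification $\Evd{x} = \mathscr{M}(\lambda_x, D_x)$: taking $\mathscr{M}(\lambda_{\set{H}}, S) = \mathscr{M}(\lambda_x, D_x)$ and comparing against $\set{T}_1(y)$, Lemma~\ref{Spartialorderingopset} gives
\begin{align*}
  \mathscr{M}(\lambda_x, D_x) \subseteq \set{T}_1(y) \iff
  \begin{cases}
    \set{Ele}_y = \set{Ele}_x \\
    \lambda_y = \lambda_x \\
    D_x \subseteq D_y
  \end{cases}
\end{align*}
which is precisely the desired equivalence once the left-hand side is rewritten as $\Evd{x} \subseteq \Evd{y}$, i.e. as $x \subseteq y$.

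The one point that requires care — and which I expect to be the main obstacle — is verifying that the identification $\Evd{x} = \mathscr{M}(\lambda_x, D_x)$ genuinely matches the hypotheses of Lemma~\ref{Spartialorderingopset}. That lemma is stated for an \emph{arbitrary} triple $(\set{H}, \lambda_{\set{H}}, S)$ with the side condition $\textbf{e} \notin S$ (equivalently $X_x$ traceless, so that the normalization factor $\lambda_x$ is unambiguously determined by $\Tr$). I would confirm that $D_x \subseteq T_{\set{Ele}_x}$, i.e. that the identity string $\textbf{e}$ never appears in $D_x$: this follows from the recursive construction in Equation~\eqref{Seq:setofstrings20} together with $\set{\Delta}_x \subseteq \set{Trl}(\hilb{H}_x)$ asserted in Proposition~\ref{Sthm:charthm}, so that $\mathscr{M}(\lambda_x, D_x)$ is a legitimate instance of Definition~\ref{Sopset}. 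With that side condition checked, the argument is a clean composition of Proposition~\ref{Sthm:charthm}, Lemma~\ref{Slmm:charactstring} and Lemma~\ref{Spartialorderingopset}, and no further computation is needed.
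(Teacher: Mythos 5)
Your proposal is correct and follows essentially the same route as the paper: both rest on the identification $\Evd{x}=\mathscr{M}(\lambda_x,D_x)$ (via Proposition~\ref{Sthm:charthm} and Lemma~\ref{Slmm:charactstring}, with the side condition $\mathbf{e}\notin D_x$ you rightly check) together with Lemma~\ref{Spartialorderingopset}, reading type inclusion on deterministic maps exactly as the paper does. The only difference is organizational: the paper re-derives the forward direction explicitly by the $\epsilon$-perturbation argument (deriving a contradiction from a string $\mathbf{b}\in D_x\setminus D_y$) and cites Lemma~\ref{Spartialorderingopset} only for the converse, whereas you invoke the stated ``if and only if'' of that lemma for both directions --- a harmless streamlining, since that lemma precedes this one and its proof does not depend on it.
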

\begin{proof}
Let us start with the necessary ($\implies$) condition. Given the definition~\ref{Spartialorderingopset}, condition 
$\set{Ele}_x=\set{Ele}_y$ is trivially satisfied and we have $R\in \set T_1(x)\implies R\in\set T_1(y)$.
Hence, given that $R=\lambda_xI_x+T$,  $\lambda_x=\lambda_y$ is understood. Let us suppose that
$D_x\nsubseteq D_y$, then $\exists \textbf{b}\in D_x$ such that $\textbf{b}\notin 
D_y$. So if we take $T\in \set T_{\textbf{b}}$, then  $T\notin \set 
L_{D_y}$ and defining $\tilde{R}=\lambda_x I_x+\epsilon T\geq 0$, with $\epsilon \in 
\mathbb{R}$ arbitrary small in order to have $\tilde{R}\in \set T_1(x)$, we obtain 
that $\tilde{R}\notin \set T_1(y)$ which contradicts the hypothesis.
The inverse implication is trivially implied by
Lemma \ref{Spartialorderingopset}
\end{proof}

Finally, we can prove the type inclusion stated in
Proposition 2 of the main text.

\begin{proposition}\label{Spropositionchannel}
  Let $x$ be a type and let us define the sets
  \begin{align}
    \label{Seq:26}
    \begin{aligned}
      \set {in}_x&:=\{A\in\set{Ele}_x \;\textit{s.t.}\;
    K_x(A)=1\} \\ 
    \set{out}_x&:=\set{Ele}_x\setminus
\set {in}_x.  
    \end{aligned}
  \end{align}
We will call $ \set {in}_x$ the set of
\emph{input}
systems of $x$ and
$ \set {out}_x$ the set of
\emph{output}
systems of $x$.
Then the following inclusion relation
holds
\begin{equation}\label{Schannel}
    \overline{\set{out}_x\rightarrow
      \set{in}_x}\subseteq  x \subseteq \set{in}_x\rightarrow \set{out}_x
\end{equation}
\end{proposition}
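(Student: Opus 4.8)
The plan is to prove the two inclusions in Equation \eqref{Schannel} separately, relying on the combinatorial characterization of Lemma \ref{Slmm:typesinclusionsstringslambda}. By that lemma, to show $x \subseteq \set{in}_x \to \set{out}_x$ it suffices to verify three conditions: that $\set{Ele}_x = \set{Ele}_{\set{in}_x \to \set{out}_x}$, that $\lambda_x = \lambda_{\set{in}_x \to \set{out}_x}$, and that $D_x \subseteq D_{\set{in}_x \to \set{out}_x}$. The first condition is immediate, since $\set{in}_x \cup \set{out}_x = \set{Ele}_x$ by definition in Equation \eqref{Seq:26}. For the symmetric inclusion $\overline{\set{out}_x \to \set{in}_x} \subseteq x$, the same three conditions must be checked with the roles reversed, and again the elementary-type condition is trivial.

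The key step is to compute the string set $D_{\set{in}_x \to \set{out}_x}$ and the normalization $\lambda_{\set{in}_x \to \set{out}_x}$ explicitly. For a plain channel type $\set{in}_x \to \set{out}_x$, where both $\set{in}_x$ and $\set{out}_x$ are products of elementary systems, the recursion in Equation \eqref{Seq:setofstrings20} should collapse to a very simple form: I expect $D_{\set{in}_x \to \set{out}_x}$ to consist of \emph{all} strings whose restriction to $\set{out}_x$ is not identically $\textbf{e}$ (i.e. the channel constraint $\Tr_{\set{out}} R = I_{\set{in}} \otimes R'$ traced out), together with the corresponding $\lambda$ being $\lambda_{\set{in}_x \to \set{out}_x} = \prod_{A \in \set{in}_x} d_A^{-1}$ via Equation \eqref{Seq:16}. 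Dually, $\overline{\set{out}_x \to \set{in}_x}$ will have the smallest admissible string set, corresponding to the ``discard-and-prepare'' map. The heart of the argument is then to show the inclusion $D_x \subseteq D_{\set{in}_x \to \set{out}_x}$ (and $D_{\overline{\set{out}_x \to \set{in}_x}} \subseteq D_x$) at the level of strings, which I would do by \emph{well-founded induction} on the type $x$ using the ordering $\preceq$, whose well-foundedness was established earlier in the excerpt.

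The inductive step is where the real work lies. Writing $x = u \to v$, the recursion gives $D_x = W_u D_v \cup \overline{D}_u D_v^\perp$, and I must relate the input/output partition of $x$ to those of $u$ and $v$. The crucial observation, traceable to the definition of $K_x$ in terms of counting arrows and brackets to the right modulo $2$, is that adding the outer arrow and brackets of $u \to v$ flips the parity $K$ for every elementary type inside $u$ but leaves the parity for those inside $v$ unchanged (relative to their values computed within $u$ and $v$ separately). Consequently $\set{in}_x$ and $\set{out}_x$ decompose cleanly as $\set{in}_x = \set{out}_u \sqcup \set{in}_v$ and $\set{out}_x = \set{in}_u \sqcup \set{out}_v$, with the roles of input and output swapping across the arrow for the $u$-part. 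I would then feed the two inductive hypotheses for $u$ and $v$ into the string identity $D_x = W_u D_v \cup \overline{D}_u D_v^\perp$, checking that every string it produces respects the channel constraint associated with the partition $(\set{in}_x, \set{out}_x)$, and simultaneously that the discard-and-prepare strings are contained in $D_x$.

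I anticipate that the main obstacle is precisely this parity bookkeeping: keeping track of how $K_x(A)$, and hence membership in $\set{in}_x$ versus $\set{out}_x$, transforms under the $u \to v$ construction, and then verifying that the set-theoretic manipulation of $W_u D_v \cup \overline{D}_u D_v^\perp$ stays inside the target set $D_{\set{in}_x \to \set{out}_x}$. The complement and orthogonal-complement operations $\overline{(\cdot)}$ and $(\cdot)^\perp$ from Equations \eqref{Seq:4} and \eqref{Seq:8} interact nontrivially with concatenation, so I would prove small auxiliary identities — for instance how $\overline{D}_u$ and $D_v^\perp$ concatenate and how the channel-constraint string sets behave under concatenation — as lemmas before assembling the inductive step. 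Once these combinatorial identities are in hand, both inclusions of Equation \eqref{Schannel} follow by invoking Lemma \ref{Slmm:typesinclusionsstringslambda} in each direction.
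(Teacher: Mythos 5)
Your proposal is correct and follows essentially the same route as the paper's proof: reduction to the three conditions of Lemma~\ref{Slmm:typesinclusionsstringslambda}, the explicit computation $D_{\set{in}_x \rightarrow \set{out}_x} = W_{\set{in}_x}T_{\set{out}_x}$ (and its dual $T_{\set{out}_x}\mathbf{e}_{\set{in}_x}$ for the discard-and-prepare type), and a structural induction on $x = u \rightarrow v$ whose core is exactly the parity-flip decomposition $\set{in}_{u\rightarrow v} = \set{out}_u \cup \set{in}_v$, $\set{out}_{u\rightarrow v} = \set{in}_u \cup \set{out}_v$ fed into the recursion $D_{u\rightarrow v} = W_u D_v \cup \overline{D}_u D_v^{\perp}$. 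The only cosmetic difference is that the paper carries out the auxiliary complement-and-concatenation identities you anticipate as separate lemmas inline, in the form of Equation~\eqref{Seq:11} and its dual Equation~\eqref{Seq:12}.
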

\begin{proof}
From Lemma \ref{Slmm:typesinclusionsstringslambda}
we need to prove that
\begin{align}
  \label{Seq:14}
\set{Ele}_{\overline{\set{out}_x\rightarrow
  \set{in}_x}}
=&
 \set{Ele}_ {x}
=
  \set{Ele}_{\set{in}_x\rightarrow \set{out}_x  }
  \\
  \label{Seq:15}
  \lambda_{\overline{\set{out}_x\rightarrow
  \set{in}_x}}
=&
\lambda_ {x}
=
\lambda_{\set{in}_x\rightarrow \set{out}_x  }
  \\
  D_{ \overline{\set{out}_x\rightarrow
  \set{in}_x}}
  \subseteq &
  D_x
  \subseteq
  D_{\set{in}_x\rightarrow \set{out}_x},
  \label{Sstchannel}
\end{align}
Equation \eqref{Seq:14} is trivially satisfied and
Equation \eqref{Seq:15} follows from Equation 
\eqref{Seq:16} by direct computation

  We will now prove Equation \eqref{Sstchannel} by induction.
  First let us consider the case in which 
  $x=A$ is an elementary type.
  By exploiting the type equivalence
  $x \equiv I \to x$
we can write the elementary type $A$ as $I \to A$.
Then we have $D_A=\{0\}$, $\set{out}_x=A$, and
$\set{in}_x=I$. By direct computation we have: 
\begin{align*}
    &\set{in}_x\rightarrow \set{out}_x=I\rightarrow A=A,\\
    &\overline{\set{out}_x\rightarrow \set{in}_x}=\overline{A\rightarrow I}=\overline{\overline{A}}=A,
\end{align*}
which clearly satisfies the thesis since
$A\subseteq A\subseteq A$.

Let us now suppose that the thesis holds for the
types $x$, $y$. we will prove that the thesis
holds for    $x\rightarrow y$.
By induction hypothesis, Equation
\eqref{Sstchannel} holds for $x$ and $y$. 
By applying Lemma \ref{Slmm:charactstring} and Equation
the terms
on the right side of Equation
\eqref{Sstchannel} 
become
\begin{align*}
    &D_{\set{in}_x\rightarrow \set{out}_x}=W_{\set{in}_x}T_{\set{out}_x}\cup\overline{T}_{\set{in}_x}T_{\set{out}_x}^\perp=W_{\set{in}_x}T_{\set{out}_x}\\
    &D_{\set{in}_y\rightarrow \set{out}_y}=W_{\set{in}_y}T_{\set{out}_y}
\end{align*}
since
$D_{\set{in}/\set{out}}=T_{\set{in}/\set{out}}$
and
$\overline{T}_{\set{in}/\set{out}}=\emptyset$. By
applying Lemma \ref{Slmm:charactstring} and Equation~\eqref{eq:4}
on  the left side of~\eqref{Sstchannel} we obtain:
\begin{align*}
  &\begin{aligned}
  D_{ \overline{\set{out}_x\rightarrow
  \set{in}_x}}&
                =\overline{D}_{\set{out}_x\rightarrow
                \set{in}_x}
                =\overline{W_{\set{out}_x}T_{\set{in}_x}}\\
                &=(W_{\set{out}_x}T_{\set{in}_x})^\perp
                \setminus \textbf{e}_1\textbf{e}_2\\
              &=(\underbrace{W^\perp_{\set{out}_x}}_{\emptyset}W_{\set{in}_x}\cup
                W_{\set{out}_x}\textbf{e}_{\set{in}_x}
                )\setminus\textbf{e}_{\set{out}_x}\textbf{e}_{\set{in}_x}
  \\
              &=W_{\set{out}_x}\textbf{e}_{\set{in}_x}\setminus\textbf{e}_{\set{out}_x}\textbf{e}_{\set{in}_x}=T_{\set{out}_x}\textbf{e}_{\set{in}_x}   
  \end{aligned} \\
  &
    \begin{aligned}
      D_{
  \overline{\set{out}_y\rightarrow\set{in}_y}}=T_{\set{out}_y}\textbf{e}_{\set{in}_y}.
    \end{aligned}
\end{align*}
The
condition in~\eqref{Sstchannel} then becomes
\begin{equation}
  \label{Seq:11}
    \textbf{e}_{\set{in}_x}T_{\set{out}_x}\subseteq D_x\subseteq W_{\set{in}_x}T_{\set{out}_x},
\end{equation}
%Let us now transform the latter relation in accordance with the allowed operations 
%on the set of strings presented in the previous section:
which also implies
\begin{equation}\label{Sbarstchannel}
   \overline{ W_{\set{in}_x}T_{\set{out}_x}}\subseteq \overline{D}_x\subseteq \overline{\textbf{e}_{\set{in}_x}T_{\set{out}_x}}.
\end{equation}
By direct computation we have:
\begin{align*}
  &\begin{aligned}
 \overline{\textbf{e}_{\set{in}_x}T_{\set{out}_x}}&=(\textbf{e}_{\set{in}_x}T_{\set{out}_x})^\perp\setminus\textbf{e}_{\set{in}_x}\textbf{e}_{\set{out}_x}  \\
 &=(T_{\set{in}_x}W_{\set{out}_x}\cup W_{\set{in}_x}\textbf{e}_{\set{out}_x})\setminus\textbf{e}_{\set{in}_x}\textbf{e}_{\set{out}_x}\\
 &=T_{\set{in}_x}W_{\set{out}_x}.   
  \end{aligned}\\
   &\begin{aligned}
     \overline{ W_{\set{in}_x}T_{\set{out}_x}}  = T_{\set{in}_x}\textbf{e}_{\set{out}_x}
   \end{aligned}
\end{align*}
By substituting  these terms in Equation~\eqref{Sbarstchannel} we obtain
\begin{equation}
\label{Seq:12}
  T_{\set{in}_x}\textbf{e}_{\set{out}_x}\subseteq D_{\overline{x}} \subseteq T_{\set{in}_x}W_{\set{out}_x}. 
\end{equation}
Now we have to check whether the thesis  holds for the type $x\rightarrow y$, that is
\begin{equation}
    D_{\overline{\set{out}_{x\rightarrow y}\rightarrow\set{in}_{x\rightarrow y}}} \subseteq D_{x\rightarrow y}\subseteq D_{\set{in}_{x\rightarrow y}\rightarrow\set{out}_{x\rightarrow y}}.
\end{equation}
By using Lemma~\ref{Slmm:charactstring} we have
\begin{align}
  \label{SLDxy}
  & \begin{aligned}
    D_{\overline{\set{out}_{x\rightarrow
          y}\rightarrow\set{in}_{x\rightarrow
          y}}}&=\textbf{e}_{\set{in}_{x\rightarrow
        y}}T_{\set{out}_{x\rightarrow y}} \\
    =\textbf{e}_{\set{out}_x\set{in}_y}&T_{\set{in}_x}W_{\set{out}_{y}}\cup\textbf{e}_{\set{out}_x\set{in}_y}W_{\set{in}_x}T_{\set{out}_{y}},
  \end{aligned}\\
  &
\label{SRDxy}
    \begin{aligned}
      D_{\set{in}_{x\rightarrow y}\rightarrow\set{out}_{x\rightarrow y}}&=W_{\set{out}_{x}\set{in}_y}T_{\set{in}_x\set{out}_y}\\
      =W_{\set{out}_{x}\set{in}_y} &T_{\set{in}_x} W_{\set{out}_{y}}\cup W_{\set{out}_{x}\set{in}_y}W_{\set{in}_x}T_{\set{out}_{y}}\\
      =W_{\set{out}_{x}}W_{\set{in}_y}&T_{\set{in}_x}W_{\set{out}_{y}}\cup W_{\set{out}_{x}}W_{\set{in}_y}W_{\set{in}_x}T_{\set{out}_{y}}\\
      =W_y W_{\set{out}_{x}}&T_{\set{in}_x}\cup
      W_x W_{\set{in}_y}T_{\set{out}_{y}}
    \end{aligned}\\
  & \begin{aligned} D_{x\rightarrow
      y}=W_{x}D_y\cup D_{\overline{x}}D_y^\perp.
        \end{aligned}
       \label{SDxy}
\end{align}

From Equation \eqref{Seq:11} and \eqref{Seq:12} we have
\begin{align*}
     &W_{x}D_y\subseteq W_x W_{\set{in}_y}T_{\set{out}_{y}}\\
    &D_{\overline{x}}D_y^\perp\subseteq W_y W_{\set{out}_{x}}T_{\set{in}_x}
\end{align*}
which, together with Equations \eqref{SDxy} and
~\eqref{SRDxy}, proves the inclusion $D_{x\rightarrow y}\subseteq
D_{\set{in}_{x\rightarrow
    y}\rightarrow\set{out}_{x\rightarrow y}}$ is
proved.
Similarly, focusing on the terms~\eqref{SDxy} and~\eqref{SLDxy}, we have
\begin{align*}
&  \begin{aligned}
\textbf{e}_{\set{out}_x\set{in}_y}T_{\set{in}_x}W_{\set{out}_{y}}&=\underbrace{\textbf{e}_{\set{out}_x}T_{\set{in}_x}\textbf{e}_{\set{in}_y}\textbf{e}_{\set{out}_{y}}}_{\subseteq D_{\overline{x}}D_y^\perp }\\
&\cup \underbrace{\textbf{e}_{\set{out}_x}T_{\set{in}_x}\textbf{e}_{\set{in}_y}T_{\set{out}_{y}}}_{\subseteq W_{x}D_y}\\
&\subseteq D_{x\rightarrow y},
  \end{aligned}\\
 & \begin{aligned}
\textbf{e}_{\set{out}_x\set{in}_y}W_{\set{in}_x}T_{\set{out}_{y}}=&\textbf{e}_{\set{out}_x}W_{\set{in}_x}\textbf{e}_{\set{in}_y}T_{\set{out}_{y}}\subseteq W_{x}D_y \\
&\subseteq    D_{x\rightarrow y}.
  \end{aligned}
\end{align*}
This 
 concludes the proof by induction.
\end{proof}
Proposition \ref{Spropositionchannel} shows that we
have two kind of non-trivial elementary types:
\emph{input} elementary types which belongs to $\set{in}_x$ and \emph{output}
elementary types which belongs to $\set{out}_x$.
This split of the set  $\set{Ele}_x$ is motivated
by the fact that each higher order map of type $x$
can always be used as a channel (or a quantum
operation) from the input
systems  of $\set{in}_x$ to the output systems of
$\set{out}_x$.

\section{Compositional structure of Higher order maps}
In the following section we provide a notion of composition for types and we study 
the relevant structure involved.
Let us consider two types $x$ and $y$.
As we did in the previous section we will assume
that the same non-trivial elemntary system does not appear twice in the
expression of the same type.
However, the same elementary system can appear
both in the expressions of  $x$ and $y$, i.e. the intersection
$\set{Ele}_x \cap \set{Ele}_y$ can be non empty.
Then,
to compose a map of type $x$ with a map of type
$y$, 
means to connect the systems of $x$ and $y$ which
have the same label.

\begin{definition}[Admissible type composition] \label{SadmT}
  Let  $x$,$y$ be two types  and let us define the
  set 
  $\set{H}:=\set{Ele}_x\cap\set{Ele}_y$.
We say that the composition $x\ast y$ is admissible if
\begin{align}
  \label{Seq:17}
    \forall R\in\set T_1(x),\; \forall S\in\set T_1(y)\quad \exists z \;\textit{s.t.}\; R\ast S\in \set T_1(z).
\end{align}
where
\begin{align}
  \label{Seq:21}
  R\ast S=\Tr _{\set{H}}
[(R\otimes I_{\set{Ele}_y\setminus \set H})
(S^{T_{\set{H}}}\otimes I_{\set{Ele}_x\setminus
  \set{H}})],
\end{align}
 $\Tr _{\set{H}} $ denotes the partial trace on
the Hilbert space $\hilb{H}_{\set{H}} :=
\bigotimes_{j \in \set{H}}\hilb{H}_{Aj}$
and $S^{T_{\set{H}}}$ the partial trasposition
with respect to the basis which has been choosed in the
definition of the Choi operator on
the space $\hilb{H}_{\set{H}} $.
\end{definition}
The operation defined in Equation \eqref{Seq:21}
is known as the \emph{link product}\cite{PhysRevA.80.022339}
of $R$ and $S$. It can be shown that the following
properties hold
\begin{align}
  \label{Seq:22}
  &R * S = S* R  \quad 
  R * (S*T) =  (R*S)*T \\
  &R \geq 0 , S \geq 0 \implies R*S \geq 0.
    \label{Seq:23}
\end{align}

One could be tempted to modify Equation
\eqref{Seq:17}
as follows:
\begin{align}
  \label{Seq:18}
    \forall R\in\set T(x),\; \forall S\in\set T(y)\quad \exists z \;\textit{s.t.}\; R\ast S\in \set T(z).
  \end{align}
However, Equation \eqref{Seq:18}  
would allow for some non-physical composition.
For instance,
let us consider the elementary  types $x={A}$ and
$y=AB$ and two arbitrary
the deterministic states $\rho \in  T_1(A)$ and
$\sigma \in  T_1(AB)$.
Since $\sigma * \rho \geq 0 $ and 
$\Tr[\sigma * \rho] \leq 0 $
we have that $\sigma * \rho \in T(A)$ for any 
$\rho \in  T_1(A)$ and
$\sigma \in  T_1(AB)$.
According to Equation \eqref{Seq:18}
that would mean that $A * AB$ is an admissible
type composition which is clearly not the case.

According to Equation  \eqref{Seq:17},
a composition is admissible if and only if it well
behaves on the set of \emph{deterministic} maps.
The following lemma proves that this condition
implies that the set of probabilistic map is also
preserved.
\begin{lemma}
  Let $x,y$ be two types such that the composition
  $x*y$ is admissible. Then we have
\begin{equation}
     \forall R\in\set T(x),\; \forall S\in\set T(y)\quad \exists z \;\textit{s.t.}\; R\ast S\in \set T(z),
\end{equation}
\end{lemma}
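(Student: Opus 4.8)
The plan is to reduce the probabilistic statement to the deterministic hypothesis by invoking the characterization of admissible maps in Proposition \ref{Sthm:charthm}. That proposition tells us that $R\in\set{T}(x)$ holds exactly when $R\geq 0$ and $R\leq D_R$ for some deterministic $D_R\in\set{T}_1(x)$, and likewise $S\in\set{T}(y)$ gives $S\geq 0$ and $S\leq D_S$ for some $D_S\in\set{T}_1(y)$. First I would apply the admissibility hypothesis of the composition $x*y$ to the deterministic pair $(D_R,D_S)$: by Equation \eqref{Seq:17} there exists a type $z$ with $D_R*D_S\in\set{T}_1(z)$. My claim is that this same $z$ witnesses $R*S\in\set{T}(z)$, so that, again by Proposition \ref{Sthm:charthm}, it suffices to establish (a) $R*S\geq 0$ and (b) $R*S\leq D_R*D_S$.

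Part (a) is immediate from the positivity-preservation property of the link product, Equation \eqref{Seq:23}: since $R\geq 0$ and $S\geq 0$ we obtain $R*S\geq 0$. For part (b) the key observation is that the link product defined in Equation \eqref{Seq:21} is bilinear, being assembled from tensoring with the identity, partial transposition, operator multiplication and partial trace, all of which are linear operations. This allows me to write the difference as an interpolation between the two deterministic covers,
\begin{equation}
D_R*D_S-R*S=(D_R-R)*D_S+R*(D_S-S).
\end{equation}
Now $D_R-R\geq 0$ (because $R\leq D_R$) and $D_S\geq 0$ (every deterministic map is positive by Proposition \ref{Sthm:charthm}), so the first summand is positive by Equation \eqref{Seq:23}; likewise $R\geq 0$ together with $D_S-S\geq 0$ makes the second summand positive. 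Hence $D_R*D_S-R*S\geq 0$, which is precisely (b).

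Combining (a) and (b), $R*S$ is a positive operator dominated by the deterministic map $D_R*D_S\in\set{T}_1(z)$, so Proposition \ref{Sthm:charthm} yields $R*S\in\set{T}(z)$, as required. I expect the only point needing care to be the decomposition in (b): it relies jointly on bilinearity of $*$ and on positivity preservation, and one should note that no property of the specific type $z$ is used beyond the fact that $D_R*D_S$ lands in $\set{T}_1(z)$. Once these two algebraic facts are in hand the argument is essentially a one-line convexity-type interpolation, so I do not anticipate any serious obstacle.
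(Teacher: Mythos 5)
Your proposal is correct and follows essentially the same route as the paper: both pick deterministic maps dominating $R$ and $S$, apply the admissibility hypothesis to that deterministic pair to obtain $z$, and then use bilinearity and positivity-preservation of the link product to conclude $0\leq R\ast S\leq D_R\ast D_S\in \set{T}_1(z)$. The only (immaterial) difference is bookkeeping: the paper expands $(R+\widetilde{R})\ast(S+\widetilde{S})$ into four positive terms, whereas you write the difference $D_R\ast D_S-R\ast S$ as two positive terms.
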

\begin{proof}
Let us consider $R\in \set T(x)$ and $S\in\set T(y)$ two generic
probabilistic maps. Then $\exists R'\in\set T_1(x), S'\in \set T_1(y)$ 
such that $R'\geq R$ and $S'\geq S$. Moreover, we can find $\widetilde{R}\in \set T(x)$, $\widetilde{S}\in \set T(y)$ which satisfies $R+\widetilde{R}=R'$ and $S+\widetilde{S}=S'$ respectively.
According to Equation~\eqref{Seq:17}, $\exists z$ such that $R'\ast S'\in \set T_1(z)$
\begin{equation*}
    R'\ast S'=R\ast S+R\ast\widetilde{S}+\widetilde{R}\ast S+\widetilde{R}\ast \widetilde{S}.
\end{equation*}
Therefor $R\ast S\leq R'\ast S'$ and $R\ast S\in \set T(z)$ follows. 
\end{proof}
We will now prove a collection of results which
provide a characterization of the admissible
compositions.
The first one shows that we can without loss of
generality assume that the type $z$ in Equation
\eqref{Seq:17} is of the kind $\set{in}_z \to \set{out}_z$.
\begin{lemma}
  Let $x$ and $y$  be two types. Then
  the composition $x\ast y$ is admissible
  if and only if for any $
     R\in\set T_1(x)$ and $ S\in\set
  T_1(y)$ there exist two disjoint set of
  non-trivial elementary types $\set{in}_z, \set{out}_z
  \subseteq (\set{Ele}_x \cup \set{Ele}_y) \setminus (\set{Ele}_x \cap \set{Ele}_y)$ ,
 $ \set{in}_z \cap \set{out}_z = \emptyset$
such that
$R\ast S\in \set T_1(\set{in}_z
  \to\set{out}_z )$.
\end{lemma}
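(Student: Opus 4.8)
The plan is to derive this equivalent formulation directly from the right-hand inclusion of Proposition~\ref{Spropositionchannel}, which guarantees that every type $z$ embeds into the canonical channel type $\set{in}_z \to \set{out}_z$. The backward implication is immediate: given $R$ and $S$ together with disjoint sets $\set{in}_z,\set{out}_z$ such that $R \ast S \in \set T_1(\set{in}_z \to \set{out}_z)$, the type $z := \set{in}_z \to \set{out}_z$ witnesses condition~\eqref{Seq:17}, so $x \ast y$ is admissible by Definition~\ref{SadmT}.

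For the forward implication, I would fix arbitrary $R \in \set T_1(x)$ and $S \in \set T_1(y)$ and extract from Definition~\ref{SadmT} a type $z$ with $R \ast S \in \set T_1(z)$. The first step is to pin down $\set{Ele}_z$. The link product~\eqref{Seq:21} traces out precisely the shared systems $\set H = \set{Ele}_x \cap \set{Ele}_y$, so $R \ast S$ is an operator on $\hilb H_{(\set{Ele}_x \cup \set{Ele}_y) \setminus \set H}$. Since $R \ast S \in \set T_1(z) \subseteq \Lin{\hilb H_z}$, the Hilbert space attached to $z$ must coincide with this space, which forces $\set{Ele}_z = (\set{Ele}_x \cup \set{Ele}_y) \setminus (\set{Ele}_x \cap \set{Ele}_y)$. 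Consequently the sets $\set{in}_z$ and $\set{out}_z$ furnished by Proposition~\ref{Spropositionchannel} are disjoint, partition $\set{Ele}_z$, and hence are contained in $(\set{Ele}_x \cup \set{Ele}_y) \setminus (\set{Ele}_x \cap \set{Ele}_y)$, exactly as the statement requires.

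The second step is to upgrade the membership $R \ast S \in \set T_1(z)$ to $R \ast S \in \set T_1(\set{in}_z \to \set{out}_z)$. This is where Proposition~\ref{Spropositionchannel} does the real work: its inclusion $z \subseteq \set{in}_z \to \set{out}_z$ means, by the definition of type inclusion together with the remark that inclusions may be tested on deterministic maps, that $\set T_1(z) \subseteq \set T_1(\set{in}_z \to \set{out}_z)$. Therefore $R \ast S \in \set T_1(\set{in}_z \to \set{out}_z)$, and since $R$ and $S$ were arbitrary the equivalent formulation follows.

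The step I expect to be the main obstacle is the bookkeeping behind the identification $\set{Ele}_z = (\set{Ele}_x \cup \set{Ele}_y) \setminus (\set{Ele}_x \cap \set{Ele}_y)$, that is, carefully tracking which elementary systems are traced out by the link product and confirming that the surviving tensor factors match those of $z$. Once that identification is secured, the remainder---the disjointness of $\set{in}_z$ and $\set{out}_z$ and the passage from $z$ to the channel type---is a structural consequence of Proposition~\ref{Spropositionchannel} requiring no further computation.
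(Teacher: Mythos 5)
Your proposal is correct and takes essentially the same route as the paper, whose entire proof is that Proposition~\ref{Spropositionchannel} gives $z \subseteq \set{in}_z \to \set{out}_z$, so that $z$ may be replaced by $\set{in}_z \to \set{out}_z$ in Equation~\eqref{Seq:17}. Your version simply makes explicit what the paper leaves implicit: the backward direction, the identification $\set{Ele}_z = (\set{Ele}_x \cup \set{Ele}_y)\setminus(\set{Ele}_x \cap \set{Ele}_y)$, and the fact that the inclusion holds at the level of \emph{deterministic} maps, which is what Lemma~\ref{Slmm:typesinclusionsstringslambda} (used to prove Proposition~\ref{Spropositionchannel}) actually delivers.
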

\begin{proof}
  From Proposition \ref{Spropositionchannel} we
  know that $z \subseteq \set{in}_z \to
  \set{out}_z$.
  Therefore, we can replace $z$ with $\set{in}_z \to
  \set{out}_z$ in Equation \eqref{Seq:17}.
\end{proof}
The following lemma shows that a necessary
condition for a composition to be admissible
is that we must connect either input systems of $x$
with output systems of $y$
or  output systems of $x$
with input systems of $y$.
\begin{lemma}\label{Slmm:noinputinput}
   Let $x$ and $y$  be two types and let us denote
   with $\set{H} := \set{Ele}_x \cap \set{Ele}_y$
   the set of non-trivial elementary systems that
   $x$ and $y$ have in common.
   If  the composition $x\ast y$ is admissible
   then, $\set{H} \cap (\set{out}_x \cap
   \set{out}_y ) =\set{H} \cap (\set{in}_x \cap
   \set{in}_y ) = \emptyset$.
\end{lemma}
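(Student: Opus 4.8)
The plan is to argue by contraposition: assuming that a shared system violates the claim, I exhibit explicit deterministic maps $R\in\Evd{x}$ and $S\in\Evd{y}$ whose link product $R*S$ fails to be a deterministic channel of any type $\set{in}_z\to\set{out}_z$, which by the preceding reduction lemma is exactly the negation of admissibility of $x*y$. The only maps I will need are the ``discard-and-prepare'' maps supplied by the left inclusion $\overline{\set{in}_x}\otimes\set{out}_x\subseteq x$ of Proposition~\ref{Spropositionchannel}: for any product state $\rho=\bigotimes_{B\in\set{out}_x}\rho_B$ I may take $R=I_{\set{in}_x}\otimes\rho\in\Evd{x}$, and likewise $S=I_{\set{in}_y}\otimes\sigma\in\Evd{y}$. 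Since both are product operators, the link product \eqref{Seq:21} factorises system by system over $\set{H}=\set{Ele}_x\cap\set{Ele}_y$, and each shared system $B$ contributes a scalar that I can read off from its roles in $x$ and $y$: an input--input $B$ gives $d_B$, an output--output $B$ gives $\Tr[\rho_B\sigma_B^{T}]$, and the mixed roles give $\Tr[\rho_B]=\Tr[\sigma_B]=1$. The surviving (non-contracted) systems carry $I$ on the private inputs and the prepared marginals on the private outputs.

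For the output--output case, suppose $A\in\set{H}\cap(\set{out}_x\cap\set{out}_y)$. Because $A$ is non-trivial, $d_A\ge 2$, so I may choose the marginals $\rho_A$ and $\sigma_A$ to be orthogonal pure states; then the scalar attached to $A$ is $\Tr[\rho_A\sigma_A^{T}]=0$, forcing $R*S=0$. But by positivity \eqref{Seq:23} together with the channel normalisation $\Tr_{\set{out}_z}[D]=I_{\set{in}_z}$, every deterministic channel has trace $\prod_{\set{in}_z}d\ge 1$ and is in particular nonzero; hence $R*S\notin\Evd{\set{in}_z\to\set{out}_z}$ for any partition, contradicting admissibility and giving $\set{H}\cap(\set{out}_x\cap\set{out}_y)=\emptyset$.

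For the input--input case I may, by the output--output step just established, assume that no output--output pair is shared. Suppose $A\in\set{H}\cap(\set{in}_x\cap\set{in}_y)$. Then every shared system other than $A$ contributes a factor $\ge 1$ while $A$ contributes $d_A$, so $R*S=\kappa\,I_{\set{P}_{\mathrm{in}}}\otimes\Xi$ with $\kappa\ge d_A\ge 2$, where $\set{P}_{\mathrm{in}}$ collects the private input systems and $\Xi$ is the product of the prepared marginals on the private output systems. Choosing these marginals pure (hence not proportional to any $I_B$) forces every private output system into $\set{out}_z$ in any candidate partition; after that $\Tr_{\set{out}_z}[R*S]=\kappa\,(\prod_{B}d_B)\,I_{\set{in}_z}$ for the traced private-input dimensions, which can never equal $I_{\set{in}_z}$ since the prefactor is at least $\kappa\ge 2>1$. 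Thus $R*S$ is again not a deterministic channel of any type $\set{in}_z\to\set{out}_z$, contradicting admissibility.

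The easy half is the output--output case, where the vanishing trace yields an immediate contradiction. The delicate half is the input--input case: here the obstruction is not that $R*S$ vanishes but that it carries a stubborn overall factor $d_A$, namely the dimension of the contracted system, which has been removed from the list of surviving systems and so cannot be absorbed into $\prod_{\set{in}_z}d$. Pinning down which surviving systems must be outputs in order to trap this factor is precisely where choosing the prepared marginals to be non-maximally-mixed is essential; the one point requiring care is that $R$ and $S$ be genuine elements of $\Evd{x}$ and $\Evd{y}$, which is guaranteed exactly by the left inclusion of Proposition~\ref{Spropositionchannel}.
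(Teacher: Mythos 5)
Your proof is correct and takes essentially the same route as the paper's: both arguments use the discard-and-prepare maps supplied by the left inclusion of Proposition~\ref{Spropositionchannel}, kill a shared output--output system by preparing orthogonal states on it so that $R*S=0$, and handle shared input--input systems by trapping the factor $d_{\set{K}}\ge 2$ against the normalisation $\Tr_{\set{out}_z}[R*S]=I_{\set{in}_z}$, with the pure preparations on private outputs forcing those systems out of $\set{in}_z$ exactly as in the paper's $\gamma=\emptyset$ step. One small caution: orthogonality of $\rho_A$ and $\sigma_A$ alone does not guarantee $\Tr[\rho_A\sigma_A^{T}]=0$ (transposition can spoil it, e.g.\ for the two eigenstates of $\sigma_y$), so the preparations should be taken as computational-basis states $\ketbra{0}{0}_A$ and $\ketbra{1}{1}_A$, which is precisely the paper's choice and is available within your construction.
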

\begin{proof}
%   Let us introduce some notation:
%  \begin{align}
%    \label{Seq:29}
%    \begin{aligned}
%    &\widetilde{\set{in}}_x
%    := \set{in}_x \setminus\set{H},
%    &&
% \widetilde{\set{in}}'_x
%    := \set{in}_x \cap\set{H},
% \\
%  &   \widetilde{\set{out}}_x := \set{out}_x \setminus
%    \set{H}, &&
%     \widetilde{\set{out}}'_x := \set{out}_x \cap
%    \set{H},
%    \\
%   &  \widetilde{\set{in}}_y
%    := \set{in}_y \setminus\set{H},
% && \widetilde{\set{in}}'_y
%    := \set{in}_y \cap\set{H},
%    \\
%    & \widetilde{\set{out}}_y := \set{out}_y \setminus
%    \set{H}, &&
%     \widetilde{\set{out}}'_y := \set{out}_y \cap
%     \set{H}
%    \end{aligned}
% \end{align}
  First, we will prove that
  $\set{H} \cap (\set{out}_x \cap
  \set{out}_y ) = \emptyset $.
  By contradiction, let us assume that there exist $A
  \in \set{H} \cap (\set{out}_x \cap
  \set{out}_y ) $.
  Let us define $\set{out}'_x := \set{out}_x
  \setminus A$ and
  $\set{out}'_y := \set{out}_y
  \setminus A$ and let us consider
the higher order maps
  \begin{align}
    \label{Seq:32}
    \begin{aligned}
       R = I_{\set{in}_x} \otimes \ketbra{0}{0}_A
    \otimes \frac{1}{d_{\set{out}'_x}}
    I_{\set{out}'_x} \\
    S = I_{\set{in}_y} \otimes \ketbra{1}{1}_A
    \otimes \frac{1}{d_{\set{out}'_y}}
    I_{\set{out}'_y} .
    \end{aligned}  
  \end{align}
  where $I_{\set{J}}$ denotes the identity
  operator
  on $\hilb{H}_{\set{J}}$ and $\ket{0} ,\ket{1}$
  are two orthonormal states of system $A$.
  It is straightforward to verify that
  $R \in \Evd{ \overline{\set{out}_x\to
      \set{in}_x} }\subseteq  \Evd{x} $
  and
  $R \in \Evd{ \overline{\set{out}_y\to
      \set{in}_y} }\subseteq  \Evd{y} $.
  Since $x*y $ is admissible there must exist a type
  $z$ such that $R*S \in \Evd{z} $.
  On the other hand, from a straightforward computation 
  we have that $R*S = 0$.

  Let us now prove
  $\set{H} \cap (\set{in}_x \cap
  \set{in}_y ) = \emptyset $.
By contradiction, let us assume that
$\set{K} := \set{H} \cap (\set{in}_x \cap
\set{in}_y ) \neq \emptyset $.
 Let us define $\set{in}'_x := \set{in}_x
  \setminus \set{K}$ and
  $\set{in}'_y := \set{in}_y
  \setminus \set{K}$ and let us consider
the maps
  \begin{align}
    \label{Seq:32b}
   & \begin{aligned}
    &R = I_{\set{in}'_x} \otimes I_{\set{K}}
    \otimes \ketbra{0}{0}_{\set{out}_x} \\
   & S = I_{\set{in}'_y} \otimes I_{\set{K}}
    \otimes \ketbra{0}{0}_{\set{out}_y}     
    \end{aligned}
    \\
    \nonumber
    &\ketbra{0}{0}_{\set{A}} := \bigotimes_{i \in
    \set{A}} \ketbra{0}{0}_i.
  \end{align}
  It is straightforward to verify that
  $R \in \Evd{ \overline{\set{out}_x\to
      \set{in}_x} }\subseteq \Evd{x} $ and
  $R \in \Evd{ \overline{\set{out}_y\to
      \set{in}_y} }\subseteq \Evd{y} $.  Since
  $ x*y $ is admissible there must exist a type
  $z$ such that
  $R * S \in \Evd{z} \subseteq \Evd{\set{in}_z \to
    \set{out}_z } $.  Therefore, we must have
  \begin{align}
    \label{Seq:36}
  \Tr_{\set{out}_z} R * S = I_{\set{in}_z}.   
  \end{align}
  By
  a direct computation we have
  \begin{align}
    \label{Seq:34}
     R * S =
  d_{\set{K}} \,I_{\alpha}\otimes I_{\beta} \otimes \ketbra{0}{0}_\gamma \otimes
  \ketbra{0}{0}_\delta
  \end{align}
  where we defined the sets
  \begin{align}
    \label{Seq:35}
    \begin{aligned}
    \alpha &:= \Big( (\set{in}'_x\cup \set{in}'_y)
    \setminus \set{H} \Big) \cap \set{in}_z\\
    \beta &:= \Big( (\set{in}'_x\cup \set{in}'_y)
    \setminus \set{H} \Big) \cap \set{out}_z\\
    \gamma &:= \Big( (\set{out}_x\cup \set{out}_y)
    \setminus \set{H} \Big) \cap \set{in}_z\\
 \delta &:= \Big( (\set{out}_x\cup \set{out}_y)
    \setminus \set{H} \Big) \cap \set{out}_z  .
    \end{aligned} 
  \end{align}
  Equation \eqref{Seq:36} then becomes
  \begin{align}
    \label{Seq:37}
    d_{\set{K}} d_\beta I_\alpha \otimes
    \ketbra{0}{0}_\gamma = I_\alpha \otimes I_\gamma
  \end{align}
  which implies $I_\gamma \propto
  \ketbra{0}{0}_\gamma$, i.e. $d_\gamma = 1$. Howevere, since
  $\gamma $ is a collection of non-trivial
  elementary types it must be $\gamma =
  \emptyset$.
  Then Equation \eqref{Seq:37} becomes
  $d_{\set{K}} d_\beta I_\alpha = I_\alpha $
  which implies
  $  d_{\set{K}} d_\beta = 1$. Since we assumed
  that $\set{K}$
  were a non empty collection of non-trivial
  elementary types, we have a contradiction.
\end{proof}

We are now ready to prove the characterisation of the admissible
type compositions.
In order to make the derivation clearer, we will
first prove this 
preliminary lemma.
\begin{lemma}
\label{Slmm:tracefixeddin}
  Let $x$ and $y$ be types and let us denote
   with $\set{H} := \set{Ele}_x \cap \set{Ele}_y$
   the set of non-trivial elementary types that
   $x$ and $y$ have in common.
   If the composition $x\ast y$ is admissible,
   then
   \begin{align}
     \label{Seq:39}
     \begin{aligned}
    & \forall R \in \Evd{x} , \, \forall S \in
     \Evd{y} , \;\; \Tr [R*S] =
     d_{\set{\widetilde{in}}} \\
     &\widetilde{\set{in}} := (\set{in}_x \cup
     \set{in}_y) \setminus \set{H}.  
     \end{aligned}
   \end{align}
\end{lemma}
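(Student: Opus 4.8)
The plan is to show that the number $\Tr[R*S]$ is independent of the choice of $R\in\Evd{x}$ and $S\in\Evd{y}$, and then to pin it down by evaluating it at one convenient pair. Throughout I would write $\set{A}':=\set{Ele}_x\setminus\set{H}$ and $\set{B}':=\set{Ele}_y\setminus\set{H}$ for the free (non-shared) systems, so that $R*S$ is an operator on $\set{A}'\cup\set{B}'$ and $\Tr[R*S]$ is the full trace over these systems.

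First I would note that $(R,S)\mapsto\Tr[R*S]$ is bilinear, hence continuous, and that its domain $\Evd{x}\times\Evd{y}$ is connected: by Proposition~\ref{Sthm:charthm} each set $\Evd{\cdot}$ is the intersection of the positive cone with an affine subspace, hence convex. Now I invoke admissibility. For every pair $(R,S)$ there is a type $z$ with $R*S\in\Evd{z}$, and by the preceding reduction to channel type we may take $z=\set{in}_z\to\set{out}_z$ with $\set{in}_z,\set{out}_z\subseteq\set{A}'\cup\set{B}'$ disjoint. A deterministic map of a channel type is the Choi operator of a trace-preserving channel, so $\Tr[R*S]=\Tr[\Tr_{\set{out}_z}(R*S)]=\Tr[I_{\set{in}_z}]=d_{\set{in}_z}$. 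Hence $\Tr[R*S]$ always lands in the finite set of positive integers $\{\,d_{J}\mid J\subseteq\set{A}'\cup\set{B}'\,\}$. A continuous function on a connected domain whose image lies in a discrete set is constant, so $\Tr[R*S]$ is the same for all admissible pairs.

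It then remains to compute the constant at $R=\lambda_x I_x$, $S=\lambda_y I_y$, which lie in $\Evd{x}$, $\Evd{y}$ by Proposition~\ref{Sthm:charthm} (take the traceless part equal to $0$). Since $I_x*I_y=\Tr_{\set{H}}[I_{\set{A}\cup\set{B}}]=d_{\set{H}}\,I_{\set{A}'\cup\set{B}'}$, bilinearity gives $\Tr[R*S]=\lambda_x\lambda_y\,d_{\set{H}}\,d_{\set{A}'}d_{\set{B}'}$. I would use that any $D\in\Evd{x}$ is a channel $\set{in}_x\to\set{out}_x$ (Proposition~\ref{Spropositionchannel}), whence $\Tr[D]=d_{\set{in}_x}$ and therefore $\lambda_x=\Tr[D]/d_x=1/d_{\set{out}_x}$ (and similarly for $y$), together with $d_{\set{A}'}=d_{\set{Ele}_x}/d_{\set{H}}$ and $d_{\set{Ele}_x}=d_{\set{in}_x}d_{\set{out}_x}$. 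The product collapses to $d_{\set{in}_x}d_{\set{in}_y}/d_{\set{H}}$. Finally I invoke Lemma~\ref{Slmm:noinputinput} (available since $x*y$ is admissible): every shared system is an input of one factor and an output of the other, so $\set{H}$ splits as $\set{H}=(\set{H}\cap\set{in}_x)\sqcup(\set{H}\cap\set{out}_x)$ with $\set{H}\cap\set{in}_x=\set{in}_x\cap\set{out}_y$ and $\set{H}\cap\set{out}_x=\set{out}_x\cap\set{in}_y$. Writing $d_{\set{in}_x}=d_{\set{in}_x\setminus\set{H}}\,d_{\set{in}_x\cap\set{H}}$ and likewise for $y$, the shared dimensions cancel against $d_{\set{H}}$ and one is left with $d_{(\set{in}_x\cup\set{in}_y)\setminus\set{H}}=d_{\widetilde{\set{in}}}$, as claimed.

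The genuine difficulty is concentrated in the constancy step. A purely computational alternative would expand $R=\lambda_xI_x+X_x$ and $S=\lambda_yI_y+X_y$ with $X_x\in\Delta_x$, $X_y\in\Delta_y$ traceless, reducing $\Tr[R*S]$ to four bilinear terms: the identity--identity term reproduces $d_{\widetilde{\set{in}}}$ and the two cross terms vanish because tracing a fully traceless operator yields $0$, but the quadratic term $\Tr[X_x*X_y]$ does \emph{not} vanish for formal reasons alone. Controlling it would require the combinatorial description of $\Delta_x,\Delta_y$ from Lemma~\ref{Slmm:charactstring}: one must check that, after tracing out $\set{A}'$ and $\set{B}'$, the surviving string-supports of $X_x$ and $X_y$ on $\set{H}$ are Hilbert--Schmidt orthogonal (Lemma~\ref{Slmm:stringdefinedlabels}), and this orthogonality really uses admissibility. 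Indeed, for the inadmissible loop $x=A\to B$, $y=B\to A$ one has $\widetilde{\set{in}}=\emptyset$ yet the identity channels give $\Tr[R*S]=d^2\neq 1$, so there the quadratic term survives and the trace fails to be constant. The connectedness argument above is attractive precisely because it sidesteps this combinatorial bookkeeping.
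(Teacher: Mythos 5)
Your proof is correct, and it takes a genuinely different route from the paper's, though both ultimately rest on the same three ingredients: convexity of $\Evd{x}$ and $\Evd{y}$ (Proposition~\ref{Sthm:charthm}), the fact that admissibility forces $R*S$ into a channel type so that $\Tr[R*S]$ is always a positive integer, and Lemma~\ref{Slmm:noinputinput} to account for the shared systems. Where you establish constancy of $(R,S)\mapsto\Tr[R*S]$ in one stroke by topology (continuous function, connected domain, discrete image), the paper proceeds algebraically and stepwise: it introduces the discard-and-prepare anchors $Q_x=I_{\set{in}_x}\otimes\ketbra{0}{0}_{\set{out}_x}$ and $Q_y=I_{\set{in}_y}\otimes\ketbra{0}{0}_{\set{out}_y}$, computes $\Tr[Q_x*Q_y]=d_{\widetilde{\set{in}}}$ directly from Lemma~\ref{Slmm:noinputinput}, and then propagates this value to $(R,Q_y)$, $(Q_x,S)$ and finally $(R,S)$ by applying admissibility to convex combinations such as $(pR+(1-p)Q_x)*Q_y$ with $p$ irrational: since every such trace lies in $\mathbb{N}$, the affine interpolation $p\Tr[R*Q_y]+(1-p)d_{\widetilde{\set{in}}}\in\mathbb{N}$ forces equality of the endpoints. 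The irrationality trick is exactly the algebraic counterpart of your connectedness-plus-discreteness argument, so neither method is more powerful; yours is more economical in that it handles all pairs simultaneously, while the paper's avoids any topological language and keeps the argument inside elementary convexity. Your evaluation of the constant also differs: you use the maximally mixed deterministic maps $\lambda_x I_x$, $\lambda_y I_y$ and cancel dimensions via $\lambda_x=1/d_{\set{out}_x}$, whereas the paper simply reads the value off the pair $(Q_x,Q_y)$, which is slightly shorter since no bookkeeping with the normalization constants is needed. Finally, your closing remark is well taken: in the naive expansion $R=\lambda_xI_x+X_x$, $S=\lambda_yI_y+X_y$ the cross terms vanish for free but $\Tr[X_x*X_y]$ does not, as the inadmissible loop $x=A\to B$, $y=B\to A$ (where $\Tr[R*S]=d^2$ for identity channels) demonstrates; this correctly identifies why some global use of the admissibility hypothesis, whether your topological argument or the paper's interpolation, cannot be dispensed with.
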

\begin{proof}
  Let us fix some arbitrary $R \in \Evd{x}$
  and $S \in \Evd{y}$ and let us define
  \begin{align}
    \label{Seq:40}
    \begin{aligned}
      Q_x := I_{\set{in}_x} \otimes
      \ketbra{0}{0}_{\set{out}_x} \\
      Q_y := I_{\set{in}_y} \otimes
      \ketbra{0}{0}_{\set{out}_y} \\
      \ketbra{0}{0}_{\set{A}}  :=
      \bigotimes_{i \in       \set{A}} \ketbra{0}{0}_i
    \end{aligned}
  \end{align}
where $\ket{0}_i$ is some fixed (normalized) state
on system $\hilb{H}_i$ and   $I_{\set{A}} $
denotes the identity on the hilbert space
$\hilb{H}_{\set{A}} := \bigotimes_{i \in \set{A}}
\hilb{H}_i$.
From Lemma \ref{Slmm:noinputinput} we know that
$\set{H}\subseteq  (\set{in}_x  \cap 
   \set{out}_y ) \cup (\set{out}_x \cap
  \set{in}_y ) $
and therefore we have that
\begin{align}
  \label{Seq:28}
  \begin{aligned}
 & \Tr[Q_x * Q_y] = \Tr[I_{\widetilde{\set{in}}}
  \otimes  \ketbra{0}{0}_{\widetilde{\set{out}}}]
  = d_{\widetilde{\set{in}}} \\
  &\widetilde{\set{out}} := (\set{out}_x \cup
     \set{out}_y) \setminus \set{H}.
  \end{aligned}
\end{align}
  
Since $x*y$ is admissible we have that for any
$p\in[0,1]$ there exists a type $z$ such that
\begin{align}
  \label{Seq:30}
(p R +  (1-p) Q_x)* Q_y \in \Evd{z}.
\end{align}
By taking the trace on both side we have that
\begin{align}
  \label{Seq:31}
  p \Tr[R*Q_y] + (1-p)\Tr[Q_x*Q_y] \in \mathbb{N}
  \quad \forall p \in [0,1] 
\end{align}
which, for $p$ irrational, implies that
\begin{align}
  \label{Seq:41a}
  \Tr[R*Q_y] = \Tr[Q_x*Q_y] = d_{\widetilde{\set{in}}} .
\end{align}
On the other hand, if we consider
$Q_x *(p S +  (1-p) Q_y)$ we obtain
$  \Tr[Q_x*S] = d_{\widetilde{\set{in}}} $.
Finally by considering $R *(p S +  (1-p) Q_y)$
we obtain
\begin{align}
  \label{Seq:41b}
  \Tr[R*S] = d_{\widetilde{\set{in}}} .
\end{align}
which is the thesis.
\end{proof}

We are now ready to prove the main result of this section. 

\begin{proposition}
  \label{Sprop:charapropositioncomposition}
  Let $x$ and $y$ be types and let us denote
   with $\set{H} := \set{Ele}_x \cap \set{Ele}_y$
   the set of non-trivial elementary types that
   $x$ and $y$ have in common.
   Then,  the composition $x\ast y$ is admissible
   if and only 
\begin{align}
  \label{Seq:25}
  \begin{aligned}
     &\set{H} \subseteq  (\set{in}_x  \cap 
   \set{out}_y ) \cup (\set{out}_x \cap
  \set{in}_y ) \\
     &\forall R\in\set T_1(x),\; \forall S\in\set
  T_1(y),\; R\ast S\in T_1(\widetilde{\set{in}} \to \widetilde{\set{out}}),
  \end{aligned}
\end{align}
where
we define
$\tilde{\set{in}} := (\set{in}_x  \cup 
\set{in}_y ) \setminus \set{H}$
and
$\tilde{\set{out}} := (\set{out}_x  \cup 
\set{out}_y ) \setminus \set{H}$.
\end{proposition}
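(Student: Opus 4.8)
The direction ($\Leftarrow$) is immediate: if the second condition in~\eqref{Seq:25} holds, then for every $R\in\set T_1(x)$ and $S\in\set T_1(y)$ the single type $z=\widetilde{\set{in}}\to\widetilde{\set{out}}$ already witnesses $R\ast S\in\set T_1(z)$, which is exactly Definition~\ref{SadmT}. Hence the whole content lies in ($\Rightarrow$), so assume $x\ast y$ admissible. The first inclusion of~\eqref{Seq:25} is nothing but Lemma~\ref{Slmm:noinputinput}: each shared system belongs to $\set{Ele}_x$ and to $\set{Ele}_y$, hence is input or output on each side, and the lemma excludes the in--in and out--out cases, leaving $\set H\subseteq(\set{in}_x\cap\set{out}_y)\cup(\set{out}_x\cap\set{in}_y)$.

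For the second condition fix $R,S$. By admissibility together with the preceding lemma letting us take $z$ of the channel form $\set{in}_z\to\set{out}_z$, we have $R\ast S\in\set T_1(\set{in}_z\to\set{out}_z)$ with $\set{Ele}_z=\widetilde{\set{in}}\cup\widetilde{\set{out}}$, while $R\ast S\ge 0$ follows from~\eqref{Seq:23}. Thus only the partition is at stake, i.e.\ one must prove $\Tr_{\widetilde{\set{out}}}[R\ast S]=I_{\widetilde{\set{in}}}$. Lemma~\ref{Slmm:tracefixeddin} delivers the scalar shadow of this, $\Tr[R\ast S]=d_{\widetilde{\set{in}}}$ and hence $d_{\set{in}_z}=d_{\widetilde{\set{in}}}$; promoting this equality of \emph{dimensions} to an equality of \emph{sets} of input systems is the real difficulty.

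My plan is to linearise around the canonical ``discard inputs, prepare $\ket{0}$ on outputs'' maps $Q_x,Q_y$ of Lemma~\ref{Slmm:tracefixeddin}. A direct link--product computation using only the channel normalisations $\Tr_{\set{out}_x}[R]=I_{\set{in}_x}$ and $\Tr_{\set{out}_y}[S]=I_{\set{in}_y}$ (the right inclusion of Proposition~\ref{Spropositionchannel}) yields $\Tr_{\widetilde{\set{out}}}[R\ast Q_y]=I_{\widetilde{\set{in}}}$ for all $R$ and $\Tr_{\widetilde{\set{out}}}[Q_x\ast S]=I_{\widetilde{\set{in}}}$ for all $S$, with $Q_x\ast Q_y=I_{\widetilde{\set{in}}}\otimes\ketbra{0}{0}_{\widetilde{\set{out}}}$. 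Writing $X:=R-Q_x$ and $Y:=S-Q_y$, which are traceless and lie in $\set L_{D_x}$ and $\set L_{D_y}$, bilinearity of the link product collapses the cross terms and reduces the statement to the single ``loop'' identity $\Tr_{\widetilde{\set{out}}}[X\ast Y]=0$. To establish it I would pair against an arbitrary product input state $\rho_a\otimes\rho_c$ on $\widetilde{\set{in}}=(\set{in}_x\setminus\set H)\cup(\set{in}_y\setminus\set H)$ and use~\eqref{Seq:22} to write $\Tr[(\rho_a^{\theta}\otimes\rho_c^{\theta}\otimes I_{\widetilde{\set{out}}})(X\ast Y)]=(X\ast\rho_a)\ast(Y\ast\rho_c)$; feeding the states into the dangling inputs sends $X\ast\rho_a$ and $Y\ast\rho_c$ into the traceless difference spaces of the input--reduced types $x',y'$ (obtained by deleting those inputs), whose composition has empty $\widetilde{\set{in}}'$ and thus normalising constant $1$. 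Applying Lemma~\ref{Slmm:tracefixeddin} to $x'\ast y'$ then forces each of the four expanded traces to equal $1$, so the alternating sum vanishes and, $\rho_a\otimes\rho_c$ being arbitrary, $\Tr_{\widetilde{\set{out}}}[X\ast Y]=0$ as an operator.

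The step I expect to be genuinely hard is justifying that the reduced composition $x'\ast y'$ is itself admissible, which is what legitimises the final use of Lemma~\ref{Slmm:tracefixeddin}. This cannot be had by naively plugging states into $R\ast S$: that move is \emph{equivalent} to the partition statement we are after, since the dangling inputs $\widetilde{\set{in}}$ always carry maximally mixed marginals and hence can never be distinguished from outputs by any single--map, marginal--based test. The global admissibility hypothesis must instead be spent over the \emph{whole} convex set $\set T_1(x)\times\set T_1(y)$: there are only finitely many candidate partitions, each imposing an affine constraint on the deterministic maps, so the standard fact that a convex set with nonempty relative interior is not a finite union of proper affine subspaces forces one partition to hold simultaneously for all maps; evaluating that common partition on a maximally entangling witness map---whose non--unital and pure output marginals pin the partition uniquely to $(\widetilde{\set{in}},\widetilde{\set{out}})$---identifies it and transfers admissibility to the reduced types, closing the argument.
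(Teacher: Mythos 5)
Your opening moves are sound and partly \emph{better} than the paper's own: the $(\Leftarrow)$ direction, the use of Lemma~\ref{Slmm:noinputinput} for the first condition, the reduction of everything to the single identity $\Tr_{\widetilde{\set{out}}}[R\ast S]=I_{\widetilde{\set{in}}}$ (positivity being automatic), and the identification of the crux (promoting $d_{\set{in}_z}=d_{\widetilde{\set{in}}}$ to $\set{in}_z=\widetilde{\set{in}}$) are all correct. Moreover, your direct computation that $\Tr_{\widetilde{\set{out}}}[R\ast Q_y]=I_{\widetilde{\set{in}}}$ and $\Tr_{\widetilde{\set{out}}}[Q_x\ast S]=I_{\widetilde{\set{in}}}$, using only Lemma~\ref{Slmm:noinputinput} and the right inclusion of Proposition~\ref{Spropositionchannel}, is valid; the paper obtains the same facts (its Eqs.~\eqref{Seq:45} and~\eqref{Seq:46}) only by spending the admissibility hypothesis on the perturbed map $(pR+(1-p)Q_x)\ast Q_y$, $p<d_{\widetilde{\set{in}}}^{-1}$, and testing against states $\ketbra{1}{1}_{\set{in}_z}$ orthogonal to the $\ket{0}$'s, so your shortcut is a genuine simplification of that intermediate step.

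The gap is in the step you yourself flag, and it is real. First, the detour through ``input-reduced types'' $x',y'$ is not usable as given: you never prove that $R\ast\rho_a$ lies in $\Evd{x'}$ for a well-defined type $x'$, nor that the composition $x'\ast y'$ is admissible---and that second assertion is a statement of exactly the same kind as the proposition being proved, so invoking Lemma~\ref{Slmm:tracefixeddin} for it is circular. Second, your proposed repair misapplies the convexity fact: the sets $\{(R,S)\,:\,\Tr_{\set{out}_z}[R\ast S]=I_{\set{in}_z}\}$ are \emph{bilinear}, not affine, subsets of $\set{T}_1(x)\times\set{T}_1(y)$, so ``a convex set with nonempty relative interior is not a finite union of proper affine subspaces'' cannot be applied to the product set in one shot. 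The idea is salvageable, but only in two stages: for fixed $S$ each partition $z$ cuts out an affine subspace of operators $R$, so a single $z(S)$ works for all $R\in\set{T}_1(x)$; then the set $\{S\,:\,\forall R,\;\Tr_{\set{out}_z}[R\ast S]=I_{\set{in}_z}\}$ is an intersection of affine subspaces, hence affine, and the same fact applied over $\set{T}_1(y)$ yields one partition valid for \emph{all} pairs.

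Once that repaired argument is in place, note what happens: evaluating the common partition at the explicit pair $(Q_x,Q_y)$, for which $Q_x\ast Q_y=I_{\widetilde{\set{in}}}\otimes\ketbra{0}{0}_{\widetilde{\set{out}}}$, forces $\set{in}_z\cap\widetilde{\set{out}}=\emptyset$ and $\widetilde{\set{in}}\cap\set{out}_z=\emptyset$, i.e.\ $\set{in}_z=\widetilde{\set{in}}$, and the proposition follows immediately---no linearization, no $X\ast Y$, no reduced types. So your proof, corrected, collapses to: Lemma~\ref{Slmm:noinputinput}, the two-stage Baire/convexity argument, and the witness $(Q_x,Q_y)$; this is a genuinely different route from the paper, which instead works with each fixed pair $(R,S)$, applying admissibility to the perturbations $(pR+(1-p)Q_x)\ast(pS+(1-p)Q_y)$ with $p<1-2^{-\frac12}$ and excluding wrong partitions by explicit trace bounds. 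As submitted, however, the argument is incomplete: the structural confusion (presenting the convexity step as merely ``transferring admissibility to the reduced types'') and the affine/bilinear error mean the key step is not actually proved.
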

\begin{proof}
  If Equation \eqref{Seq:25} is satisfied, then
  the admissibility of the composition $x*y$ is
  trivially satisfied.
  
  We now show that the admissibility of $x*y$
  implies Equation \eqref{Seq:25}.
  Let us fix some arbitrary $R \in \Evd{x}$
  and $S \in \Evd{y}$.
    Since $x*y$ is admissible, we have that there
    exist some set  $\set{in}_z$ and $\set{out}_z$ such that
    \begin{align*}
      \begin{aligned}
         & T:=  (pR + (1-p)Q_x)*Q_y \in \Evd{\set{in}_z
           \to \set{out}_z}, \\
         & p < {d_{\widetilde{\set{in}}}}^{-1}
      \end{aligned}
    \end{align*}
    where $Q_x$ have been defined in Equation
    \eqref{Seq:40}.
    By direct computation we have
    \begin{align}
      \label{Seq:43}
      \begin{aligned}
      &      T = p R*Q_y + (1-p) I_{\widetilde{\set{in}}}
      \otimes  \ketbra{0}{0}_{\widetilde{\set{out}}}.
      \end{aligned}
    \end{align}
    Let us assume that there exists a nontrivial
    elementary type $A$ such that
    $A \in {\set{in}_z}$ and
    $A \not \in \widetilde{\set{in}}$, i.e.
    $A \in \set{in}_z\cap \widetilde{\set{out}}$.
      Consider now the state
      $\ketbra{1}{1}_{\set{in}_z} = \bigotimes_{i \in       \set{in}_z} \ketbra{0}{0}_i$
      where $\ket{1}_i$ is a (normalized state) such that
      $\braket{0}{1}_i = 0$.
      Such a $\ket{1}_i$ must exist for any $i$
      since we are considering non-elementary
      types.
      Since $T$ is a channel from
      $\hilb{H}_{\set{in}_z} $ to
      $\hilb{H}_{\set{out}_z} $
      we must have
      \begin{align*}
        \begin{aligned}
           1 =& \Tr[T(I_{\set{out}_z} \otimes
          \ketbra{1}{1}_{\set{in}_z})] =\\
        =&        p\Tr[R*Q_y(I_{\set{out}_z} \otimes
           \ketbra{1}{1}_{\set{in}_z})]  +\\
          & + (1-p)\Tr[(I_{\widetilde{\set{in}}}
           \otimes
           \ketbra{0}{0}_{\widetilde{\set{out}}})
           (I_{\set{out}_z} \otimes
            \ketbra{1}{1}_{\set{in}_z})] = \\
        = &p\Tr[R*Q_y (I_{\set{out}_z} \otimes
      \ketbra{1}{1}_{\set{in}_z})] \leq \\
     \leq &p
        \Tr[R*Q_y] =p 
        d_{\widetilde{\set{in}}} < 1 
        \end{aligned}
      \end{align*}
      where we used Lemma \ref{Slmm:tracefixeddin}
      for the identity
      $\Tr[R*Q_y] = 
      d_{\widetilde{\set{in}}}$.
      Therefore, it must be $\set{in}_z \cap
      \widetilde{\set{out}} = \emptyset$,
      i.e. $\set{in}_z  \subseteq
      \widetilde{\set{in}}$. However, since we
      have 
      $d_{\set{in}_z} = d_{ \widetilde{\set{in}}}$
      from Lemma \ref{Slmm:tracefixeddin},
      it must be $\set{in}_z  =
      \widetilde{\set{in}}$ and 
      $\set{out}_z  =
      \widetilde{\set{out}}$.
      Then we have
      \begin{align}
        \label{Seq:44}
         p R*Q_y + (1-p) I_{\widetilde{\set{in}}}
      \otimes
        \ketbra{0}{0}_{\widetilde{\set{out}}} \in \Evd{\widetilde{\set{in}}\to
        \widetilde{\set{out}}} .
      \end{align}
      By taking the trace of $p R*Q_y + (1-p) I_{\widetilde{\set{in}}}
      \otimes
        \ketbra{0}{0}_{\widetilde{\set{out}}} $
        over $\hilb{H}_{\widetilde{\set{out}}}$ we
        have
      \begin{align}
        \label{Seq:45}
        \begin{aligned}
          \Tr_{\widetilde{\set{out}}}[R*Q_y]  =
        I_{\widetilde{\set{in}}} 
        \implies \\
        R*Q_y  \in \Evd{\widetilde{\set{in}}
      \to \widetilde{\set{out}} }
        \end{aligned}
      \end{align}
      If we consider the composition
      $Q_x*(pS + (1-p)Q_y) $ and we follow the
      same step as above we obtain that
      \begin{align}
        \label{Seq:46}
        Q_x*S \in \Evd{\widetilde{\set{in}}
      \to \widetilde{\set{out}} }.
      \end{align}
      Let us now consider the composition
      \begin{align}
        \label{Seq:47}
         &   T':=  (pR + (1-p)Q_x)*(pS + (1-p)Q_y )
\\
  &    p < 1-{2}^{-\frac12}.   
      \end{align}
      Since $x*y$ is admissible, there exist some
      set $\set{in}_z$ and $\set{out}_z$ such that
    \begin{align}
      \label{Seq:42}
      \begin{aligned}
         & T' \in \Evd{\set{in}_z
      \to \set{out}_z}.
      \end{aligned}
    \end{align}
Let us now 
define the sets
\begin{align}
  \label{Seq:48}
  \begin{aligned}
  \alpha &:= \widetilde{\set{in}} \cap
    \set{in}_z,
& \beta &:= \widetilde{\set{in}} \cap
    \set{out}_z\\
    \gamma &:= \widetilde{\set{out}} \cap
    \set{in}_z,
   &
    \delta &:= \widetilde{\set{out}} \cap
    \set{out}_z.    
  \end{aligned}
\end{align}
where
Since $T' \in \Evd{\set{in}_z
  \to \set{out}_z}$ we can apply it to the state
$\ketbra{0}{0}_{\set{in}_z}$.
We have
\begin{align}
  \label{Seq:49}
  \begin{aligned}
     1 = \Tr[T'(I_{\set{out}_z} \otimes
  \ketbra{0}{0}_{\set{in}_z})]  \geq\\
\geq   (1-p)^2\Tr  [(I_{\widetilde{\set{in}}}
      \otimes
  \ketbra{0}{0}_{\widetilde{\set{out}}})
  (I_{\set{out}_z} \otimes
  \ketbra{0}{0}_{\set{in}_z}) =\\
  =(1-p)^2\Tr  [I_{\beta}  \ketbra{0}{0} _{\alpha
  \cup \gamma \cup \delta}] = (1-p)^2d_\beta 
  \end{aligned}
\end{align}
If $\beta \neq \emptyset$ then $d_\beta \geq 2$
which would imply $(1-p)^2d_\beta > 1$.  Then it
must be $\beta = \emptyset$,
i.e. $\widetilde{\set{in}} \subseteq \set{in}_z$.
However, since
$d_{\widetilde{\set{in}} }= d_{\set{in}_z}$ from
Lemma \ref{Slmm:tracefixeddin}, it must be
$\widetilde{\set{in}} = \set{in}_z$.  We have then
proved that
\begin{align}
  \label{Seq:50}
  T' \in
\Evd{\widetilde{\set{in}} \to \widetilde{\set{out}}}.
\end{align}
Finally, Equations~\eqref{Seq:45}, \eqref{Seq:46} and~\eqref{Seq:50}
imply $R*S \in \Evd{\widetilde{\set{in}} \to \widetilde{\set{out}}}$.
\end{proof}

The study the admissible composition of
higher order maps is simplified by considering the
following operation.

\begin{definition}[Admissible type contraction]
  \label{Sdef:admcontr}
  Let $x$ be a type and $A,B$ be non trivial
  elementary systems, which are equivalent, i.e. 
  $\dim(A) =  \dim(B)$. We say
  that the
  \emph{contraction} $\mathcal{C}_{A,B}$ is admissible if
  \begin{align}
 & \forall R\in\set T_1(x), \;\exists
  z\,\textit{s.t.}
  \;\mathcal{C}_{A,B}(R) \in \set
  T_1(z),     \\
\label{Seq:57} & \mathcal{C}_{A,B}(R):=R\ast \Phi_{AB}\\
\label{Seq:60}&\Phi_{AB}:=\sum_{i,j}\ketbra{ii}{jj}\in
\mathcal{L}(\hilb H_A\otimes \hilb H_B).
  \end{align}

Let
$\set{H} \subseteq \set{Ele}_x \times \set{Ele}_x$
be a set of mutually disjoint pairs
$(A_i, A'_i) \in \set{Ele}_x \times \set{Ele}_x$,
of equivalent non
trivial elementary systems.
We then say that the contraction
$  \mathcal{C}_{\set H}$ is admissible if
\begin{align}
  \label{Seq:24}
  \begin{aligned}
 &   \forall R\in\set T_1(x), \;\;\exists
  z\;\mbox{ s.t. }\;
  \mathcal{C}_{\set{ H}}(R) \in \Evd{z}
  \\
  & \mathcal{C}_{\set{ H}}(R) := 
  \mathcal{C}_{A_1,A_1}(\mathcal{C}_{A_2,A_2}(\cdots\mathcal{C}_{A_n,A_n}(R)\cdots)).
  \end{aligned}     
\end{align}
\end{definition}
We notice that \eqref{Seq:22}
guarantees that Equation~\eqref{Seq:24} is well
defined, i.e. the order in which the contraction
are perfomed is immaterial.

It is worth to notice that the type contraction is
strictly connected to the the type composition.
Let $x,y$ be two types such that
$\set{Ele}_x \cap \set{Ele}_y = {A}$ and such that
the composition
$x\ast y$ is admissible. Then, for any two
maps $R \in \Eva{x}$, $S \in \Eva{y}$,
we have
\begin{equation*}
    R\ast S=\text{Tr}_A[RS^{T_A}]=(R\otimes S)\ast \Phi_{AA}=:\mathcal{C}_{A,A}(R\otimes S).
\end{equation*}
The generalisation to an arbitrary set $\set{H}:=
\set{Ele}_x\cap\set{Ele}_y$ is straightforward:
\begin{align}
  \label{Seq:20}
  \begin{aligned}
  R\ast S&=\mathcal{C}_{\set H}(R\otimes S):= \\
  &=\mathcal{C}_{A_1,A_1}(\mathcal{C}_{A_2,A_2}(\cdots\mathcal{C}_{A_n,A_n}(R\otimes S)\cdots))      
  \end{aligned}
\end{align}

The following lemma proves the analogy between the
admissibility of a type composition and the
allowed contractions of the shared elementary
systems, operated on the tensor product between
the considered types.
\begin{lemma}
\label{Slmm:contrccomposeq}
The composition $x\ast y$ is admissible if and
only if   $\mathcal{C}_{\set H}(x\otimes y)$ is
admissible, where
$\set{H} := \{ (A_i,A_i), \;\; A_i \in\set{Ele}_x
\cap \set{Ele}_y \}.$
\end{lemma}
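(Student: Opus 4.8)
The plan is to prove the two implications separately, using as the crucial bridge the identity \eqref{Seq:20}, $R \ast S = \mathcal{C}_{\set{H}}(R \otimes S)$, valid for every $R \in \Evd{x}$ and $S \in \Evd{y}$, together with the characterization of $\Evd{x \otimes y}$ as the intersection of the positive cone with the affine hull of $\{\Evd{x} \otimes \Evd{y}\}$.

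First I would dispatch the direction ``$\mathcal{C}_{\set{H}}(x \otimes y)$ admissible $\implies$ $x \ast y$ admissible'', which is immediate. Fixing arbitrary $R \in \Evd{x}$ and $S \in \Evd{y}$, the product $R \otimes S$ is positive and lies in $\{\Evd{x} \otimes \Evd{y}\}$, hence $R \otimes S \in \Evd{x \otimes y}$. Admissibility of the contraction then supplies a type $z$ with $\mathcal{C}_{\set{H}}(R \otimes S) \in \Evd{z}$, and by \eqref{Seq:20} this is precisely $R \ast S \in \Evd{z}$, which is the defining condition for $x \ast y$ to be admissible.

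For the converse the hard part will be that a generic $P \in \Evd{x \otimes y}$ is \emph{not} of product form, so \eqref{Seq:20} cannot be applied to it directly. The resolution rests on Proposition \ref{Sprop:charapropositioncomposition}: assuming $x \ast y$ admissible, every $R \ast S$ lands in one and the same channel type $w := \widetilde{\set{in}} \to \widetilde{\set{out}}$, with $\widetilde{\set{in}}$ and $\widetilde{\set{out}}$ fixed and independent of $R,S$. I would then write $P = \sum_k c_k\, R_k \otimes S_k$ with $R_k \in \Evd{x}$, $S_k \in \Evd{y}$ and $\sum_k c_k = 1$ (an affine decomposition, which exists by the affine-hull characterization), while keeping in mind that $P \geq 0$. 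By linearity of the contraction and \eqref{Seq:20} one gets $\mathcal{C}_{\set{H}}(P) = \sum_k c_k\, R_k \ast S_k$, an affine combination of elements of $\Evd{w}$, so it belongs to the affine hull of $\Evd{w}$ and hence to the affine space $\lambda_w I_w + \Delta_w$ of Proposition \ref{Sthm:charthm} (which is affine and contains $\Evd{w}$). On the other hand, since $\Phi_{AB} \geq 0$ and the link product preserves positivity by \eqref{Seq:23}, each single contraction preserves positivity and therefore $\mathcal{C}_{\set{H}}(P) \geq 0$. Being positive and lying in $\lambda_w I_w + \Delta_w$, the operator $\mathcal{C}_{\set{H}}(P)$ satisfies the hypotheses of Proposition \ref{Sthm:charthm} and thus belongs to $\Evd{w}$, establishing that $\mathcal{C}_{\set{H}}(x \otimes y)$ is admissible.

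The single step carrying the whole argument is the uniformity of the output type from Proposition \ref{Sprop:charapropositioncomposition}: without a common $w$ for all products $R_k \ast S_k$, the affine combination $\sum_k c_k\, R_k \ast S_k$ could not be recognized as an element of a single deterministic affine space, and positivity alone would not suffice to conclude. The only routine facts left to verify are the linearity of $\mathcal{C}_{\set{H}}$ and the positivity preservation, both of which follow from the properties \eqref{Seq:22} and \eqref{Seq:23} of the link product.
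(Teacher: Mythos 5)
Your proof is correct and follows essentially the same route as the paper's: the easy direction via the identity $R\ast S=\mathcal{C}_{\set{H}}(R\otimes S)$, and the converse by affinely decomposing a generic element of $\Evd{x\otimes y}$ as $\sum_k c_k\,R_k\otimes S_k$, invoking Proposition~\ref{Sprop:charapropositioncomposition} to place every $R_k\ast S_k$ in the single channel type $\widetilde{\set{in}}\to\widetilde{\set{out}}$, and combining positivity of the contraction with the affine characterization of Proposition~\ref{Sthm:charthm}. You have merely spelled out explicitly two steps the paper leaves implicit (linearity plus positivity preservation of $\mathcal{C}_{\set{H}}$, and the affine-space argument), which is a faithful elaboration rather than a different proof.
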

\begin{proof}
$(\impliedby)$. This follows directly from
Definition~\ref{SadmT} and Equation~\eqref{Seq:20}.\\
$(\implies)$. Let us recall that 
$ R\in \set T_1(x\otimes y) $
if and only if $R$ is a positive operator in the
affine hull of $ T_1(x)\otimes T_1(y)$, i.e.
   $ R=\sum_ic_iA_i\otimes B_i$
with  $A_i\in \set T_1(x)$,  $B_i\in \set T_1(y)$,
$c_i\in \mathbb{R}$ and 
$    \sum_i c_i=1$.
Clearly, $C_\set{H}(R)$ is a positive  operator and
moreover
we have
\begin{align*}
    C_\set{H}(R)&=\sum_ic_iC_\set{H}(A_i\otimes B_i)=
    \sum_ic_iA_i\ast B_i
    =\sum_ic_i Z_i
\end{align*}
where $Z_i\in \set
T_1(\widetilde{\set{in}}\rightarrow
\widetilde{\set{out}})$ for any $i$ and we
defined
$\widetilde{\set{in}}= (\set{in}_x \cup \set{in}_y)\setminus
\set H$ and $\widetilde{\set{out}}=(\set{out}_x
\cup \set{out}_y)\setminus \set H$.
This implies that $C_\set{H}(R) \in T_1(\widetilde{\set{in}}\rightarrow
\widetilde{\set{out}})$.
\end{proof}
% Then we have 
% \begin{equation*}
% C_{\set H}(R)=\sum_ic_i(\lambda_{\widetilde{\set{in}}_{x\otimes y}\rightarrow \widetilde{\set{out}}_{x\otimes y}}I+T_i)=\lambda_{\widetilde{\set{in}}_{x\otimes y}\rightarrow \widetilde{\set{out}}_{x\otimes y}}I+T
% \end{equation*}
% with $T\in \set L_{\widetilde{\set{in}}_{x\otimes y}\rightarrow \widetilde{\set{out}}_{x\otimes y}}$.
% Consequently $C_{\set I}(R)\in \set T_1(\widetilde{\set{in}}_{x\otimes y}\rightarrow\widetilde{\set{out}}_{x\otimes y})$. This concludes the proof.

The following result characterize the admissible contractions.
\begin{proposition}\label{Sprop:admcontractions}
  Let $x$ be a type and $A$,$B \in
  \set{Ele}_x$ be equivalent nontrivial elementary
  types. If $A, B \in \set{in}_x$ or
  $A, B \in \set{out}_x$ then
  $\mathcal{C}_{A,B} (x)$ is not admissible. If
  $A \in \set{in}_x$ and $B \in \set{out}_x$ then
  $\mathcal{C}_{A,B}(x)$ is admissible if and only
  if, for any $R \in \set T_1(x)$ we have that
  $\mathcal{C}_{A,B}(R) \in
  T_1(\widetilde{\set{in}}_{x}\rightarrow\widetilde{\set{out}}_{x})$,
  where we defined
  $\widetilde{\set{in}}_x := \set{in}_x\setminus
  A$ and
  $\widetilde{\set{out}}_x := \set{out}_x
  \setminus B$.
\end{proposition}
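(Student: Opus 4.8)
The plan is to reduce all three statements to the elementary fact that $\Evd{\set{in}\to\set{out}}=\{P\ge 0 : \Tr_{\set{out}}P = I_{\set{in}}\}$, using Proposition~\ref{Spropositionchannel} to replace ``there exists a type $z$ with $\mathcal{C}_{A,B}(R)\in\Evd{z}$'' by ``there exist $\set{in}_z,\set{out}_z$ with $\mathcal{C}_{A,B}(R)\in\Evd{\set{in}_z\to\set{out}_z}$''. Two facts will be used repeatedly: $\mathcal{C}_{A,B}(R)=R*\Phi_{AB}\ge 0$ by \eqref{Seq:23}, and the contraction acts on factorised operators as $\mathcal{C}_{A,B}(O_A\otimes O_B\otimes O_{\mathrm{rest}})=\Tr[O_A^{T}O_B]\,O_{\mathrm{rest}}$, which in particular sends the identities to $\Tr[I_A^{T}I_B]=d_A$.

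For the two forbidden cases I would produce a single witness $R\in\Evd{x}$ whose contraction lies in no deterministic type. In both cases I take $R\in\Evd{\overline{\set{out}_x\to\set{in}_x}}\subseteq\Evd{x}$, the inclusion being Proposition~\ref{Spropositionchannel}. If $A,B\in\set{in}_x$, choosing $R=I_{\set{in}_x}\otimes\ketbra{0}{0}_{\set{out}_x}$ gives $\mathcal{C}_{A,B}(R)=d_A\,I_{\set{in}_x\setminus\{A,B\}}\otimes\ketbra{0}{0}_{\set{out}_x}$; if $A,B\in\set{out}_x$, preparing the normalised maximally entangled vector $\ket{\omega}=d_A^{-1/2}\sum_i\ket{ii}$ on $AB$ and $\ket{0}$ on the remaining outputs gives $\mathcal{C}_{A,B}(R)=d_A\,I_{\set{in}_x}\otimes\ketbra{0}{0}_{\set{out}_x\setminus\{A,B\}}$. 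Were such an operator in some $\Evd{\set{in}_z\to\set{out}_z}$, the rank-one factors $\ketbra{0}{0}$ would force their systems into $\set{out}_z$, and the normalisation $\Tr_{\set{out}_z}\mathcal{C}_{A,B}(R)=I_{\set{in}_z}$ would reduce to $d_A\,k=1$ for some integer $k\ge 1$, impossible since $A$ is non-trivial and $d_A\ge 2$.

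For $A\in\set{in}_x$, $B\in\set{out}_x$ the ``if'' direction is immediate, since $z=\widetilde{\set{in}}_x\to\widetilde{\set{out}}_x$ then witnesses admissibility uniformly. For the converse I would follow the scheme of Proposition~\ref{Sprop:charapropositioncomposition}. The first step is to show the trace is fixed: linearity of $\mathcal{C}_{A,B}$ and convexity of $\Evd{x}$ make $p\mapsto\Tr[\mathcal{C}_{A,B}(pR+(1-p)R')]$ affine, while admissibility forces every value $\Tr[\mathcal{C}_{A,B}(\cdot)]=d_{\set{in}_z}$ to be a positive integer; evaluating at an irrational $p$ then shows $\Tr[\mathcal{C}_{A,B}(R)]$ is constant on $\Evd{x}$. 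Evaluating on $R_0:=I_{\set{in}_x}\otimes\ketbra{0}{0}_{\set{out}_x}$, for which $\mathcal{C}_{A,B}(R_0)=I_{\widetilde{\set{in}}_x}\otimes\ketbra{0}{0}_{\widetilde{\set{out}}_x}\in\Evd{\widetilde{\set{in}}_x\to\widetilde{\set{out}}_x}$, identifies the constant as $d_{\widetilde{\set{in}}_x}$, exactly as in Lemma~\ref{Slmm:tracefixeddin}.

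The second step fixes the partition. For the mixture $R_p=pR+(1-p)R_0$ with $p<d_{\widetilde{\set{in}}_x}^{-1}$, admissibility gives $\mathcal{C}_{A,B}(R_p)\in\Evd{\set{in}_{z_p}\to\set{out}_{z_p}}$; if some $C\in\widetilde{\set{out}}_x$ lay in $\set{in}_{z_p}$, feeding an input orthogonal to $\ket{0}$ on $C$ would annihilate the $R_0$ part and leave the channel-output trace $\le p\,\Tr[\mathcal{C}_{A,B}(R)]=p\,d_{\widetilde{\set{in}}_x}<1$, contradicting that a channel preserves normalisation; hence $\set{in}_{z_p}\subseteq\widetilde{\set{in}}_x$, and the fixed dimension $d_{\set{in}_{z_p}}=d_{\widetilde{\set{in}}_x}$ upgrades this to $\set{in}_{z_p}=\widetilde{\set{in}}_x$. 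Finally, since $\mathcal{C}_{A,B}(R_p)$ and $\mathcal{C}_{A,B}(R_0)$ both lie in $\Evd{\widetilde{\set{in}}_x\to\widetilde{\set{out}}_x}$, applying $\Tr_{\widetilde{\set{out}}_x}$ to $\mathcal{C}_{A,B}(R_p)=p\,\mathcal{C}_{A,B}(R)+(1-p)\mathcal{C}_{A,B}(R_0)$ cancels the $R_0$ term and yields $\Tr_{\widetilde{\set{out}}_x}\mathcal{C}_{A,B}(R)=I_{\widetilde{\set{in}}_x}$, which together with $\mathcal{C}_{A,B}(R)\ge 0$ gives $\mathcal{C}_{A,B}(R)\in\Evd{\widetilde{\set{in}}_x\to\widetilde{\set{out}}_x}$. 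I expect the crux to be precisely this constancy of the trace in the converse: in contrast with the composition setting, a contraction need not have constant trace (closing the lone channel $A\to B$ into a loop already produces a non-constant scalar), so constancy must be extracted from the admissibility hypothesis itself before the input/output partition can be recognised.
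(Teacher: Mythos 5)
Your proposal is correct and takes essentially the same approach as the paper: the paper's proof of this proposition is a one-line reference stating that it goes ``along the same lines'' as Proposition~\ref{Sprop:charapropositioncomposition}, and your argument is exactly that scheme transplanted to contractions --- the irrational-$p$ trace-constancy step reproduces Lemma~\ref{Slmm:tracefixeddin}, and your partition-fixing and cancellation steps reproduce the body of Proposition~\ref{Sprop:charapropositioncomposition}, with $R_0=I_{\set{in}_x}\otimes\ketbra{0}{0}_{\set{out}_x}$ playing the role of $Q_x$ and the fixed $\Phi_{AB}$ replacing the second map. The only ingredient you had to supply beyond a literal transcription is the output--output witness (in Lemma~\ref{Slmm:noinputinput} the orthogonal states $\ketbra{0}{0}_A$ and $\ketbra{1}{1}_A$ live in two \emph{different} maps, which has no direct analogue for a single map of type $x$), and your maximally entangled state on $AB$, forcing $d_A k=1$, handles that case correctly.
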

\begin{proof}
This proposition can be proved along the same
lines of the proof of Proposition
\ref{Sprop:charapropositioncomposition}.
\end{proof}

Thanks to Lemma \ref{Slmm:contrccomposeq} and
Proposition \ref{Sprop:admcontractions}, the study
of the admissible compositions between types
reduces to the study of the admissible
contractions for one type only. We therefore focus
on the latter.

Our next result exploits the
language that we developped in Section
\ref{Ssec:comb-struc-high} and it shows that the
characterization of the admissible contractions
is solved by a rather simple algorithm.
The first step is to prove the following relation
between the contraction of a map, as defined in Equation~\eqref{Seq:57}  and the
contraction of a labeled string which we defined
in  Definition \ref{Sstringcontr}

\begin{lemma}\label{Sstrsetcontr}
  Let $S$ a set of binary strings which are labeled by a
  set $\set{A}:= \{A_1 ,A_2 , \dots, A_n\}$ of indexes.
  Then we have, for any $A_i,A_j \in \set{A}$
  % and
  % $\set L_S=\underset{\textbf{b}\in
  %   S}{\bigoplus}\set L_{\textbf{b}}$, then we
  % have
\begin{align}
   T\in\set L_S\implies  \mathcal{C}_{Ai,Bj}(T) \in \set L_{\mathcal{C}_{Ai,Bj}(S)}
\end{align}
\end{lemma}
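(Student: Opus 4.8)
The plan is to exploit linearity to reduce to a single product operator, compute the contraction explicitly via the link product, and then match the resulting bit-cases against the string-contraction rule of Definition~\ref{Sstringcontr}.

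Since $\set{L}_S=\bigoplus_{\mathbf{b}\in S}\set{L}_{\mathbf{b}}$ and the contraction $\mathcal{C}_{Ai,Aj}(\cdot)=(\cdot)\ast\Phi_{AiAj}$ of Equation~\eqref{Seq:57} is linear, I would decompose $T=\sum_{\mathbf{b}\in S}T_{\mathbf{b}}$ with $T_{\mathbf{b}}\in\set{L}_{\mathbf{b}}$ and treat each summand separately. As each $\set{L}_{\mathbf{b}}=\bigotimes_k\set{L}_{b_{Ak}}$ is spanned by product operators $\bigotimes_k X_k$ with $X_k\in\set{L}_{b_{Ak}}$, linearity further reduces the claim to such products. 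Using the link-product formula~\eqref{Seq:21} with the two contracted systems as the shared set, together with the fact that $\Phi_{AiAj}$ is invariant under transposition on those systems, a short computation yields
\begin{align*}
\mathcal{C}_{Ai,Aj}\Big(\bigotimes_k X_k\Big)=\Tr[X_{Ai}X_{Aj}^{\theta}]\bigotimes_{k\neq i,j}X_k,
\end{align*}
so the contraction equals the scalar $\Tr[X_{Ai}X_{Aj}^{\theta}]$ times the untouched remainder, which lies in $\set{L}_{\mathbf{b}'}$ for $\mathbf{b}'$ the string $\mathbf{b}$ with its $i$-th and $j$-th bits deleted.

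The case analysis then reproduces Definition~\ref{Sstringcontr}. If $b_{Ai}=b_{Aj}$ (both factors traceless, or both proportional to $I$) the scalar is generically nonzero and the result lies in $\set{L}_{\mathbf{b}'}=\set{L}_{\mathcal{C}_{Ai,Aj}(\mathbf{b})}$. If $b_{Ai}\neq b_{Aj}$, exactly one factor is traceless while the other is a multiple of the identity, whence $\Tr[X_{Ai}X_{Aj}^{\theta}]=0$ and the term vanishes; this is consistent with $\mathcal{C}_{Ai,Aj}(\mathbf{b})=\varepsilon$, since the zero operator belongs to every subspace. In every case $\mathcal{C}_{Ai,Aj}(T_{\mathbf{b}})\in\set{L}_{\mathcal{C}_{Ai,Aj}(\mathbf{b})}\subseteq\set{L}_{\mathcal{C}_{Ai,Aj}(S)}$, and summing over $\mathbf{b}\in S$ gives the claim.

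The only genuine content, and hence the step to get right, is that the contraction vanishes precisely on the mismatched bit-strings: this is exactly the Hilbert--Schmidt orthogonality between traceless operators and multiples of the identity recorded in Equation~\eqref{SI+T}, and it is what makes the operator-level contraction faithfully track the combinatorial string contraction. Everything else is routine bookkeeping with the tensor-product structure.
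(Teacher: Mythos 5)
Your proposal is correct and follows essentially the same route as the paper's own proof: decompose $T$ by linearity into product operators spanning each $\set{L}_{\mathbf{b}}$, compute the contraction explicitly to get $\Tr[t_{a_i}t_{a_j}]\bigotimes_{k\neq i,j}t_{a_k}$ when the bits match and $0$ when they mismatch, and conclude by linearity. The only (harmless) difference is that you track the partial transpose from the link product explicitly, whereas the paper suppresses it; since transposition preserves both tracelessness and multiples of the identity, the case analysis is unaffected.
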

\begin{proof}
  Let us consider an arbitrary $T\in\set L_S$.
  We can expand $T$ as a linear combination as follows:
  \begin{align}
    \label{Seq:58}
&    T=\sum_{\textbf{b}\in S}T_{\textbf{b}} \\
  &  T_{\textbf{b}} =  \sum_{a_1,a_2 , \dots, a_n}
    c_{a_1,a_2 , \dots, a_n}  t_{a_1} \otimes
    t_{a_2} \otimes \cdots \otimes  t_{a_n}
  \end{align}
  where $ c_{a_1,a_2 , \dots, a_n}$ are real
  coefficients and $\{ t_{a_i} \}$, with
  $a_i = 1, \dots, \dim(\set{L}_{Ai})$, is a basis
  of $\set{L}_{b_Ai}$ (if $b_{Ai} =0$ then
  $\{ t_{a_i} \}$ is a basis of the space of
  traceless Hermitian operator on $\hilb{H}_{Ai}$,
  if $b_{Ai} =0$ then $\{ t_{a_i} \} = I_{Ai}$ ).
  Then we have
  \begin{align}
\mathcal{C}_{Ai,Aj}\left(\overset{n}{\underset{k=1}{\bigotimes}}t_{a_k}\right)=
\begin{cases}
  \underset{i\neq i,j}{\bigotimes} t_{a_k}\Tr[t_{a_i}t_{a_j}]\quad b_i=b_j\\
0\quad b_i\neq b_j
\end{cases}
\end{align}
It follows that
$\mathcal{C}_{Ai,Aj}\left(\overset{n}{\underset{k=1}{\bigotimes}}t_{a_k}\right)\in\set
L_{\mathcal{C}_{Ai,Aj}(\textbf{b})}$, which for
linearity implies
$\mathcal{C}_{Ai,Aj}(T)\in\set
L_{\mathcal{C}_{Ai,Aj}(S)}$.
\end{proof}

We now can express the result of Proposition
\ref{Sprop:admcontractions}
in terms of set of binary strings.

\begin{proposition}\label{Sadmissibilitystring}
  Let $x$ be a type and let $A\in\set{in}_x$ and
  $B\in\set{out}_x$, be equivalent non-trivial
  elementary types.  Then $\mathcal{C}_{A,B}(x)$ is
  admissible if and only if
  $\mathcal{C}_{A,B}(D_x)\subseteq
  D_{\widetilde{\set{in}}_x\rightarrow\widetilde{\set{out}}_x}$,
  where
  $\widetilde{\set{in}}_x:=\set{out}_x\setminus A$
  and
  $\widetilde{\set{out}}_x:=\set{out}_x\setminus
  B$
\end{proposition}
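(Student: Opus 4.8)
The plan is to reduce the statement to Proposition \ref{Sprop:admcontractions} and then translate the resulting operator condition into the combinatorial language of strings via Lemma \ref{Slmm:charactstring} and Lemma \ref{Sstrsetcontr}. Write $w := \widetilde{\set{in}}_x \rightarrow \widetilde{\set{out}}_x$. Since $A \in \set{in}_x$ and $B \in \set{out}_x$, Proposition \ref{Sprop:admcontractions} tells us that $\mathcal{C}_{A,B}(x)$ is admissible if and only if $\mathcal{C}_{A,B}(R) \in \set{T}_1(w)$ for every $R \in \set{T}_1(x)$. Hence it suffices to prove that this last condition is equivalent to $\mathcal{C}_{A,B}(D_x) \subseteq D_w$.

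First I would make explicit how $\mathcal{C}_{A,B}$ acts on the two parts of the decomposition of Lemma \ref{Slmm:charactstring}. By that lemma every $R \in \set{T}_1(x)$ has the form $R = \lambda_x I_x + X$ with $X \in \set{L}_{D_x}$ and $R \geq 0$, and $\set{T}_1(w)$ is described analogously by $\lambda_w$ and $D_w$. On the identity part, a direct computation gives $\mathcal{C}_{A,B}(\lambda_x I_x) = \lambda_x \Tr[\Phi_{AB}]\, I_w = \lambda_x d_A I_w$; using Equation \eqref{Seq:16} together with $\set{in}_x = \{A\} \cup \widetilde{\set{in}}_x$ one checks $\lambda_x d_A = d_{\widetilde{\set{in}}_x}^{-1} = \lambda_w$, so the identity part is sent exactly to $\lambda_w I_w$. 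On the traceless part, Lemma \ref{Sstrsetcontr} gives $\mathcal{C}_{A,B}(X) \in \set{L}_{\mathcal{C}_{A,B}(D_x)}$ for every $X \in \set{L}_{D_x}$. Since $R, \Phi_{AB} \geq 0$, Equation \eqref{Seq:23} ensures $\mathcal{C}_{A,B}(R) \geq 0$. Collecting these facts, $\mathcal{C}_{A,B}(R) = \lambda_w I_w + \mathcal{C}_{A,B}(X)$ is a positive operator whose traceless part lies in $\set{L}_{\mathcal{C}_{A,B}(D_x)}$, so it belongs to $\set{T}_1(w)$ precisely when $\mathcal{C}_{A,B}(X) \in \set{L}_{D_w}$.

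It then remains to show that ``$\mathcal{C}_{A,B}(X) \in \set{L}_{D_w}$ for all $X \in \set{L}_{D_x}$'' is equivalent to the inclusion of string sets $\mathcal{C}_{A,B}(D_x) \subseteq D_w$. The $(\Leftarrow)$ direction is immediate from Lemma \ref{Sstrsetcontr}. For $(\Rightarrow)$ I would argue string by string: for each $\mathbf{b} \in D_x$ with $b_A = b_B$ and each $T_{\mathbf{b}} \in \set{L}_{\mathbf{b}}$, the operator $R = \lambda_x I_x + \epsilon T_{\mathbf{b}}$ is positive for sufficiently small $\epsilon$ and hence lies in $\set{T}_1(x)$ (the same scaling trick used in Lemma \ref{Slmm:typesinclusionsstringslambda}); applying the hypothesis and linearity forces $\mathcal{C}_{A,B}(T_{\mathbf{b}}) \in \set{L}_{D_w}$. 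Because $\dim\hilb{H}_A = \dim\hilb{H}_B$, the contraction maps $\set{L}_{\mathbf{b}}$ onto all of $\set{L}_{\mathcal{C}_{A,B}(\mathbf{b})}$ (one can choose matched traceless, resp. identity, factors on $A,B$ with nonzero Hilbert--Schmidt overlap), so this yields $\mathcal{C}_{A,B}(\mathbf{b}) \in D_w$. The strings with $b_A \neq b_B$ are annihilated by the contraction, contributing only $0$, and therefore impose no constraint, matching the convention, illustrated by the worked example after Definition \ref{Sstringcontr}, that such strings are discarded from $\mathcal{C}_{A,B}(D_x)$.

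The main obstacle I expect is the bookkeeping in this last step: one must carefully separate the strings with $b_A = b_B$, which are genuinely constrained and on which the contraction is surjective, from those with $b_A \neq b_B$, which vanish, and confirm that the scaling argument legitimately probes every basis element of each $\set{L}_{\mathbf{b}}$. The normalization identity $\lambda_x d_A = \lambda_w$ is the other place where a small but essential computation is needed, since without it the image of the identity part would not align with $\lambda_w I_w$ and the membership in $\set{T}_1(w)$ would fail.
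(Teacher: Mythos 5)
Your proof is correct and takes essentially the same route as the paper's: both reduce the statement to Proposition~\ref{Sprop:admcontractions}, decompose $R=\lambda_x I_x + X$ via Lemma~\ref{Slmm:charactstring}, obtain sufficiency from Lemma~\ref{Sstrsetcontr} together with positivity of the link product, and obtain necessity by the $\epsilon$-scaling trick with matched factors $t_A=t_B$ (a traceless difference of projectors, resp.\ the identity) to witness each contracted string. The only differences are cosmetic: you run the necessity direction string-by-string instead of by contradiction, and you make explicit the normalization identity $\mathcal{C}_{A,B}(\lambda_x I_x)=\lambda_w I_w$ and the null-string convention for $b_A\neq b_B$, both of which the paper uses implicitly.
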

\begin{proof}
  We know according to
  Proposition~\ref{Sprop:admcontractions}, that
  $ \mathcal{C}_{A,B}(x)$ is admissible
  $\iff\{\mathcal{C}_{A,B}(R),\;
  R\in\set{T}_1(x)\}\subseteq \set
  T_1(\widetilde{\set{in}}_x\rightarrow\widetilde{\set{out}}_x)=\mathscr{M}(\widetilde{\lambda},D_{\widetilde{\set{in}}_x\rightarrow\widetilde{\set{out}}_x})$,
  where
  $\widetilde{\lambda}={d_{\widetilde{\set{out}_x}}}^{-1}=d_B(d_{{\set{out}_x}})^{-1}$.

  First, let as assume that 
  $\mathcal{C}_{A,B}(D_x)\subseteq
  D_{\widetilde{\set{in}}_x\rightarrow\widetilde{\set{out}}_x}$.
  Then 
  we have
  $\mathscr{M}(\widetilde{\lambda},\mathcal{C}_{A,B}(D_x))\subseteq\mathscr{M}(\widetilde{\lambda},D_{\widetilde{\set{in}}_x\rightarrow\widetilde{\set{out}}_x})=\set
  T_1(\widetilde{\set{in}}_x\rightarrow\widetilde{\set{out}}_x)$.
  Let $R\in\set T_1(x)$, then
  $R=\lambda_xI+T \geq 0$, where
  $T\in \set L_{D_x}$ and $\lambda_x = d_{\set{out}_x}$.  From to
  Lemma~\eqref{Sstrsetcontr}, we have
  $\mathcal{C}_{A,B}(T)\in\set
  L_{\mathcal{C}_{A,B}(D_x)}$. Moreover, we have
  $\mathcal{C}_{A,B}(\lambda
  I)=\widetilde{\lambda}I_{\widetilde{\set{in}}_x\rightarrow\widetilde{\set{out}}_x}$.
  Since the contraction of a positive operator is
  still positive, we have that $\forall R\in \set T_1(x)$,
  $\mathcal{C}_{AB}(R)\in\mathscr{M}(\widetilde{\lambda},\mathcal{C}_{AB}(D_x))\subseteq
  \mathscr{M}(\widetilde{\lambda},D_{\widetilde{\set{in}}_x\rightarrow\widetilde{\set{out}}_x})$,
  that is $\mathcal{C}_{A,B}(x)$ is admissible.

  Let us now prove
that   if $\mathcal{C}_{A,B}(x)$ is
  admissible then
  $\mathcal{C}_{A,B}(D_x)\subseteq
  D_{\widetilde{\set{in}}_x\rightarrow\widetilde{\set{out}}_x}$.
By contradiction, let us assume that 
  $\mathcal{C}_{A,B}(D_x)\nsubseteq
  D_{\widetilde{\set{in}}_x\rightarrow\widetilde{\set{out}}_x}$.

  Then there exists a non empty string
  $\widetilde{\textbf{b}} \in \mathcal{C}_{A,B}(D_x) $ such that
 $\widetilde{\textbf{b}}\not
 \in
  D_{\widetilde{\set{in}}_x\rightarrow\widetilde{\set{out}}_x}$.
Let  
$\textbf{b}\in D_x$, be a string such that
$\mathcal{C}_{A,B}(\textbf{b}) = \widetilde{\textbf{b}}$.
Since
  $\widetilde{\textbf{b}}\neq {\varepsilon}$,
  then we have either $b_A = b_B = 0$ or $b_A =
  b_B = 1$.
Let us assume $b_A = b_B = 0$ and
let us consider the following operator
\begin{align}
  \label{Seq:59}
  \begin{aligned}
  T : = t_1\otimes\cdots\otimes
  t_A\otimes\cdots\otimes t_B\otimes\cdots\otimes
  t_n \in \set L_{\textbf{b}}\\
 t_i \in \set{L}_{b_i} \quad   t_A = t_B =
 \ketbra{0}{0} -\ketbra{1}{1}
  \end{aligned}
\end{align}
where $\ket{0}$ and $\ket{1}$ are orthonormal
states (we remember that a choice of an isomorphism between
$\hilb{H}_A $ and $\hilb{H}_B$ is implicitly
assumed in Equation \eqref{Seq:60}).
Then we have that $\mathcal{C}_{A,B}(T) \neq 0$.  
Consequently, considering
  $R=\lambda_x I+\epsilon T$, with
  $\epsilon\in \mathbb{R}$ small enough in order
  to have $R \geq 0$, we obtain that
  $R\in \set T_1(x)$ and
  $\mathcal{C}_{AB}(R)=\widetilde{\lambda}I_{\widetilde{\set{in}}_x\rightarrow\widetilde{\set{out}}_x}+\epsilon \,
  \mathcal{C}_{A,B}(T)$. Hence, since
  $\mathcal{C}_{A,B}(T)\in \set L_{\widetilde{\textbf{b}}}$ we
have that 
  $\mathcal{C}_{A,B}(R)\notin \set
  T_1(\widetilde{\set{in}}_x\rightarrow\widetilde{\set{out}}_x)$,
  which contradicts the admissibility of
  $\mathcal{C}_{A,B}(x)$.
 The same proof applies to the case
  $b_A=b_B=1$ by consider the operator 
 \begin{align}
  \label{}
  \begin{aligned}
  T : = t_1\otimes\cdots\otimes
  t_A\otimes\cdots\otimes t_B\otimes\cdots\otimes
  t_n \in \set L_{\textbf{b}}\\
 t_i \in \set{L}_{b_i} \quad   t_A = t_B =
I
  \end{aligned}
\end{align}
  % such that $R=\lambda_x I+ T\geq 0$,  \textcolor{blue}{ue, il fattore epsilon è superfluo in questo caso, giusto?}
  %  we have $R\in \set T_1(x)$ and
  %  $\mathcal{C}_{AB}(R)=\widetilde{\lambda}I_{\widetilde{\set{in}}_x\rightarrow\widetilde{\set{out}}_x}+ 
  % \mathcal{C}_{A,B}(T)$.
  % Knowing that $\mathcal{C}_{A,B}(T)\in \set L_{\widetilde{\textbf{b}}}$,
  % we have $\mathcal{C}_{A,B}(T)\notin L_{D_{\widetilde{\set{in}}_x\rightarrow\widetilde{\set{out}}_x}}\implies\mathcal{C}_{A,B}(T)\notin T_1(\widetilde{\set{in}}_x\rightarrow\widetilde{\set{out}}_x)$
  % reaching the contradiction again.
\end{proof}

\begin{corollary}
  \label{Scoroll:criticstringsetAB}
  Let $x$ be a type and let $A\in \set{in}_x$,
  $B\in \set{out}_x$ be non-trivial equivalent
  elementary types. Let us define the following set of
  binary strings
  \begin{align}
    \label{Seq:61}
    S^x_{AB}:=
    W_{\widetilde{\set{in}}}0_A
    \textbf{e}_{\widetilde{\set{out}}}0_B. 
  \end{align}
where we remind that $\widetilde{\set{in}} :=
\set{in}_x\setminus B$ and
$\widetilde{\set{out}} :=
\set{out}_x\setminus B$.
  Then
  $\mathcal{C}_{A,B}(x)$ is admissible if and only
  if $D_x\cap S^x_{AB}=\emptyset$.%  and we call
  % $S^x_{AB}$ the critical string set.
\end{corollary}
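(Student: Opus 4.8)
The plan is to reduce everything to the string-level criterion of Proposition~\ref{Sadmissibilitystring} and then settle a small combinatorial question about the set $D_x$. By that proposition, $\mathcal{C}_{A,B}(x)$ is admissible if and only if $\mathcal{C}_{A,B}(D_x) \subseteq D_{\widetilde{\set{in}}_x \rightarrow \widetilde{\set{out}}_x}$, so the first task is to make the right-hand side explicit. Applying Lemma~\ref{Slmm:charactstring} to the type $\widetilde{\set{in}}_x \rightarrow \widetilde{\set{out}}_x$, whose input and output blocks are plain collections of elementary types, and using $D_{\widetilde{\set{in}}_x} = T_{\widetilde{\set{in}}_x}$ together with $\overline{T}_{\widetilde{\set{in}}_x} = \emptyset$ exactly as in the proof of Proposition~\ref{Spropositionchannel}, one obtains $D_{\widetilde{\set{in}}_x \rightarrow \widetilde{\set{out}}_x} = W_{\widetilde{\set{in}}_x} T_{\widetilde{\set{out}}_x}$.

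Next I would unfold the string contraction of Definition~\ref{Sstringcontr}. Writing a generic $\textbf{b} \in D_x$ according to its $\widetilde{\set{in}}_x$-, $A$-, $\widetilde{\set{out}}_x$- and $B$-parts, the contraction $\mathcal{C}_{A,B}$ deletes the bits $b_A,b_B$ when $b_A = b_B$ and otherwise returns the null string (equivalently contributes the zero operator, since $\Tr[t_A t_B]=0$ when $b_A \neq b_B$). Hence only strings with $b_A=b_B$ produce genuine reduced strings, and the inclusion $\mathcal{C}_{A,B}(D_x)\subseteq W_{\widetilde{\set{in}}_x} T_{\widetilde{\set{out}}_x}$ can be violated only by a string $\textbf{b} \in D_x$ with $b_A=b_B$ whose $\widetilde{\set{out}}_x$-part equals $\textbf{e}_{\widetilde{\set{out}}_x}$, i.e.\ whose image falls in the forbidden set $W_{\widetilde{\set{in}}_x}\textbf{e}_{\widetilde{\set{out}}_x}$.

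The step I expect to carry the real content is ruling out the case $b_A=b_B=1$. A violating string with $b_A=b_B=1$ and $\widetilde{\set{out}}_x$-part $\textbf{e}_{\widetilde{\set{out}}_x}$ would have full output block $\textbf{e}_{\widetilde{\set{out}}_x}1_B = \textbf{e}_{\set{out}_x}$. But the channel inclusion $x \subseteq \set{in}_x \rightarrow \set{out}_x$ of Proposition~\ref{Spropositionchannel} forces $D_x \subseteq W_{\set{in}_x} T_{\set{out}_x}$, and $T_{\set{out}_x}$ excludes $\textbf{e}_{\set{out}_x}$; so no string of $D_x$ can have all-ones output, and the $b_A=b_B=1$ case is vacuous. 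Therefore the only strings of $D_x$ that can break the inclusion are those with $b_A=b_B=0$, arbitrary $\widetilde{\set{in}}_x$-part and $\widetilde{\set{out}}_x$-part equal to $\textbf{e}_{\widetilde{\set{out}}_x}$ --- precisely the elements of $S^x_{AB} = W_{\widetilde{\set{in}}_x}0_A\textbf{e}_{\widetilde{\set{out}}_x}0_B$.

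Finally I would assemble both implications. If $D_x \cap S^x_{AB} = \emptyset$ then no string of $D_x$ reduces into $W_{\widetilde{\set{in}}_x}\textbf{e}_{\widetilde{\set{out}}_x}$, hence $\mathcal{C}_{A,B}(D_x) \subseteq W_{\widetilde{\set{in}}_x} T_{\widetilde{\set{out}}_x}$ and admissibility follows from Proposition~\ref{Sadmissibilitystring}. Conversely, any $\textbf{b} \in D_x \cap S^x_{AB}$ has $b_A=b_B=0$, so choosing $t_A=t_B = \ketbra{0}{0}-\ketbra{1}{1}$ makes $\Tr[t_A t_B]\neq 0$ and, by Lemma~\ref{Sstrsetcontr}, yields a nonzero contracted operator in $\set{L}_{\mathcal{C}_{A,B}(\textbf{b})}$ with $\mathcal{C}_{A,B}(\textbf{b}) \in W_{\widetilde{\set{in}}_x}\textbf{e}_{\widetilde{\set{out}}_x}$, which witnesses the failure of the inclusion and hence non-admissibility. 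The only genuinely delicate point in the whole argument is the output-block exclusion of the third paragraph; everything else is bookkeeping with the concatenation and contraction rules for string sets.
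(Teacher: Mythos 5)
Your proof is correct and takes essentially the same route as the paper's: both reduce the claim to Proposition~\ref{Sadmissibilitystring}, compute $D_{\widetilde{\set{in}}_x\rightarrow\widetilde{\set{out}}_x}=W_{\widetilde{\set{in}}_x}T_{\widetilde{\set{out}}_x}$, identify the strings whose contraction lands in the forbidden set $W_{\widetilde{\set{in}}_x}\mathbf{e}_{\widetilde{\set{out}}_x}$, and exclude the $b_A=b_B=1$ branch by the channel inclusion $D_x\subseteq W_{\set{in}_x}T_{\set{out}_x}$ from Proposition~\ref{Spropositionchannel}, leaving exactly $S^x_{AB}$ as the obstruction. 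Your explicit observation that strings with $b_A\neq b_B$ contract to the zero operator and therefore cannot witness a violation makes precise a point the paper leaves implicit in the statement of Proposition~\ref{Sadmissibilitystring}, but the substance of the argument is the same.
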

\begin{proof}

  If $\mathcal{C}_{AB}(x)$ is admissible, then
  $\textbf{e}_{\set{in}_x\setminus
  A}T_{\set{out}_x\setminus
  B}\subseteq\mathcal{C}_{AB}(D_x)\subseteq
  W_{\set{in}_x\setminus
  A}T_{\set{out}_x\setminus B}$ according to
  Proposition~\ref{Spropositionchannel}.
  Furthermore,
  $\mathcal{C}_{AB}(S^x_{AB})=W_{\set{in}_x\setminus
  A}\mathbf{e}_{\set{out}_x\setminus B}$, and
  then 
  $\mathcal{C}_{AB}(S^x_{AB})\cap\mathcal{C}_{AB}(D_x)=\emptyset$,
  implies that
  $D_x\cap S^x_{A,B}=\emptyset$.

  Let us now prove the implication
  $D_x\cap S^x_{A,B}=\emptyset \implies
  \mathcal{C}_{A,B}(x)$ is admissible.  By
  contradiction, let us assume that
  $\mathcal{C}_{A,B}(x)$ is not admissible.  By
  Lemma \ref{Sadmissibilitystring} this implies
  that
  % $\mathcal{C}_{A,B}(D_x) \not \subseteq
  % D_{\widetilde{\set{in}}_x\rightarrow\widetilde{\set{out}}_x}
  % \neq \emptyset$,
  $\mathcal{C}_{A,B}(D_x) \cap
  D^{\perp}_{\widetilde{\set{in}}_x\rightarrow\widetilde{\set{out}}_x}
  \neq \emptyset$.  Therefore
  $D_x \cap
  \mathcal{C}^{-1}_{A,B}(D^{\perp}_{\widetilde{\set{in}}_x\rightarrow\widetilde{\set{out}}_x})
  \neq \emptyset$

Since $
  D_{\widetilde{\set{in}}_x\rightarrow\widetilde{\set{out}}_x}^\perp
  = W_{\set{in}_x}
  \textbf{e}_{\set{out}_x}$
  we have that
$
\mathcal{C}^{-1}_{A,B}(D^{\perp}_{\widetilde{\set{in}}_x\rightarrow\widetilde{\set{out}}_x})
=  S^x_{A,B} \cup W_{\set{in}_x\setminus
  A}1_A\textbf{e}_{\set{out}_x}. $.
From Proposition~\ref{Spropositionchannel} we already know
that
$D_x \cap W_{\set{in}_x} \textbf{e}_{\set{out}_x} =
  \emptyset $
and therefore we have 
$D_x \cap S^x_{A,B}  \neq \emptyset $ which
contradicts the hypothesis.
\end{proof}
% Since $S^x$The thesis follows by 
% i.e
% Let us assume that
% Let us consider $D_x\cap S^x_{AB}=\emptyset$, then
% knowing that $D_x\subset W_{\set{in}_x}T_{\set{out}}$, the 
% only set of strings that could break the

% The thesis follows from Proposition
% \ref{Sadmissibilitystring} by noticing that
% $S^x_{A,B}$ is the complemement of  $ D_{\widetilde{\set{in}}_x\rightarrow\widetilde{\set{out}}_x}$.

\begin{corollary}
\label{Scor:badstringmultiplecontraction}
  Let $x$ be a type and let
$\set{H} := \{(A_i, A'_i)\}_{i=1}^n$
be a set of mutually disjoint pairs
$(A_i, A'_i) \in \set{Ele}_x \times \set{Ele}_x$,
of equivalent non
trivial elementary systems
such that $A_i \in \set{in}_x$ and
$A'_j \in \set{out}_x$.
Let us define the following sets:
\begin{align}
  \label{Seq:64}
  &\begin{aligned}
 & \begin{aligned}
    \set{in}_{\set{H}} := \{ &A_i \in \set{in}_x
    \mbox{ s.t.  }  (A_i , A'_j) \in \set{H}  \\
    &\mbox{ for some } A'_j  \in \set{out}_x\}
  \end{aligned}
  \\
   & \begin{aligned}
    \set{out}_{\set{H}} := \{ &A'_j \in \set{out}_x
    \mbox{ s.t.  }  (A_i , A'_j) \in \set{H}  \\
    &\mbox{ for some } A_i \in \set{in}_x\}
  \end{aligned}
  \\
  &\widetilde{\set{in}} : = \set{in}_x \setminus
  \set{in}_{\set{H}} \qquad
  \widetilde{\set{out}} : = \set{out}_x \setminus
  \set{out}_{\set{H}}   
  \end{aligned} \\
  % \begin{align}
  %   \label{Seq:63}
  %   S^x_{\set{H}}:=
  %   \bigcup_{
  %   \mathbf{b}^{({\set{in}_{\set{H}}})} ,
  %   {\mathbf{b}'}^{({\set{out}_{\set{H}}})} }
  %   W_{\widetilde{\set{in}}} \,
  %   \mathbf{b}^{({\set{in}_{\set{H}}})}
  %   \,
  %   \textbf{e}_{\widetilde{\set{out}}}
  %   \,{\mathbf{b}'}^{({\set{out}_{\set{H}}})} 
      %     \end{align}
  &\begin{aligned}
    \label{Seq:63}
   & S^x_{\set{H}}:=
    \bigcup_{
    \mathbf{b} ,
    {\mathbf{b}'} }
    W_{\widetilde{\set{in}}} \,
    \mathbf{b}\,
    \mathbf{e}_{\widetilde{\set{out}}}
    \,{\mathbf{b}'}
    \end{aligned}
  \end{align}
  where
  % $n = |\set{H}| = |\set{in}_{\set{H}}| =
% |\set{out}_{\set{H}}| $
% is the cardinality of the set $\set{H}$
% and
  the binary strings $\mathbf{b}$ and
  $\mathbf{b}'$
  are such that 
  $  \mathbf{b} \in
    T_{\set{in}_{\set{H}}}$,
$    \mathbf{b}' \in
T_{{\set{out}_{\set{H}}} } $ and
$b ^{({\set{in}_{\set{H}}})} _{Ai} =
    b^{({\set{out}_{\set{H}}})}_{A'j} $ if 
    $  (A_i , A'_j) \in \set{H} $.
    Then we have that
    \begin{align}
      \label{Seq:66} 
      \mathcal{C}_{\set H}(x) \mbox{ is admissible}
      \iff
      D_x \cap S^x_{\set{H}} = \emptyset.
    \end{align}    
% \begin{align}
%   \label{Seq:65}
%   &  \mathbf{b}^{({\set{in}_{\set{H}}})} \in
%     T_{\set{in}_{\set{H}}}  \qquad
%     \mathbf{b}^{({\set{out}_{\set{H}}})} \in
%     T_{{\set{out}_{\set{H}}} } \\
%   &(A_i , A'_j) \in \set{H} \implies
%     b ^{({\set{in}_{\set{H}}})} _{Ai} = b^{({\set{out}_{\set{H}}})}_{A'j} .
% \end{align}
  \end{corollary}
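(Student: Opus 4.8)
The plan is to mirror the proof of Corollary~\ref{Scoroll:criticstringsetAB}, which is exactly the one-pair case $\set{H} = \{(A,B)\}$, and to replace the bookkeeping of a single contracted pair by that of the whole set $\set{H}$. I would first record the multiple-contraction analogue of Lemma~\ref{Sadmissibilitystring}: using that the contractions commute (Equation~\eqref{Seq:22}), Lemma~\ref{Sstrsetcontr}, and the extension of Proposition~\ref{Sprop:admcontractions} to several pairs (sketched in the main text), one gets that $\mathcal{C}_{\set{H}}(x)$ is admissible if and only if $\mathcal{C}_{\set{H}}(D_x) \subseteq D_{\widetilde{\set{in}} \to \widetilde{\set{out}}}$. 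Since the right-hand side is a channel type, its string set is $W_{\widetilde{\set{in}}} T_{\widetilde{\set{out}}}$ and its complement inside the strings labelled by $\widetilde{\set{in}} \cup \widetilde{\set{out}}$ is $D^{\perp}_{\widetilde{\set{in}} \to \widetilde{\set{out}}} = W_{\widetilde{\set{in}}} \textbf{e}_{\widetilde{\set{out}}}$, exactly as in the single-pair corollary.

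Next I would negate the inclusion and pull it back through the contraction. The inclusion $\mathcal{C}_{\set{H}}(D_x) \subseteq D_{\widetilde{\set{in}} \to \widetilde{\set{out}}}$ fails precisely when some $\textbf{b} \in D_x$ has a surviving image $\mathcal{C}_{\set{H}}(\textbf{b}) \in D^{\perp}_{\widetilde{\set{in}} \to \widetilde{\set{out}}}$ --- strings whose bits disagree on some contracted pair contract to $0$ and are irrelevant --- which is the statement $D_x \cap \mathcal{C}^{-1}_{\set{H}}(D^{\perp}_{\widetilde{\set{in}} \to \widetilde{\set{out}}}) \neq \emptyset$. The whole corollary therefore reduces to identifying the preimage $\mathcal{C}^{-1}_{\set{H}}(D^{\perp}_{\widetilde{\set{in}} \to \widetilde{\set{out}}})$ and intersecting it with $D_x$.

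The computational heart, and the step I expect to be most delicate, is this preimage. By Definition~\ref{Sstringcontr}, a string $\textbf{b}$ labelled by $\set{Ele}_x = \widetilde{\set{in}} \cup \set{in}_{\set{H}} \cup \widetilde{\set{out}} \cup \set{out}_{\set{H}}$ has $\mathcal{C}_{\set{H}}(\textbf{b}) \in W_{\widetilde{\set{in}}} \textbf{e}_{\widetilde{\set{out}}}$ exactly when it restricts to $\textbf{e}_{\widetilde{\set{out}}}$ on $\widetilde{\set{out}}$, is arbitrary on $\widetilde{\set{in}}$, and has matching bits $b_{A_i} = b_{A'_i}$ on every pair $(A_i, A'_i) \in \set{H}$. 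I would then split this preimage according to whether the restriction to the contracted inputs $\set{in}_{\set{H}}$ is the all-ones string or not. The nontrivial branch, where this restriction lies in $T_{\set{in}_{\set{H}}}$ and is matched by a restriction in $T_{\set{out}_{\set{H}}}$, is exactly $S^x_{\set{H}}$ as defined in Equation~\eqref{Seq:63}; the trivial branch, where $\textbf{b}$ is $\textbf{e}$ on all of $\set{in}_{\set{H}}$ and hence, by matching, on all of $\set{out}_{\set{H}}$, is contained in $W_{\set{in}_x} \textbf{e}_{\set{out}_x}$.

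To close, I would invoke Proposition~\ref{Spropositionchannel}, which gives $D_x \subseteq D_{\set{in}_x \to \set{out}_x} = W_{\set{in}_x} T_{\set{out}_x}$; the strings in $W_{\set{in}_x} T_{\set{out}_x}$ are non-identity on the outputs and so are disjoint from $W_{\set{in}_x} \textbf{e}_{\set{out}_x}$, which means the trivial branch never meets $D_x$. Hence $D_x \cap \mathcal{C}^{-1}_{\set{H}}(D^{\perp}_{\widetilde{\set{in}} \to \widetilde{\set{out}}}) = D_x \cap S^x_{\set{H}}$, and chaining the equivalences yields $\mathcal{C}_{\set{H}}(x)$ admissible $\iff D_x \cap S^x_{\set{H}} = \emptyset$. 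The real obstacle is getting the matched-bit combinatorics of the preimage exactly right, so that the nontrivial branch reproduces the definition of $S^x_{\set{H}}$ verbatim; everything else is the same bookkeeping already carried out for a single pair.
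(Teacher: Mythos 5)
Your proposal is correct and follows essentially the same route as the paper: reduce admissibility of $\mathcal{C}_{\set{H}}$ to the string-level inclusion $\mathcal{C}_{\set{H}}(D_x)\subseteq D_{\widetilde{\set{in}}\to\widetilde{\set{out}}}$, pull back $D^\perp_{\widetilde{\set{in}}\to\widetilde{\set{out}}}=W_{\widetilde{\set{in}}}\mathbf{e}_{\widetilde{\set{out}}}$ through the contraction, and discard the matched all-ones branch using $D_x\subseteq W_{\set{in}_x}T_{\set{out}_x}$ from Proposition~\ref{Spropositionchannel}. If anything, your explicit split of the preimage into the $S^x_{\set{H}}$ branch and the all-ones branch is more careful than the paper's own terse multi-pair proof (which states the survival condition with $T_{\set{in}_{\set{H}}},T_{\set{out}_{\set{H}}}$ and leaves the all-ones case implicit), and it faithfully mirrors the treatment the paper gives in the single-pair Corollary~\ref{Scoroll:criticstringsetAB}.
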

  \begin{proof}
    This result is a rather straightforward
    generalization of Corollary
    \ref{Scoroll:criticstringsetAB}.
    $\mathcal{C}_{\set{H}}$ is not admissible if
    and only if there exists a string $\mathbf{s} \in D_x$
    such that
    $\mathcal{C}_{\set{H}}(\mathbf{s}) \not \in
    D_{\widetilde{\set{in}} \to
      \widetilde{\set{out}} }$ i.e
    $\mathcal{C}_{\set{H}}(\mathbf{s}) =
    \mathbf{w}_{\widetilde{\set{in}}}\mathbf{e}_{\widetilde{\set{out}}}$
    for some string $\mathbf{w}$.
    However,
    $    \mathcal{C}_{\set{H}}(\mathbf{s}) \neq \varepsilon$
    if and only 
    $\mathbf{s} \in   
    \bigcup_{
    \mathbf{b} ,
    {\mathbf{b}'} }
    W_{\widetilde{\set{in}}} \,
    \mathbf{b}\,
    W_{\widetilde{\set{out}}}
    \,{\mathbf{b}'}
    $
      where
  $  \mathbf{b} \in
    T_{\set{in}_{\set{H}}}$,
$    \mathbf{b}' \in
T_{{\set{out}_{\set{H}}} } $ and
$b ^{({\set{in}_{\set{H}}})} _{Ai} =
    b^{({\set{out}_{\set{H}}})}_{A'j} $ if 
    $  (A_i , A'_j) \in \set{H} $.
    Then we have that
     $\mathcal{C}_{\set{H}}$ is not admissible if
     and only if there exist a string $\mathbf{s} \in D_x$ such that
     $\mathbf{s} \in S^x_{\set{H}}$.     
    %that can be
    % proved by induction on the cardinality $\# \set{H}$ of
    % $\set{H}$.
    % If $\# \set{H} = 1$ the result holds from Corollary
    % \ref{Scoroll:criticstringsetAB}.
    % Let us suppose that the thesis holds for
    % $\# \set{H} = n-1$, i.e.
    % $\set{H}' = \{(A_i, A'_i)\}_{i=1}^{n-1}$,
    % and let us consider $\set{H} = \{(A_i, A'_i)\}_{i=1}^{n}$.
    % Then $S^x_{\set{H}} =  $
    % Then
    % $\mathcal{C}_{A_n,A'_n}(x) $ is admissible  if and only if
    % $S^x_{A_n,A'_n} \cap D_x = \emptyset $ 
  \end{proof}

This result  shows that the problem of checking
whether a given set of contractions (or a type
composition)
is admissible, is solved by a simple algorithm
which check whether the set $D_x$ contains some of
the string of $S^x_{\set{H}}$.
For sake of clarity, let us see an explicit example:
\begin{align}
  \label{Seq:67}
  \begin{aligned}
 & x = (A\to B) \otimes (C \to D), \\
 & \begin{aligned}
     D_x = \{ &0_A0_B0_C0_D,   1_A0_B0_C0_D, \\
  &0_A0_B1_C0_D,
  1_A0_B1_C0_D,  \\
  &1_A1_B0_C0_D,
  1_A1_B1_C0_D, \\
  &0_A0_B1_C1_D,
  1_A0_B1_C1_D\}   ,
\end{aligned} \\
& \set{in}_x = \{A, C  \} \quad \set{out}_x = \{B,
D  \} ,\\
&\begin{aligned}
  S^{x}_{A,B} = &\{ 0_C0_A1_D0_B, 1_C0_A1_D0_B \} =\\
  =&\{ 0_A0_B0_C1_D, 0_A0_B1_C1_D \},
\end{aligned}\\
&\begin{aligned}
  S^{x}_{C,B} = &\{ 0_A0_C1_D0_B, 1_A0_C1_D0_B \} =\\
  =&\{ 0_A0_B0_C1_D, 1_A0_B0_C1_D \},
\end{aligned}\\
&\begin{aligned}
  S^{x}_{A,D} = &\{ 0_C0_A1_B0_D, 1_C0_A1_B0_D \} =\\
  =&\{ 0_A1_B0_C0_D, 0_A1_B1_C0_D \},
\end{aligned}\\
&\begin{aligned}
  &\set{H} := \{ (C,B) , (A,D)\}, \\
  &\begin{aligned}
  S^{x}_{\set{H}} = &\{ 0_A0_C0_B0_D, 
  0_A1_C0_B1_D, 1_A0_C1_B0_D,\}  \\
  = & \{ 0_A0_B0_C0_D ,  0_A0_B1_C1_D, 1_A1_B0_C0_D   \}.
\end{aligned}
\end{aligned}
  \end{aligned}
\end{align}
We then have that $\mathcal{C}_{C,B}(x)$ and
$\mathcal{C}_{A,D}(x)$ are admissible while
$\mathcal{C}_{A,B}(x) $ and
$\mathcal{C}_{\set{H}}(x) $ are not.  This example
also clarify that the admissibility of a set of
contraction cannot be reduced to the admissibility
of the single contractions in the set. Indeed, it
is possible to make either $\mathcal{C}_{C,B}(x)$
or $\mathcal{C}_{A,D}(x)$ but not both.  Moreover,
corollary \ref{Scor:badstringmultiplecontraction}
does exclude the possibility that the individual
contractions (or a subset of contractions) of a
set are not admissible but the set as whole is
admissible.
However, our next result, which is stated in
Proposition \ref{Sprop:subsetofcontractionarenotadmissibleyoucannotrecover},
rules out this
possibility. 
The key to prove this result is the following lemma.
\begin{lemma}
  \label{SLemm:promoting0to1andviceversa}
  Let $x$ be a type, let $D_x$ be its set of strings.
  \begin{enumerate}
  \item
    For an arbitrary $A \in \set{in}_x$ consider a
  string  $\set{s} $ such that
  $\set{s} = \set{w} 0_A \set{w'}$ where
  $\set{w} \in W_{\set{in}_x \setminus A}$
  and
  $\set{w'} \in W_{\set{out}_x }$.
  If $\set{s} \in  D_x$ then also
  $\set{s'} :=  \set{w} 1_A \set{w'} \in D_x$.
\item
   For an arbitrary $B \in \set{out}_x$ consider a
  string  $\set{s} $ such that
  $ \set{s} = \set{w} 1_B \set{w'} $ where
  $ \set{w} \in W_{\set{in}_x } $
  and
  $\set{w'} \in W_{\set{out}_x \setminus B }$.
  If $\set{s} \in  D_x$ then also
  $\set{s'} :=  \set{w} 0_B \set{w'} \in D_x$.
  \end{enumerate}
\end{lemma}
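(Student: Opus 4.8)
The plan is to prove the two items \emph{simultaneously} by well-founded induction on the type $x$ with respect to $\preceq$ (which is well-founded, as shown above), using the recursive description of $D_x$ from Lemma~\ref{Slmm:charactstring}. The simultaneity is essential: I expect part 1 for $x$ to call on part 2 for a subtype, and vice versa. The base case $x=A$ elementary is vacuous: $D_A=\{0\}$ carries only the output slot $A$, so there is no input bit to raise (part 1) and no string with a $1$ at an output to lower (part 2).

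For the inductive step write $x=u\to v$. Two facts drive everything. First, Lemma~\ref{Slmm:charactstring} gives $D_{u\to v}=W_u D_v\cup\overline{D}_u D_v^{\perp}$, and since $\set{Ele}_u$ and $\set{Ele}_v$ are disjoint, every $s\in D_x$ factorizes uniquely as $s=s_u s_v$ with $s_u\in W_u$, $s_v\in W_v$, and $W_v=D_v\sqcup D_v^{\perp}$. Second, the input/output roles swap on the left factor and are preserved on the right, i.e. $\set{in}_x=\set{out}_u\cup\set{in}_v$ and $\set{out}_x=\set{in}_u\cup\set{out}_v$; this is precisely the role assignment already exploited in the proof of Proposition~\ref{Spropositionchannel}, where $\mathbf{e}_{\set{in}_{x\to y}}$ was rewritten as $\mathbf{e}_{\set{out}_x\set{in}_y}$. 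I will also use $\overline{D}_u=T_u\setminus D_u\subseteq W_u$ freely.

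To prove part 1, fix $A\in\set{in}_x$ and $s=s_u s_v\in D_x$ carrying a $0$ at $A$; I must show the flipped string $s'$ is still in $D_x$. If $A\in\set{in}_v$ only $s_v$ is touched: in the term $W_u D_v$ the inductive hypothesis (part 1 for $v$) keeps the flipped $s_v'$ in $D_v$, while in $\overline{D}_u D_v^{\perp}$ the flipped $s_v'$ lands either in $D_v$ (recombining to $W_u D_v$, as $\overline{D}_u\subseteq W_u$) or in $D_v^{\perp}$ (recombining to $\overline{D}_u D_v^{\perp}$). If $A\in\set{out}_u$ only $s_u$ is touched: in $W_u D_v$ this is immediate, and in $\overline{D}_u D_v^{\perp}$ I invoke part 2 for $u$ (output bits of $u$ are closed under $1\to0$) to conclude by contraposition that the flipped $s_u'$ remains outside $D_u$. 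The one delicate point, and where I expect the real work to be, is exactly this subcase: raising an output bit of $u$ could in principle turn $s_u\in T_u$ into the all-ones string $\mathbf{e}_u\notin T_u$, which lies in neither term of $D_x$. I will rule this out with the lower inclusion $\mathbf{e}_{\set{in}_u}T_{\set{out}_u}\subseteq D_u$ from Equation~\eqref{Seq:11} (Proposition~\ref{Spropositionchannel}): were the flip to produce $\mathbf{e}_u$, then $s_u$ would be $\mathbf{e}_u$ with a single output bit set to $0$, hence $s_u\in\mathbf{e}_{\set{in}_u}T_{\set{out}_u}\subseteq D_u$, contradicting $s_u\in\overline{D}_u$; so $s_u'\in T_u$ and the argument closes.

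Part 2 is dual and, reassuringly, free of this obstruction, since lowering a bit $1\to0$ can never create $\mathbf{e}_u$. For $B\in\set{out}_x=\set{in}_u\cup\set{out}_v$ and a string with a $1$ at $B$, the case $B\in\set{out}_v$ runs exactly as the $A\in\set{in}_v$ case above using part 2 for $v$, while the case $B\in\set{in}_u$ uses part 1 for $u$: the lowered $s_u'$ has a $0$ at $B$ so it is automatically in $T_u$, and part 1 for $u$ gives by contraposition $s_u'\notin D_u$, whence $s_u'\in\overline{D}_u$ and $s_u's_v\in\overline{D}_u D_v^{\perp}\subseteq D_x$. This closes the induction; the only nonroutine ingredient beyond bookkeeping is the $\mathbf{e}_u$ exclusion in part 1, handled by Proposition~\ref{Spropositionchannel}.
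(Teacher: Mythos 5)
Your proof is correct, and it organizes the inductive step differently from the paper. The paper also runs a simultaneous induction on both items, but its step is factored through two intermediate closure results: it first proves the lemma for $\overline{x}$ (where $D_{\overline{x}}=\overline{D_x}$ and input/output roles swap, so item 1 calls item 2 by contraposition), then for $x\otimes y$ (using $D_{x\otimes y}=\mathbf{e}_xD_y\cup D_x\mathbf{e}_y\cup D_xD_y$), and finally transfers the result to arrow types via the equivalence $x\to y\equiv\overline{x\otimes\overline{y}}$ of Proposition~\ref{Sprop:typeequivalences}, using that equivalent types have equal string sets. You instead work directly with the recursion $D_{u\to v}=W_uD_v\cup\overline{D}_uD_v^{\perp}$ of Lemma~\ref{Slmm:charactstring}, splitting on whether the flipped system lies in the left or the right factor. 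The two essential ingredients are the same in both proofs: the contrapositive use of the complementary item on the factor whose roles are reversed, and the exclusion of the all-ones string via the lower inclusion $\mathbf{e}_{\set{in}}T_{\set{out}}\subseteq D$ of Proposition~\ref{Spropositionchannel} (the paper needs this both in its bar step and in its tensor step; you need it once, exactly where you predicted). Your route buys economy: a single inductive step, no detour through the tensor type or the type-equivalence machinery, plus the neat observation that the right-factor cases with $s_v\in D_v^{\perp}$ need no inductive hypothesis at all, since $W_v=D_v\sqcup D_v^{\perp}$ makes either landing spot acceptable. The paper's route buys modularity: closure of the lemma under $\overline{(\,\cdot\,)}$ and $\otimes$ matches the decompositions it reuses elsewhere in the supplemental material (e.g. in the signalling lemmas). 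One cosmetic point: in part 2, case $B\in\set{in}_u$, you only spell out the subcase $s\in\overline{D}_uD_v^{\perp}$; the remaining subcase $s\in W_uD_v$ is immediate (any bit flip preserves membership in $W_u$), exactly like the trivial subcase you did state explicitly in part 1, so nothing is actually missing.
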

\begin{proof}
If $x = E $ is an elementary type then $\set{in}_x = \emptyset$
and statement $1$ is trivially true.
Since  $D_E = \{ 0_E \}$ also item $2$ is
trivially true.
Let us assume that both the statements hold for 
arbitrary $x$ and $y$.
Our goal is to prove that the statements hold for
$x\to y$. We will split the proof into two
part. In the first part we will prove that the
lemma holds for $\overline{x}$ and in the second
part we will prove that it holds for
$x \otimes y$.
The thesis for $x\to y$ follows from the identity
$x\to y = \overline{x \otimes \overline{y}}$ of
Proposition \ref{Sprop:typeequivalences}.

We now prove that the statements
hold for $\overline{x}$. we start with item $1$. Let us consider an arbitrary $A \in
\set{in}_{\overline{x}}$
and let us suppose that  $\set{s}\in D_{\overline{x}}$ where
we defined the string
\begin{align}
  \label{eq:1}
  \set{s} = \set{w} 0_A \set{w'}, \quad
   \set{w} \in W_{\set{in}_{\overline{x}} \setminus A }, \quad
  \set{w'} \in W_{\set{out}_{\overline{x}}}.
\end{align}
From Equation \eqref{eq:4}
we have that $D_{\overline{x}} = \overline{D_x}$
and therefore
$\set{s}\in \overline{D_{{x}}}$.
Since $\set{in}_{\overline{x}}  = \set{out}_{{x}}$
and using the inductive hypothesis with item $2.$
we have that
\begin{align}
  \label{eq:6}
  &\set{s} \not \in {D_{{x}}} \implies
  \set{s'} \not \in {D_{{x}}}, \\
   &\set{s'} := \set{w} 1_A \set{w'}, \quad
   \set{w} \in W_{\set{in}_{\overline{x}} \setminus A }, \;
  \set{w'} \in W_{\set{out}_{\overline{x} } }.
\end{align}
If $\set{s'} \not \in {D_{{x}}} $ then either 
 $\set{s'} = \set{e}_x$ or 
$\set{s'}  \in \overline{D_{{x}}} $.
If $\set{s'} = \set{e}_x$  then it must be 
$\set{s} = \set{e}_{\set{in}_x} 0_A
\set{e_{\set{out}_x \setminus A}} \in
\set{e}_{\set{in}_x} T_{\set{out}_x} $.
However, from Proposition
\ref{Spropositionchannel}
we have that
$\set{e}_{\set{in}_x} T_{\set{out}_x} \subseteq
D_x$ which contradicts the hypothesis
$\set{s} \in \overline{D_x}$.  Then it must be
$\set{s'} \in \overline{D_{{x}}} =
D_{\overline{x}}$ which proves that item $1$ of
the lemma holds for $\overline{x}$.

The proof of item $2$ is similar.
Let us consider an arbitrary $B \in
\set{out}_{\overline{x}}$
and let us suppose that  $\set{s}\in D_{\overline{x}}$ where
we defined the string
\begin{align}
  \label{eq:1}
  \set{s} = \set{w} 1_B \set{w'}, \quad
   \set{w} \in W_{\set{in}_{\overline{x}} }, \;
  \set{w'} \in W_{\set{out}_{\overline{x}} \setminus B}.
\end{align}
% From Equation \eqref{eq:4}
% we have that $D_{\overline{x}} = \overline{D_x}$
% and therefore
% $\set{s}\in \overline{D_{{x}}}$.
% Since $\set{in}_{\overline{x}}  = \set{out}_{{x}}$
Using the inductive hypothesis with item $1$ of
the lemma
we have that
\begin{align}
  \label{eq:6}
  &\set{s} \not \in {D_{{x}}} \implies
  \set{s'} \not \in {D_{{x}}}, \\
   &\set{s'} := \set{w} 0_B \set{w'}, \quad
   \set{w} \in W_{\set{in}_{\overline{x}} }, \;
  \set{w'} \in W_{\set{out}_{\overline{x} } \setminus A }.
\end{align}
If $\set{s'} \not \in {D_{{x}}} $ then either 
 $\set{s'} = \set{e}_x$ or 
$\set{s'}  \in \overline{D_{{x}}} $.
If $\set{s'} = \set{e}_x$  then it must be 
$\set{s} = \set{e}_{x}$ which contradicts the
hypothesis that
$\set{s} \in \overline{D_{{x}}} \subseteq T_x$.
Then it must be
$\set{s'} \in \overline{D_{{x}}} =
D_{\overline{x}}$ which proves that item $2$ of
the lemma holds for $\overline{x}$.

Let us now assume that both the statements hold
for arbitrary $x$ and $y$. We will now prove that
the statements hold for $x \otimes y$.

Let us start with item $1$.
Let us remind that
$\set{in}_{x \otimes y} =
\set{in}_{x } \cup \set{in}_{ y}$,
$\set{out}_{x \otimes y} =
\set{out}_{x } \cup \set{out}_{ y}$
and
$  D_{x\otimes y} =  \textbf{e}_x D_y \cup
  D_x\textbf{e}_y  \cup D_xD_y   $ (see
  Eq.~\eqref{Seq:4}).
  Let us consider an arbitrary $A \in
\set{in}_{x \otimes y}$
and let us suppose that  $\set{s}\in D_{x \otimes
  y}$ where
\begin{align}
  \label{eq:10}
  \set{s} = \set{w} 0_A \set{w'}, \quad
   \set{w} \in W_{\set{in}_{x \otimes y} \setminus A }, \;
  \set{w'} \in W_{\set{out}_{x \otimes y}}.
\end{align}
Since we have
\begin{align}
  \label{eq:14}
  A \in \set{in}_x \implies  \set{s}\in
  D_x\textbf{e}_y  \cup D_xD_y \\
  A \in \set{in}_y \implies  \set{s}\in
  \textbf{e}_x D_y  \cup D_xD_y 
\end{align}
the thesis follows from the inductive hypothesis.
We have thus proved that item $1$ holds for $x
\otimes y$.

Let us now prove item $2$.
  Let us consider an arbitrary $B \in
\set{out}_{x \otimes y}$
and let us suppose that  $\set{s}\in D_{x \otimes
  y}$ where
\begin{align}
  \label{eq:10}
  \set{s} = \set{w} 1_B \set{w'}, \quad
   \set{w} \in W_{\set{in}_{x \otimes y} }, \;
  \set{w'} \in W_{\set{out}_{x \otimes y}\setminus B}.
\end{align}
We have to consider four cases.
If $B \in \set{out}_x$ and
$\set{b} \in D_x\textbf{e}_y \cup D_xD_y $ the
thesis follows from the inductive hypothesis.
If $B \in \set{out}_x$ and
$\set{b} \in \textbf{e}_x  D_y $ then it must
be
\begin{align}
  \label{eq:15}
  \set{b} = \textbf{e}_{\set{in}_x}1_B
  \textbf{e}_{\set{out}_x \setminus B} \set{w}_y,
  \quad \set{w}_y \in D_y.
\end{align}
Then we have
\begin{align}
  \set{b}' := \textbf{e}_{\set{in}_x}0_B
  \textbf{e}_{\set{out}_x \setminus B} \set{w}_y
  \in \textbf{e}_{\set{in}_x}T_{\set{out}_x} D_y
  \subseteq
  D_xD_y
\end{align}
which prove the thesis.  By exchanging the role of
$x$ and $y$ we can prove the thesis for the 
remaining cases given by the $B \in \set{out}_y$.
We have then prove that the lemma holds for $x
\otimes y$.
\end{proof}
\begin{corollary}
  \label{Scor:promoting0to1multiplesystems}
    Let $x$ be a type, let $D_x$ be its set of strings.
  \begin{enumerate}
  \item
    For an arbitrary $\set{A} \subseteq \set{in}_x$ consider a
  string  $\set{s} $ such that
  $\set{s} = \set{w} \set{t}_{\set{A}} \set{w'}$ where
  $\set{w} \in W_{\set{in}_x \setminus \set{A}}$
  $\set{w'} \in W_{\set{out}_x }$ and
$\set{t} \in T_{\set{A} }$.
  If $\set{s} \in  D_x$ then also
  $\set{s'} :=  \set{w} \set{e}_{\set{A}} \set{w'}
  \in D_x$ .
\item
   For an arbitrary $\set{B} \subseteq \set{out}_x$ consider a
  string  $\set{s} $ such that
  $ \set{s} = \set{w} \set{e}_{\set{B}} \set{w'} $ where
  $ \set{w} \in W_{\set{in}_x } $ and
  $\set{w'} \in W_{\set{out}_x \setminus \set{B} }$.
  If $\set{b} \in  D_x$ then 
  $\set{s'} :=  \set{w} \set{t}_{\set{B}} \set{w'}
  \in D_x$
  for any $\set{t}_{\set{B}} \in T_{\set{B}}$.
  \end{enumerate}
\end{corollary}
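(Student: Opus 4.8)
The plan is to deduce both items from the single-system Lemma~\ref{SLemm:promoting0to1andviceversa} by flipping one bit at a time, so that the corollary becomes a finite iteration of that lemma.

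For item $1$, since $\set{t}_{\set{A}} \in T_{\set{A}}$ its block is not the all-ones string $\set{e}_{\set{A}}$, hence it carries a $0$ in at least one position; let $A_{i_1}, \dots, A_{i_k}$ be all the positions of $\set{A}$ where $\set{t}_{\set{A}}$ has a $0$. I would promote these $0$'s to $1$'s one after another. At the $\ell$-th step the current string lies in $D_x$, already agrees with $\set{e}_{\set{A}}$ on $A_{i_1}, \dots, A_{i_{\ell-1}}$, and still has a $0$ at $A_{i_\ell}$; rewriting it as $\set{u}\,0_{A_{i_\ell}}\,\set{v}$ with $\set{u} \in W_{\set{in}_x \setminus A_{i_\ell}}$ (valid because $\set{A} \subseteq \set{in}_x$, so every untouched bit of $\set{A}$ is an input bit) and $\set{v} \in W_{\set{out}_x}$, item $1$ of the lemma returns the string carrying a $1$ at $A_{i_\ell}$ to $D_x$. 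After $k$ steps the $\set{A}$-block equals $\set{e}_{\set{A}}$, giving $\set{s'} \in D_x$.

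For item $2$ I would run the mirror argument using item $2$ of the lemma. Fixing a target $\set{t}_{\set{B}} \in T_{\set{B}}$, let $B_{j_1}, \dots, B_{j_m}$ be the positions of $\set{B}$ where $\set{t}_{\set{B}}$ has a $0$; this set is non-empty because $\set{t}_{\set{B}} \neq \set{e}_{\set{B}}$. Starting from $\set{s}$, whose $\set{B}$-block is $\set{e}_{\set{B}}$, I would demote these bits from $1$ to $0$ successively: at each step the bit in question equals $1$, while the remaining output bits absorb into $\set{v} \in W_{\set{out}_x \setminus B_{j_\ell}}$ and the input bits into $\set{u} \in W_{\set{in}_x}$, so item $2$ keeps the string in $D_x$. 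After $m$ steps the $\set{B}$-block equals $\set{t}_{\set{B}}$, yielding $\set{s'} \in D_x$.

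The only delicate point---and the sole (minor) obstacle I anticipate---is verifying, at every iteration, that the hypotheses of Lemma~\ref{SLemm:promoting0to1andviceversa} hold: the bit acted upon must be $0$ for item $1$ and $1$ for item $2$, and the rest of the string must factor as an input block in the appropriate $W$-space times an output block in the appropriate $W$-space. Both are automatic here, since $\set{A} \subseteq \set{in}_x$ and $\set{B} \subseteq \set{out}_x$ guarantee that the manipulated system always has the correct type and that every untouched bit lies on the correct side of the input/output partition. No genuinely new difficulty arises beyond this bookkeeping.
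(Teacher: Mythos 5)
Your proof is correct and takes essentially the same approach as the paper: the paper's own proof is the one-line remark that the result follows by iterating Lemma~\ref{SLemm:promoting0to1andviceversa}, which is exactly the bit-by-bit promotion (item $1$) and demotion (item $2$) argument you spell out. Your additional bookkeeping---checking at each step that the acted-upon bit has the right value and that the untouched bits factor into the appropriate $W$-spaces because $\set{A} \subseteq \set{in}_x$ and $\set{B} \subseteq \set{out}_x$---is precisely the verification the paper leaves implicit.
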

\begin{proof}
  The result follows by iterating Lemma~\ref{SLemm:promoting0to1andviceversa}.
\end{proof}

\begin{proposition}
  \label{Sprop:subsetofcontractionarenotadmissibleyoucannotrecover}
  Let $x$  be a type and let 
  $\set{H} := \{ (A_i, A'_i) \}_{i \in \set{I}}$,
  $\set{K} := \{ (A_i, A'_i) \}_{i \in \set{J}}$
be two sets of mutually disjoint pairs
of equivalent non
trivial elementary systems
such that $A_i \in \set{in}_x$ and
$A'_j \in \set{out}_x$.
Let us suppose that $\set{H} \subseteq
\set{K}$. If 
$  \mathcal{C}_{\set{H}}(x)$ is not
  admissible then also  
$  \mathcal{C}_{\set{K}}(x)$ is not
  admissible.
\end{proposition}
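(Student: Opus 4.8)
The plan is to pass to the combinatorial description of Section~\ref{Ssec:comb-struc-high} and lean on the two corollaries just established. By Corollary~\ref{Scor:badstringmultiplecontraction}, the contraction $\mathcal{C}_{\set{H}}(x)$ is \emph{not} admissible exactly when $D_x \cap S^x_{\set{H}} \neq \emptyset$, and the same holds with $\set{K}$ in place of $\set{H}$. It therefore suffices to show that $D_x \cap S^x_{\set{H}} \neq \emptyset$ implies $D_x \cap S^x_{\set{K}} \neq \emptyset$: from a string witnessing the non-admissibility of the smaller contraction I will manufacture one witnessing the non-admissibility of the larger one.

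First I set up the notation for the extra pairs. Let $\set{P}$ and $\set{Q}$ denote the input and output systems appearing in the pairs of $\set{K} \setminus \set{H}$. Since the pairs of $\set{K}$ are mutually disjoint, $\set{in}_{\set{K}} = \set{in}_{\set{H}} \sqcup \set{P}$ and $\set{out}_{\set{K}} = \set{out}_{\set{H}} \sqcup \set{Q}$, whence $\set{P} \subseteq \widetilde{\set{in}}_{\set{H}}$ and $\set{Q} \subseteq \widetilde{\set{out}}_{\set{H}}$. Now pick any $\mathbf{s} \in D_x \cap S^x_{\set{H}}$. By the description of $S^x_{\set{H}}$ in Corollary~\ref{Scor:badstringmultiplecontraction}, the string $\mathbf{s}$ is all-$1$ on $\widetilde{\set{out}}_{\set{H}}$, it carries a block $\mathbf{b} \in T_{\set{in}_{\set{H}}}$ on $\set{in}_{\set{H}}$ and a matched block $\mathbf{b}' \in T_{\set{out}_{\set{H}}}$ on $\set{out}_{\set{H}}$, and it is arbitrary on $\widetilde{\set{in}}_{\set{H}}$. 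The crucial observation is that, because $\set{Q} \subseteq \widetilde{\set{out}}_{\set{H}}$, the string $\mathbf{s}$ is \emph{already} all-$1$ on the new output systems $\set{Q}$, while its bits on the new input systems $\set{P} \subseteq \widetilde{\set{in}}_{\set{H}}$ are unconstrained.

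The decisive step is to promote those $\set{P}$-bits to $1$. I let $\mathbf{s}'$ be obtained from $\mathbf{s}$ by replacing its $\set{P}$-block with $\mathbf{e}_{\set{P}}$; if that block was already all-$1$ nothing changes. Since $\set{P} \subseteq \set{in}_x$, the first item of Corollary~\ref{Scor:promoting0to1multiplesystems} guarantees $\mathbf{s}' \in D_x$. It then remains to verify $\mathbf{s}' \in S^x_{\set{K}}$, which is bookkeeping against the definition: on $\widetilde{\set{out}}_{\set{K}} \subseteq \widetilde{\set{out}}_{\set{H}}$ the string is all-$1$; on $\set{in}_{\set{K}} = \set{in}_{\set{H}} \cup \set{P}$ the sub-block $\mathbf{b}$ is not all-$1$, so the whole block lies in $T_{\set{in}_{\set{K}}}$, and symmetrically the $\set{out}_{\set{K}}$-block lies in $T_{\set{out}_{\set{K}}}$ thanks to $\mathbf{b}'$; finally the matching condition for the pairs of $\set{K}$ holds on the old pairs by inheritance from $\mathbf{s}$, and on each new pair because its input bit has just been set to $1$ and its output bit (lying in $\set{Q}$) was $1$ to begin with. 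Hence $\mathbf{s}' \in D_x \cap S^x_{\set{K}}$, and Corollary~\ref{Scor:badstringmultiplecontraction} gives that $\mathcal{C}_{\set{K}}(x)$ is not admissible.

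I expect the only genuinely delicate point to be the promotion step and, more precisely, its asymmetry: one is allowed to force input bits up to $1$ while staying in $D_x$ (this is exactly the first item of Corollary~\ref{Scor:promoting0to1multiplesystems}), whereas no analogous freedom is available on the output side. The whole argument works precisely because the new output systems $\set{Q}$ inherit the value $1$ from the all-$1$ block $\mathbf{e}_{\widetilde{\set{out}}_{\set{H}}}$, so that the output side never has to be touched; all the remaining checks of the $T$-constraints and of the matching condition are then routine.
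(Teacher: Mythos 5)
Your proof is correct and follows essentially the same route as the paper's: both reduce admissibility to the string criterion of Corollary~\ref{Scor:badstringmultiplecontraction} and then use the promotion result (Corollary~\ref{Scor:promoting0to1multiplesystems}, item 1) to turn a witness string in $D_x \cap S^x_{\set{H}}$ into one in $D_x \cap S^x_{\set{K}}$ by raising the bits on the new input systems to $1$, the new output bits being already $1$. The only difference is presentational: the paper phrases the argument as a contradiction, while you construct the witness directly and also handle explicitly the edge case where the new input block is already all-ones.
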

\begin{proof}
  Let us assume, for sake of contradiction, that
  $  \mathcal{C}_{\set{H}}(x)$ is not
  admissible then also  
$  \mathcal{C}_{\set{K}}(x)$ but is
admissible.
From Corollary
\ref{Scor:badstringmultiplecontraction}
we have that
\begin{align}
  \label{eq:16}
  D_x \cap S^x_{\set{H}} \neq \emptyset, \quad
  D_x \cap S^x_{\set{K}} = \emptyset.
\end{align}
Let us consider the string
\begin{align}
  \label{eq:17}
  \begin{aligned}
     &\set{s} \in D_x \cap S^x_{\set{H}}, \\  
  &\set{s} =
  \set{u}_{\set{in}_{x} \setminus \set{in}_\set{K}}
  \set{v}_{\set{in}_{\set{K}} \setminus \set{in}_\set{H}}
  \set{b}_{\set{in}_{\set{H}}}
\set{e}_{\set{out}_x \setminus \set{out}_{\set{K}}}
  \set{e}_{\set{out}_{\set{K}} \setminus \set{out}_{\set{H}}}
  \set{b}'_{\set{in}_{\set{H}}} \\
 & \set{b}_{\set{in}_{\set{H}}} \in T_{\set{in}_{\set{H}}}, \;
  \set{b} '_{\set{out}_{\set{H}}} \in
  T_{\set{out}_{\set{H}}}, \;  b_{Ai} =
    b'_{A'j}.
  \end{aligned} 
\end{align}
Since $\set{s} \in D_x$, from Corollary
\ref{Scor:promoting0to1multiplesystems}
we have that $\set{s}' \in D_x$ where
\begin{align}
  \label{eq:18}
  \set{s}' :=
  \set{u}_{\set{in}_{x} \setminus \set{in}_\set{K}}
  \set{e}_{\set{in}_{\set{K}} \setminus \set{in}_\set{H}}
  \set{b}_{\set{in}_{\set{H}}}
\set{e}_{\set{out}_x \setminus \set{out}_{\set{K}}}
  \set{e}_{\set{out}_{\set{K}} \setminus \set{out}_{\set{H}}}
  \set{b}'_{\set{in}_{\set{H}}} 
\end{align}
However, we clearly have $\set{s}' \in
S^x_{\set{K}}$ which contradicts the hypothesis. 
\end{proof}
We conclude this section with the following
proposition, which shows that the admissibility
of a set of contraction is equivalent to a type
inclusion.
\begin{proposition}
  \label{Slmm:admisssupermap2}
Let us assume the same definitions as in Corollary
\ref{Scor:badstringmultiplecontraction}.
Then we have
\begin{align}
  \label{eq:19}
  \begin{aligned}
  &\mathcal{C}_{\set{H}}(x) \mbox{ is admissible }\\
  &\iff
  x \subseteq \left ( \bigotimes_{i=1}^n (A'_i \to
  A_i)\right) \to
(  \widetilde{\set{in}} \to  \widetilde{\set{out}} ).  
  \end{aligned}
  \end{align}
\end{proposition}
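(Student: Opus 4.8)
The plan is to translate both sides of the equivalence into conditions on the string set $D_x$ and then compare them. Set $y := \left(\bigotimes_{i=1}^n (A'_i \to A_i)\right) \to (\widetilde{\set{in}} \to \widetilde{\set{out}})$. By Lemma \ref{Slmm:typesinclusionsstringslambda}, $x \subseteq y$ is equivalent to the three conditions $\set{Ele}_x = \set{Ele}_y$, $\lambda_x = \lambda_y$ and $D_x \subseteq D_y$. A direct computation with the counting function $K$ of Proposition \ref{Spropositionchannel} (equivalently, the input/output flip $\set{in}_{u\to v} = \set{out}_u \cup \set{in}_v$, $\set{out}_{u\to v} = \set{in}_u \cup \set{out}_v$ and its analogue for $\otimes$) gives $\set{in}_y = \set{in}_{\set{H}} \cup \widetilde{\set{in}} = \set{in}_x$ and $\set{out}_y = \set{out}_{\set{H}} \cup \widetilde{\set{out}} = \set{out}_x$. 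Hence the first two conditions hold automatically, the second because $\lambda_z = d_{\set{out}_z}^{-1}$ for every type (itself a consequence of Proposition \ref{Spropositionchannel}) together with $\set{out}_y = \set{out}_x$. On the other side, Corollary \ref{Scor:badstringmultiplecontraction} already states that $\mathcal{C}_{\set{H}}(x)$ is admissible iff $D_x \cap S^x_{\set{H}} = \emptyset$. Thus everything reduces to the single claim $D_x \subseteq D_y \iff D_x \cap S^x_{\set{H}} = \emptyset$.

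Next I would compute $D_y$ explicitly. Writing $p := \bigotimes_i (A'_i \to A_i)$ and $q := \widetilde{\set{in}} \to \widetilde{\set{out}}$, Lemma \ref{Slmm:charactstring} gives $D_y = W_p D_q \cup \overline{D}_p D_q^{\perp}$ with $D_q = W_{\widetilde{\set{in}}} T_{\widetilde{\set{out}}}$ and $D_q^{\perp} = W_{\widetilde{\set{in}}} \textbf{e}_{\widetilde{\set{out}}}$. Iterating Equation \eqref{eq:4} for the tensor product shows that $D_p = \prod_i (D_{A'_i \to A_i} \cup \{\textbf{e}\}) \setminus \{\textbf{e}_p\}$, so a string lies in $D_p$ iff it differs from $\textbf{e}_p$ and no factor carries the forbidden pattern $b_{A'_i}=0,\, b_{A_i}=1$. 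Substituting into the expression for $D_y$ and intersecting with the traceless strings, I would obtain $T \setminus D_y = \mathrm{(A)} \cup \mathrm{(B)}$, where $\mathrm{(A)}$ consists of the strings with $s|_{\widetilde{\set{out}}} = \textbf{e}$ and $s|_p = \textbf{e}_p$, and $\mathrm{(B)}$ of the strings with $s|_{\widetilde{\set{out}}} = \textbf{e}$ and $s|_p \in D_p$, with $s|_{\widetilde{\set{in}}}$ free in both cases.

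Two observations then tame the comparison. First, every string in $\mathrm{(A)}$ has all of $\set{out}_x = \widetilde{\set{out}} \cup \set{out}_{\set{H}}$ equal to $1$, hence lies outside $W_{\set{in}_x} T_{\set{out}_x} \supseteq D_x$ by Proposition \ref{Spropositionchannel}; therefore $D_x \cap \mathrm{(A)} = \emptyset$ unconditionally and only $\mathrm{(B)}$ is relevant. Second, $S^x_{\set{H}} \subseteq \mathrm{(B)}$: unwinding the definition of $S^x_{\set{H}}$, its strings are exactly the elements of $\mathrm{(B)}$ whose paired bits additionally satisfy $b_{A_i} = b_{A'_i}$. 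The implication $x \subseteq y \Rightarrow \mathcal{C}_{\set{H}}(x)$ admissible is now immediate, since $D_x \subseteq D_y$ forces $D_x \cap \mathrm{(B)} = \emptyset$ and a fortiori $D_x \cap S^x_{\set{H}} = \emptyset$.

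For the converse I must upgrade $D_x \cap S^x_{\set{H}} = \emptyset$ to $D_x \cap \mathrm{(B)} = \emptyset$, and this is the step I expect to be the crux. Suppose some $\textbf{s} \in D_x \cap \mathrm{(B)}$ had a mismatched pair $b_{A_j}=0,\, b_{A'_j}=1$. Since $\textbf{s} \in D_x$ and $s|_{\widetilde{\set{out}}} = \textbf{e}$, the output block cannot be all ones, so at least one $A'_k$ carries a $0$. I would then demote every mismatched output bit (those with $b_{A'_j}=1$, $b_{A_j}=0$) from $1$ to $0$ simultaneously using item $2$ of Corollary \ref{Scor:promoting0to1multiplesystems}, which keeps the string inside $D_x$; the resulting string is matched on all pairs, still has $s|_{\widetilde{\set{out}}}=\textbf{e}$, and still differs from $\textbf{e}_p$, so it lies in $D_x \cap S^x_{\set{H}}$, a contradiction. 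Hence $D_x \cap \mathrm{(B)} = \emptyset$, so $D_x \cap (T \setminus D_y) = \emptyset$, i.e. $D_x \subseteq D_y$, completing the equivalence. The only delicate point is the bookkeeping that the promotion lemma applies cleanly to a whole block of mismatched systems at once, which is precisely what Corollary \ref{Scor:promoting0to1multiplesystems} provides.
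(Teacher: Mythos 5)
Your proposal is correct and follows essentially the same route as the paper's own proof: both translate admissibility into the string condition $D_x\cap S^x_{\set{H}}=\emptyset$ via Corollary \ref{Scor:badstringmultiplecontraction}, compute the complement of $D_{(\bigotimes_i(A'_i\to A_i))\to(\widetilde{\set{in}}\to\widetilde{\set{out}})}$ explicitly (your $\mathrm{(A)}\cup\mathrm{(B)}$ is exactly the paper's $B\,W_{\widetilde{\set{in}}}\mathbf{e}_{\widetilde{\set{out}}}$ with $B_i=\{0_{A_i}0_{A'_i},\,0_{A_i}1_{A'_i},\,1_{A_i}1_{A'_i}\}$), and use item 2 of Corollary \ref{Scor:promoting0to1multiplesystems} to demote the mismatched pairs $0_{A_i}1_{A'_i}$ to $0_{A_i}0_{A'_i}$, landing in $D_x\cap S^x_{\set{H}}$ and producing the contradiction. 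The only cosmetic differences are that the paper proves the easy direction operationally, noting that the inclusion makes $\mathcal{C}_{\set{H}}(R)=R\ast\bigotimes_i\Phi_{A_iA'_i}$ automatically deterministic of type $\widetilde{\set{in}}\to\widetilde{\set{out}}$ rather than going through the string calculus, and that you spell out the $\set{Ele}$ and $\lambda$ checks of Lemma \ref{Slmm:typesinclusionsstringslambda}, which the paper leaves implicit.
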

\begin{proof}
  In order to lighten the notation, let us define
  $z :=  \left ( \bigotimes_{i=1}^n (A'_i \to
  A_i)\right) $
If the inclusion
  $x \subseteq z \to
(  \widetilde{\set{in}} \to  \widetilde{\set{out}} )$
holds then
$R * \bigotimes_{i=1}^n \Phi_{A_iA'_i} \in
\set{T}_1(\widetilde{\set{in}} \to
\widetilde{\set{out}} )$ for any
$R \in \set{T}_1(x )$, i.e.
$\mathcal{C}_{\set{H}}(x) $ is admissible.

Let us now suppose that
$\mathcal{C}_{\set{H}}(x) $ is admissible, i.e.
$S_{\set{H}}^x \cap D_x = \emptyset $.
From Equation \eqref{Seq:setofstrings20}
one can prove that
\begin{align}
  D_{z
  \to(  \widetilde{\set{in}} \to  \widetilde{\set{out}} )}
  &= W_z D_{\widetilde{\set{in}} \to
 \widetilde{\set{out}} }
   \cup
     \overline{D_z}
     D^\perp_{\widetilde{\set{in}} \to
   \widetilde{\set{out}} }  = \nonumber \\
\label{eq:20}
  &=W_z W_{\widetilde{\set{in}}}
     T_{\widetilde{\set{out}} }
    \cup
     \overline{D_z}
     W_{\widetilde{\set{in}}}
     \set{e}_{\widetilde{\set{out}}} \\
  D^\perp_{z
  \to(  \widetilde{\set{in}} \to
  \widetilde{\set{out}} )}
  &= B\,
W_{\widetilde{\set{in}}}
\set{e}_{\widetilde{\set{out}}} 
 \\
  B &= B_1B_2\dots
        B_n \\
  B_{i} &= \{0_{A_i}0_{A'_i},0_{A_i}1_{A'_i}, 1_{A_i}1_{A'_i}\}
\end{align}
Let us suppose, for sake of contradiction, that 
  $x \not \subseteq z \to
(  \widetilde{\set{in}} \to  \widetilde{\set{out}}
)$.
Then there must exist a string $\set{s}$ such that
\begin{align}
  \label{eq:22}
  \set{s} &\in D_x \cap D^\perp_{z
  \to(  \widetilde{\set{in}} \to
  \widetilde{\set{out}} )}, \\
  \set{s} &= \set{b}_1 \set{b}_2\dots \set{b}_n
  \set{w} \,\set{e}_{\widetilde{\set{out}}}, \\
  \set{b}_i &\in B_{i}, \quad
  \set{w} \in W_{\tilde{\set{in}}}.
\end{align}
Since $\set{s} \in D_x$ and $A'_i \in \set{out}_x$
for any $i$, from
Corollary~\ref{Scor:promoting0to1multiplesystems}
we have that
\begin{align}
  \label{eq:21}
  \set{s}' &:=
  \set{b}'_1 \set{b}'_2\dots \set{b}'_n
\set{w}
  \,\set{e}_{\widetilde{\set{out}}} \in D_x,\\
  \set{b}'_i &:=
  \begin{cases}
    \set{b}_i & \mbox{ if }
    \set{b}_i  \in \{ 0_{A_i}0_{A'_i},
    1_{A_i}1_{A'_i} \}, \\
      0_{A_i}0_{A'i} & \mbox{ if }
      \set{b}_i  =0_{A_i}1_{A'_i}.
  \end{cases}
\end{align}
However, $\set{s'} \in S_{\set{H}}^x$
which contradict the hypothesis that
$\mathcal{C}_{\set{H}}(x) $ is admissible.
\end{proof}
\begin{corollary}
  \label{Slmm:admisssupermap}
  Given a type $x$, $A \in \set{in}_x$, and
  $ B\in \set{out}_x$, then $\mathcal{C}_{AB}(x)$
  is admissible if and only if
  $x\subseteq (B\rightarrow
  A)\rightarrow(\widetilde{\set{in}}\rightarrow
  \widetilde{\set{out}})$, where
  $\widetilde{\set{in}}=\set{in}_x\setminus A$ and
  $\widetilde{\set{out}}=\set{out}_x\setminus B$.
\end{corollary}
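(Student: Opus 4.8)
The plan is to obtain this corollary as the single–contraction specialization of Proposition~\ref{Slmm:admisssupermap2}. Concretely, I would instantiate the set of pairs $\set{H}$ appearing in that proposition as the one-element set $\set{H} := \{(A,B)\}$, with $A \in \set{in}_x$ and $B \in \set{out}_x$ the two equivalent non-trivial elementary systems of the statement. A singleton is automatically a set of mutually disjoint pairs of equivalent systems with its first entry an input and its second an output, so every hypothesis of Proposition~\ref{Slmm:admisssupermap2} is satisfied, and the multi-contraction $\mathcal{C}_{\set{H}}$ collapses to the single contraction $\mathcal{C}_{A,B}$.

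Next I would unwind the auxiliary sets of Corollary~\ref{Scor:badstringmultiplecontraction} in this special case. With $n=1$ and $(A_1,A'_1)=(A,B)$ one reads off $\set{in}_{\set{H}} = \{A\}$ and $\set{out}_{\set{H}} = \{B\}$, whence $\widetilde{\set{in}} = \set{in}_x \setminus A$ and $\widetilde{\set{out}} = \set{out}_x \setminus B$; these coincide exactly with the sets $\widetilde{\set{in}}$ and $\widetilde{\set{out}}$ named in the corollary. Likewise the tensor factor $\bigotimes_{i=1}^n (A'_i \to A_i)$ reduces to the single type $(B \to A)$, respecting the output-to-input orientation dictated by $A'_1 = B \in \set{out}_x$ and $A_1 = A \in \set{in}_x$.

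Substituting these identifications into the equivalence of Proposition~\ref{Slmm:admisssupermap2}, namely that $\mathcal{C}_{\set{H}}(x)$ is admissible iff $x \subseteq \bigl(\bigotimes_{i=1}^n (A'_i \to A_i)\bigr) \to (\widetilde{\set{in}} \to \widetilde{\set{out}})$, yields precisely that $\mathcal{C}_{A,B}(x)$ is admissible iff $x \subseteq (B \to A) \to (\widetilde{\set{in}} \to \widetilde{\set{out}})$, which is the thesis. There is essentially no obstacle here: all the analytic work is carried by Proposition~\ref{Slmm:admisssupermap2}, and what remains is only the bookkeeping of checking that the two parametrizations of $\widetilde{\set{in}}$, $\widetilde{\set{out}}$ and of the prepended type agree.

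If one preferred a more self-contained derivation, an alternative route would be to start from Proposition~\ref{Sprop:admcontractions}, which already reduces admissibility of the single contraction to the operator-level inclusion $\mathcal{C}_{A,B}(R) \in \Evd{\widetilde{\set{in}} \to \widetilde{\set{out}}}$ for all $R \in \Evd{x}$, and then invoke the realization theorem for supermaps to rewrite this condition as the type inclusion $x \subseteq (B \to A) \to (\widetilde{\set{in}} \to \widetilde{\set{out}})$: the defining property of the prepended factor $(B \to A)$ is exactly that contracting $R$ against $\Phi_{AB}$ (i.e. feeding the identity channel through the $(B \to A)$ slot) returns a deterministic map of type $\widetilde{\set{in}} \to \widetilde{\set{out}}$. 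The shortest and cleanest argument, however, is the direct specialization of Proposition~\ref{Slmm:admisssupermap2} described above.
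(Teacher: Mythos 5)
Your proposal is correct and matches the paper exactly: the paper states this corollary without a separate proof, precisely because it is the $n=1$ specialization of Proposition~\ref{Slmm:admisssupermap2} with $\set{H}=\{(A,B)\}$, which is the identification you carry out. Your bookkeeping of $\set{in}_{\set{H}}=\{A\}$, $\set{out}_{\set{H}}=\{B\}$, $\widetilde{\set{in}}=\set{in}_x\setminus A$, $\widetilde{\set{out}}=\set{out}_x\setminus B$ and of the prepended factor $(B\to A)$ is exactly right, so nothing further is needed.
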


\section{Signalling and admissible composition}

In this section we will show how the admissibility
of contractions is related to the signalling
structure between the input systems and the output 
systems of a type.
Let us start by recalling the definition of a
no-signalling channel.
\begin{definition}
  Let $\mathcal{R}$ be a 
   bipartite quantum channel $\mathcal{E}:
  \mathcal{L}(\hilb{H}_A \otimes \hilb{H}_B) \to
  \mathcal{L}(\hilb{H}_C \otimes \hilb{H}_D)$
  and let $R \in \mathcal{L}(\hilb{H}_A \otimes
  \hilb{H}_B \otimes \hilb{H}_C \otimes \hilb{H}_D)$
  be its corresponding Choi operator (i.e. $R$ is
  a deterministic map of type $AB \to CD$).
  We say that $\mathcal{E}$
  is \emph{no-signalling} from $A$ to $C$
  if $\Tr_D[R] = I_A \otimes R'_{BC}$ for some
  $R'_{BC} \in \mathcal{L}(
  \hilb{B} \otimes \hilb{C} )$.
\end{definition}
Thanks to Proposition \ref{Spropositionchannel},
the previous definition can be strightforwardly
generalized to maps of arbitrary type
\begin{definition}
  Let $x$ be a type, $A \in \set{in}_x$, $B \in \set{out}_x$ 
  and let $R$ be a deterministic map of type $x$.
  We say that $R$ is \emph{no-signalling} from
  $A$ to $B$ and we write $A \not\rightsquigarrow_{R}B$ 
  if $R$ is no-signalling from $A$ to $B$
  when regarded as a channel from
  $\hilb{H}_{\set{in}_x}$, $\hilb{H}_{\set{out}_x}$,
  i.e. if and only if
$    \Tr_{\widetilde{\set{out}}}[R] = I_{A}\otimes R' $,
  where
  $R' \in
  \mathcal{L}(\hilb{H}_{\widetilde{\set{in}}}
  \otimes \hilb{H}_B )$,
  $\widetilde{\set{in}} := \set{in}_x \setminus B
  $, and
  $\widetilde{\set{out}} := \set{out}_x \setminus
  B $.

  We say that the type $x$ is
  \emph{no-signalling} from $A$ to $B$ and we
  write  $A \not\rightsquigarrow_{x}B$, if, for any
  $R \in \Evd{x}$, $R$ is no-signalling from $A$
  to $B$.
\end{definition}
We now can prove that the admissibility
of  a contraction is equivalent to a no-signalling
condition.
 \begin{proposition}\label{Sprop:causal}
   Let $x$ be a type, $A\in \set{in}_x$, and $B
   \in \set{out}_x$. Then we have that
   \begin{align}
     \label{Seq:74}
   \mathcal{C}_{AB}(x) \mbox{ is admissible} \iff  A
 \not\rightsquigarrow_{x} B  
   \end{align}
 \end{proposition}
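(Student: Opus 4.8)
The plan is to translate both the admissibility of the contraction and the no-signalling relation into conditions on the string set $D_x$, and then to show that these two combinatorial conditions coincide. Writing $\widetilde{\set{in}} := \set{in}_x \setminus A$ and $\widetilde{\set{out}} := \set{out}_x \setminus B$, Corollary~\ref{Scoroll:criticstringsetAB} already hands me one side for free: $\mathcal{C}_{AB}(x)$ is admissible if and only if $D_x \cap S^x_{AB} = \emptyset$, where $S^x_{AB} = W_{\widetilde{\set{in}}}\,0_A\,\textbf{e}_{\widetilde{\set{out}}}\,0_B$. It thus remains to express $A \not\rightsquigarrow_x B$ as an analogous emptiness condition and to reconcile the two.

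First I would translate no-signalling into string language. Any $R \in \Evd{x}$ decomposes as $R = \lambda_x I_x + T$ with $T \in \set{L}_{D_x}$, by Proposition~\ref{Sthm:charthm}. Tracing over $\widetilde{\set{out}}$, a component supported on $\set{L}_{\textbf{b}}$ survives only when $b_i = 1$ for every $i \in \widetilde{\set{out}}$, since the trace annihilates the traceless factors; on the surviving components the requirement $\Tr_{\widetilde{\set{out}}}[R] = I_A \otimes R'$ forces the $A$-bit to equal $1$. Because $D_x$ characterises all of $\Evd{x}$, and one may isolate any single string $\textbf{b} \in D_x$ by taking $R = \lambda_x I_x + \epsilon\, T_{\textbf{b}}$ with $\epsilon$ small enough that $R \geq 0$, I expect to conclude that $A \not\rightsquigarrow_x B$ holds if and only if $D_x$ contains no string with $b_A = 0$ and $b_{\widetilde{\set{out}}} = \textbf{e}_{\widetilde{\set{out}}}$; that is, $D_x \cap \widetilde{S} = \emptyset$ with $\widetilde{S} := W_{\widetilde{\set{in}}}\,0_A\,\textbf{e}_{\widetilde{\set{out}}}\,W_B$.

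The final step is to match $\widetilde{S}$, in which the $B$-bit is free, against $S^x_{AB}$, in which the $B$-bit is pinned to $0$. Since $S^x_{AB} \subseteq \widetilde{S}$, emptiness of $D_x \cap \widetilde{S}$ trivially implies emptiness of $D_x \cap S^x_{AB}$. For the converse I would invoke Lemma~\ref{SLemm:promoting0to1andviceversa}: if some $\textbf{b} \in D_x \cap \widetilde{S}$ had $b_B = 1$, then, as $B \in \set{out}_x$, flipping this bit to $0$ still yields a string in $D_x$, now lying in $S^x_{AB}$; hence $D_x \cap S^x_{AB} = \emptyset$ already forces $D_x \cap \widetilde{S} = \emptyset$. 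Concatenating the three equivalences then yields that $\mathcal{C}_{AB}(x)$ is admissible if and only if $A \not\rightsquigarrow_x B$.

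The step I expect to be most delicate is the faithful passage from the operator identity $\Tr_{\widetilde{\set{out}}}[R] = I_A \otimes R'$ to the bit-level statement, in particular arguing that quantifying over all $R \in \Evd{x}$ reduces to a condition on individual strings of $D_x$, and handling the mismatch of the $B$-bit between $\widetilde{S}$ and $S^x_{AB}$, which is exactly what the promotion Lemma~\ref{SLemm:promoting0to1andviceversa} repairs. As a more conceptual alternative I could instead combine Corollary~\ref{Slmm:admisssupermap}, which identifies admissibility of $\mathcal{C}_{AB}(x)$ with the inclusion $x \subseteq (B \to A) \to (\widetilde{\set{in}} \to \widetilde{\set{out}})$, with the realisation theorem of supermaps, which characterises $\Evd{(B \to A) \to (\widetilde{\set{in}} \to \widetilde{\set{out}})}$ as precisely the channels from $\set{in}_x$ to $\set{out}_x$ that do not signal from $A$ to $B$.
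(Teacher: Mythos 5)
Your proof is correct, but it follows a genuinely different route from the paper's. The paper settles the claim in two lines: Corollary~\ref{Slmm:admisssupermap} identifies admissibility of $\mathcal{C}_{AB}(x)$ with the type inclusion $x \subseteq (B \to A) \to (\widetilde{\set{in}} \to \widetilde{\set{out}})$, and the realisation theorem of supermaps (an external result) characterises $\Evd{(B \to A) \to (\widetilde{\set{in}} \to \widetilde{\set{out}})}$ as exactly those channels that do not signal from $A$ to $B$ --- this is precisely the ``conceptual alternative'' you sketch in your closing paragraph, so you were one step away from the paper's own argument. Your primary route instead stays entirely inside the paper's combinatorial formalism: Corollary~\ref{Scoroll:criticstringsetAB} for the admissibility side; a direct translation of $\Tr_{\widetilde{\set{out}}}[R] = I_A \otimes R'$ into the string condition $D_x \cap \widetilde{S} = \emptyset$ with $\widetilde{S} = W_{\widetilde{\set{in}}}\,0_A\,\textbf{e}_{\widetilde{\set{out}}}\,W_B$ (your isolation trick $R = \lambda_x I_x + \epsilon T_{\mathbf{b}}$ is sound --- it is the same device the paper uses in the proofs of Lemma~\ref{Slmm:typesinclusionsstringslambda} and Proposition~\ref{Sadmissibilitystring}, and it implicitly rests on the Hilbert--Schmidt orthogonality of the spaces $\set{L}_{\mathbf{b}}$, Lemma~\ref{Slmm:stringdefinedlabels}, which is worth citing explicitly at that step); and item~2 of Lemma~\ref{SLemm:promoting0to1andviceversa} to reconcile the free $B$-bit of $\widetilde{S}$ with the pinned $B$-bit of $S^x_{AB}$. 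What each approach buys: the paper's proof is short and operationally transparent, but it imports a theorem from outside the paper; yours is longer but self-contained in the string machinery, never invokes the realisation theorem, and establishes directly the criterion $A \not\rightsquigarrow_{x} B \iff D_x \cap S^x_{AB} = \emptyset$, which the paper only obtains as a corollary \emph{after} Proposition~\ref{Sprop:causal} --- so your argument in fact reverses the paper's logical order while reaching the same conclusions.
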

 \begin{proof} 
   From Corollary \ref{Slmm:admisssupermap}
   we have that
   $\mathcal{C}_{AB}(x)$ is
   admissible if and only if 
$x \subseteq (A \to B) \to (\widetilde{\set{in}} \to
\widetilde{\set{out}} )$. Then,  the realisation
theorem of supermaps \cite{PhysRevLett.101.060401,chiribella2008transforming} implies that
$A
 \not\rightsquigarrow_{R} B \iff R \in \Evd{(A \to B) \to (\widetilde{\set{in}} \to
\widetilde{\set{out}} )} $. The thesis then
follows.
\end{proof}

\begin{corollary}
  Given a type $x$ with $A\in\set{in}_x$ and
  $B\in\set{out}_x$, then $A \not
  \rightsquigarrow_{x} B$
  if and only if $D_x\cap S^x_{AB}=\emptyset$.
\end{corollary}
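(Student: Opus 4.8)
The plan is to recognize that this corollary is nothing more than the composition of two biconditionals that have already been established in the preceding sections, so the work reduces to identifying the correct common middle term and checking that the notational conventions agree. The natural pivot is the admissibility of the contraction $\mathcal{C}_{AB}(x)$, which sits on both sides of the chain.

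First I would invoke Proposition \ref{Sprop:causal}, which gives the equivalence
\begin{equation*}
  \mathcal{C}_{AB}(x) \text{ is admissible} \iff A \not\rightsquigarrow_{x} B,
\end{equation*}
valid precisely under the hypotheses $A \in \set{in}_x$ and $B \in \set{out}_x$ that we are assuming here. Next I would invoke Corollary \ref{Scoroll:criticstringsetAB}, which expresses the same admissibility combinatorially,
\begin{equation*}
  \mathcal{C}_{A,B}(x) \text{ is admissible} \iff D_x \cap S^x_{AB} = \emptyset,
\end{equation*}
with $S^x_{AB} := W_{\widetilde{\set{in}}} 0_A \textbf{e}_{\widetilde{\set{out}}} 0_B$ and $\widetilde{\set{in}} = \set{in}_x \setminus A$, $\widetilde{\set{out}} = \set{out}_x \setminus B$. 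Chaining the two biconditionals through the shared statement ``$\mathcal{C}_{AB}(x)$ is admissible'' then yields $A \not\rightsquigarrow_{x} B \iff D_x \cap S^x_{AB} = \emptyset$, which is the thesis.

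The only point requiring care, rather than a genuine obstacle, is to verify that the two cited results refer to the \emph{same} contraction and use matching definitions of the reduced input/output sets $\widetilde{\set{in}}$ and $\widetilde{\set{out}}$; since both Proposition \ref{Sprop:causal} and Corollary \ref{Scoroll:criticstringsetAB} set $\widetilde{\set{in}} = \set{in}_x \setminus A$ and $\widetilde{\set{out}} = \set{out}_x \setminus B$, and since $\mathcal{C}_{AB}$ and $\mathcal{C}_{A,B}$ denote the identical operation, no gap arises. I do not expect any substantive difficulty, because all the hard analytic and combinatorial content has been absorbed into the earlier results; this corollary merely repackages the no-signalling characterization of admissibility into the purely string-theoretic emptiness condition $D_x \cap S^x_{AB} = \emptyset$, which is exactly the object the efficient signalling-detection algorithm of the main text operates on.
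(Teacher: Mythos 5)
Your proof is correct and follows exactly the route the paper intends: the corollary is stated without proof precisely because it is the immediate chaining of Proposition \ref{Sprop:causal} ($\mathcal{C}_{AB}(x)$ admissible $\iff A \not\rightsquigarrow_{x} B$) with Corollary \ref{Scoroll:criticstringsetAB} ($\mathcal{C}_{A,B}(x)$ admissible $\iff D_x\cap S^x_{AB}=\emptyset$), pivoting on admissibility of the contraction as the common middle term. Your remark about matching the reduced sets $\widetilde{\set{in}}=\set{in}_x\setminus A$ and $\widetilde{\set{out}}=\set{out}_x\setminus B$ is also apt (the paper's statement of Corollary \ref{Scoroll:criticstringsetAB} contains a typo writing $\set{in}_x\setminus B$, which you have implicitly corrected).
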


This result proves that to check the validity of a
contraction corresponds to verify a non-signalling
condition.  Moreover, from Proposition
\ref{Sprop:subsetofcontractionarenotadmissibleyoucannotrecover}
we have that verifying the admissibility of a set
of contractions is given by chaining together the
verifications of each singular contraction in
sequence. In other words, a set of contractions is
admissible if and only if, by performing the
contractions in an (arbitrary) sequence, we map a
valid channel to a valid channel at each step of
the procedure.  Then, the admissibility of a set
of contractions involves verifying a sequence of
no-signaling conditions.

The final result we want to prove shows that the
signalling relation between a pair of systems can
be directly infered from the expression of the
type itself.
In order to prove this result we need the
following definition and
some
auxiliary lemmas.

\begin{definition}\label{SFS}
  Given a type $x$ with $A\in\set{in}_x$ and
  $B\in\set{out}_x$, then we say that $x$ is
  \emph{full-signalling} from $A$ to $B$ and we write
  $A \stackrqarrow{f}{} B$ if the type
$B$ to $A$,
  i.e $A \not
  \rightsquigarrow_{\overline{x}} B$.
\end{definition}
% \begin{lemma}
% Given a type $x$ with $A\in\set{in}_x$ and $B\in\set{out}_x$, then
% \begin{equation*}
%     x \:\textit{is}\: (NS)_{AB}\iff \overline{x}\:\textit{is}\: (FS)_{BA}
% \end{equation*}
% \end{lemma}
% \begin{proof}
% Let us take $x$ $(NS)_{AB}$, then $\overline{\overline{x}}$ is 
% $(NS)_{AB}$. For definition~\ref{SFS}, $\overline{x}$ results 
% to be $(FS)_{BA}$.
% %Given $x$ $(NS)_{AB}$, then $D_x\cap S^x_{AB}=\emptyset$. Now we  define 
% %$S^x_{AB}\cap S^x_{BA}=\textbf{e}_{\set{in}_x\setminus A}0_A\textbf{e}_{\set{out}_x\setminus B}0_B=:\overline{\textbf{e}}$ and noticing that
% %$D_x\cap \overline{\textbf{e}}=\emptyset$ and $\overline{\textbf{e}}\subseteq D_{\set{out}_x\rightarrow \set{in}_x}$, we obtain $\overline{\textbf{e}}\subseteq D_{\overline{x}}$, namely  $D_{\overline{x}}\cap S^x_{BA}\neq\emptyset$. 
% The opposite 
% implication steams directly from definition~\ref{SFS}.
% \end{proof}

\begin{lemma}
  \label{Slmm:strongnosignal1}
  Given a type $x$ with $A\in\set{in}_x$ and
  $B\in\set{out}_x$, then
\begin{align}
A \stackrqarrow{f}{}_x B \iff
\forall y,\: A \stackrqarrow{f}{}_{x \otimes y} B.
\end{align}
\end{lemma}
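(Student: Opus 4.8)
The plan is to recast the full-signalling relation $A \stackrqarrow{f}{}_x B$ as a single containment of string sets and then exploit the factorised form of $D_{x\otimes y}$ from Eq.~\eqref{eq:4}. First I would unfold Definition~\ref{SFS}. Since $\set{in}_{\overline x}=\set{out}_x$ and $\set{out}_{\overline x}=\set{in}_x$, the relation $A \stackrqarrow{f}{}_x B$ is the no-signalling condition $B\not\rightsquigarrow_{\overline x}A$ (with $B$ now an input and $A$ an output of $\overline x$). Combining Proposition~\ref{Sprop:causal} with Corollary~\ref{Scoroll:criticstringsetAB} gives $B\not\rightsquigarrow_{\overline x}A\iff D_{\overline x}\cap S^{\overline x}_{BA}=\emptyset$, and substituting $D_{\overline x}=\overline{D_x}=T\setminus D_x$ from Eq.~\eqref{eq:4} turns this into $(T\setminus D_x)\cap S^{\overline x}_{BA}=\emptyset$. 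Every string of $S^{\overline x}_{BA}=W_{\set{out}_x\setminus B}\,0_B\,\textbf{e}_{\set{in}_x\setminus A}\,0_A$ carries a $0$ on $A$, hence is never the all-ones string $\textbf{e}$, so $S^{\overline x}_{BA}\subseteq T$ and the condition collapses to the clean characterisation
\begin{align*}
A \stackrqarrow{f}{}_x B \iff F^x_{AB}\subseteq D_x,\qquad
F^x_{AB}:=\textbf{e}_{\set{in}_x\setminus A}\,0_A\,W_{\set{out}_x\setminus B}\,0_B .
\end{align*}

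With this in hand the reverse implication is immediate: taking $y=I$ gives $x\otimes I\equiv x$, so $\set{in}_{x\otimes I}=\set{in}_x$, $\set{out}_{x\otimes I}=\set{out}_x$, $D_{x\otimes I}=D_x$, whence $F^{x\otimes I}_{AB}=F^x_{AB}$ and $A \stackrqarrow{f}{}_{x\otimes I}B$ is literally $A \stackrqarrow{f}{}_x B$. For the forward implication I would assume $F^x_{AB}\subseteq D_x$ and fix an arbitrary $y$ (with labels disjoint from those of $x$, as is standard). Using $\set{in}_{x\otimes y}=\set{in}_x\cup\set{in}_y$ and $\set{out}_{x\otimes y}=\set{out}_x\cup\set{out}_y$, the set $F^{x\otimes y}_{AB}$ factorises over the $x$- and $y$-labels as $F^x_{AB}$ on the $x$-bits and $\textbf{e}_{\set{in}_y}W_{\set{out}_y}$ on the $y$-bits. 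Given any $\mathbf s=\mathbf s_x\mathbf s_y\in F^{x\otimes y}_{AB}$ I then have $\mathbf s_x\in F^x_{AB}\subseteq D_x$ by hypothesis, and I split on $\mathbf s_y$: if $\mathbf s_y=\textbf{e}_y$ then $\mathbf s\in D_x\textbf{e}_y$, while if $\mathbf s_y\neq\textbf{e}_y$ then the deviation from all-ones lies in the output bits, so $\mathbf s_y\in\textbf{e}_{\set{in}_y}T_{\set{out}_y}\subseteq D_y$ — the last inclusion being exactly the left inclusion of Eq.~\eqref{Seq:11} (Proposition~\ref{Spropositionchannel}) applied to $y$ — and hence $\mathbf s\in D_xD_y$. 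Since $D_{x\otimes y}=\textbf{e}_xD_y\cup D_x\textbf{e}_y\cup D_xD_y$ by Eq.~\eqref{eq:4}, in both cases $\mathbf s\in D_{x\otimes y}$, giving $F^{x\otimes y}_{AB}\subseteq D_{x\otimes y}$, i.e. $A \stackrqarrow{f}{}_{x\otimes y}B$.

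The main obstacle is the reformulation in the first paragraph: one must dualise Definition~\ref{SFS} correctly so that inputs and outputs swap in $\overline x$, and then verify that $S^{\overline x}_{BA}$ contains no $\textbf{e}$-string, which is precisely what lets the intersection with $\overline{D_x}=T\setminus D_x$ reduce to a containment $S^{\overline x}_{BA}\subseteq D_x$ rather than a containment in the complement. Once this combinatorial characterisation is established the remainder is bookkeeping over the three pieces of $D_{x\otimes y}$, the only genuine external input being the universal inclusion $\textbf{e}_{\set{in}_y}T_{\set{out}_y}\subseteq D_y$, which guarantees that every output pattern on $y$ that is not fully saturated already belongs to $D_y$ and so contributes via the $D_xD_y$ term.
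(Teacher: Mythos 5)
Your proof is correct, and it reaches the lemma by a route that is dual to the paper's rather than a transcription of it. The paper stays on the dual side throughout: it rewrites $\overline{x\otimes y}=y\to\overline{x}$, expands $D_{y\to\overline{x}}=W_yD_{\overline{x}}\cup D_{\overline{y}}D^{\perp}_{\overline{x}}$ with the arrow recursion of Lemma~\ref{Slmm:charactstring}, computes $S^{y\to\overline{x}}_{BA}=W_{\set{out}_y}\textbf{e}_{\set{in}_y}S^{\overline{x}}_{BA}$, and then checks the empty intersection term by term (the hypothesis kills the $W_yD_{\overline{x}}$ term, while $D_{\overline{y}}\subseteq W_{\set{out}_y}T_{\set{in}_y}$ kills the other, since those strings are never all-ones on $\set{in}_y$), declaring this verification ``straightforward'' and the converse ``trivial''. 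You instead complement once --- using $D_{\overline{x}}=\overline{D_x}$ together with the observation that $S^{\overline{x}}_{BA}$ contains no all-ones string --- to convert the defining disjointness into the positive criterion $F^x_{AB}\subseteq D_x$, and then argue on the primal side with the tensor recursion $D_{x\otimes y}=\textbf{e}_xD_y\cup D_x\textbf{e}_y\cup D_xD_y$, closing the case split with $\textbf{e}_{\set{in}_y}T_{\set{out}_y}\subseteq D_y$; note that this last fact and the paper's $D_{\overline{y}}\subseteq W_{\set{out}_y}T_{\set{in}_y}$ are literally complements of one another, so the two arguments consume equivalent external inputs (Proposition~\ref{Sprop:causal} with Corollary~\ref{Scoroll:criticstringsetAB}, the $D$-recursions, and Proposition~\ref{Spropositionchannel}). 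What your version buys: the containment characterisation $A\stackrqarrow{f}{}_x B\iff F^x_{AB}\subseteq D_x$ is a clean, reusable restatement of full-signalling; the forward direction becomes an auditable three-case check over the pieces of $D_{x\otimes y}$; and you make the converse honest by instantiating $y=I$ and noting $D_{x\otimes I}=D_x$, where the paper offers no argument. What the paper's version buys: no complementation step, and disjointness bookkeeping in the same style as the neighbouring Lemma~\ref{Slmm:strongnosignal2}. Finally, you were right to dualise Definition~\ref{SFS} as $B\not\rightsquigarrow_{\overline{x}}A$: the printed ``$A\not\rightsquigarrow_{\overline{x}}B$'' does not typecheck, since $B\in\set{in}_{\overline{x}}$ and $A\in\set{out}_{\overline{x}}$, and your reading is the one the paper's own proof employs.
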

\begin{proof}
  First we notice that, since
  $x \equiv \overline{\overline{x}}$ (see Equation
  \eqref{Seq:53}) we have
  that $A \stackrqarrow{f}{}_x B$ implies
  $B \not\stackrqarrow{}{}_{\overline{x}} A$.  Let
  us now consider the type 
  $\overline{x\otimes y}=\overline{y\otimes
    x}=y\rightarrow \overline{x}$ (see Equations
  \eqref{Seq:52} and \eqref{Seq:55}) for an
  arbitrary type $y$. From
  Proposition~\ref{Slmm:charactstring} we have
  $D_{y\rightarrow \overline{x}}=W_y
  D_{\overline{x}} \cup
  D_{\overline{y}}D_{\overline{x}}^\perp$ and from
  Corollary \ref{Scoroll:criticstringsetAB} we have
  $S^{y\rightarrow
    \overline{x}}_{BA}=W_{\set{out}_y}W_{{\set{out}_x
    \setminus B}}\, 0_B\, \textbf{e}_{\set{in}_y}\textbf{e}_{\set{in}_x\setminus
    A}0_A= 
  W_{\set{out}_y}\textbf{e}_{\set{in}_y}S^{\overline{x}}_{BA}$.
  It is now straightforward to verify that
\begin{align*}
  \begin{aligned}
  S^{y \rightarrow \overline{x}}_{BA}\cap
  D_{y\rightarrow \overline{x}} = \emptyset
  \implies
  A \not \rightsquigarrow_{\overline{x \otimes y}}
  B
  \implies 
  A \stackrqarrow{f}{}_{x\otimes y}
  B  .
  \end{aligned}
\end{align*}
  which proves the implication
$  A \stackrqarrow{f}{}_x B$ $\implies$
$ A \stackrqarrow{f}{}_{x \otimes y} B$ for any $y$.
% which implies that
% holds if and only if $x\otimes y$ is
% $(FS)_{AB}$.
The inverse implication is trivial.
\end{proof}
\begin{lemma}
    \label{Slmm:strongnosignal2}
  Given a type $x$ with $A\in\set{in}_x$ and
  $B\in\set{out}_x$, then
\begin{align}
  A \not \rightsquigarrow_x B \iff
  A \not \rightsquigarrow_{x \otimes y} B  \;\;
  \forall y.  
\end{align}
\end{lemma}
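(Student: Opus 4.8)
The lemma to prove is Lemma \ref{Slmm:strongnosignal2}:
$$A \not\rightsquigarrow_x B \iff A \not\rightsquigarrow_{x \otimes y} B \quad \forall y.$$

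This is the "no-signalling" analogue of the preceding Lemma \ref{Slmm:strongnosignal1} about "full-signalling."

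**Setting up the framework:**

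Let me recall the key tools:
- Corollary \ref{Scoroll:criticstringsetAB}: $A \not\rightsquigarrow_x B \iff D_x \cap S^x_{AB} = \emptyset$ where $S^x_{AB} = W_{\widetilde{\set{in}}} 0_A \textbf{e}_{\widetilde{\set{out}}} 0_B$ with $\widetilde{\set{in}} = \set{in}_x \setminus A$ and $\widetilde{\set{out}} = \set{out}_x \setminus B$.
- Equation \eqref{eq:4}: $D_{x \otimes y} = \textbf{e}_x D_y \cup D_x \textbf{e}_y \cup D_x D_y$.
- The tensor product of types has $\set{in}_{x\otimes y} = \set{in}_x \cup \set{in}_y$, $\set{out}_{x\otimes y} = \set{out}_x \cup \set{out}_y$.

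**Key observation:**

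In the proof of Lemma \ref{Slmm:strongnosignal1}, the full-signalling case was handled via complementation $\overline{x}$. For this lemma, I work directly with no-signalling.

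Let me compute $S^{x\otimes y}_{AB}$. Since $A \in \set{in}_x$ and $B \in \set{out}_x$ (both living in the $x$-part), we have $\widetilde{\set{in}}_{x\otimes y} = (\set{in}_x \setminus A) \cup \set{in}_y$ and $\widetilde{\set{out}}_{x\otimes y} = (\set{out}_x \setminus B) \cup \set{out}_y$. So
$$S^{x\otimes y}_{AB} = W_{(\set{in}_x \setminus A) \cup \set{in}_y}\, 0_A\, \textbf{e}_{(\set{out}_x\setminus B)\cup\set{out}_y}\, 0_B.$$

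Regrouping the $x$-labels and $y$-labels, the $y$-part forces $\textbf{e}_{\set{in}_y}\textbf{e}_{\set{out}_y} = \textbf{e}_y$ (all ones), and $W_{\set{in}_y}$ on the input side. So $S^{x\otimes y}_{AB} = W_{\set{in}_y}\,\textbf{e}_{\set{out}_y}\, S^x_{AB}$ after regrouping factors by type, i.e. the $y$-coordinates must be the all-ones string on $\set{out}_y$ and arbitrary on $\set{in}_y$, while the $x$-coordinates match $S^x_{AB}$.

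**The proof plan:**

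\begin{proof}
The direction $(\Longleftarrow)$ is trivial: taking $y = I$ (the trivial type), we have $x \otimes I \equiv x$ by Proposition \ref{Sprop:typeequivalences}, so $A \not\rightsquigarrow_{x\otimes y} B$ for all $y$ specializes to $A \not\rightsquigarrow_x B$.

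For $(\Longrightarrow)$, suppose $A \not\rightsquigarrow_x B$, i.e. by Corollary \ref{Scoroll:criticstringsetAB}, $D_x \cap S^x_{AB} = \emptyset$. I must show $D_{x\otimes y} \cap S^{x\otimes y}_{AB} = \emptyset$ for arbitrary $y$. By direct computation using the tensor structure, an element of $S^{x\otimes y}_{AB}$ has $x$-coordinates in $S^x_{AB}$ and $y$-coordinates of the form $\set{w}_{\set{in}_y}\textbf{e}_{\set{out}_y}$ with $\set{w} \in W_{\set{in}_y}$. Now suppose for contradiction that some string $\set{s} \in D_{x\otimes y} \cap S^{x\otimes y}_{AB}$. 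Using $D_{x\otimes y} = \textbf{e}_x D_y \cup D_x \textbf{e}_y \cup D_x D_y$, I examine the three cases according to which union component contains $\set{s}$. In each case, projecting onto the $x$-coordinates and using that the $y$-output coordinates of $\set{s}$ are all ones, I will extract an $x$-substring lying in $S^x_{AB} \cap D_x$, contradicting the hypothesis.

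The crucial step is the case analysis. Writing $\set{s} = \set{s}_x \set{s}_y$ where $\set{s}_x$ is labelled by $\set{Ele}_x$ and $\set{s}_y$ by $\set{Ele}_y$: since $\set{s} \in S^{x\otimes y}_{AB}$, the $\set{out}_y$ part of $\set{s}_y$ is $\textbf{e}_{\set{out}_y}$ and $\set{s}_x \in S^x_{AB}$. If $\set{s} \in \textbf{e}_x D_y$, then $\set{s}_x = \textbf{e}_x$, but this forces the $A$-bit to be $1$, contradicting that $S^x_{AB}$ has $0_A$; hence this case is vacuous. If $\set{s} \in D_x \textbf{e}_y$ then $\set{s}_x \in D_x$, and since $\set{s}_x \in S^x_{AB}$, we directly get $D_x \cap S^x_{AB} \neq \emptyset$, a contradiction. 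If $\set{s} \in D_x D_y$, then again $\set{s}_x \in D_x$, yielding the same contradiction.
\end{proof}

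**Main obstacle:**

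The main technical obstacle is correctly computing the factorized form of $S^{x\otimes y}_{AB}$ and keeping the bookkeeping of labels straight across the tensor decomposition, especially verifying that the $\textbf{e}_x D_y$ case is genuinely vacuous (the all-ones $x$-string cannot have $0_A$). I expect the argument to hinge entirely on the structural identity \eqref{eq:4} for $D_{x\otimes y}$ combined with the explicit shape of the "bad string set" $S^x_{AB}$, with no analysis beyond the three-way case split.
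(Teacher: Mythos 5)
Your proof is correct and follows essentially the same route as the paper's: translate no-signalling into the string condition $D_x \cap S^x_{AB} = \emptyset$, factorize $S^{x\otimes y}_{AB} = S^x_{AB}\,W_{\set{in}_y}\mathbf{e}_{\set{out}_y}$, and use the decomposition $D_{x\otimes y} = \mathbf{e}_x D_y \cup D_x\mathbf{e}_y \cup D_x D_y$ to conclude that the intersection is empty. The only difference is that you spell out the three-way case analysis (including the vacuity of the $\mathbf{e}_x D_y$ case via the $0_A$ bit) that the paper compresses into ``we clearly have,'' and you justify the converse via $y \equiv I$ where the paper simply calls it trivial.
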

\begin{proof}
  Let us assume that
  $ A \not \rightsquigarrow_x B $.  Then we have
  $S^x_{AB} \cap D_x=\emptyset $. Let us now
  consider
  $D_{x\otimes
    y}=D_x\textbf{e}_y\cup\textbf{e}D_y\cup D_x
  D_y$ and the set
  $S^{x\otimes y}_{AB}=W_{\set{in}_{x\otimes
      y}\setminus
    A}0_A\textbf{e}_{\set{out}_{x\otimes
      y}\setminus B}0_B=W_{\set{in}_x\setminus
    A}W_{\set{in}_y}0_A\textbf{e}_{\set{out}_y}\textbf{e}_{\set{out}_x\setminus
    B}0_B=S^x_{AB}W_{\set{in}_y}\textbf{e}_{\set{out}_y}$.
  We clearly have
  $D_{x\otimes y}\cap S^{x\otimes
    y}_{AB}=\emptyset $, i.e. $A \not
  \rightsquigarrow_{x\otimes y} B$.  The inverse implication
  is trivial.
\end{proof}

\begin{lemma}
    Given a type $x$ with $A\in\set{in}_x$ and
    $B\in\set{out}_x$, then we have
    \begin{align}
      \label{Seq:76}
      A \stackrqarrow{f}{}_x B
      \implies
      A \stackrqarrow{f}{}_{y \to x} B
      \mbox{ and }
      B \not \rightsquigarrow_{x \to y} A \\
A  \not \rightsquigarrow_x B
      \implies
      A \not \rightsquigarrow_{y \to x} B
      \mbox{ and }
      B \stackrqarrow{f}{}_{x \to y} A
    \end{align}
\end{lemma}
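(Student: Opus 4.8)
The plan is to translate every assertion into the combinatorial language of Section~\ref{Ssec:comb-struc-high} and then feed the single recursion $D_{u\to v}=W_uD_v\cup\overline{D_u}D_v^{\perp}$ (Equation~\eqref{Seq:setofstrings20}) with the hypotheses. First I record the two string dictionaries. From the corollary to Proposition~\ref{Sprop:causal} and from Definition~\ref{SFS} together with $D_{\overline{x}}=\overline{D_x}=T\setminus D_x$ (Equation~\eqref{eq:4}) one gets
\[
A\not\rightsquigarrow_x B\iff D_x\cap S^x_{AB}=\emptyset,\qquad
A\stackrqarrow{f}{}_x B\iff S^{\overline{x}}_{BA}\subseteq D_x,
\]
where, by Corollary~\ref{Scoroll:criticstringsetAB},
\[
\begin{aligned}
S^x_{AB}&=W_{\set{in}_x\setminus A}\,0_A\,\textbf{e}_{\set{out}_x\setminus B}\,0_B,\\
S^{\overline{x}}_{BA}&=W_{\set{out}_x\setminus B}\,0_B\,\textbf{e}_{\set{in}_x\setminus A}\,0_A.
\end{aligned}
\]
The second equivalence uses that every string of $S^{\overline{x}}_{BA}$ carries $0_A$, hence lies in $T$, so $\overline{D_x}\cap S^{\overline{x}}_{BA}=\emptyset$ is the same as $S^{\overline{x}}_{BA}\subseteq D_x$. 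Thus each hypothesis and each of the four conclusions asserts that a specific $S$-set is either disjoint from, or contained in, the associated $D$.

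Next I fix the input/output bookkeeping, which flips under the bar ($\set{in}_{\overline{x}}=\set{out}_x$) and recombines under the arrow as $\set{in}_{u\to v}=\set{out}_u\cup\set{in}_v$, $\set{out}_{u\to v}=\set{in}_u\cup\set{out}_v$ (read off from $u\to v=\overline{u\otimes\overline{v}}$ and Proposition~\ref{Spropositionchannel}). In particular $A\in\set{in}_{y\to x}$, $B\in\set{out}_{y\to x}$, while $A\in\set{out}_{x\to y}$, $B\in\set{in}_{x\to y}$, so all four target relations typecheck. Applying Corollary~\ref{Scoroll:criticstringsetAB} to $y\to x$, to $\overline{y\to x}=y\otimes\overline{x}$, and symmetrically to $x\to y$ and $\overline{x\to y}=x\otimes\overline{y}$, every target $S$-set factorises over the two blocks $\set{Ele}_y$ and $\set{Ele}_x$ into a concatenation of a $y$-part and an $x$-part; for example
\[
S^{y\to x}_{AB}=\big(W_{\set{out}_y}\textbf{e}_{\set{in}_y}\big)\,S^{x}_{AB},\qquad
S^{\overline{y\to x}}_{BA}=\big(W_{\set{in}_y}\textbf{e}_{\set{out}_y}\big)\,S^{\overline{x}}_{BA},
\]
and likewise $S^{x\to y}_{BA}$ and $S^{\overline{x\to y}}_{AB}$ carry the $x$-parts $S^{\overline{x}}_{BA}$ and $S^{x}_{AB}$ respectively.

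With these factorisations each implication becomes a one-line membership check against $D_{u\to v}=W_uD_v\cup\overline{D_u}D_v^{\perp}$, using only the two string inclusions of Proposition~\ref{Spropositionchannel}, namely $D_v\subseteq W_{\set{in}_v}T_{\set{out}_v}$ and $T_{\set{out}_v}\textbf{e}_{\set{in}_v}\subseteq D_v$. For $A\stackrqarrow{f}{}_x B\Rightarrow A\stackrqarrow{f}{}_{y\to x}B$, the $x$-part of any string in $S^{\overline{y\to x}}_{BA}$ lies in $S^{\overline{x}}_{BA}\subseteq D_x$, so the whole string lies in $W_yD_x\subseteq D_{y\to x}$. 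For $A\stackrqarrow{f}{}_x B\Rightarrow B\not\rightsquigarrow_{x\to y}A$, I exclude both branches of $D_{x\to y}$: the $y$-part has all-ones output block $\textbf{e}_{\set{out}_y}$ and so cannot lie in $D_y\subseteq W_{\set{in}_y}T_{\set{out}_y}$ (killing $W_xD_y$), while the $x$-part lies in $D_x$ and so cannot lie in $\overline{D_x}$ (killing $\overline{D_x}D_y^{\perp}$). The two implications from the hypothesis $A\not\rightsquigarrow_x B$ are handled identically: $A\not\rightsquigarrow_{y\to x}B$ follows by excluding both branches as above, and $B\stackrqarrow{f}{}_{x\to y}A$ follows by placing each string of $S^{\overline{x\to y}}_{AB}$ into $W_xD_y$ or $\overline{D_x}D_y^{\perp}$.

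The step I expect to be delicate is the uniform handling of the boundary string $\textbf{e}$, which is exactly where the two notions of complement must be kept apart. A block such as $W_{\set{out}_y}\textbf{e}_{\set{in}_y}$ splits as $T_{\set{out}_y}\textbf{e}_{\set{in}_y}\sqcup\{\textbf{e}_y\}$, and only the first piece sits inside $D_y$ by Proposition~\ref{Spropositionchannel}; the all-ones string $\textbf{e}_y$ instead belongs to $D_y^{\perp}$ but not to $\overline{D_y}$, which by definition excludes $\textbf{e}$. This forces a short case split precisely in the two implications whose conclusion is an inclusion $\subseteq D_{\cdot\to\cdot}$ (the full-signalling conclusions), where a string whose $y$-part equals $\textbf{e}_y$ must be routed through $\overline{D_x}D_y^{\perp}$ rather than $W_xD_y$. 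Once the distinction between $\overline{D}$ (complement inside $T$) and $D^{\perp}$ (complement inside $W$) is threaded consistently through the four cases, each of them closes, completing the proof.
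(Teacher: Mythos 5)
Your proof is correct, but it follows a genuinely different route from the paper's. The paper disposes of this lemma in two lines, by reduction to results already in place: via the equivalence $x\to y\equiv\overline{x\otimes\overline{y}}$ (Proposition~\ref{Sprop:typeequivalences}) and Definition~\ref{SFS}, each of the four conclusions unfolds into a signalling statement about a tensor-product type --- e.g.\ $A\stackrqarrow{f}{}_{y\to x}B$ becomes $B\not\rightsquigarrow_{y\otimes\overline{x}}A$, and $B\not\rightsquigarrow_{x\to y}A$ becomes $A\stackrqarrow{f}{}_{x\otimes\overline{y}}B$ --- and these follow at once by applying the tensor-stability Lemmas~\ref{Slmm:strongnosignal1} and~\ref{Slmm:strongnosignal2} to the hypothesis. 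You instead verify the four implications directly at the string level, through the arrow recursion $D_{u\to v}=W_uD_v\cup\overline{D_u}D_v^{\perp}$, the dictionaries $A\not\rightsquigarrow_xB\iff D_x\cap S^x_{AB}=\emptyset$ and $A\stackrqarrow{f}{}_xB\iff S^{\overline{x}}_{BA}\subseteq D_x$, and the two inclusions of Proposition~\ref{Spropositionchannel}; these translations, the factorisations of the $S$-sets into $x$- and $y$-blocks, and the four membership arguments all check out. In substance you are inlining the same combinatorics that the paper has already spent on proving Lemmas~\ref{Slmm:strongnosignal1} and~\ref{Slmm:strongnosignal2} (there with the tensor recursion $D_{x\otimes y}=\mathbf{e}_xD_y\cup D_x\mathbf{e}_y\cup D_xD_y$), but applied to the arrow recursion and bypassing those lemmas. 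The paper's route buys brevity and modularity; yours buys self-containedness and makes explicit exactly where the distinction between $\overline{D}$ (complement in $T$) and $D^{\perp}$ (complement in $W$) intervenes, something the high-level argument hides inside the cited lemmas.

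One small inaccuracy in your closing paragraph: the case split on the all-ones $y$-part $\mathbf{e}_y$ is not needed in \emph{both} full-signalling conclusions. For $A\stackrqarrow{f}{}_{y\to x}B$ no split occurs --- as your own one-line argument shows, the branch $W_yD_x$ absorbs every string, since the $x$-part already lies in $D_x$ and $W_y$ contains all $y$-parts, $\mathbf{e}_y$ included. The split is genuinely forced only in $B\stackrqarrow{f}{}_{x\to y}A$ (routing through $W_xD_y$ or $\overline{D_x}D_y^{\perp}$ according to whether the $y$-part equals $\mathbf{e}_y$), and, depending on how one organises it, when excluding the branch $\overline{D_y}D_x^{\perp}$ for $A\not\rightsquigarrow_{y\to x}B$. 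This is a misstatement about where the bookkeeping bites, not a gap: the four case-by-case arguments you give are each sound.
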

\begin{proof}
  This result follows from
   Lemmas \ref{Slmm:strongnosignal1} and
   \ref{Slmm:strongnosignal2} and the type equivalence
  $x\to y \equiv \overline{ x \otimes
    \overline{y}}$.
\end{proof}

\begin{lemma}\label{SFSinout}
  Let us define the type $z=x\rightarrow y$.
  Then we have
  \begin{align}
    \label{Seq:71}
    \begin{aligned}
      A\in \set{in}_y, B\in \set{in}_x \implies
      A \stackrqarrow{f}{}_z B   \\
      A\in \set{out}_y, B\in \set{out}_x \implies
      B \stackrqarrow{f}{}_z A 
  \end{aligned}
  \end{align}
% or
%   , then we have
% \begin{align}
% A\in \set{in}_y, B\in \set{in}_x\lor A\in \set{out}_x, B\in \set{out}_y  \implies z\;  (FS)_{AB}.
% \end{align}
%\begin{enumerate}
  % \item[(i).] if $A\in \set{in}_y$ and $B\in \set{in}_x$ then $z$ is $(FS)_{AB}$ 
   % \item[(ii).] if $A\in \set{out}_x$ and $B\in \set{out}_y$ then $z$ is $(FS)_{AB}$
%\end{enumerate}
\end{lemma}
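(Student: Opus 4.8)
The plan is to reduce both implications to a single combinatorial disjointness statement about the string set $D_{\overline{z}}$ and then to kill it with the three–term decomposition of a tensor–product type. First I would fix the bookkeeping for the input/output partition. Under $z=x\to y$ one has $\set{in}_z=\set{out}_x\cup\set{in}_y$ and $\set{out}_z=\set{in}_x\cup\set{out}_y$ — exactly the partition computed in the proof of Proposition \ref{Spropositionchannel} — and conjugation swaps the two classes, $\set{in}_{\overline{x}}=\set{out}_x$, $\set{out}_{\overline{x}}=\set{in}_x$ (the special case $y=I$ of the same rule, since $\set{in}_I=\set{out}_I=\emptyset$). With these, the first claim's hypotheses $A\in\set{in}_y$, $B\in\set{in}_x$ give $A\in\set{in}_z$, $B\in\set{out}_z$, and the second claim's $A\in\set{out}_y$, $B\in\set{out}_x$ give $A\in\set{out}_z$, $B\in\set{in}_z$, so both full-signalling statements are correctly typed.

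Next I would unfold the definition $A\stackrqarrow{f}{}_z B \iff B\not\rightsquigarrow_{\overline{z}} A$ and invoke the corollary following Proposition \ref{Sprop:causal} to convert each no-signalling condition into a disjointness condition: the first claim becomes $D_{\overline{z}}\cap S^{\overline{z}}_{BA}=\emptyset$ and the second $D_{\overline{z}}\cap S^{\overline{z}}_{AB}=\emptyset$. The key simplification is the type equivalence $\overline{x\to y}\equiv x\otimes\overline{y}$ from Proposition \ref{Sprop:typeequivalences}. Writing $w:=\overline{y}$, I would expand via Equation \eqref{eq:4}
\begin{align*}
D_{\overline{z}} = D_{x\otimes w} = \textbf{e}_x D_w \,\cup\, D_x \textbf{e}_w \,\cup\, D_x D_w,
\end{align*}
which factorizes into an $x$-part and a $w$-part in exactly the way the relevant $S$-sets factorize.

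The heart of the argument is then a three-case elimination on a candidate string $\set{s}=\set{s}_x\set{s}_w$. For the first claim, the strings of $S^{\overline{z}}_{BA}$ carry a $0$ in the slot $B\in\set{in}_x$ (so $\set{s}_x\neq\textbf{e}_x$) and a $0$ in the slot $A\in\set{out}_w$ (so $\set{s}_w\neq\textbf{e}_w$), which excludes the $\textbf{e}_x D_w$ and $D_x\textbf{e}_w$ branches at once; for the remaining branch $D_x D_w$ I would use that these strings are all-ones on $\set{out}_x$, whereas $D_x\subseteq W_{\set{in}_x}T_{\set{out}_x}$ by Proposition \ref{Spropositionchannel} forbids all-ones on $\set{out}_x$, so $\set{s}_x\notin D_x$ and the branch is empty. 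The second claim is completely symmetric: the two $0$'s now sit at $B\in\set{out}_x$ and $A\in\set{in}_w$ (excluding the first two branches), and one kills $D_xD_w$ through $D_w\subseteq W_{\set{in}_w}T_{\set{out}_w}$ since the strings of $S^{\overline{z}}_{AB}$ are all-ones on $\set{out}_w$.

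I expect the main obstacle to be the index bookkeeping rather than any deep step: keeping straight how $\set{in}/\set{out}$ transform under $\to$, under the bar, and under $\otimes$, and verifying that $S^{\overline{z}}_{BA}$ and $S^{\overline{z}}_{AB}$ split over $x$ and $w$ precisely as the three-term decomposition of $D_{x\otimes w}$ demands. A secondary point needing care is the degenerate case $\set{out}_x=\emptyset$ (or $\set{out}_w=\emptyset$), where $D_x\subseteq W_{\set{in}_x}T_{\set{out}_x}$ collapses to $D_x=\emptyset$; the conclusion still holds, but I would phrase the all-ones argument so that it covers this uniformly.
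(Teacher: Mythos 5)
Your proposal is correct and takes essentially the same route as the paper's proof: both pass to $\overline{z}\equiv x\otimes\overline{y}$, reduce the two full-signalling claims to the disjointness of $D_{x\otimes\overline{y}}=D_x\textbf{e}_{\overline{y}}\cup\textbf{e}_xD_{\overline{y}}\cup D_xD_{\overline{y}}$ from the critical string sets $S^{x\otimes\overline{y}}_{BA}$ (resp.\ $S^{x\otimes\overline{y}}_{AB}$), and eliminate the three branches using the inclusions $D_x\subseteq W_{\set{in}_x}T_{\set{out}_x}$ and $D_{\overline{y}}\subseteq W_{\set{out}_y}T_{\set{in}_y}$ from Proposition~\ref{Spropositionchannel}. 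The only difference is presentational: you spell out the branch-by-branch case analysis (and the degenerate case $\set{out}_x=\emptyset$) that the paper compresses into ``we can verify that.''
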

\begin{proof}
  Let us prove the first implication.  % We know
  % that $z$ is $(FS)_{AB}\iff$ $\overline{z}$ is
  % $(NS)_{BA}$.
  % Then we
  Let us consider
  $\overline{z}=\overline{x\rightarrow
    y}=x\otimes\overline{y}$. we have
  $D_{x\otimes\overline{y}}=D_x\textbf{e}_y\cup\textbf{e}_xD_{\overline{y}}\cup
  D_x D_{\overline{y}}$ and 
  $S^{x\otimes\overline{y}}_{BA}=W_{\set{in}_{x\otimes
      \overline{y}}\setminus
    B}0_B\textbf{e}_{\set{out}_{x\otimes
      \overline{y}}\setminus
    A}0_A=W_{\set{in}_x\setminus
    B}W_{\set{out}_y}0_B\textbf{e}_{\set{out}_x}\textbf{e}_{\set{in}_y\setminus
    A}0_A$.
  Since we have that 
  $D_x\subseteq W_{\set{in}_x}T_{\set{out}_x}$ and
  $D_{\overline{y}}\subseteq
  W_{\set{out}_y}T_{\set{in}_y}$, we can
  verify that
  $D_{x\otimes\overline{y}}\cap
  S^{x\otimes\overline{y}}_{BA}=\emptyset$. This
  means that $B \not
  \rightsquigarrow_{\overline{z}} A$ which implies
  $A \stackrqarrow{f}{}_z B$. 
The proof of the second implication
is analogous.
\end{proof}

We are now ready to prove the main result of this section.
\begin{proposition} \label{Sprp:algosignallingsupp}
   Let $x$ be a type, $A \in \set{in}_x$,
   $B \in \set{out}_x$. Then, there exists a
   unique type $y \prec x$ such that
   $\{A, B \} \subseteq \set{Ele}_y$ and $y' \prec y
   \implies \{A, B \} \not \subseteq
   \set{Ele}_y$. Moreover, we have
   \begin{align}
     \label{Seq:8ert}
     A \in \set{in}_y  &\implies   A \stackrqarrow{f}{}_{ x} B,
\\
     A \in \set{out}_y &\implies A
     \not\rightsquigarrow_{x}  B .
   \end{align}
 \end{proposition}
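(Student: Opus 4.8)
The plan is to locate the pair $\{A,B\}$ inside the unique minimal subtype $y=y_1\to y_2$ that separates them, read off the signalling relation there, and then transport that relation up to $x$ along the parse tree using the propagation rule of Equation~\eqref{Seq:76}.

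First I would settle existence and uniqueness of $y$. Since $A\in\set{in}_x$ and $B\in\set{out}_x$ they have opposite values of $K_x$, so they cannot label a single bipartite elementary system and there is at least one arrow between them in $x$; Lemma~\ref{Sinnertype} (with the substring characterisation of Lemma~\ref{Ssubstring}) then yields the unique smallest $y\preceq x$ with balanced brackets containing both, and it necessarily has the form $y=y_1\to y_2$ with $A$ and $B$ lying in different factors. The types strictly between $y$ and $x$ are exactly the nodes on the path from $y$ to the root of the parse tree, so $x$ is obtained from $y$ by a finite sequence of single wraps $w\mapsto v\to w$ (a \emph{right}-wrap) or $w\mapsto w\to v$ (a \emph{left}-wrap), and at every stage $A,B$ remain inside the current subtree (so $y$ stays their minimal subtype). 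Using $\set{in}_{u\to v}=\set{out}_u\cup\set{in}_v$ and $\set{out}_{u\to v}=\set{in}_u\cup\set{out}_v$ (from Proposition~\ref{Spropositionchannel}), a right-wrap preserves every input/output label while a left-wrap flips them all; since $A,B$ sit on opposite sides in $x$ and each wrap flips both or neither, they are on opposite sides in every type along the path, in particular in $y$.

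Next I would treat the base case: within $y=y_1\to y_2$ the valid signalling direction is always \emph{full}. Writing $r=1$ when $A\in\set{in}$ and $r=0$ when $A\in\set{out}$, and letting the signalling bit $s\in\{0,1\}$ record whether the input-to-output relation is full-signalling ($s=1$) or no-signalling ($s=0$), I claim $s_y=1$. Indeed, each of the four configurations (which factor $A$ occupies, times its role in $y$) places $A$ and $B$ into matching positions in their respective factors: e.g.\ if $A\in\set{in}_y\cap y_2$ then $A\in\set{in}_{y_2}$ and $B\in\set{in}_{y_1}$, which is exactly the hypothesis of the first line of Lemma~\ref{SFSinout}, giving $A\stackrqarrow{f}{}_y B$; the remaining three cases fall under one of the two lines of Lemma~\ref{SFSinout} and likewise give full-signalling in the input-to-output direction. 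Hence $s_y=1$ irrespective of $r_y$.

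Finally I would propagate. The content of Equation~\eqref{Seq:76} is precisely that a right-wrap leaves both $r$ and $s$ unchanged, whereas a left-wrap flips both simultaneously (full-signalling $A\to B$ becomes no-signalling $B\to A$ and vice versa, while the roles of $A,B$ swap). Consequently $r\oplus s$ is invariant along the whole path from $y$ to $x$. Evaluating at the two ends with $s_y=1$ and the given $r_x=1$ gives $1\oplus s_x=r_y\oplus 1$, i.e.\ $s_x=r_y$. Thus the input-to-output relation at $x$ (which is the $A\to B$ relation, since $A\in\set{in}_x$) is full-signalling iff $A\in\set{in}_y$ and no-signalling iff $A\in\set{out}_y$, which is exactly~\eqref{Seq:8ert}. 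I expect the delicate points to be the bookkeeping ones: verifying that the parse-tree path realises $x$ as iterated single wraps keeping $A,B$ together, and pinning down the sign conventions so that the flips of Equation~\eqref{Seq:76} line up with the role-swaps in the invariance of $r\oplus s$; the four-branch base-case analysis, by contrast, is routine once the matching positions are observed.
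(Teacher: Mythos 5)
Your proposal is correct and takes essentially the same route as the paper's own proof: identify the unique minimal subtype $y=y_1\to y_2$ via Lemma~\ref{Sinnertype}, establish full-signalling between the separated systems inside $y$ via Lemma~\ref{SFSinout}, and transport the relation up to $x$ along the chain of single wraps using Equation~\eqref{Seq:76} together with a parity argument. If anything, your write-up is more complete than the paper's: you verify the arrow hypothesis of Lemma~\ref{Sinnertype}, treat all four base-case configurations (the paper records only one), and your invariant $r\oplus s$ is an explicit reformulation of the paper's parity claim $\sum_i s_i \equiv 0 \pmod 2$.
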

 \begin{proof}
   The existence an uniqueness of such a type
   $y = y_1 \to y_2$ was proved in Lemma
   \ref{Sinnertype}.  Let us now assume $ A \in
   \set{in}_y $.
   Then
   it must be $A \in \set{out}_{y_1}$ and
   $B \in \set{out}_{y_2}$ which from Lemma
   \ref{SFSinout} implies
   $A\stackrqarrow{f}{}_{ y } B  $.
   Let us now introduce the following notation:
  \begin{align}
    \label{Seq:73}
        \mathcal{C}_0^z(x):=z\rightarrow x,\quad \mathcal{C}_{1}^z(x):=x\rightarrow z.
  \end{align}
  Since $y\preceq x$, we have
  $x=\mathcal{C}^{z_n}_{s_n}(\mathcal{C}^{z_{n-1}}_{s_{n-1}}
  (\cdots\mathcal{C}^{z_1}_{s_1}(y)\cdots))$
  for some types $z_i$.
  Since $A \in \set{in}_x$, it implies that
$ \sum_{i=1}^ns_i (\text{mod 2} )=0$ which in turn
gives  $A \stackrqarrow{f}{}_{ x} B$.

The proof of the implication $ A \in \set{out}_y \implies A
     \not\rightsquigarrow_{x}  B $ is analogous.
  % This implies that
  % \begin{align}
  %   \label{Seq:75}
  %   \sum_{i=1}^ns_i (\text{mod 2} )=1 \iff \\
  %   \sum_{i=1}^ns_i (\text{mod 2} )=0
  % \end{align}
\end{proof}
\end{document}